\setlist{noitemsep}
\theoremstyle{definition}
\newmdtheoremenv{definition}{Definition}[section]
\newmdtheoremenv{lemma}[definition]{Lemma}
\newmdtheoremenv{theorem}[definition]{Theorem}
\newmdtheoremenv{proposition}[definition]{Proposition}
\newmdtheoremenv{corollary}[definition]{Corollary}
\newmdtheoremenv{remark}[definition]{Remark}
\newmdtheoremenv{example}[definition]{Example}
\newmdtheoremenv{fact}[definition]{Fact}
\newmdtheoremenv{claim}[definition]{Claim}
\numberwithin{equation}{section}
\newcommand{\splitatcommas}[1]{%
  \begingroup
  \ifnum\mathcode`,="8000
  \else
    \begingroup\lccode`~=`, \lowercase{\endgroup
      \edef~{\mathchar\the\mathcode`, \penalty0 \noexpand\hspace{0pt plus 1em}}%
    }\mathcode`,="8000
  \fi
  #1%
  \endgroup
}
\newcommand{\SWAP}{\mathsf{SWAP}}
\newcommand{\CNOT}{\mathsf{CNOT}}
\newcommand{\CZ}{\mathsf{CZ}}
\newcommand{\MEASX}{\mathsf{M}_{\PAULIX}}
\newcommand{\INITZ}{\mathsf{INIT}_{\PAULIZ}}
\newcommand{\INITX}{\mathsf{INIT}_{\PAULIX}}
\newcommand{\INIT}{\mathsf{INIT}}
\newcommand{\MEASZ}{\mathsf{M}_{\PAULIZ}}
\newcommand{\PAULIX}{\mathsf{X}}
\newcommand{\PAULIY}{\mathsf{Y}}
\newcommand{\PAULIZ}{\mathsf{Z}}
\newcommand{\IDENT}{\mathsf{I}}
\newcommand{\HAD}{\mathsf{H}}
\newcommand{\PHAS}{\mathsf{S}}
\newcommand{\MF}{\mathcal{F}}
\newcommand{\dsl}{\llbracket}
\newcommand{\dsr}{\rrbracket}
\newcommand{\supp}{\operatorname{supp}}
\newcommand{\mixedstate}[1]{\ensuremath{\mathsf{D}\left(#1 \right)}}
\NewDocumentCommand\cqState{O{n_C}O{n_Q}}{%
  \ensuremath{\mathcal{CQ}\left(#1,#2\right)}
}
\NewDocumentCommand\mixedstateArb{m}{\ensuremath{\mathsf{D}\left(#1\right)}}
\NewDocumentCommand\mixedstateType{m}{\ensuremath{\mathsf{D}\left(\mathcal{H}_{#1}\right)}}
\newcommand{\unphysicalstate}[1]{\linearOp{#1}}
\NewDocumentCommand\gadget{}{\ensuremath{\mathsf{Gad}}}
\NewDocumentCommand\qcode{o}{\IfNoValueTF{#1}{\ensuremath{{\mathcal{Q}}}}{\ensuremath{{\mathcal{Q}^{(#1)}}}}}
\NewDocumentCommand\compCodeType{o}{\ensuremath{\mathsf{CompCode}_{#1}}}
\NewDocumentCommand\qGate{}{\ensuremath{\mathsf{g}}}
\NewDocumentCommand\weightenum{mm}{\ensuremath{\mathcal{W}{\left(#1;~#2\right)}}}
\NewDocumentCommand\fault{o}{%
  \IfNoValueTF{#1}
   {\ensuremath{\mathbf{f}}}
   {\ensuremath{\mathbf{#1}}}%
}
\NewDocumentCommand\circuitMap{m}{\ensuremath{\mathsf{map}[#1]}}
\DeclareMathOperator{\rowspan}{rowsp}
\DeclareMathOperator*{\wtsum}{\boxplus}
\NewDocumentCommand\nodeIn{o}{\IfNoValueTF{#1}{\text{in}}{\ensuremath{\text{in}(#1)}}}
\NewDocumentCommand\nodeOut{o}{\IfNoValueTF{#1}{\text{out}}{\ensuremath{\text{out}(#1)}}}
\NewDocumentCommand\nodeTime{o}{\IfNoValueTF{#1}{\mathsf{T}}{\ensuremath{\mathsf{T}(#1)}}}
\NewDocumentCommand\nodeTimeInv{o}{\IfNoValueTF{#1}{\mathsf{T}^{-1}}{\ensuremath{\mathsf{T}^{-1}(#1)}}}
\NewDocumentCommand\edgeConn{o}{\IfNoValueTF{#1}{\mathsf{conn}}{\ensuremath{\mathsf{conn}(#1)}}}
\NewDocumentCommand\nodeGate{o}{\IfNoValueTF{#1}{\mathsf{gate}}{\ensuremath{\mathsf{gate}(#1)}}}
\NewDocumentCommand\nodeGateP{o}{\IfNoValueTF{#1}{\mathsf{gate'}}{\ensuremath{\mathsf{gate'}(#1)}}}
\NewDocumentCommand\nodeFaulty{o}{\IfNoValueTF{#1}{\widetilde{\mathsf{gate}}}{\ensuremath{\widetilde{\mathsf{faulty}(#1)}}}}
\NewDocumentCommand\edgeSrc{o}{\IfNoValueTF{#1}{\mathsf{src}}{\ensuremath{\mathsf{src}(#1)}}}
\NewDocumentCommand\edgeDest{o}{\IfNoValueTF{#1}{\mathsf{dest}}{\ensuremath{\mathsf{dest}(#1)}}}
\NewDocumentCommand\edgeType{o}{\IfNoValueTF{#1}{\mathsf{type}}{\ensuremath{\mathsf{type}(#1)}}}
\NewDocumentCommand\typeQ{}{\ensuremath{\mathsf{Q}}}
\NewDocumentCommand\typeC{}{\ensuremath{\mathsf{C}}}
\NewDocumentCommand\cqvar{}{\ensuremath{\mathsf{cq}}}
\NewDocumentCommand\typeVar{o}{\IfNoValueTF{#1}{\ensuremath{\mathsf{t}}}{\ensuremath{\mathsf{t}_{#1}}}}
\NewDocumentCommand\typeSet{}{\ensuremath{\mathcal{T}}}
\NewDocumentCommand\linearOp{om}{\IfNoValueTF{#1}{\ensuremath{\mathsf{L}\left(#2\right)}}{\ensuremath{\mathsf{L}\left(#1, #2\right)}}}
\NewDocumentCommand\linearOpType{om}{\IfNoValueTF{#1}{\ensuremath{\mathsf{L}\left(#2\right)}}{\ensuremath{\mathsf{L}\left(#1, #2\right)}}}
\NewDocumentCommand\superOp{om}{\IfNoValueTF{#1}{\ensuremath{\mathcal{L}\left(#2\right)}}{\ensuremath{\mathcal{L}\left(#1, #2\right)}}}
\NewDocumentCommand\superOpType{om}{\IfNoValueTF{#1}{\ensuremath{\mathcal{L}\left(#2\right)}}{\ensuremath{\mathcal{L}\left(#1, #2\right)}}}
\NewDocumentCommand\F{o}{\IfNoValueTF{#1}{\ensuremath{\mathbb{F}_2}}{\ensuremath{\mathbb{F}_{#1}}}}
\NewDocumentCommand\adjgraph{m}{\ensuremath{G_{\mathsf{adj}}[#1]}}
\DeclareMathOperator{\im}{im}
\NewDocumentCommand{\surfacecode}{m}{\ensuremath{\mathsf{SC}{({#1})}}}
\NewDocumentCommand{\sctype}{mm}{\ensuremath{\mathsf{SC}_{{#2}}^{({#1})}}}
\NewDocumentCommand{\ecgadget}{o}{\IfNoValueTF{#1}{\ensuremath{\mathsf{EC}}}{\ensuremath{\mathsf{EC}^{{#1}}}}}
\NewDocumentCommand{\ldpcCode}{m}{\ensuremath{\mathsf{qLDPC}({#1})}}
\NewDocumentCommand{\ldpctype}{mm}{\ensuremath{\mathsf{qLDPC}_{{#2}}^{({#1})}}}
\newcommand{\inject}{\mathsf{INJECT}}
\newcommand{\unencode}{\mathsf{UNENCODE}}
\newcommand{\grow}{\mathsf{GROW}}
\newcommand{\shrink}{\mathsf{SHRINK}}
\newcommand{\mz}{\mathsf{MEAS}_{\PAULIZ}}
\newcommand{\poly}{\mathsf{poly}}
\NewDocumentCommand\encoder{o}{\IfNoValueTF{#1}{\ensuremath{\mathsf{enc}}}{\ensuremath{\mathsf{enc}_{#1}}}}
\NewDocumentCommand\decoder{o}{\IfNoValueTF{#1}{\ensuremath{\mathsf{dec}}}{\ensuremath{\mathsf{dec}_{#1}}}}
\NewDocumentCommand\Rencoder{o}{\IfNoValueTF{#1}{\ensuremath{{\mathsf{renc}}}}{\ensuremath{{\mathsf{renc}}_{#1}}}}
\NewDocumentCommand\Rdecoder{o}{\IfNoValueTF{#1}{\ensuremath{{\mathsf{rdec}}}}{\ensuremath{{\mathsf{rdec}}_{#1}}}}
\newcommand{\hilbert}{\mathcal{H}}
\newcommand{\error}{\mathcal{E}}
\newcommand{\errorin}{\mathcal{E}_{\mathrm{in}}}
\newcommand{\baderrors}{\mathcal{B}}
\newcommand{\badlcerrors}{\mathcal{B}^{\lightcone}}
\newcommand{\outerror}{B^{\text{out}}}
\newcommand{\reserror}{\baderrors^{(R)}}
\newcommand{\badfaults}{\mathcal{F}}
\newcommand{\clusterset}{\mathcal{CG}}
\newcommand{\power}{P}
\newcommand{\dev}[1]{\precapprox_{#1}}
\newcommand{\encU}{U^\dagger}
\newcommand{\decU}{U}
\newcommand{\contract}{\kappa}
\newcommand{\contractC}{C_\contract}
\newcommand{\contractD}{D_\contract}
\newcommand{\contractV}{V_\contract}
\newcommand{\contractE}{E_\contract}
\newcommand{\contractt}{t_\contract}
\newcommand{\contractConn}{\edgeConn_\contract}
\newcommand{\contractType}{\edgeType_\contract}
\newcommand{\contractGate}{\nodeGate_\contract}
\newcommand{\contractT}{\nodeTime_\contract}
\newcommand{\bundle}{\beta}
\newcommand{\bundleE}{E_\bundle}
\newcommand{\noise}{\mathcal{N}}
\newcommand{\environ}{\Omega}
\newcommand{\environC}{C_\environ}
\newcommand{\environV}{V_\environ}
\newcommand{\environE}{E_\environ}
\newcommand{\Cwithenv}{C_{\mathsf{env}}}
\newcommand{\g}{\mathcal{G}}
\newcommand{\depth}{T}
\newcommand{\lightcone}{\mathcal{LC}}
\newcommand{\recover}{\mathcal{R}}
\newcommand{\cc}{\mathcal{CO}}
\NewDocumentCommand\intype{o}{\IfNoValueTF{#1}{\ensuremath{\typeVar^{\text{in}}}}{\ensuremath{\typeVar^{\text{in}}_{#1}}}}
\NewDocumentCommand\outtype{o}{\IfNoValueTF{#1}{\ensuremath{\mathsf{t}^{\text{out}}}}{\ensuremath{\mathsf{t}^{\text{out}}_{#1}}}}
\NewDocumentCommand\ctype{o}{\IfNoValueTF{#1}{\ensuremath{{\mathsf{ctype}}}}{\ensuremath{{\mathsf{ctype}}_{#1}}}}
\NewDocumentCommand\spec{o}{\IfNoValueTF{#1}{\ensuremath{{\mathsf{spec}}}}{\ensuremath{{\mathsf{spec}}_{#1}}}}
\NewDocumentCommand\inspec{o}{\IfNoValueTF{#1}{\ensuremath{{\mathsf{spec}^{\text{in}}}}}{\ensuremath{{\mathsf{spec}}_{#1}^{\text{in}}}}}
\NewDocumentCommand\outspec{o}{\IfNoValueTF{#1}{\ensuremath{{\mathsf{spec}^{\text{out}}}}}{\ensuremath{{\mathsf{spec}}_{#1}^{\text{out}}}}}
\newcommand{\syntype}{\typeVar[\mathsf{syn}]}
\newcommand{\synd}{\upsilon}
\newcommand{\Fst}{\mathcal{F}_{\mathsf{spacetime}}}
\renewcommand\nomgroup[1]{%
  \item[] %
  \vspace{.6\baselineskip}%
  \hspace*{-\labelwidth}\hspace*{-\labelsep}%
  {\bfseries %
  \ifstrequal{#1}{A}{Weight Enumerators}{%
  \ifstrequal{#1}{B}{Types and Operators}{%
  \ifstrequal{#1}{C}{Circuits and Faults}{%
  \ifstrequal{#1}{D}{Fault-Tolerant Gadgets}{%
  \ifstrequal{#1}{E}{Error Correction for CSS Codes}{%
  \ifstrequal{#1}{F}{Gadgets for qLDPC Codes}{}
  }}}}}%
  }%
  \par\vspace{.2\baselineskip}%
}
\newcommand{\bulk}{\mathsf{bulk}}
\newcommand{\Jbulk}{J_{\mathsf{bulk}}}
\newcommand{\Ibulk}{I_{\mathsf{bulk}}}
\newcommand{\Jd}{J^{\mathrm{d}}}
\newcommand{\Js}{J^{\mathrm{s}}}
\newcommand{\Hbulk}{H_{\mathsf{bulk}}}
\newcommand{\Hext}{H_{\mathsf{ext}}}
\newcommand{\ein}{e_{\text{in}}}
\newcommand{\fin}{f_{\text{in}}}
\newcommand{\cin}{c_{\text{in}}}
\newcommand{\cft}{C_{\mathsf{FT}}}
\newcommand{\ratio}{\Gamma}
\newcommand{\cig}{\mathcal{C}_{i,+}}
\newcommand{\cib}{\mathcal{C}_{i,-}}
\newcommand{\C}{\mathcal{C}}
\newenvironment{construction}[1][Construction]{\par
  \pushQED{\(\blacktriangleleft\)}%
  \normalfont \topsep6\p@\@plus6\p@\relax
  \trivlist
  \item\relax
        {\itshape
    #1\@addpunct{.}}\hspace\labelsep\ignorespaces
}{%
  \popQED\endtrivlist\@endpefalse
}
\newenvironment{proofclaim}[1][Proof of claim]{\par
  \pushQED{\(\blacksquare\)}%
  \normalfont \topsep6\p@\@plus6\p@\relax
  \trivlist
  \item\relax
        {\itshape
    #1\@addpunct{.}}\hspace\labelsep\ignorespaces
}{%
  \popQED\endtrivlist\@endpefalse
}
\title{Composable Quantum Fault-Tolerance}
\author{
Zhiyang He (Sunny)\thanks{Department of Mathematics, Massachusetts Institute of Technology. \texttt{szhe@mit.edu}.} \and 
Quynh T. Nguyen\thanks{Harvard University. \texttt{qnguyen@g.harvard.edu}.} \and 
Christopher A. Pattison\thanks{Simons Institute for the Theory of Computing, University of California, Berkeley. \texttt{cpattison@berkeley.edu}.}
\thanks{Institute for Quantum Information and Matter, California Institute of Technology}
}
\begin{document}
\maketitle

\begin{abstract}
    Proving threshold theorems for fault-tolerant quantum computation is a burdensome endeavor with many moving parts that come together in relatively formulaic but lengthy ways. 
    It is difficult and rare to combine elements from multiple papers into a single formal threshold proof, due to the use of different measures of fault-tolerance.
    In this work, we introduce \textit{composable fault-tolerance}, a framework that decouples the probabilistic analysis of the noise distribution from the combinatorial analysis of circuit correctness, and enables threshold proofs to compose independently analyzed gadgets easily and rigorously.
    Within this framework, we provide a library of standard and commonly used gadgets such as memory and logic implemented by constant-depth circuits for quantum low-density parity check codes and distillation.
    As sample applications, we explicitly write down a threshold proof for computation with surface code and re-derive the constant space-overhead fault-tolerant scheme of~\cite{gottesman2014fault}
    using gadgets from this library.
    We expect that future fault-tolerance proofs may focus on the analysis of novel techniques while leaving the standard components to the composable fault-tolerance framework, with the formal proof following the intuitive ``napkin math'' exactly.

\end{abstract}

\section{Introduction}
Quantum fault-tolerance is arguably the central object of study in modern quantum error correction research.
Informally, theorems for quantum fault-tolerance (hereafter ``threshold theorems'') state that a circuit \(C\) can be mapped to a new circuit \(\cft\) that produces an output distribution close to that of \(C\) even in the presence of some amount of noise.

Threshold theorems have a long history dating back to the mid-90s~\cite{shor1996fault,knill1996threshold,aharonov1997fault,kitaev1997quantum} as early works justified the plausibility of large-scale fault-tolerant quantum computation.
These formal proofs are still relevant today for a variety of reasons.
Threshold theorems state how all the pieces (gadgets) of a particular fault-tolerance scheme will ``fit together.''
This allows users to be sure that no detail is missing from the scheme even if the analytically proven threshold is not quantitatively precise. 
Such a proof along with numerical estimation of the threshold is the best evidence for fault-tolerance without simulating or running the full circuit experimentally.
As we develop novel low-overhead gadgets for fault-tolerant computation, threshold theorems should be proven to illustrate how the new techniques change the full picture.

Unfortunately, proofs of threshold theorems remain a tedious and technical exercise where many details must be handled in relatively standard ways in order to wrap the novel portion of a construction.
Worse yet, proofs of threshold theorems are largely \emph{non-composable}: Fault-tolerance constructions are frequently written more or less monolithically with few formally written intermediate results suitable for reuse in other works.

Here, we introduce a framework for composable fault-tolerance where complicated fault-tolerant circuits and schemes may be assembled out of simple ones in a rigorous yet essentially black-box way.
On a high-level, we decouple the probabilistic analysis of noise models from the combinatorial analysis of circuit correctness.
This decoupling is enabled by the weight enumerator formalism (\cref{sec:weight-enumerators}), which bounds the set of circuit locations corrupted by noise using the notion of \textit{bad sets}. 
We briefly elaborate on this central mechanism of our formalism.

On the combinatorial side, consider a gadget (which is a circuit) with a set of corrupted locations and an input state with a set of corrupted qubits. 
We say that a circuit has \textit{failed} if the set of corrupted locations includes a \textit{bad fault path} and call a state \textit{bad} if the set of corrupted qubits includes a \textit{bad error support}.\footnote{Bad fault paths and bad error supports are collections of sets defined with respect to the gadget (including a decoder) and generally ``witness'' some configuration of faults that may lead to failure of the gadget.}
Then, the correctness properties of a gadget can be easily specified combinatorially: \text{A successful execution of the circuit on a good input state results in a good output state}. 
This specification of the input/output behavior is at the heart of our composablility -- two gadgets with compatible definitions of bad error supports can now be analyzed independently and later composed to become a larger gadget with no regard to the probability distribution of noise.\footnote{In particular, whether the distribution of noise across the two gadgets is independent or not is irrelevant.} 
The specification of bad fault paths and error supports for this larger gadget can be easily computed from the specifications of the individual gadgets in the weight enumerator formalism. 

On the probabilistic side, for noise models such as locally stochastic noise, given the definition of bad sets from the gadgets, the probability of noise corrupting a set of locations that contain a bad set can be bounded by the weight enumerator polynomial.
Moreover, computation of bad sets (as needed for composition of gadgets) corresponds straightforwardly to computation with polynomials. 
Existence of a threshold then follows straightforwardly if the polynomial decays exponentially in some parameter such as code distance.

Our main technical contribution is the rigorous formulation of these definitions and concepts, especially the combinatorial specification of gadgets.
The weight enumerator formalism, as well as a precursor collection of combinatorial definitions, were introduced and utilized in~\cite{nguyen2024quantum} to prove threshold for a low-overhead fault-tolerant scheme.
This work keeps the same weight enumerator algebra while re-designing and generalizing the core definitions, introducing a complete edition of the composable fault-tolerance framework. 
In particular, we introduce more precise definitions, tools for manipulations and correctness proofs of fault-tolerant circuits, as well as a library of examples.

We demonstrate the general applicability of our composable framework with instructive examples.
\begin{enumerate}[topsep = 0pt]
    \item We build a library of standard and commonly used gadgets such as error correction in quantum low-density parity check codes using \(d\)-rounds of syndrome extraction, logical gates implemented by constant depth circuits, and distillation. These gadgets are analyzed independently and combinatorially, formulated in terms of bad sets.
    
    \item Utilizing gadgets from this library, we assemble a 
    constant space-overhead quantum fault-tolerance scheme, providing an alternative proof of the results in \cite{tamiya2024polylog} and \cite{gottesman2014fault} up to differences in the classical computation model. In place of the concatenated codes fault-tolerant schemes used in \cite{tamiya2024polylog,gottesman2014fault}, we use a threshold theorem for surface codes built with magic state distillation and transversal gates. 
    To our knowledge, this is the first explicit proof that surface code using magic state distillation has a constant noise threshold.\footnote{We note that this is essentially a formalization of a proof sketch in \cite{dennis2002topological} using magic state distillation \cite{bravyi2005universal}.}

    \item To facilitate concatenated coding and fault-tolerant schemes, we prove a \textit{level reduction} theorem similar to the one in \cite{aliferis2005quantum} where a fault tolerant circuit can be shown to noisily simulate another fault-tolerant circuit on the logical level. 
    The bad sets and weight enumerator polynomials for such a concatenated circuit (equivalently gadget) are easily computable as the composition of the bad sets and polynomials of its component gadgets.
\end{enumerate}

With the composable fault-tolerance framework, we expect that future threshold proofs of fault-tolerant schemes may focus on the combinatorial analysis of error propagation and correction in novel gadgets, while leaving the standard components and rigorous formality to the framework.
For instance, while previously schemes may prove that a fault-tolerant circuit with locally stochastic noise on the input will produce the correct output subject to a (different) locally stochastic noise, future schemes may analyze such a gadget fully combinatorially with no regard to the probabilistic distribution of noise. 
Moreover, a novel fault-tolerant gadget which implements a particular logical operation can now be easily extended into a full fault-tolerant universal computation scheme by composing with standard gadgets. Such extensions are often invoked cursorily in the literature to support the utility of novel gadgets, without rigorous analysis of noise distribution. 
The composable fault-tolerance framework provides structure to fill in these rigorous details and supports new constructions with a unified formal foundation.

\subsection{Reader's guide}
While this paper is lengthy, we have attempted to ensure that readers will benefit from reading only subsets of the contents.
Footnotes are provided when subtle details of definitions are relevant or to explain generalizations.
A nomenclature table is included at the end of the paper.
On a light reading, we recommend readers read \cref{sec:weight-enumerators} (defining the bad sets, weight enumerator polynomials and their algebra) carefully while skipping \cref{def:composition} until it becomes relevant.
\Cref{sec:circuit-formalism} may be skimmed.
It formalizes a relatively intuitive notion of a quantum circuit in the presence of additional classical or quantum processing.\footnote{For conceptual simplicity, we are not restricting the classical computation circuit.
The definitions carry over to the case of a restricted depth classical computation per layer of quantum operations although the constructions given here usually do not carry over without modifications to the algorithms used for classical processing.}
We also formally define the notion of faults, which captures how noise may corrupt gates in a circuit.\footnote{Note that a fault is not a probabilistic distribution of noise (or a noise model per se), but rather an instantiation of a noise model corrupting a circuit.}
The tools in later part of the paper work with circuits subject to particular faults.

\Cref{sec:gadgets} contains the core definitions of gadgets\footnote{By gadget, we mean a quantum circuit that may act on or output a quantum register that has been shown to perform some useful operation.} suitable for composability, facilitated by the formulation of bad error supports and fault paths.
We state and prove the parallel and sequential composition of gadgets (\cref{lemma:gadget-composition}), as well as the \textit{level reduction} theorem (\cref{lemma:level-reduction}) that allows a fault-tolerant quantum circuit to simulate another one.\footnote{This provides a proof of a claim made in \cite{nguyen2024quantum} about the composability of certain fault-tolerance schemes (in particular, those with ``friendly'' gadgets such as used in concatenated code fault-tolerance).}
We recommend that readers read and interpret the statements of definitions and lemmas on an intuitive level, leaving the detailed mathematical formalities and proofs to a deeper read. 

In \cref{sec:circuit-correctness}, we perform the probabilistic analysis which bounds the failure probability of fault-tolerant circuits under various noise models by the weight enumerator polynomials of the (combinatorial) bad sets, which are computed from the composition of gadgets. The analysis for locally stochastic noise is straightforward. 
We further study coherent noise, and present partial progress on obtaining a constant threshold bound for computation using qLDPC codes with sublinear distance such as surface codes.
At the moment, we are able to show a threshold \emph{for computation} that is vanishing polylogarithmically with the circuit volume.

\Cref{sec:qldpc-gadgets} is where we begin building the library of standard gadgets with combinatorial analysis.
We begin by formulating the bad error supports for a LDPC code, and then construct error correction gadgets and transversal gate gadgets analogous to those of \cite{gottesman2014fault} under the assumption of a minimum-weight decoder.
We recommend that readers read the statements of the definitions and lemmas, and relate them to those in \cref{sec:gadgets}.
The proofs involve some intuitive yet technically involved combinatorics, which would be relevant examples for readers who wish to use our framework to prove fault-tolerance for novel gadgets. They may be skipped on a light reading.

\cref{sec:qldpc-threshold-theorems} is where we showcase the applicability of the composable fault-tolerance framework by assembling independently analyzed gadgets from \cref{sec:qldpc-gadgets} into threshold proofs for fault-tolerant schemes in a black-box fashion.
The first threshold theorem is for surface codes using magic state distillation and transversal gates.
Utilizing this fault-tolerant scheme, we provide an alternative proof of constant space overhead quantum fault-tolerance closely following \cite{tamiya2024polylog} and \cite{gottesman2014fault}.
We note that \cite{tamiya2024polylog} counts the runtime of the classical computation which introduces additional requirements on their gadgets.
\cite{nguyen2024quantum} reduces the time overhead to essentially logarithmic in a similarly restricted model of classical computation.
Proofs in \cref{sec:qldpc-threshold-theorems} are split into construction and proof.
We suggest that readers focus on the constructions on a first read.
We further note that the constructions and proofs in \cref{sec:qldpc-gadgets} and \cref{sec:qldpc-threshold-theorems} are asymptotic in nature, which means there was no attempt to optimize the constants. 
While such asymptotic results have found applications in complexity theory, it is our hope that asymptotic analysis of fault-tolerant should provide evidence and potential guidance to practical constructions of large-scale fault-tolerant quantum computers, much as the recent studies on theoretical and practical qLDPC codes have demonstrated.

Finally, we remark that the current framework and manuscript is the result of extensive revisions, as formulating a rigorous and applicable set of definitions necessarily encounters many subtleties. 
Seemingly obvious definitions may either be inaccurate or insufficient to prove important results such as gadget composition or level reduction.
Consequently, some of the core definitions, such as those surrounding gadgets in \cref{sec:gadgets}, are quite technical and may not seem intuitive on first sight. 
We try our best to point out the relevant intuition and subtleties in discussions and footnotes, and kindly ask that readers contact us for any questions or comments.

\subsection{Prior Works}
\cite{christandl2024fault} provide an alternative formalism for quantum fault-tolerance based on channels and approximate simulation as opposed to the exact simulation and combinatorial analysis considered here.
Many of the applications considered here have equivalent statements in \cite{christandl2024fault}.
Our formalism is heavily inspired from \cite{aharonov1997fault} and \cite{kitaev1997quantum}.
It is possible to include the ``extended rectangles'' from \cite{aliferis2005quantum} however, we elected not to in order to simplify the presentation.
The ``level reduction'' theorem (\cref{lemma:level-reduction}) is a vast generalization of the one proved in \cite{aliferis2005quantum}.

\cite{nguyen2024quantum} introduced an early version of the weight enumerator formalism in the minimal form required to prove the result.
Our work greatly expands the framework in an approachable way and provides justification of a claim made about recursive simulation of fault-tolerance schemes required for amplification of the threshold to constant in~\cite{nguyen2024quantum}.

\subsection{Future Directions}
We expect the composable fault-tolerance framework to serve as the formal foundation in a large number of future fault-tolerance results.
We briefly discuss some potential directions here.
\begin{enumerate}[topsep = 0pt]
    \item Constant coherent noise thresholds are known for concatenated constructions~\cite{aharonov1997fault,aliferis2005quantum,christandl2024fault}, but can one prove a constant noise threshold against coherent noise without code concatenation? \cite{christandl2024fault} has shown that constant noise threshold is possible when using linear-distance qLDPC codes with a single-shot decoder, but it is reasonable to conjecture that the linear-distance requirement can be dropped. Prior works have analytically shown a $O(1/d)$ coherent noise threshold for distance-$d$ surface codes in the memory setting~\cite{iverson2020coherence}, but numerical simulations suggest the threshold to be a constant~\cite{bravyi2018correcting}.
    \item In this work, we impose no constraints on the connectivity of the physical qubits and allow arbitrary blocks of qubits to interact.
    It would be useful to prove threshold theorems for computation in restricted models of connectivity using techniques such as qubit routing~\cite{gidney2025constant,pattison2023hierarchical} or lattice surgery~\cite{Horsman2012LatticeSurgery}. 
    We expect that the fault-tolerance of code surgery gadgets (see \cref{remark:code-switching}) can be proved similarly to the qLDPC error correction gadget (\cref{lemma:qldpc:ec-gadget}).
    \item Asymptotically, the spacetime overhead of quantum fault-tolerance has been brought down substantially against worst-case circuits.
    An obvious next step would be to customize fault-tolerance schemes to circuits of interest.
    Then, perhaps it is possible to construct practically relevant quantum fault-tolerance schemes with lower asymptotic overhead on particular circuits.
    \item While only a handful of applications require extended rectangles (see \cite{aliferis2005quantum}) and our definitions are compatible with them, we have found it to be notationally burdensome to work with in full generality. Perhaps some improved notation could be developed.
    \item It would be a useful resource to establish a threshold theorem for the surface code under minimal runtime assumptions for the classical decoder e.g. using a parallel window decoder~\cite{skoric2023parallel,tan2023scalable} or the parallel decoder from \cite{takada2025doubly}.
    \item We expect our techniques to also be useful for providing an end-to-end threshold proof of low time overhead quantum fault-tolerance using transversal gates in surface codes \cite{zhou2024algorithmic,serra2025decoding,cain2025fast}.
\end{enumerate}

\subsection{Acknowledgments}
The authors would like to thank 
Alfred,
Thiago Bergamaschi,
Ken Brown,
Kathleen Chang,
Nicolas Delfosse, 
Anirudh Krishna,
Urmila Mahadev,
John Preskill,
Armands Strikis, and
Eugene Tang
for useful discussions and encouragement.
 
This work was done in part while the authors were visiting at the Quantum Algorithms, Complexity, and Fault-Tolerance workshop hosted at the Simons Institute for the Theory of Computing (supported by DOE QSA grant \#FP00010905 and NSF QLCI Grant No. 2016245), the Fault-Tolerant Quantum Technologies workshop hosted at the Centro de Ciencias de Benasque Pedro Pascual (CCBPP), and the Institute for Quantum Information and Matter (IQIM) at Caltech. 
We thank the Simons Institute, the CCBPP, and Caltech IQIM for their hospitality.

Z.H. is supported by the MIT Department of Mathematics and the NSF Graduate Research Fellowship Program under Grant No. 2141064. QTN acknowledges support through the NSF Award No. 2238836 and IBM PhD fellowship.
C.A.P. is currently a Simons-CIQC postdoctoral fellow at the Simons Institute for the Theory of Computing, supported by NSF QLCI Grant 2016245.
C.A.P. acknowledges funding from the U.S. Department of Energy Office of Science, DE-SC0020290.
The Institute for Quantum Information and Matter is an NSF Physics Frontiers Center.

\tableofcontents

\section{Weight enumerator formalism}\label{sec:weight-enumerators}
We now introduce some definitions that will allow us to work with sets of faulty locations in a composable way that we call the ``weight enumerator formalism''.
An earlier version of the weight enumerator formalism appeared in \cite{nguyen2024quantum}.

We begin by defining sets that we will want the support of errors (or faults) to avoid.
Roughly, whenever the errors do not contain any of these sets, we will be able to conclude some property irrespective of what the error is.
One (basic) example of this is for a length-\(n\) repetition code, all subsets of weight \(\lfloor \frac{d-1}{2}\rfloor + 1\).

\begin{definition}[Avoiding sets]
  For a set \(\Omega\) and a family of subsets \(\mathcal{F}\subseteq P(\Omega)\), referred to as \textbf{the bad sets}, a subset \(X \subseteq \Omega\) is said to be \(\mathcal{F}\)-avoiding if it does not contain an element of \(\mathcal{F}\).
  That is, \(X \subseteq \Omega \) is \(\mathcal{F}\)-avoiding if and only if
  \begin{align}
    \forall F \in \mathcal{F},~ F \not\subseteq X~.
  \end{align}
\end{definition}

\begin{definition}[Set lower bound]
    For a set \(\Omega\) and two families of subsets \(\mathcal{F}_1, \mathcal{F}_2\subseteq P(\Omega)\), \(\MF_1\) is said to be a \textbf{set lower bound} of \(\MF_2\) if for every \(F_2\in \MF_2\), there exists \(F_1\in \MF_1\) such that \(F_1\subseteq F_2\). We denote this relation as \(\MF_1\preceq \MF_2\). 
\end{definition}

As it turns out, it is frequently the case that the only data required about the family of sets (for an end-user) is how many sets of each weight there are.
Thus, we can associated these families with polynomials that we refer to as \emph{weight enumerator polynomials} in analogy to the weight enumerator polynomials used in coding theory.
\begin{definition}[Weight enumerators]\label{def:weight-enumerator}
  For a finite set \(\Omega\) and a family of subsets \(\mathcal{F} \subseteq P(\Omega)\), it will be convenient to associate a polynomial \(\weightenum{\mathcal{F}}{x}\) to \(\mathcal{F}\) defined as the sum
  \begin{align}
    \weightenum{\mathcal{F}}{x} = \sum_{w=0}^{\infty} A_w x^w,
  \end{align}
  where \(A_w = |\left\{f \in \mathcal{F} \mid |f| = w\right\}|\) is the number of elements of \(\mathcal{F}\) of weight \(w\).
\end{definition}
\nomenclature[A, 01]{\(\weightenum{\mathcal{F}}{x}\)}{Weight enumerator function of a collection of sets \(\mathcal{F}\), see~\cref{def:weight-enumerator}.}

\begin{example}[Local stochastic noise]
    To see that this data is useful, suppose \(x \in \F^n\) is a bit string with entries \(1\) with probability \(p\) and \(0\) with probability \(1-p\).
    Then, for any family \(\mathcal{F} \subseteq P([n])\) the probability that \(x\) fails to be \(\mathcal{F}\)-avoiding is upper bounded by the evaluation of the corresponding weight enumerator:
    \begin{align}
        \Pr(\text{\(x\) is not \(\mathcal{F}\)-avoiding}) \le \sum_{f \in \mathcal{F}} \Pr(f \subseteq x) \le \weightenum{\mathcal{F}}{p}
    \end{align}
    More generally, this bound holds whenever \(x\) is distributed according to a so-called \emph{local stochastic} distributions.
    Local stochastic distributions obey the following property.
    \begin{align}
      \text{for all }S \subseteq [n] \qquad \Pr(S \subseteq x) \le p^{|S|}
    \end{align}
\end{example}

Once one associates polynomials to these families of bad sets, it is natural to ask whether the polynomial ring operations have any operational meaning. 
For many noise distributions, the error support is not independent on disjoint sets.
However, we can define certain sum and product operations that correspond to union-bounds and independence in the independent-noise case.

\begin{definition}[Ring of bad sets]\label{def:ring_of_bad_sets}
  For a set \(\Omega\) with the decomposition \(\Omega_1\cup \Omega_2 = \Omega\) and families of subsets \(\mathcal{F}_1\subseteq P(\Omega_1)\) and \(\mathcal{F}_2\subseteq P(\Omega_2)\), we define two operations between \(\mathcal{F}_1\) and \(\mathcal{F}_2\) to arrive at a subset \(\mathcal{F}\subseteq \Omega\).
  Let \(\iota_1\) (\(\iota_2\)) be the canonical inclusion map from \(\Omega_1\) (\(\Omega_2\)) to \(\Omega\).

  The first operation is a sort of addition operation.
  \begin{align}
    \mathcal{F}_1\boxplus \mathcal{F}_2 := \iota_1(\mathcal{F}_1) \cup \iota_2(\mathcal{F}_2),
  \end{align}
  where by \(\iota_1(\mathcal{F}_1)\), we mean the application of \(\iota_1\) to each element of the family \(\mathcal{F}_1\).

  When \(\Omega_1\) and \(\Omega_2\) are disjoint, that is \(\Omega = \Omega_1\sqcup \Omega_2\), we define a multiplication operation.%
  \begin{align}
    \mathcal{F}_1\circledast \mathcal{F}_2 := \{\iota_1(f_1) \cup \iota_2(f_2) \mid f_1 \in \mathcal{F}_1,~f_2 \in \mathcal{F}_2\}.
  \end{align}
\end{definition}

\nomenclature[A, 02]{\(\mathcal{F}, \boxplus, \circledast\)}{Ring of bad sets, see~\cref{def:ring_of_bad_sets}. \(\mathcal{F}\) is a collection of bad sets, and \(\boxplus,\circledast\) denote addition and multiplication of such collections.}

The weight enumerators of the sums and products are sums and products of the corresponding weight enumerators.
\begin{proposition}\label{lemma:enumerator-ring} It holds that
  \begin{align}
    \weightenum{\mathcal{F}_1\boxplus \mathcal{F}_2}{x} &= \weightenum{\mathcal{F}_1}{x} + \weightenum{\mathcal{F}_2}{x}, \\
    \weightenum{\mathcal{F}_1\circledast \mathcal{F}_2}{x} &= \weightenum{\mathcal{F}_1}{x} \weightenum{\mathcal{F}_2}{x}. 
  \end{align}
\end{proposition}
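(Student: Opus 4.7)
The plan is to prove both identities by extracting the coefficient of $x^w$ from each side and matching them via direct counting.

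For the additive identity, I would first note that the canonical inclusions $\iota_1$ and $\iota_2$ are injective and preserve cardinality, so $\iota_i(\mathcal{F}_i)$ has the same weight distribution as $\mathcal{F}_i$ for $i \in \{1,2\}$. Writing $A^{(i)}_w$ for the number of weight-$w$ elements of $\mathcal{F}_i$, the coefficient of $x^w$ on the right-hand side of the identity is $A^{(1)}_w + A^{(2)}_w$. On the left-hand side, the corresponding coefficient is the number of weight-$w$ elements of $\iota_1(\mathcal{F}_1) \cup \iota_2(\mathcal{F}_2)$, which equals $A^{(1)}_w + A^{(2)}_w$ whenever $\iota_1(\mathcal{F}_1)$ and $\iota_2(\mathcal{F}_2)$ are disjoint as sub-collections of $P(\Omega)$. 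This disjointness is the generic case since a collision would require some set $f \subseteq \Omega_1 \cap \Omega_2$ to lie in both $\mathcal{F}_1$ and $\mathcal{F}_2$; equivalently, one may read the union in $\boxplus$ as a multiset union to obtain equality in full generality.

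For the multiplicative identity, the hypothesis $\Omega = \Omega_1 \sqcup \Omega_2$ does the heavy lifting. Given any $(f_1, f_2) \in \mathcal{F}_1 \times \mathcal{F}_2$, the set $\iota_1(f_1) \cup \iota_2(f_2)$ has weight exactly $|f_1| + |f_2|$ since the two parts are disjoint in $\Omega$. Moreover, the map $(f_1, f_2) \mapsto \iota_1(f_1) \cup \iota_2(f_2)$ is injective: one recovers $f_1$ (respectively $f_2$) by intersecting the output with $\Omega_1$ (respectively $\Omega_2$). It follows that the number of weight-$w$ sets in $\mathcal{F}_1 \circledast \mathcal{F}_2$ equals $\sum_{w_1 + w_2 = w} A^{(1)}_{w_1} A^{(2)}_{w_2}$, which is precisely the coefficient of $x^w$ in the product $\weightenum{\mathcal{F}_1}{x}\,\weightenum{\mathcal{F}_2}{x}$.

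I do not expect a substantive obstacle, as both statements reduce to bijective counting. The only point requiring care is the addition case when $\Omega_1 \cap \Omega_2 \neq \emptyset$, which I would dispatch either by restricting attention to the disjoint-family case that appears to be the intended setting, or by interpreting the union in $\boxplus$ as a multiset union.
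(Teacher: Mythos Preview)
Your proposal is correct and follows essentially the same coefficient-matching argument as the paper's proof, which also observes that a weight-$w$ element of $\mathcal{F}_1 \boxplus \mathcal{F}_2$ comes from one of the summands and that a weight-$w$ element of $\mathcal{F}_1 \circledast \mathcal{F}_2$ splits as a disjoint union of weights $w_1 + w_2 = w$. If anything, your treatment is slightly more careful in flagging the disjointness caveat for $\boxplus$ and the injectivity of the pairing map for $\circledast$, both of which the paper leaves implicit.
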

\begin{proof}
  A weight \(w\) element of \(\mathcal{F}_1\boxplus \mathcal{F}_2\) is a weight \(w\) element of \(\mathcal{F}_1\) or of \(\mathcal{F}_2\), so the coefficients of the same degree add.
  To prove the second property, note that any weight \(w\) element of \(\mathcal{F}_1\circledast \mathcal{F}_2\) must be the disjoint union of a weight \(w_1\) element of \(\mathcal{F}_1\) and a weight \(w_2\) element of \(\mathcal{F}_2\) where \(w = w_1+w_2\).
\end{proof}

The final operation we will need is a sort of composition.
This should be thought of capturing the bad errors of concatenated codes. 
This operation will appear when we analyze the recursive use of gadgets (\cref{lemma:level-reduction}).
Informally, given a fault-tolerant circuit \(C\) that fails with probability \(P_{L,1}(p)\) and fault-tolerant gadgets that fail with probability \(P_{L,2}(p)\), the circuit given by replacing every gate in \(C\) with a gate gadget of the second fault-tolerance scheme fails with probability \(P_{L,1}(P_{L,2}(p))\).
\begin{definition}[Composition]\label{def:composition}
  For a proposition \(Q\) and a set \(X\), let \(\mathbb{I}_{Q}[X]\) denote \(X\) when \(Q\) holds and \(\varnothing\) when \(\lnot Q\) holds.

  Fix a set \(\Omega\) and \(\mathcal{F}\subseteq P(\Omega)\). For ease of notation, identify \(\Omega \simeq [n]\). Also, consider sets \(\{\omega_i\}_{i \in \Omega}\) and families of sets \(\{\mathcal{S}_i \subseteq P(\omega_i)\}_{i \in \Omega}\) that are indexed by elements of \(\Omega\).
  We define an operation \(\mathcal{F} \bullet \{\mathcal{S}_i\}_i \subseteq P\left(\bigsqcup_{i\in\Omega} \omega_i\right)\) where
  \begin{align}
    \mathcal{F} \bullet \{\mathcal{S}_i\}_i := \boxplus_{f \in \mathcal{F}} \left(\mathbb{I}_{n \in f} [\mathcal{S}_n] \circledast \mathbb{I}_{{n-1} \in f} [\mathcal{S}_{n-1}] \circledast \dots \circledast  \mathbb{I}_{1 \in f} [\mathcal{S}_1]\right).
  \end{align}
  Note that if \(\mathcal{S}_i = \mathcal{S} \subseteq P(\omega)\) for some set \(\omega\), then \(\mathcal{F} \bullet \{\mathcal{S}_i\}_i \subseteq P(\Omega \times \omega)\) corresponds to all sets which for some element \(f \in \mathcal{F}\) are elements of \(\mathcal{S}\) on each row of \(\Omega \times \omega\) where \(f\) is non-trivial.
We will use the notation \(\mathcal{F} \bullet \mathcal{S}\) for this special case.
\end{definition}
\nomenclature[A, 03]{\(\mathcal{F} \bullet \{\mathcal{S}_i\}_i \)}{Composition of families of sets, see~\cref{def:composition}.}
For a number \(n \in \mathbb{N}\), and a family of sets \(\mathcal{S}\), we use the notation \(\mathcal{S}^{\bullet n}\) to denote the \(n\)-fold \(\bullet\) operation with \(\mathcal{S}^{\bullet 0}\) defined to be \(\{\{\centerdot\}\}\) where \(\{\centerdot\}\) is a singleton set.
The element will be clear from context.

\begin{proposition}[Composition of weight enumerators]\label{lemma:composition-upper-bound}
  Using the variables as defined in \cref{def:composition}, when there exists a polynomial \(p(x)\) such that on some interval \(x \in I \subseteq \mathbb{R}_+\) for all \(i \in \Omega\), \(\weightenum{\mathcal{S}_i}{x} \le p(x)\),
  \begin{align}
    \weightenum{\mathcal{F} \bullet \{\mathcal{S}_i\}_i}{x} \le \weightenum{\mathcal{F}}{p(x)}
  \end{align}
  for \(x \in I\).
\end{proposition}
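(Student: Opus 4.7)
The plan is to unfold the definition of \(\bullet\) using the addition/multiplication structure from \cref{lemma:enumerator-ring}, bound each factor by \(p(x)\), and then recognize what remains as an evaluation of \(\weightenum{\mathcal{F}}{\cdot}\) at \(p(x)\).

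First, I would apply \cref{lemma:enumerator-ring} to the definition of \(\mathcal{F} \bullet \{\mathcal{S}_i\}_i\). Since the outer \(\boxplus\) over \(f \in \mathcal{F}\) translates to a sum of weight enumerators and each inner \(\circledast\) translates to a product (and the indicator \(\mathbb{I}_{i \in f}[\mathcal{S}_i]\) contributes a nontrivial factor precisely when \(i \in f\), becoming the multiplicative unit when \(i \notin f\)), we obtain the identity
\begin{equation}
  \weightenum{\mathcal{F} \bullet \{\mathcal{S}_i\}_i}{x} \;=\; \sum_{f \in \mathcal{F}}\; \prod_{i \in f} \weightenum{\mathcal{S}_i}{x}.
\end{equation}

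Next, I would use the hypothesis that \(\weightenum{\mathcal{S}_i}{x} \le p(x)\) for all \(i \in \Omega\) and all \(x \in I\). The key observation is that weight enumerators have non-negative coefficients by construction, so they take non-negative values on \(I \subseteq \mathbb{R}_+\); the same is true for \(p(x)\) (otherwise the hypothesis could not hold uniformly). Because all quantities in the product are non-negative reals, the inequality propagates through the product, giving
\begin{equation}
  \prod_{i \in f} \weightenum{\mathcal{S}_i}{x} \;\le\; \prod_{i \in f} p(x) \;=\; p(x)^{|f|}.
\end{equation}

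Finally, I would collect terms by weight. Summing the last inequality over \(f \in \mathcal{F}\) and regrouping according to \(|f| = w\) yields
\begin{equation}
  \weightenum{\mathcal{F} \bullet \{\mathcal{S}_i\}_i}{x} \;\le\; \sum_{f \in \mathcal{F}} p(x)^{|f|} \;=\; \sum_{w \ge 0} A_w\, p(x)^w \;=\; \weightenum{\mathcal{F}}{p(x)},
\end{equation}
where \(A_w\) is the number of elements of \(\mathcal{F}\) of weight \(w\), as in \cref{def:weight-enumerator}. This completes the argument.

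The proof is essentially bookkeeping once \cref{lemma:enumerator-ring} is invoked; the only subtlety is justifying that the hypothesis \(\weightenum{\mathcal{S}_i}{x} \le p(x)\) may be multiplied across factors, which rests on non-negativity of weight enumerators on \(\mathbb{R}_+\). There is no deep obstacle.
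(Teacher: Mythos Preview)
Your proof is correct and follows essentially the same approach as the paper: expand the weight enumerator of the composition via the sum/product rules of \cref{lemma:enumerator-ring}, bound each factor by \(p(x)\), and regroup the resulting \(\sum_{f \in \mathcal{F}} p(x)^{|f|}\) as \(\weightenum{\mathcal{F}}{p(x)}\). Your write-up is slightly more explicit about the non-negativity justification, but the argument is the same.
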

\begin{proof}
  \(\mathcal{F} \bullet \{\mathcal{S}_i\}_i\) is constructed as a sum over a product for each \(f \in \mathcal{F}\).
  Using the weight enumerator sum rule, the weight enumerator is the sum of weight enumerators for each product.
  Each product has \(|f|\) terms and the corresponding weight enumerator can be evaluated using the weight enumerator product rule.
  Applying the restriction \(x \in I\) and the upper bound \(p(x)\), each term in the sum has upper bound \(p(x)^{|f|}\).
  For each \(d \in \mathbb{N}\), the number of terms in the sum proportional to \(p(x)^d\) correspond to the number of elements of \(\mathcal{F}\) of weight \(d\).
\end{proof}

Since avoiding families of bad sets imposes properties on a set, it will be convenient to define a partial order of bad sets.
\begin{definition}[Partial order of bad sets]\label{def:set-partial-order}
    For a set \(\Omega\) and two family of subsets \(\mathcal{F}_1, \mathcal{F}_2\subseteq P(\Omega)\),
    we say that \(\mathcal{F}_1 \preceq \mathcal{F}_2\) if every element of \(\mathcal{F}_2\) is a superset of some element of \(\mathcal{F}_1\).
    \begin{align}
        \forall T \in \mathcal{F}_2,\quad \exists S \in \mathcal{F}_1,\quad S \subseteq T.
    \end{align}
\end{definition}
It is straightforward to see that, for a subset \(X\subseteq \Omega\), if \(X\) is \(\mathcal{F}_1\)-avoiding and \(\mathcal{F}_1 \preceq \mathcal{F}_2\), then \(X\) is also \(\mathcal{F}_2\)-avoiding.

\section{Circuit formalism}\label{sec:circuit-formalism}
In order to work with our fault-tolerant circuits and gadgets in a black-box manner, we need to model a large degree of data that describes them.
This will allow us to rewrite larger fault-tolerant circuits that are built out of smaller fault-tolerant gadgets on a formal level.

\subsection{Networks}\label{sec:circuit}
We start by defining a quantum-classical circuit as a network.
In what follows, we will write definitions that are heavily inspired from tensor networks. However, no knowledge of tensor networks is required.

We will use the notion of a directed graph \(G = (V,E)\) with multi-edges.\footnote{In a multi-graph, there may be multiple edges between a pair of vertices. Thus, the edge set is \(E \subseteq V \times V \times \mathbb{N}\).}
For an edge \(e \in E\), we will say that its source is \(\edgeSrc[e] \in V\) and its destination is \(\edgeDest[e] \in V\).
For a vertex \(v \in V\), we use \(\nodeOut[v]\) to denote the number of edges \(e \in E\) with source \(v = \edgeSrc[e]\).
Likewise, we use \(\nodeIn[v]\) to denote the number of edges \(e \in E\) with destination \(v = \edgeDest[e]\).

For two vertices \(u,v \in V\), we say that \(u\) is an ancestor of \(v\) if there exists a sequence of vertices \(w_1, w_2, w_3, \dots, w_m \in V\) that form a (directed) path from \(u\) to \(v\).
That is, there is a sequence of edges \((u,w_1), (w_1,w_2), \dots, (w_{m-1}, w_m), (w_m,v)\) in \(G\).

\begin{definition}[Network]\label{def:network}
    A \textbf{network} is a labeled directed multi-graph \((V, E, \edgeConn)\) described by edge labels \(\edgeConn \colon E \to \mathbb{N} \times \mathbb{N}\), referred to as the \emph{connections} of the network.
    When there exists an edge \(e\) with source \(v\) and destination \(u\) and connection \((i,j) = \edgeConn[e]\), the edge \(e\) is said to connect output \(i\) of \(v\) to input \(j\) of \(u\).
    The following conditions ensure that this assignment is consistent.
    \begin{itemize}
        \item Every edge \(e \in E\) is labeled by valid input and output indices.
        \begin{align}
            \edgeConn(e) \in [\nodeOut[\edgeSrc[e]]] \times [\nodeIn[\edgeDest[e]]].
        \end{align}
        \item For every vertex \(v \in V\) and every output index \(i \in [\nodeOut[v]]\), there is a unique edge \(e \in E\) with source \(v = \edgeSrc[e]\) and label \(\edgeConn[e]=(i,j)\) for some arbitrary \(j\). 
        \item For every vertex \(u \in V\) and every input index \(j \in [\nodeIn[u]]\), there is a unique edge \(e \in E\) with destination \(u = \edgeDest[e]\) and label \(\edgeConn[e]=(i,j) \) for some arbitrary  \(i\). 
    \end{itemize}

    The vertex set may additionally include two special vertices \(\bot\) and \(\top\) which are referred to as the \textbf{input and output vertices} (IO vertices), respectively.
    We require that \(\bot\) has no inputs and \(\top\) has no outputs, and refer to the set \(V \setminus \{\bot, \top\}\) as the \textbf{non-IO vertices} of the network.
\end{definition}
\nomenclature[B, 01]{\(G = (V, E)\)}{A directed, acyclic multigraph that represents a network, and later a circuit. See~\cref{def:network}.}
\nomenclature[B, 02]{\(\edgeSrc(e),\edgeDest(e)\)}{Source and destination vertices of a directed edge \(e\).}
\nomenclature[B, 03]{\(\nodeIn(v),\nodeOut(v)\)}{Edges that have \(v\) as destination and edges that have \(v\) as source.}
\nomenclature[B, 04]{\(\edgeConn\)}{Labeling of edges in the network that specifies the sources and destinations of edges.}
\nomenclature[B, 05]{\(\top,\bot\)}{Special input and output vertices of a network.}

A network should be thought of as a directed graph between vertices where each edge connects a particular output of a vertex to a particular input of some other vertex.
Note that this is very nearly the same data that defines a tensor network: In a tensor network, the connections denote contractions of tensor legs labeled ``locally'' by indices.
The IO vertices will eventually be placeholders for quantum or classical input and outputs.

Our networks will represent circuits, whose operations (e.g., gates) correspond to the vertices, so they will need a notion of time.
\begin{definition}[Foliated network]\label{def:foliated-network}
    A foliated network \(G = (V,E,\edgeConn)\) of depth \(D\) is a network with a partition of vertices, denote the ``time'' of a vertex, \(\nodeTime \colon V \to [D]\), such that every edge \(e \in E\) has source and destinations at consecutive times. 
    \begin{align}
        \nodeTime[\edgeSrc[e]] + 1 = \nodeTime[\edgeDest[e]]
    \end{align}
    The foliation will be kept implicit until needed.
    For a time \(t \in [D]\), the set of vertices at time \(t\), referred to as a \textbf{timestep}, is \(\nodeTimeInv[t] \subseteq V\).
\end{definition}
\nomenclature[B, 06]{\(\nodeTime \colon V \to [D]\)}{Time labeling of vertices in a network. \(D\) is the depth of the network. For every \(t\in [D]\), the set of vertices at time \(t\) is called a timestep.}
Note that in a foliated network, cycles are not permitted and all outputs of a vertex are connected to inputs of vertices at the following time.
One can straightforwardly see that a foliated network is a \textbf{directed acyclic graph}: For any vertex \(v \in V\), every ancestor \(u \in V\) of \(v\) satisfies \(\nodeTime[u] < \nodeTime[v]\).
Thus \(\nodeTime[\cdot]\) induces a partial order on the set of vertices.
This partial order can be extended to a (possibly non-unique) total order, known as a \textbf{topological sort}, such that if \(u,v \in V\) satisfies \(u \le v\), then \(v\) is not an ancestor of \(u\).
For convenience, we will refer to a arbitrary (but fixed) topological sort \((v_1, \dots, v_{|V|})\) as the ordered vertices of \(G\) and require \(v_1 = \bot\) and \(v_{|V|} = \top\).

\subsection{Types}

Each connection in a network must carry additional information.
We refer to this as the ``type'' of a connection in analogy to types in computer programming languages.
A type will specify the Hilbert space associated with a connection, as well as whether the Hilbert space is of classical or quantum nature.

Preservation of types during rewriting operations will ensure that each rewrite is valid as well as completely specify any ambiguity.

\begin{definition}[Types]\label{def:type}
  Fix a Hilbert space \(\hilbert\) of \(n\) qubits and \(\cqvar \in \{\typeQ, \typeC\}\) (for quantum or classical).
  The corresponding \textbf{type} \(\typeVar\) is the tuple \((\hilbert, \cqvar)\).
  We say that such a type is quantum or classical, respectively.
  We use \(\mathcal{H}_{\typeVar}\) to denote the Hilbert space \(\mathcal{H}\) associated with the type \(\typeVar\).
  The set of all types is denoted \(\typeSet\), and, for a finite sequence of types \(\typeVar[\bullet] = (\typeVar[1], \dots, \typeVar[N])\), we use \(\mathcal{H}_{\typeVar[\bullet]}\) to denote \(\mathcal{H}_{\typeVar[1]} \otimes \dots \otimes \mathcal{H}_{\typeVar[N]}\).
\end{definition}
\nomenclature[B, 07]{\(\typeVar,\typeVar[\bullet]\)}{Type of a connection, and a sequence of types, see~\cref{def:type}.}

To each Hilbert space we will implicitly associate a complete set of orthogonal states that we call \textbf{computational basis states}.

We will want to describe computations with a classical component.
However, we need to model the state of such computations quantum-mechanically.
These are the classical-quantum states.
\begin{definition}[Classical-quantum states]
  Let \(\typeVar[\bullet]\) be a sequence of types, and decompose \(\mathcal{H}_{\typeVar[\bullet]}\) into two subsystems \(CQ\) where the subsystems of classical type are in \(C\) and the subsystems of quantum type are in \(Q\).
  A pure state \(\ket{\psi} \in \mathcal{H}_{\typeVar[\bullet]}\) is said to satisfy the type \(\typeVar[\bullet]\) if \(C\) is in a classical basis state.
  That is, for each classical subsystem \(i \in C\), there is a computational basis state \(\ket{x_i} \in \mathcal{H}_{t_i}\) such that
  \begin{align}
    \tr_Q \ketbra{\psi}{\psi} = \otimes_{i \in C} \ketbra{x_i}{x_i}
  \end{align}
  In other words, the classical subsystem is in a well defined computational basis state.
  A mixed state \(\rho\)
  is said to satisfy the type \(\typeVar[\bullet]\) if it is a convex combination of pure states satisfying \(\typeVar[\bullet]\).
  We use the notation \(\mixedstateType{\typeVar[\bullet]}\) or simply \(\mixedstate{\typeVar[\bullet]}\) to refer to the set of all mixed states of type \(\typeVar[\bullet]\).
\end{definition}
\nomenclature[B, 08]{\(\mathcal{H}_{\typeVar[\bullet]}, \mixedstateType{\typeVar[\bullet]}\)}{Pure states and mixed states of type \(\typeVar[\bullet]\). We abbreviate \(\mixedstateType{\typeVar[\bullet]}\) as \(\mixedstate{\typeVar[\bullet]}\). }

Note that these states are separable between the classical and quantum subsystems.

We now introduce operators and superoperators.
For a complete discussion see \cite{kitaev2002classical}.
\begin{definition}[Operators and Superoperators]\label{def:operators}
    Let \(\mathcal{H}\) and \(\mathcal{H}'\) be two finite-dimensional Hilbert spaces.
    We use \(\linearOp[\mathcal{H}']{\mathcal{H}}\) to refer to the space of linear operators (i.e. matrices) mapping from \(\mathcal{H}\) to \(\mathcal{H}'\).
    We define the shorthand \(\linearOp{\mathcal{H}} \equiv \linearOp[\mathcal{H}]{\mathcal{H}}\).
    The subset of positive semi-definite (PSD) trace-1 operators (density matrices) on \(\mathcal{H}\) is denoted \(\mixedstateArb{\mathcal{H}}\).

    We define the space of superoperators \(\superOp[\mathcal{H}']{\mathcal{H}}\) to be the set of linear maps on operators that maps from \(\linearOp{\mathcal{H}}\) to \(\linearOp{\mathcal{H}'}\).
    We say that a superoperator \(Q \in \superOp[\mathcal{H}']{\mathcal{H}}\) is completely positive and trace-preserving (CPTP) if
    \begin{itemize}
        \item For all operators \(O \in \linearOp{\mathcal{H}}\), the trace of \(O\) is preserved in the sense that \(\tr(O) = (\tr \circ Q)(O)\).
        \item For all Hilbert spaces \(\mathcal{H}_E\), \(Q\) extended by the identity on the auxillary Hilbert space \(\mathcal{H}_E\) maps PSD operators to PSD operators.
        That is, for all positive semi-definite operators \(O \in \linearOp{\mathcal{H}\otimes \mathcal{H}_E}\), \((Q\otimes\mathsf{Id}_{\mathcal{H}_E})(O)\) is positive semi-definite where \(\mathsf{Id}_{\mathcal{H}_E}\) is the identity superoperator.
    \end{itemize}
\end{definition}

\begin{definition}[Operator Types]\label{def:operator-type}
    For an operator \(O \in \linearOp[\mathcal{H}']{\mathcal{H}}\) and two sequences of types \(\typeVar[\bullet]^{\text{in}} = (\typeVar[1]^{\text{in}}, \ldots, \typeVar[N]^{\text{in}})\) and \(\typeVar[\bullet]^{\text{out}} = (\typeVar[1]^{\text{out}}, \dots, \typeVar[M]^{\text{out}})\), \(O\) is said to have input type \(\typeVar[\bullet]^{\text{in}}\) and output type \(\typeVar[\bullet]^{\text{out}}\) if 
    \begin{itemize}[topsep = 0pt]
        \item \(\mathcal{H} = \mathcal{H}_{\typeVar[\bullet]^{\text{in}}}\) and \(\mathcal{H}' = \mathcal{H}_{\typeVar[\bullet]^{\text{out}}}\)

        \item Every pure state \(\ket{\psi} \in \mathcal{H}_{\typeVar[\bullet]^{\text{in}}}\) of type \(\typeVar[\bullet]^{\text{in}}\) is mapped to a (not necessarily normalized) state \(\ket{\phi} = O \ket{\psi} \in \mathcal{H}_{\typeVar[\bullet]^{\text{out}}}\) of type \(\typeVar[\bullet]^{\text{out}}\).
    \end{itemize}
    The set of operators of input type \(\typeVar[\bullet]^{\text{in}}\) and output type \(\typeVar[\bullet]^{\text{out}}\) will be written as \(\linearOpType[\typeVar[\bullet]^{\text{out}}]{\typeVar[\bullet]^{\text{in}}}\).
\end{definition}
\nomenclature[B, 09]{\(\linearOpType[\typeVar[\bullet]^{\text{out}}]{\typeVar[\bullet]^{\text{in}}}\)}{Set of operators of input type \(\typeVar[\bullet]^{\text{in}}\) and output type \(\typeVar[\bullet]^{\text{out}}\), see~\cref{def:operator-type}.
We use the shorthand \(\linearOp{\mathcal{H}}, \linearOp{\typeVar[\bullet]}\) when the input and output spaces are the same.}

\begin{definition}[Gate]\label{def:gate}
    For two sequences of types \(\typeVar[\bullet]^{\text{in}} = (\typeVar[1]^{\text{in}}, \ldots, \typeVar[N]^{\text{in}})\) and \(\typeVar[\bullet]^{\text{out}} = (\typeVar[1]^{\text{out}}, \dots, \typeVar[M]^{\text{out}})\) and an superoperator \(\g\),
    \(\g\) is said to be a \textbf{gate} with \textbf{input type} \(\typeVar[\bullet]^{\text{in}} \) and \textbf{output type} \(\typeVar[\bullet]^{\text{out}}\) if it has a Kraus representation \(\{K_\mu\}_\mu\) where each Kraus operator \(K_\mu\) has the corresponding input type and output type \(K_\mu \in \linearOpType[\typeVar[\bullet]^{\text{out}}]{\typeVar[\bullet]^{\text{in}}}\).
    For short, we say that the gate has type \((\typeVar[\bullet]^{\text{in}}, \typeVar[\bullet]^{\text{out}})\).
    We use \(\superOpType[\typeVar[\bullet]^{\text{out}}]{\typeVar[\bullet]^{\text{in}}}\) to denote the set of all such superoperators.

    We call \(\g\) a \textbf{physical gate} if it is additionally completely-positive and trace-preserving (CPTP).
\end{definition}
\nomenclature[B, 10]{\(\superOpType[\typeVar[\bullet]^{\text{out}}]{\typeVar[\bullet]^{\text{in}}}\)}{Set of superoperators of input type \(\typeVar[\bullet]^{\text{in}}\) and output type \(\typeVar[\bullet]^{\text{out}}\), see~\cref{def:gate}.}
\nomenclature[B, 11]{\(\g\)}{A gate, which is an element in \(\superOpType[\typeVar[\bullet]^{\text{out}}]{\typeVar[\bullet]^{\text{in}}}\).}

\begin{remark}
    We will use ``non-physical'' gates to describe general and often adversarial models of faults in a circuit, and later prove thresholds against such faults.
\end{remark}

\begin{figure}
    \centering
    \includegraphics[width=0.95\linewidth]{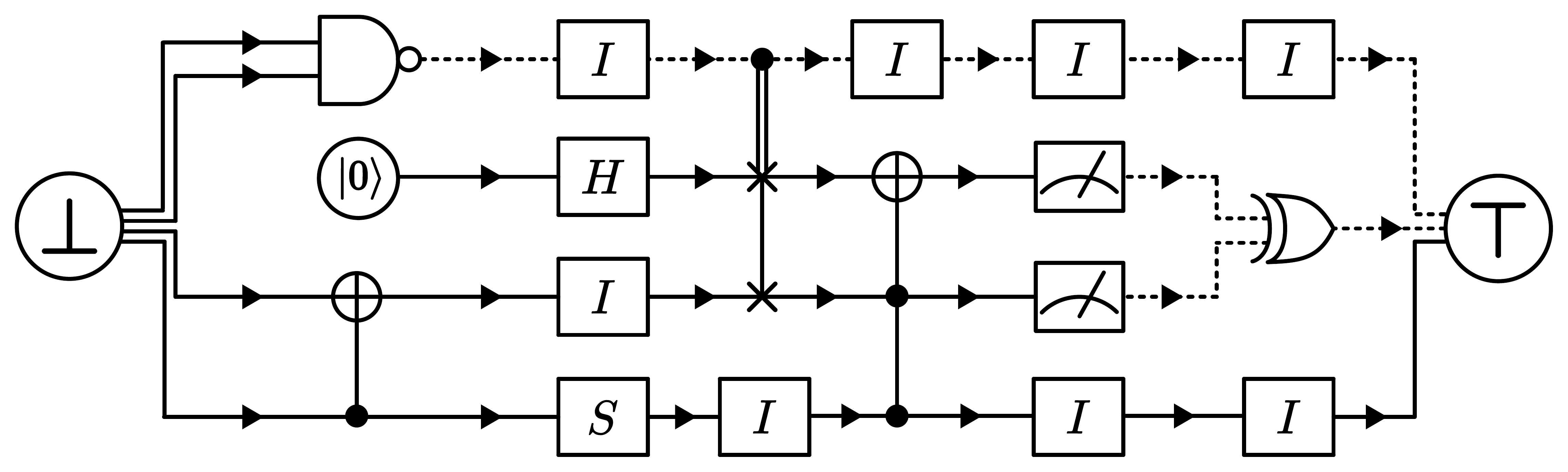}
    \caption{Illustration of a classical-quantum circuit. Taking two qubits and two bits as input and outputting one qubit and two bits.
    Types of edges and connections are annotated in red.}
    \label{fig:cq-circuit}
\end{figure}

We are now ready to state the central definition of our framework, a classical-quantum circuit (``quantum circuit'' or just ``circuit'' for short).
\begin{definition}[Classical-quantum circuit]\label{def:cq-circuit}
    Let \((V,E, \edgeConn)\) be a foliated network on the vertices \(V\) which we refer to as \textbf{locations}.
    Let \(\edgeType \colon E \to \typeSet\) be an assignment of a type to each edge.
    For each non-IO vertex \(v \in V \setminus \{\bot, \top\}\), we assign a classical-quantum gate \(\nodeGate[v]\) (\cref{def:gate}) such that \(\nodeGate[v]\) satisfies the type of the corresponding input and output edges.\footnote{Formally, in the sequence of input types, \(\typeVar[i]^{\text{in}} = \edgeType(e)\) when \(\edgeDest(e) = v\) and \(\edgeConn(e) = (i, \cdot)\). Likewise, in the sequence of output types, \(\typeVar[j]^{\text{out}} = \edgeType(e)\) when \(\edgeSrc(e) = v\) and \(\edgeConn(e) = (\cdot, j)\).}
    For convenience, we define \(\nodeGate[\bot]\) and \(\nodeGate[\top]\) to be the identity -- these vertices correspond to inputs and outputs of the circuit.
    The data \(C = (V,E,\edgeConn, \edgeType, \nodeGate)\) is said to be a classical-quantum circuit.
\end{definition}
\nomenclature[C, 01]{\(\edgeType\)}{A labeling that assigns a type to every connection in a network.}
\nomenclature[C, 02]{\(\nodeGate\)}{A labeling that assigns a gate to every vertex in a network, which has the correct types according to \(\edgeType\).}
\nomenclature[C, 03]{\(C\)}{A classical-quantum circuit is fully specified by \(C = (V,E,\edgeConn, \edgeType, \nodeGate)\), see~\cref{def:cq-circuit}. We often refer to the vertices and edges of a circuit as locations and wires. }

Recall that the depth of a network was defined in \cref{def:foliated-network}.
\begin{definition}[Circuit width]
  For a classical quantum circuit \(C = (V,E,\edgeConn, \edgeType, \nodeGate)\), we say that a vertex is a \textbf{quantum location} if any edge \(e\) incident to it is of quantum type.
  Likewise, a vertex is a \textbf{classical location} if all edges \(e\) incident to it are of classical type.

  The quantum width \(W_t\) of a timestep \(t\) of a circuit is the sum of the maximum over quantum inputs and outputs of each vertex.
  We introduce the following abuse of notation, for a vertex \(v \in V\) and an index \(i \in \nodeIn[v]\), \((v,i)\) uniquely specifies the edge \(e\) with source \(\edgeSrc[e] = v\) and \(\edgeConn[e] = (i,\cdot)\).
  Here, we will label edges by such pairs and an analogous definitions defined for a vertex \(v \in V\) and an index \(j \in \nodeOut[v]\), \((j,v)\).
  \begin{align}
    W_t = \sum_{v \in \nodeTimeInv[t]} \max\left(|\{i \in \nodeIn[v] \mid \edgeType((v,i)) = (\cdot, \typeQ)\}|, |\{j \in \nodeOut[v] \mid \edgeType((j,v)) = (\cdot, \typeQ)\}|\right)
  \end{align}
  The quantum width of a circuit \(W\) is the maximum quantum width in any timestep, \(W = \max_{t \in [D]} W_t\).
  The classical width of a circuit \((W_{\typeC})\) is defined analogously with \(\typeQ\) replaced by \(\typeC\).
  We use the term ``total width'' to refer to the sum \(W +W_{\typeC}\) of quantum and classical widths.
  In what follows, we will simply refer to the quantum width as the width.
\end{definition}
\nomenclature[C, 03]{\(W, W_{\typeC}\)}{The quantum width and classical width of a circuit.}

Note that a classical-quantum circuit is \emph{not} a channel -- many distinct circuits implement the same channel.
However, these circuits may have very different fault-tolerance properties.
We now describe the channel implemented by the circuit.
\begin{definition}[Circuit implementation]\label{def:circuit-imp}
  To every classical-quantum circuit \(C\) specified by the tuple of data \((V,E,\edgeConn, \edgeType, \nodeGate)\), we can define a superoperator in the following way: Let \((v_1, \dots, v_{|V|})\) be the ordered locations (vertices)  of \(C\).
  Then, we define \(\circuitMap{C}\) to be the superoperator
  \begin{align}
    \circuitMap{C} = \nodeGate[v_{|m|}] \circ \nodeGate[v_{m-1}] \circ \dots \circ \nodeGate[v_2] \circ \nodeGate[v_1]
  \end{align}
  Where, for each \(t \in [m]\), \(\nodeGate[v_t]\) is suitably extended by a tensor product with the identity in the canonical way, and for each edge \((u,v) \in E\) with \((i,j) \in \edgeConn\), the \(j\)-th tensor factor in the input Hilbert space of \(\nodeGate[v]\) is identified with the \(i\)-th tensor factor in the output Hilbert space of \(\nodeGate[u]\).
  The ordering of tensor factors for the input \(\nodeGate[\bot]\) and output \(\nodeGate[\top]\) is the same as their output or input, respectively.
\end{definition}
\nomenclature[C, 04]{\(\circuitMap{C}\)}{The map implemented by circuit $C$, see~\cref{def:circuit-imp}.}

In order to work with sets of gates (e.g. gadgets), we will also introduce vertex contractions.
Informally, a circuit contraction is simply a circuit where all the gates in a subset are grouped together into a single big gate.
\begin{definition}[Circuit Contraction]\label{def:circuit-contraction}
    For a classical-quantum circuit \(C = (V,E,\edgeConn, \edgeType, \nodeGate)\) with foliation \(\nodeTime: V\rightarrow [D]\), we can \textbf{contract} vertices together to form a new circuit. 
    More precisely, let \(\contractV\) be a set of vertices and let \(\contract:V\rightarrow \contractV\) be a surjective map. 
    \(\contract\) induces vertex contractions of \(C\): for every vertex \(u\in \contractV\), we contract the vertices \(\contract^{-1}(u)\) which are mapped to \(u\) to form a new network \((\contractV, \contractE)\). Note that $|E|=|\contractE|$.
    We say that the contraction is \textbf{valid} if this new network is acyclic. 
    Vertex contractions induce new labelings \(\contractConn, \contractType, \contractGate\), and therefore a new circuit
    \(\contractC = (\contractV,\contractE,\contractConn,\contractType,\contractGate)\). See~\Cref{fig:contracted-circuit} for an illustration.

    We say that a contraction $\contract$ is \textbf{depth-preserving} if the contracted circuit \(\contractC\) has a foliation \(\contractT:\contractV\rightarrow [D]\) of the same depth \(D\) such that every vertex \(v\in V\) is contracted with other vertices in the same timestep, namely, 
    \begin{align}
        \contractT(\contract(v)) = \nodeTime(v).
    \end{align}
    More generally, we say that a contraction $\contract$ is \textbf{foliation-preserving} if the contracted circuit \(\contractC\) has a foliation \(\contractT:\contractV\rightarrow [\contractD]\) such that for all timesteps $t\in [D]$, there is a new timestep $\contractt\in [\contractD]$ where all vertices in $V$ at timestep $t$ are contracted into. Precisely, 
    \begin{align}
        \forall t\in [D], \exists \contractt \in [\contractD] \quad \contract(\nodeTime^{-1}(t)) \subseteq \contractT^{-1}(\contractt)~.
    \end{align}
    In this work, we require that all contractions are valid.
\end{definition}

\nomenclature[C, 05]{\(\contract, \contractC\)}{contraction map and contracted circuit, see~\cref{def:circuit-contraction}.}

\begin{figure}[t]
    \centering
    \includegraphics[width=0.9\linewidth]{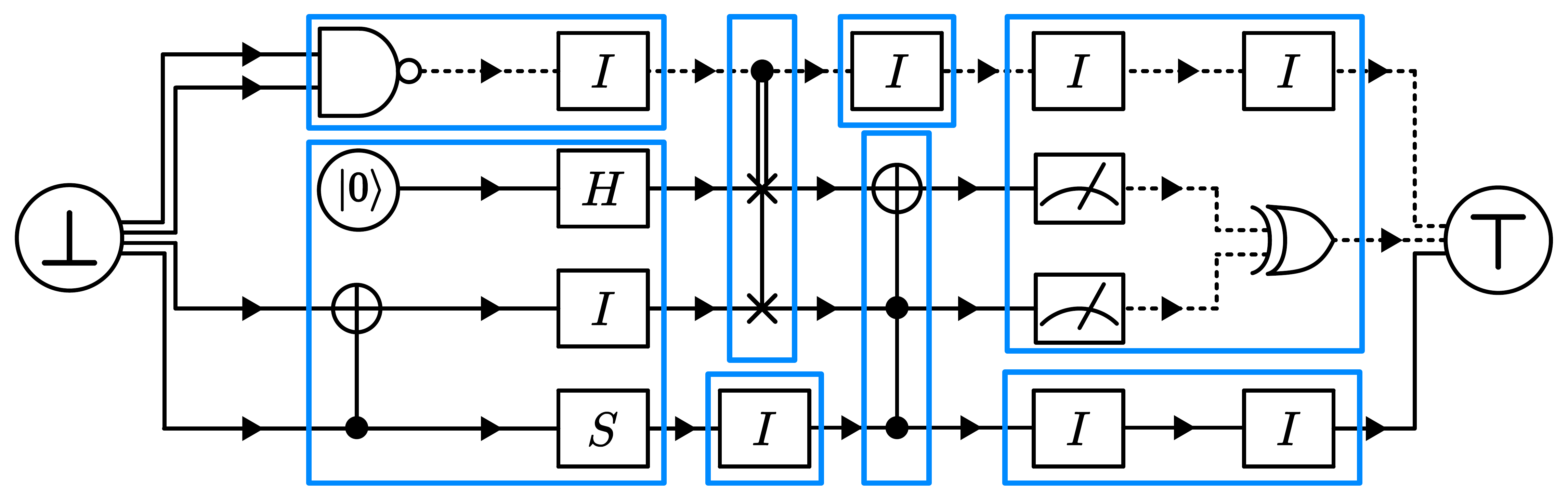}
    \caption{Illustration of a circuit contraction. Note that the indexing of inputs and outputs is modified.}
    \label{fig:contracted-circuit}
\end{figure}

The notion of circuit contraction is employed to define faults below. On a high level, the reason for this is that, in our noise model, the faulty locations of the circuit may be replaced by an arbitrary (global) superoperator.
In order to capture faults that perform operations between parts of the circuit, we allow the fault to act on a contracted circuit.
\begin{definition}[Faults]\label{def:fault}
  Fix a classical-quantum circuit \(C = (V,E,\edgeConn, \edgeType, \nodeGate)\) and ordered locations \((v_1, \dots, v_{|V|})\).
  A \textbf{fault} \(\fault=(\contract, \nodeFaulty)\) consists of a depth-preserving
  contraction map \(\contract \colon V \to \contractV\) and a (faulty) gate map \(\nodeFaulty\) with domain \(\contractV\) such that the contracted circuit \(\contractC = (\contractV,\contractE,\contractConn,\contractType,\contractGate)\) is a valid classical-quantum circuit when the induced gate map \(\contractGate\) is replaced by the faulty gate map \(\nodeFaulty\).
  We say that the support of \(\nodeFaulty\), known as the \textbf{fault path}, is the set of vertices in \(V\) that are contracted non-trivially. Precisely, 
  \begin{align}
    \supp (\nodeFaulty) = \{v \in V \mid \nodeFaulty(\kappa(v)) \ne \nodeGate(v)\}.
  \end{align}
    
  From now on, we use \(\fault\) to refer to a set of data \((\contract, \nodeFaulty)\) compatible with \(C\) and \(C[\fault]\) to refer to the faulty circuit \((\contractV,\contractE,\contractConn,\contractType,\nodeFaulty)\). 
  We write \(\supp (\fault)\) as an alias for \(\supp (\nodeFaulty)\).

  In what follows, we will mostly concern with faults supported only on quantum locations.
  This restriction is not necessary as classical fault-tolerance can be constructed and analyzed in our framework as well.
\end{definition}
\nomenclature[C, 06]{\(\fault = (\contract, \nodeFaulty)\)}{A fault in a circuit is defined by a depth-preserving contraction map \(\contract\) and a set of faulty gates \(\nodeFaulty\) which replaced the ideal gates, see~\cref{def:fault}.}
\nomenclature[C, 07]{\(\supp (\fault)\)}{The fault path of a fault \(\fault\), i.e., the locations of the faulty gates.}

It is common to first prove fault-tolerance against insertions of Pauli errors into the circuit and then reduce the general case to the Pauli error case.
Thus we will define a notion of a \emph{Pauli fault}.
\begin{definition}[Pauli superoperator]\label{def:pauli-superoperator}
    A superoperator \(\error\) is said to be a \textbf{Pauli superoperator} if, for some Pauli operators \(P,P'\) and scalar \(c \in \mathbb{C}\), \(\error\) can be written as
    \begin{align}
        \error(\rho) = c P \rho P'
    \end{align}
    If \(P = P'\) then \(\error\) is said to be a \textbf{diagonal Pauli superoperator}.
\end{definition}
As we allow \(P, P'\) to differ,  \cref{def:pauli-superoperator} will naturally capture how general errors ``decohere'' to Pauli errors (diagonal Pauli superoperators).
In general, the strategy to prove fault tolerance will be to decompose general faults into (non-diagonal) \emph{Pauli faults} which apply Pauli superoperators (see \cref{lemma:fault-decomposition}).
For example, consider the following channel that applies an \(S\) gate with probability \(p\).
\begin{align}
    \rho \mapsto (1-p) \rho + p \PHAS \rho \PHAS^{\dagger}
\end{align}
Defining the coefficient \(\theta = \frac{e^{i \pi / 4}}{\sqrt{2}}\), we can decompose \(\PHAS\) as \(\PHAS =  \theta \mathsf{I} + \theta^*\mathsf{Z}\) and write the channel as a sum of Pauli superoperators
\begin{align}
    \rho \mapsto \left(1-\frac{p}{2}\right) \mathsf{I}\rho \mathsf{I} + \frac{p}{2} \mathsf{Z}\rho \mathsf{Z} + \frac{ip}{2}\mathsf{I}\rho \mathsf{Z} - \frac{ip}{2}\mathsf{Z}\rho \mathsf{I}~.
\end{align}

\begin{definition}[Pauli fault]\label{def:pauli-fault}
    Fix a classical-quantum circuit \(C = (V,E,\edgeConn, \edgeType, \nodeGate)\), and a fault \(\fault = (\kappa \colon V \to V_\kappa, \nodeFaulty)\).
    The fault \(\fault\) is said to be a \textbf{Pauli fault} if the faulty gate at every contracted vertex \(v \in V_\kappa\) can be written as the original gate with Pauli superoperators occurring before and after.
    That is, for some Pauli superoperators \(\mathcal{P}_v\), \(\mathcal{P}_v'\)
    \begin{align}
        \nodeFaulty(v) = \mathcal{P}'_v \circ \nodeGate_\kappa(v) \circ \mathcal{P}_v~.
    \end{align}
    If this property holds for diagonal Pauli superoperators \(\mathcal{P}_v\) and \(\mathcal{P}'_v\), then \(\fault\) is said to be a \textbf{diagonal Pauli fault}.
\end{definition}

While allowing Pauli superoperators to act both before and after an ideal gate may seem somewhat strange, it captures many faults that a less general definition (such as limiting Pauli superoperators to only act after an ideal gate) cannot.
Consider, for example, a fault that replaces a reset gate by the identity gate.
By utilizing extra ancilla (e.g. by swapping qubits to the ancilla, apply any noise superoperator, and swapping back), we can decompose the map of a circuit with an arbitrary fault into a sum of maps of circuits with this ancilla and Pauli faults (see \cref{lemma:fault-decomposition}).
Such ancilla will be part of an \emph{environment circuit}, see~\cref{def:circuit-environment}.

\begin{remark}[Pauli fault]
    It is a crucial property that for Pauli faults, the contraction can always be taken to be trivial.
    More precisely, since all Pauli superoperators can be written as the tensor product of single-qubit Pauli superoperators, there always exists an equivalent fault on the uncontracted vertex set.
    This property will allow us to analyze gadgets in isolation despite faults inserting superoperators with support acting between gadgets.
    The presence of the ideal gate allows the behavior of the faulty circuit to be compared to the ideal one.

    As we will see in \cref{lemma:level-reduction}, it is unfortunately the case that physical level diagonal Pauli faults do not imply that ``logical level'' faults are diagonal or even Pauli.\footnote{For example, the failure of a magic state consumption gadget caused by a measurement error.}
\end{remark}

In order to fully specify the operation of the circuit with noise, we need a model of the environment.
In what follows, the environment should be thought of as a book-keeping method for tracking an unspecified second quantum circuit that is essentially unrestricted.
\begin{definition}[Circuit with environment]\label{def:circuit-environment}
    For a classical-quantum circuit \(C\) specified by the data \((V,E,\edgeConn, \edgeType, \nodeGate)\), the circuit \(C\) \textbf{with environment} refers to an (arbitrary) extension \(\Cwithenv = (V\sqcup \environV,E\sqcup \environE,\edgeConn',\edgeType',\nodeGate')\) of the original circuit \(C\) where the sub-circuit induced by the nodes \(V\) of \(\Cwithenv\) is the original circuit \(C\).
    \(C\) is called the \textbf{computational circuit} of \(\Cwithenv\), and the sub-circuit \(\environC\) induced by the nodes \(\environV\) is called the \textbf{environment circuit} (of \(\Cwithenv\)).

    Frequently, we will assume that the environment circuit is \textbf{closed}, which means it has no input or output edges. In other words, the environment circuit introduces its own workspace and traces out its output at the end.
\end{definition}
\nomenclature[C, 08]{\(\Cwithenv,\environC\)}{Circuit \(C\) with environment refers to an (arbitrary) extension \(\Cwithenv = (V\sqcup \environV,E\sqcup \environE,\edgeConn',\edgeType',\nodeGate')\) of the original circuit \(C\). The sub-circuit \(\environC\) induced by the nodes \(\environV\) is called the {environment circuit} of \(\Cwithenv\).}

For a circuit with environment, its width (depth) refers to the width (depth) of the computational circuit.

The apparent imprecision of the environment circuit is deliberate.
The fault-tolerance statements we prove will hold for arbitrary environment circuits.
Note that as stated, there is no interaction between the computational and environment circuits because the vertex and edge sets are disjoint. The environment circuit and computational circuits may only interact via faults which contracts vertices between the two (otherwise independent) networks.

\begin{remark}[Adversarial faults]
    A fault should be thought of as replacing the gates at locations in the fault path with \emph{arbitrary} superoperators.
    These superoperators, which may be supported on the environment, are not necessarily independent between locations. 
    In other words, they may entangle the qubits of distinct locations of the original circuit.
    This models an arbitrarily powerful adversary with a memory that is permitted access to the part of the computational state supported in the fault every timestep. 
\end{remark}

\section{Fault-tolerant gadgets}\label{sec:gadgets}

For a fault-tolerant circuit, the connections in the circuit carry quantum information encoded in quantum error correcting codes. 
Therefore, a collection of connections could have a \textbf{code type}, which specifies the encoding and protection of the carried information. 
Such code types will be convenient for working with gadgets since a general fault-tolerance scheme may employ many different codes distinct from the code protecting the bulk of the computation in different places (for example code switching or preparing resource states).
This feature will allow our statements to be agnostic to the choice of code used.
To define code types, we will need to first characterize physical errors on quantum states.

Our notion of a damaged state is partly inspired by \cite{kitaev1997quantum} and \cite{aharonov1997fault}.
\begin{definition}[Deviation]\label{def:deviation}
  For a set of qubits \(A\), a family of bad sets \(\baderrors\subseteq \power(A)\), and two states \(\sigma\), \(\tilde{\sigma}\) on \(A\),
  \(\tilde{\sigma}\) is said to be \textbf{\(\baderrors\)-deviated} from \(\sigma\) if there exists a \(\baderrors\)-avoiding set \(S \subseteq A\) and a (not necessarily CPTP) superoperator \(\error_S\) supported only on \(S\) such that \(\tilde{\sigma} = \left(\mathcal{I}_{A \setminus S}\otimes \error_S\right)(\sigma)\). 
  We denote this relationship by 
  \begin{align}
    \sigma \dev{\baderrors} \tilde{\sigma}.      
  \end{align}
  For a subspace \(\qcode\) of the Hilbert space, a state \(\tilde{\sigma}\) is said to be \(\baderrors\)-deviated from \(\qcode\) if there exists a (possibly mixed) state \(\sigma\) in \(\qcode\) such that \(\tilde{\sigma}\) is \(\baderrors\)-deviated from \(\sigma\).
  We use the less general term ``Pauli \(\baderrors\)-deviated'' (``diagonal Pauli \(\baderrors\)-deviated'') when the superoperator \(\error_S\) is a Pauli superoperator (diagonal Pauli superoperator).
\end{definition}
\nomenclature[D, 01]{\(\sigma \dev{\baderrors} \tilde{\sigma}\)}{Notation to represent that for two states \(\sigma\) and \(\tilde{\sigma}\), \(\tilde{\sigma}\) is \(\baderrors\)-deviated from \(\sigma\), see~\cref{def:deviation}.}

In general, our circuits may have data encoded in different quantum codes.
Formally, we attach this data to groups of qubits using \emph{code types}.
The code types will track the code (including a choice of logical basis) and any demands that will be made on damaged code states.

\begin{definition}[Code Type]\label{def:code-type}
    For a \(n\) qubit Hilbert space, let $r = n-k$, a quantum code \(\qcode\) with \(k\) logical qubits is specified by a \textbf{decoding unitary} \(\decU: \hilbert^n\rightarrow \hilbert^k\otimes \hilbert^{r}\) and a family of \textbf{bad error supports} \(\baderrors \subseteq \power([n])\) such that
    \begin{itemize}
        \item The codespace \(\qcode\) is specified by \(\encU\), the \textbf{encoding unitary}, acting on the \(k\) qubits logical information and the trivial syndrome,
        \begin{align}
            \qcode = \{\encU \ket{\psi}\ket{0^{r}} | \ket{\psi}\in \hilbert^k \}.            
        \end{align}
        We define the \textbf{reversible encoding channel} as \(\Rencoder(\rho) = \encU \rho \decU\) and the \textbf{reversible decoding channel} as its inverse \(\Rdecoder(\sigma) = \decU \sigma \encU\).

        \item For any logical state \(\rho\in \mixedstateArb{\hilbert^k}\), for any state  \(\tilde{\sigma}\) that is \(\baderrors\)-deviated from an encoded logical state \(\sigma = \encU (\rho\otimes \ketbra{0^{r}}{0^{r}}) \decU\), the decoded state\footnote{{The term ``state'' is an abuse of terminology, since in general, the result is not completely positive or trace-1.}} is proportional to the logical state.\footnote{In particular, this implies that errors on \(\baderrors\)-avoiding sets are correctable and that the proportionality constant may only depend on the error.}
        \begin{align}\label{eq:decoder-recovers}
            \tr_{\hilbert^{r}}(\Rdecoder(\tilde{\sigma})) \propto \rho
        \end{align}
        We define the \textbf{irreversible encoding channel} as \(\encoder(\rho) = \Rencoder(\rho\otimes \ketbra{0^{r}}{0^{r}})\), and the \textbf{irreversible decoding channel} as \(\decoder(\sigma) = \tr_{\hilbert^{r}}(\Rdecoder({\sigma}))\). 
    \end{itemize}
    A \textbf{code type} of \(\qcode\) is the tuple of data \(\ctype[\qcode] = (\decU, \baderrors)\). 
    The \textbf{logical type} of \(\ctype[\qcode]\) is the type corresponding to an unencoded register \((\mathcal{H}^k, \cqvar)\).
    Likewise the \textbf{syndrome type} of \(\ctype[\qcode]\) is the type corresponding to the syndrome needed to make the encoding map unitary \((\mathcal{H}^r, \cqvar)\).
    
    For convenience, we define a \textbf{trivial code type} \(\ctype[I] = (I, \{ \{1\}, \dots, \{n\} \})\) to indicate that the qubits (or bits) are unencoded and unprotected. 
    Similarly, we can trivially extend a code type to an arbitrary number of unencoded qubits (or bits) by extending the unitary \(U\) with identity operators.
\end{definition}
\nomenclature[D, 02]{\(\qcode, (U,\baderrors)\)}{A quantum code \(\qcode\) with code type \(\ctype[\qcode] = (U,\baderrors)\), where \(U\) is the decoding unitary and \(\baderrors\) is a collection of bad error supports, see~\cref{def:code-type}. We sometimes write \(\ctype[\qcode]\)-deviated and \(\dev{\ctype[\qcode]}\) in place of \(\baderrors\)-deviated and \(\dev{\baderrors}\).}
\nomenclature[D, 03]{\(\Rencoder,\Rdecoder\)}{Reversible encoders and decoders of a code.}
\nomenclature[D, 04]{\(\encoder,\decoder\)}{Irreversible encoders and decoders of a code.}
\begin{remark}\label{rmk:decoder-unitary}
    Typically, in quantum error correction, we think of an encoding unitary as Clifford.
    However, in this case, the constraint that \(\Rdecoder\) can successfully recover states that are \(\baderrors\)-deviated from the codespace means that \(\Rdecoder\) and its inverse, \(\Rencoder\) is quite complicated in general.\footnote{Think of it as the purification of a channel that measures the syndrome and corrects the state.}
    Since \(\Rdecoder\) is a proof tool, this does not present any complications.
\end{remark}

For correctable states, the application of the decoder decouples the logical subsystem from the syndrome subsystem.
We will later use this property for showing that correctly operating gadgets decouple the computation from the action of the fault.
\begin{proposition}[Decoupling of Errors]\label{cor:decoupling-errors}
Let \(\qcode\) be a code on \(n\) qubits with code type \((U, \baderrors)\). For every superoperator \(\error\) supported on a subset of qubits \(B\subseteq [n]\) that is \(\baderrors\)-avoiding, there exists an operator \(\synd_\error\in \unphysicalstate{\mathcal{H}^{r}}\) such that
for all \(\rho\in \mixedstateArb{\hilbert^k}\), the decoded logical subsystem is unentangled from the syndrome subsystem
\begin{align}
    \Rdecoder\circ \error\circ \encoder(\rho) = \rho \otimes \synd_\error~. 
\end{align}
\end{proposition}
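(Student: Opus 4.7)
The plan is to establish the tensor-product decoupling by reducing to the case of a completely positive (CP) single-Kraus error $\error(\rho) = K\rho K^\dagger$ with $K$ supported on $B$, and then leveraging complete positivity to force each relevant block of $K$ (in the decoded picture) to be a scalar multiple of the identity.

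First, I would extend the code type property by linearity. Writing $\Psi := \Rdecoder \circ \error \circ \encoder$, the code type guarantees $\tr_{\hilbert^r}(\Psi(\rho)) = c_\error \rho$ for every density matrix $\rho$, with $c_\error$ depending only on $\error$. Since the density matrices span $\linearOp{\hilbert^k}$, this extends to $\tr_{\hilbert^r}(\Psi(X)) = c_\error X$ for every operator $X \in \linearOp{\hilbert^k}$. Next, I would use the operator polarization identity to decompose $\error$ as a complex linear combination of CP single-Kraus superoperators $\rho \mapsto M\rho M^\dagger$ with each $M$ still supported on $B$. Because both the partial-trace condition and the desired tensor-product form are linear in $\error$, it suffices to prove the decoupling for each CP single-Kraus summand.

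For such an error $\error(\rho) = K\rho K^\dagger$, I would pass to the conjugated operator $\tilde K := UKU^\dagger \in \linearOp{\hilbert^k \otimes \hilbert^r}$ and expand it in the syndrome computational basis as $\tilde K = \sum_{a,b} \tilde K^L_{ab} \otimes \ketbra{a}{b}$. A direct calculation using $\encoder(\rho) = U^\dagger(\rho \otimes \ketbra{0^r}{0^r})U$ then gives
\begin{align*}
    \Psi(\rho) \;=\; \sum_{a,c} \tilde K^L_{a0}\, \rho\, (\tilde K^L_{c0})^\dagger \otimes \ketbra{a}{c},
\end{align*}
and the code type condition reduces to $\sum_a \tilde K^L_{a0}\, \rho\, (\tilde K^L_{a0})^\dagger = c_\error \rho$ for all $\rho$, with $c_\error \ge 0$ by the complete positivity of $\Psi$.

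The crux is to show each $\tilde K^L_{a0}$ is a scalar multiple of $\mathsf{I}$. Specializing to $\rho = \ketbra{\psi}{\psi}$, the left-hand side is the PSD operator $\sum_a (\tilde K^L_{a0}\ket{\psi})(\tilde K^L_{a0}\ket{\psi})^\dagger$ with range $\operatorname{span}\{\tilde K^L_{a0}\ket{\psi}\}_a$, while the right-hand side has range $\operatorname{span}\{\ket{\psi}\}$. Hence $\tilde K^L_{a0}\ket{\psi}$ is proportional to $\ket{\psi}$ for every unit vector, and a standard two-vector argument (applied to $\ket{\psi_1} + \ket{\psi_2}$) forces $\tilde K^L_{a0} = \alpha_a \mathsf{I}$ for scalars $\alpha_a$. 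Substituting back yields $\Psi(\rho) = \rho \otimes \ketbra{v}{v}$ with $\ket{v} := \sum_a \alpha_a \ket{a}$, and reassembling these rank-one syndrome operators with the polarization coefficients produces the desired $\synd_\error$. The main obstacle is precisely this positivity-driven step: without first reducing to CP single-Kraus errors, the partial-trace condition alone is too weak to pin down the logical-register action of $\Psi$, since one can add arbitrary operators with vanishing partial trace on the syndrome.
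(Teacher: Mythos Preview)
Your proof is correct, but it takes a different route from the paper's. The paper decomposes the Kraus operators of $\error$ in the Pauli basis, so each building block $E$ is \emph{unitary}. Then for a pure logical state $\ket{\psi}$, the decoded vector $UEU^\dagger(\ket{\psi}\ket{0^r})$ is still a unit vector, and the code-type condition says its reduced state on the logical register is the pure state $\ketbra{\psi}$; purity of a marginal forces the global state to be a product, $\ket{\psi}\ket{s_E}$. A superposition argument then shows $\ket{s_E}$ is $\psi$-independent, and bilinearity in the Pauli expansion assembles $\synd_\error = \sum_{\mu,\nu}\alpha_{\mu\nu}\ketbra{s_\mu}{s_\nu}$. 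Your argument instead reduces via polarization to CP single-Kraus errors, expands $UKU^\dagger$ blockwise in the syndrome basis, and uses positivity of $\sum_a \tilde K^L_{a0}\rho(\tilde K^L_{a0})^\dagger = c_\error\rho$ on rank-one $\rho$ to force each $\tilde K^L_{a0}$ to be scalar. The paper's route is a bit cleaner because the unitarity of Paulis lets one work with vectors and invoke the ``pure marginal $\Rightarrow$ product'' fact directly, avoiding the block decomposition and rank argument; your route is basis-free and makes more transparent why some positivity hypothesis (here manufactured via polarization) is needed to rule out traceless off-diagonal contributions on the syndrome register.
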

\begin{proof}
    Let \(E\) be a Pauli operator whose support is \(\baderrors\)-avoiding.
    Consider a logical state \(\ket{\psi}\in \hilbert^k\) and the decoded state \(\ket{\psi_E} = UEU^\dagger(\ket{\psi}\ket{0^r})\).
    From definition (\cref{def:code-type}) decoder map satisfies (\cref{eq:decoder-recovers}) that \(\tr_{\hilbert^{r}}(\ketbra{\psi_E}) \propto \ketbra{\psi}\).\footnote{In fact, we have equality since all superoperators in the expression are CPTP.}
    Since the result of the partial trace is a pure state, this implies that \(\ket{\psi_E}\) is unentangled between the logical and syndrome subsystems.
    That is, we can write \(\ket{\psi_E} = \ket{\psi} \ket{s_{\psi,E}}\) for some pure state \(\ket{s_{\psi,E}} \in \hilbert^r\) of the syndrome subsystem. 

    We now use linearity to show that \(\ket{s_{\psi,E}}\) is indepedent of \(\psi\).
    Consider a superposition of any two logical states \(\ket{\psi},\ket{\phi}\in \hilbert^k\) written as \(\ket{\varphi} = \alpha \ket{\psi} + \beta \ket{\phi}\).
    Applying the previous argument to \(\ket{\varphi}\), \(\ket{\phi}\) and \(\ket{\psi}\), we have that
    \begin{align}
        UEU^\dagger \left( \ket{\varphi} \ket{0^r} \right)
        &= \alpha \ket{\psi} \ket{s_{\psi,E}} + \beta \ket{\phi} \ket{s_{\phi,E}} \\
        UEU^\dagger \left( \ket{\varphi} \ket{0^r} \right) &= \left( \alpha \ket{\psi} + \beta \ket{\phi}\right) \ket{s_{\varphi,E}}~.
    \end{align}
    It must therefore be the case that \(\ket{s_{\psi,E}} = \ket{s_{\phi,E}} = \ket{s_{\varphi, E}}\). In other words, the syndrome state \(\ket{s_E}\) is independent of the logical state and only depends on the error \(E\). 

    We now consider the case of a more general error superoperator \(\error\) (not necessarily CPTP).
    \(\error\) can be written as a linear combination of Pauli superoperators by expanding the Kraus operators in a basis of Pauli matrices. 
    Therefore, we can write 
    \(\error(\sigma) = \sum_{\mu,\nu}\alpha_{\mu,\nu} K_\mu \sigma K_\nu' \)
    for Pauli operators \(\{K_\mu\}_\mu\) supported on \(B\) and complex coefficients \(\{\alpha_{\mu,\nu}\}_{\mu,\nu}\).
    Then, we can apply the previous argument to an arbitrary pure state \(\psi\), 
    \begin{align}
        \Rdecoder\circ \error\circ \encoder(\psi)
        &=  \sum_{\mu,\nu}\alpha_{\mu,\nu} U K_\mu U^\dagger (\psi\otimes \ketbra{0^r}) U K_\nu' U^\dagger \\
        &= \sum_{\mu,\nu}\alpha_{\mu,\nu} \psi\otimes \ketbra{s_\mu}{s_\nu} \\
        &= \psi\otimes \left( \sum_{\mu,\nu}\alpha_{\mu,\nu}\ketbra{s_\mu}{s_\nu} \right)
    \end{align}
    for syndrome states \(\ket{s_\mu},\ket{s_\nu}\) independent of \(\rho\). 
    The same conclusion holds for mixed states \(\rho\) by considering a purification.
\end{proof}
\nomenclature[D, 05]{\(\error, K_\mu, K_{\nu}'\)}{We use \(\error\) to denote an error superoperator acting on states. Such a superoperator can be written as a linear combination of Pauli superoperators, thereby admitting a decomposition \(\error(\sigma) = \sum_{\mu,\nu}\alpha_{\mu,\nu} K_\mu \sigma K_{\nu}'\)
for Pauli operators \(\{K_\mu\}_\mu, 
\{K_{\nu}'\}_\nu\) and complex coefficients \(\{\alpha_{\mu,\nu}\}_{\mu,\nu}\).}

\begin{figure}
    \centering
    \includegraphics[width=0.65\linewidth]{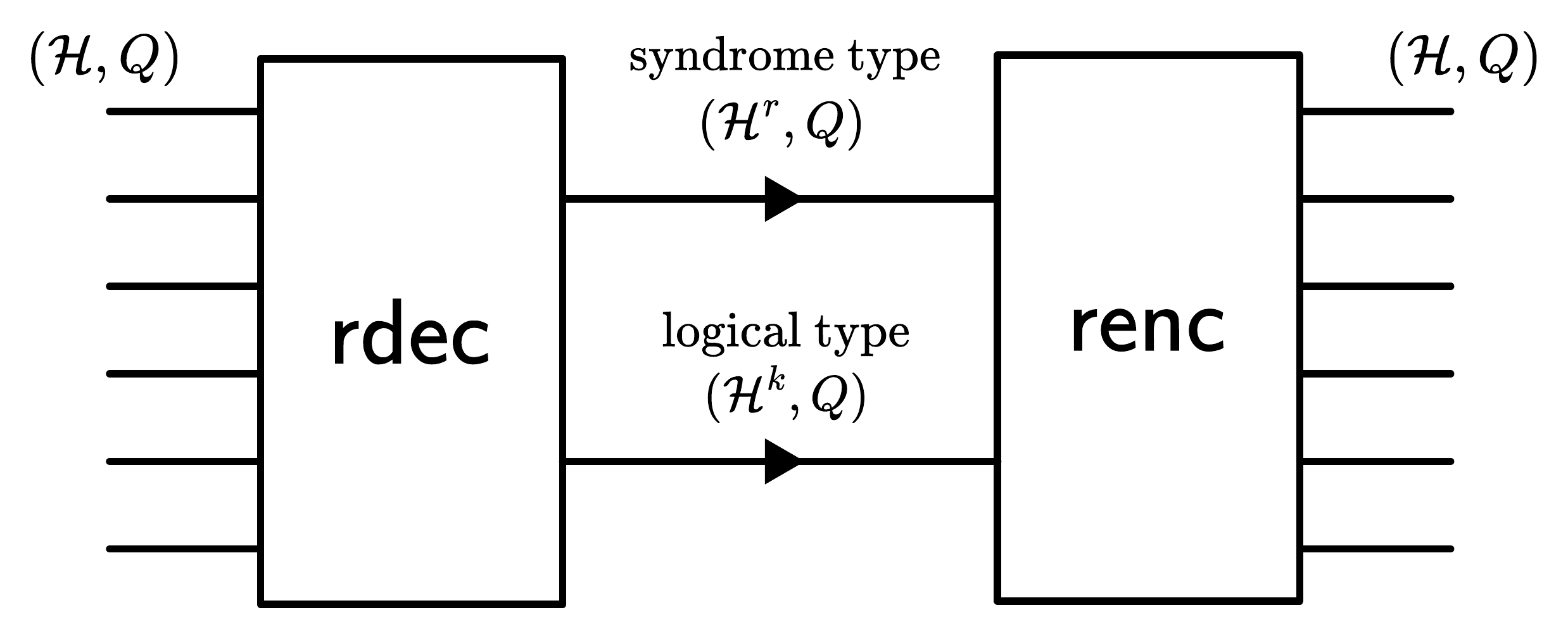}
    \caption{Reversible decoder and encoders of a quantum code.}
    \label{fig:decoder}
\end{figure}

The code types will need to be assigned to sets of edges in a circuit that share the same source and destination.
To do this, we introduce a notion of a ``bundle'' of edges.
\begin{definition}[Bundled Circuit]
\label{def:bundled-circuit}
    For a classical-quantum circuit \(C = (V,E,\edgeConn, \edgeType, \nodeGate)\), if there are multiple edges with the same source and destination vertices (which is common in contracted circuits), we may \textbf{bundle} these edges together, and label them with code specifications.
    Specifically, a bundling is a surjective function \(\bundle: E\rightarrow \bundleE\) from the edges of \(C\) to a set \(\bundleE\) called the \emph{bundles of C} such that all edges in a bundle \(b \in \bundleE\) share the same source, destination, and type. That is, for \(e_1, e_2\in \beta^{-1}(b)\)
    \begin{align}
          \edgeSrc(e_1) = \edgeSrc(e_2),\quad \edgeDest(e_1) = \edgeDest(e_2),\quad \edgeType(e_1) = \edgeType(e_2)
    \end{align}
\end{definition}

We can now assign code types to the bundles of a circuit.
\begin{definition}[Code Specification]\label{def:code-spec}
    For a circuit \(C = (V,E,\edgeConn, \edgeType, \nodeGate)\) and a bundling \(\bundle: E\rightarrow \bundleE\), we define a \textbf{code specification} to be a labeling \(\spec\) which assigns a code type to every edge bundle \(b \in \bundleE\) that is compatible in the sense that the edges of the edge bundle are in bijection with the qubits of the code type. 
    To reduce notation, this bijection will be implicit and clear from context.
\end{definition}
\nomenclature[D, 06]{\(\bundle, \spec\)}{\(\beta\) is a bundling of edges for a circuit \(C\). \(\spec\) is a code specification which assigns code types to bundled edges, see~\cref{def:bundled-circuit},~\cref{def:code-spec}.}

In order to concisely talk about input and output behavior of circuits with respect to inputs and outputs satisfying the code types, let us now introduce notation for the various data induced by the code specification.
\begin{definition}[Circuit Input/Output]\label{def:circuit-io}
    Let \(C = (V,E,\edgeConn, \edgeType, \nodeGate)\) be a classical-quantum circuit, and fix a bundling \(\bundle: E\rightarrow \bundleE\) and specification \(\spec\).
    In order to state various input and output properties of the circuit, we will introduce the following notations which are induced from the specification data in the obvious way.
    By input (output) code types, we mean the input (output) code types associated with each input (output) edge bundle.
    Likewise, we introduce the terms input/output logical (syndrome) types associated with each edge bundle.
    
    Suppose the circuit $C$ has \(\ell\) input and \(m\) output edge bundles, respectively.
    Let \((\baderrors^{\text{in}}_1, \dots, \baderrors^{\text{in}}_\ell)\) correspond to the bad error supports of the input code types.
    We define the \textbf{input bad error supports} \(\baderrors^{\text{in}}\) of \(C\) as the sum of the bad error supports of the input code type.\footnote{We leave the bijection between the input edges of \(C\) and the qubits of each bundle and code type implicit.}
    We define the \textbf{output bad error supports} \(\baderrors^{\text{out}}\) analogously with the bad error supports of the output code types  \((\baderrors^{\text{out}}_1, \dots, \baderrors^{\text{out}}_m)\).
    \begin{align}
        \baderrors^{\text{in}} = \baderrors^{\text{in}}_1 \boxplus \dots \boxplus \baderrors^{\text{in}}_\ell \\
        \baderrors^{\text{out}} = \baderrors^{\text{out}}_1 \boxplus \dots \boxplus \baderrors^{\text{out}}_m
    \end{align}
    Similarly, let \((\Rencoder^{\text{in}}_1, \dots, \Rencoder^{\text{in}}_\ell)\) correspond to the reversible encoders of the input code types.
    This induces an \textbf{input reversible encoder} \(\Rencoder^{\text{in}}\) for the whole circuit given by applying each reversible encoder to each edge bundle.
    Again, we define an analogous output reversible encoder \(\Rencoder^{\text{out}}\) using the reversible encoder of the output code types \((\Rencoder^{\text{out}}_1, \dots, \Rencoder^{\text{out}}_m)\).
    \begin{align}
        \Rencoder^{\text{in}} &= \Rencoder^{\text{in}}_1 \otimes \dots \otimes \Rencoder^{\text{in}}_\ell \\
        \Rencoder^{\text{out}} &= \Rencoder^{\text{out}}_1 \otimes \dots \otimes \Rencoder^{\text{out}}_m
    \end{align}
    Decoders and their irreversible counterparts are similarly defined in the obvious way.
\end{definition}

\subsection{Gadgets}
We are now ready to state our definition of a fault-tolerant gadget.
Under some circumstances, most commonly concatenating distance-3 codes \cite{aliferis2005quantum}, a circuit may satisfy multiple gadget definition for various combinations of parameters.
E.g. stronger conditions on the fault may imply stronger conditions on the output error or weaker conditions on the input error.

Our definition of gadget most closely aligns with \cite{kitaev1997quantum} and \cite{christandl2024fault} although we desire that our gadgets \emph{exactly} simulate the operation when the state is good. %
\begin{definition}[Fault-tolerant Gadget]\label{def:gadget}

    Let \(C = (V,E,\edgeConn, \edgeType, \nodeGate)\) be a classical-quantum circuit with input and output types \(\intype[\bullet], \outtype[\bullet]\). 
    Fix a bundling \(\bundle: E\rightarrow \bundleE\) and associated code specification \(\spec\).
    Let \(\badfaults\subseteq \power(V)\) be a collection of bad fault paths. 
    
    Let \(\Cwithenv\) be the computational circuit \(C\) with an arbitrary environment circuit \(\environC\) (recall~\cref{def:circuit-environment}) with input and output types \(\intype[\environ]\), \(\outtype[\environ]\).
    We use the circuit input/output definitions (input/output bad error supports \(\baderrors^{\text{in}}\) and \(\baderrors^{\text{out}}\), etc.) from \cref{def:circuit-io} extended to the environment circuit in the trivial way by assigning trivial code types to all edges of the environment circuit. 
    
    Let \(\g\) be a gate with input and output types \(\intype[\g], \outtype[\g]\). We say that \(C\) is a \textbf{fault-tolerant gadget} which simulates \(\g\) with respect to the data \((\bundle, \spec, \badfaults)\) if the input and output logical types of \(C\) (under \(\spec\)) match the input/output types of \(\g\) and the following conditions hold.
    For any \(\badfaults\)-avoiding (non-diagonal) \emph{Pauli} fault \(\fault\) supported only on the computational circuit,\footnote{Note that this assumption is reasonable since we will apply \cref{lemma:fault-decomposition} to the entire circuit being analyzed, which turns an arbitrary fault into Pauli faults.} there exists superoperators \(\tilde{\g}_{\fault}\in \superOpType[\typeVar[\bullet]^{\text{out}}]{\typeVar[\bullet]^{\text{in}}}\) and \(\recover_{\fault}\in \superOpType[\typeVar[\bullet]^{\text{in}}]{\typeVar[\bullet]^{\text{in}}}\) such that the circuit map can be decomposed as\footnote{\(I_{\environ}\) is the identity acting on the environment subsystem. }
    \begin{align}\label{eq:gadget}
        \circuitMap{\Cwithenv[\fault]} = \left(\tilde{\g}_{\fault}\otimes \circuitMap{\environC}\right)\circ \left(\recover_{\fault}\otimes I_{\environ}\right).
    \end{align}

    These superoperators satisfy the following properties. 
    \begin{itemize}[leftmargin=*, align = left]
        \item[\textbf{Friendly}] For an \emph{arbitrary} state \(\sigma\in \unphysicalstate{\intype[\bullet] \otimes \intype[\environ]} \), there exists a logical state \(\rho \in \mixedstate{{\intype[L]} \otimes {\intype[\environ]} }\) such that
        \begin{align}
            \encoder^{\text{in}}(\rho) \dev{\baderrors^{\text{in}}} (\recover_{\fault}\otimes I_{\environ}) (\sigma).
        \end{align}
        Moreover, \(\recover_{\fault}\) acts as logical identity on ``good'' input states that are \(\baderrors^{\text{in}}\)-deviated from the codespace. That is, for some logical state \(\rho \in \mixedstate{{\intype[L]} \otimes {\intype[\environ]} }\),
        \begin{align}
            \encoder^{\text{in}}(\rho) \dev{\baderrors^{\text{in}}} \sigma \quad \Longrightarrow \quad (\recover_{\fault}\otimes I_{\environ} )(\sigma) \dev{\baderrors^{\text{in}}} \sigma~.
        \end{align}
        
        We refer to the superoperator \(\recover_{\fault}\) as a \textbf{filter}.\footnote{This serves the same role as filters in \cite{aliferis2005quantum}.}
        Furthermore, \(\recover_{\fault}\) factorizes into operators acting on each input edge bundle.
        \begin{align}
            \recover_{\fault} = \recover_1 \otimes \dots \otimes \recover_\ell. \label{eq:decoupling:factorization}
        \end{align}

        \item[\textbf{Simulation}] For a ``good'' input state \(\sigma\) with \(\encoder^{\text{in}}(\rho)\dev{\baderrors^{\text{in}}} \sigma\), the output is \(\baderrors^{\text{out}}\)-deviated from the encoding of the logical operation \(\g\) applied to the logical state \(\rho\).
        \begin{align}\label{eq:simulation}
            \encoder^{\text{out}}\circ \left(\g\otimes \circuitMap{\environC}\right) (\rho) \dev{\baderrors^{\text{out}}} \left(\tilde{\g}_{\fault}\otimes \circuitMap{\environC} \right)(\sigma).
        \end{align}
    \end{itemize}
\end{definition}
\nomenclature[D, 10]{\(\badfaults, \tilde{\g}_{\fault}, \recover_{\fault}\)}{A circuit \(C\) is a fault-tolerant gadget that simulates \(\g\) with respect to the data \((\bundle,\spec,\badfaults)\) if for every \(\badfaults\)-avoiding fault \(\fault\), there exists superoperators \(\tilde{\g}_{\fault}, \recover_{\fault}\) such that \cref{eq:gadget} and the properties defined in \cref{def:gadget} hold.}

In the above definition, the friendly property can be dropped\footnote{In which case one can set \(\recover_{\fault}\) to be identity.} in many contexts where recursive simulation is not required. 
Friendly gadgets corresponds roughly to a ``strong'' simulation of \cite{kitaev1997quantum}.

With this definition, the key technical work in proving fault-tolerance becomes proving that various circuits satisfy this gadget definition with the desired code types. 
Later in \cref{sec:qldpc-gadgets}, we will show that the standard error correction circuit for a qLDPC code, which repeatedly measure stabilizers for \(O(d)\) rounds, is a friendly and fault-tolerant gadget which simulates identity. 
We then compose this gadget with other circuits that implement logical operations, such as transversal gates. 

We prove the following composition result \cref{lemma:gadget-composition} which allows gadgets to be assembled into new gadgets.
This lemma will be used in many locations in \cref{sec:qldpc-threshold-theorems} without reference.

\begin{proposition}[Composition of gadgets]\label{lemma:gadget-composition}
    Suppose \(C_1, C_2\) are fault-tolerant gadgets for gates \(\g_1, \g_2\) with respect to the data \((\contract_1, \spec_1, \badfaults_1)\) and \((\contract_2, \spec_2, \badfaults_2)\).
    \begin{itemize}[leftmargin=*, align = left]
        \item[\textbf{Parallel Composition}] \(C_1\otimes C_2\) is a fault-tolerant gadget for \(\g_1\otimes \g_2\), with respect to the data \((\contract_1\sqcup \contract_2, \spec_1\sqcup \spec_2, \badfaults_1\boxplus \badfaults_2)\), where \(\contract_1\sqcup \contract_2\) is the disjoint union of the two bundling and \(\spec_1\sqcup \spec_2\) is the disjoint union of the two specifications. 
        Additionally, if \(C_1\) and \(C_2\) are both friendly, then \(C_1\otimes C_2\) is friendly.

        \item[\textbf{Sequential Composition}] Suppose there is a bijection between the output bundled edges of \(C_1\) and the input bundled edges of \(C_2\) such that \(\spec_1,\spec_2\) assigns the same code type to edges paired in bijection. 
        Define \(C_2\circ C_1\) and \(\g_2\circ \g_1\) by joining the bundled edges according to the bijection.
        Then \(C_2\circ C_1\) is a fault-tolerant gadget for \(\g_2\circ \g_1\) with respect to the data \((\contract_1\cup \contract_2, \spec_1\cup \spec_2, \badfaults_1\boxplus \badfaults_2)\), where the bundling \(\contract_1\cup \contract_2\) and specification \(\spec_1\cup \spec_2\) are the (overlapping) union of the bundlings and specifications induced by the bijection.
        Additionally, if \(C_1\) is friendly, then \(C_2\circ C_1\) is friendly.
    \end{itemize}
\end{proposition}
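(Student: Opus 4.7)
The plan is to prove both parts by reducing a fault \(\fault\) on the composed circuit to faults \(\fault_1, \fault_2\) on the sub-gadgets. By the remark following \cref{def:pauli-fault}, Pauli faults admit a trivial contraction, so \(\fault\) decomposes as a tensor product of single-location Pauli superoperators and restricts canonically to faults \(\fault_1, \fault_2\) on \(C_1, C_2\) respectively. The avoidance hypothesis that \(\supp(\fault)\) is \((\badfaults_1 \boxplus \badfaults_2)\)-avoiding is, by the definition of \(\boxplus\), equivalent to each \(\supp(\fault_i)\) being \(\badfaults_i\)-avoiding, which lets us apply each sub-gadget's decomposition.

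For parallel composition, the circuit map of \(C_1 \otimes C_2\) with any environment factorizes as the tensor product of the circuit maps of the two sub-circuits, each viewed with its own restriction of the environment. Setting \(\tilde{\g}_{\fault} = \tilde{\g}_{1,\fault_1} \otimes \tilde{\g}_{2,\fault_2}\) and \(\recover_{\fault} = \recover_{1,\fault_1} \otimes \recover_{2,\fault_2}\) immediately yields the required decomposition, and the factorization of \(\recover_{\fault}\) across input edge bundles is inherited. Both friendliness and simulation are verified factor-by-factor: the codespace-deviation bound, the logical-identity-on-good-states property, and the simulation relation combine across tensor factors by taking unions of the deviation supports, which is precisely the \(\boxplus\) operation on families of bad error supports.

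For sequential composition, the key step is to split the environment circuit \(\environC\) by time into pieces \(\environC^{(1)}\) and \(\environC^{(2)}\) at the depths of \(C_1\) and \(C_2\) respectively, so that \(\circuitMap{\environC} = \circuitMap{\environC^{(2)}} \circ \circuitMap{\environC^{(1)}}\). Chaining the two gadget decompositions yields
\[
\bigl(\tilde{\g}_{2,\fault_2} \otimes \circuitMap{\environC^{(2)}}\bigr) \circ \bigl(\recover_{2,\fault_2} \otimes I_\environ\bigr) \circ \bigl(\tilde{\g}_{1,\fault_1} \otimes \circuitMap{\environC^{(1)}}\bigr) \circ \bigl(\recover_{1,\fault_1} \otimes I_\environ\bigr).
\]
The crucial observation is that \(\recover_{2,\fault_2}\) acts only on the logical input wires of \(C_2\), which are precisely the logical output wires of \(C_1\) and therefore live purely on the computational side of the decomposition (not on the environment). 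Thus \(\recover_{2,\fault_2}\) can be absorbed into the simulated gate, and we take \(\tilde{\g}_{\fault} = \tilde{\g}_{2,\fault_2} \circ \recover_{2,\fault_2} \circ \tilde{\g}_{1,\fault_1}\) and \(\recover_{\fault} = \recover_{1,\fault_1}\). In particular, the filter of the composition depends only on \(C_1\), so friendliness of \(C_1\) alone suffices to make the composition friendly.

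The main obstacle is verifying the simulation property in the sequential case. For a good input \(\sigma\) with \(\encoder^{\text{in}}(\rho) \dev{\baderrors^{\text{in}}} \sigma\), \(C_1\)'s simulation shows that \((\tilde{\g}_{1,\fault_1} \otimes \circuitMap{\environC^{(1)}})(\sigma)\) is \(\baderrors^{\text{out}}_1\)-deviated from the encoding of \((\g_1 \otimes \circuitMap{\environC^{(1)}})(\rho)\). Since \(\baderrors^{\text{out}}_1 = \baderrors^{\text{in}}_2\) by the code-type matching along the bijection, this intermediate state is a good input to \(C_2\); the friendliness clause of \cref{def:gadget} (applied to \(C_2\) when available, or the fact that the filter is trivial otherwise) ensures \(\recover_{2,\fault_2}\) preserves the good status up to a bounded additional deviation. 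Applying \(C_2\)'s simulation then produces the desired \(\baderrors^{\text{out}}_2\)-deviation of the full output. The delicate bookkeeping lies in tracking how deviations compose through \(\recover_{2,\fault_2}\) and the \(\tilde{\g}_{i,\fault_i}\): each intermediate deviation is realized by a superoperator on an \(\baderrors\)-avoiding support, and one must verify that the aggregated support remains within the allowed families, mirroring the algebra of the weight enumerator ring (\cref{def:ring_of_bad_sets}).
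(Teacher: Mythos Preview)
Your proposal is correct and follows essentially the same argument as the paper: decompose the Pauli fault into pieces \(\fault_1,\fault_2\) on each sub-gadget (using the trivial-contraction property of Pauli faults), apply each gadget's decomposition \eqref{eq:gadget}, and in the sequential case set \(\tilde{\g}_{\fault} = \tilde{\g}_{2,\fault_2} \circ \recover_{2,\fault_2} \circ \tilde{\g}_{1,\fault_1}\) with \(\recover_{\fault} = \recover_{1,\fault_1}\). The only minor difference is in how the environment is handled for sequential composition: rather than splitting \(\environC\) by time into \(\environC^{(1)}\) and \(\environC^{(2)}\), the paper absorbs the entire \(\environC\) as the environment of \(C_1\) and takes the trivial environment \(I_\environ\) for \(C_2\), writing \(\circuitMap{\Cwithenv} = \circuitMap{C_2\otimes I_{\environ}}\circ \circuitMap{C_1\otimes \environC}\); this sidesteps any bookkeeping about foliations of \(\environC\) but is otherwise equivalent to your argument.
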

\begin{proof}
    For parallel composition, recall that in the definition of a gadget the properties hold with respect to an arbitrary environment circuit. 
    For a Pauli fault \(\fault\) supported on the locations of \(C_1\otimes C_2\), it can be decomposed as\footnote{In particular, there is a fault \(f'\) that does not contract any locations that is equivalent to \(\fault\) in the sense that \(\circuitMap{C_1\otimes C_2[\fault]} = \circuitMap{C_1\otimes C_2[\fault']}\).} \(\fault = \fault_1\otimes \fault_2\) where \(\fault_1\) is supported on the locations of \(C_1\) and \(\fault_2\) is supported on the locations of \(C_2\). Then 
    \begin{align}
        \circuitMap{C_1\otimes C_2[\fault]} &= \circuitMap{C_1[\fault_1]}\otimes \circuitMap{C_2[\fault_2]}
    \end{align}
    and our claim follows by applying \cref{eq:gadget} independently to both circuits, since \(\fault_1\) is \(\badfaults_1\)-avoiding and \(\fault_2\) is \(\badfaults_2\)-avoiding.

    For sequential composition, again consider a Pauli fault \(\fault\) and its decomposition \(\fault = \fault_1\otimes \fault_2\). 
    \begin{align}
        \circuitMap{C_2\circ C_1[\fault]} &= \circuitMap{C_2[\fault_2]}\circ \circuitMap{C_1[\fault_1]}
    \end{align}
    Let \(\Cwithenv\) be circuit with computational circuit \(C_2\circ C_1\) and an arbitrary environment circuit \(\environC\). 
    We can write 
    \begin{align}
        \circuitMap{\Cwithenv} &= \circuitMap{C_2\circ C_1} \otimes \circuitMap{\environC} \\
        &= \circuitMap{C_2\otimes I_{\environ}}\circ \circuitMap{C_1\otimes \environC}. \\
        \circuitMap{\Cwithenv[\fault]} 
        &= \circuitMap{C_2\otimes I_{\environ}[\fault_2]}\circ \circuitMap{C_1\otimes \environC[\fault_1]}.
    \end{align}
    Apply \cref{eq:gadget} to \(C_2\otimes I_{\environ}\) and \(C_1\otimes \environC\), we see that 
    \begin{align}
        \circuitMap{C_2\otimes I_{\environ}[\fault_2]}
        &= (\tilde{\g}_{2,\fault_2}\otimes I_{\environ})\circ (\recover_{\fault_2}\otimes I_{\environ}), \\
        \circuitMap{C_1\otimes \environC[\fault_1]}
        &= (\tilde{\g}_{1,\fault_1}\otimes \circuitMap{\environC})\circ (\recover_{\fault_1}\otimes I_{\environ}).
    \end{align}
    Here \(\recover_{\fault_1}, \recover_{\fault_2}\) will be identities if their respective gadgets are not friendly. Putting things together, we see that 
    \begin{align}
        \circuitMap{\Cwithenv[\fault]} 
        &= (\tilde{\g}_{2,\fault_2}\otimes I_{\environ})\circ (\recover_{\fault_2}\otimes I_{\environ}) \circ (\tilde{\g}_{1,\fault_1}\otimes \circuitMap{\environC})\circ (\recover_{\fault_1}\otimes I_{\environ}) \\
        &= ((\tilde{\g}_{2,\fault_2} \circ \recover_{\fault_2}\circ \tilde{\g}_{1,\fault_1})\otimes \circuitMap{\environC})\circ (\recover_{\fault_1}\otimes I_{\environ}).
    \end{align}
    Note that the superoperator \(\tilde{\g}_{2,\fault_2} \circ \recover_{\fault_2}\circ \tilde{\g}_{1,\fault_1}\) satisfies the simulation property \cref{eq:simulation} for \(\g_2\circ \g_1\) and our conclusion follows.
\end{proof}

\subsection{Decoupling of faults}

When a gadget is faulty, anything can go wrong.
In particular, the logical output state may depend on the input error in a highly non-trivial way.
Informally, we think of noisily encoded states as being reversibly decoded to a state supported on two subsystems: the ``logical'' subsystem and the ``syndrome'' subsystem.
For non-faulty gadgets, the syndrome subsystem remains decoupled from the logical subsystem much like the environment circuit.
However, when a gadget fails, the fault is permitted to access the syndrome subsystem (since, on the encoded level, the resulting state is not independent of the syndrome).\footnote{Recall that faults may act on a contracted circuit.}

This decoupling will be critical to the proof of \cref{lemma:level-reduction}.
\begin{lemma}[Decoupling of Faults]\label{lemma:decoupling-faults}
    We use the variables defined in the definition of gadgets \cref{def:gadget}.
    To emphasize the irrelevance of the environment circuit \(C_\environ\), let \(\Rencoder^{\text{out}}_C\) and \(\Rdecoder^{\text{in}}_C\) refer to the reversible encoder and decoder of the computational circuit \(C\) (which is our fault-tolerant gadget).
    Let \(\syntype^{\text{in}}, \syntype^{\text{out}}\) denote the input and output syndrome type (see~\cref{def:code-type}) of \(C\).
    For any \(\badfaults\)-avoiding Pauli fault \(\fault\) on \(\Cwithenv\) supported only on the computational circuit, there exists a superoperator \(\noise_{\g, \fault}\in \linearOpType[\syntype^{\text{out}}]{\syntype^{\text{in}}}\) such that we can write the following decomposition. 
    \begin{align}
        \circuitMap{\Cwithenv[\fault]} = \left(\Rencoder^{\text{out}}_C \circ \left[\g\otimes \noise_{\g, \fault}\right] \circ \Rdecoder^{\text{in}}_C \circ \recover_{\fault}\right) \otimes \circuitMap{\environC}. \label{eq:decoupling:decomposition}
    \end{align}
    We have drawn the corresponding circuit in \cref{fig:decoupling}. 
    Additionally, there exists a superoperator \(\noise_{\recover, \fault}\in \linearOpType[\syntype^{\text{in}}]{\syntype^{\text{in}}}\) such that for all good input states \(\sigma\), we have
    \begin{align}\label{eq:decoupling-of-filter}
        \recover_{\fault} (\sigma) = \Rencoder^{\text{in}}_C \circ \left(I_{\intype[\bullet]}\otimes \noise_{\recover, \fault} \right) \circ \Rdecoder^{\text{in}}_C (\sigma).
    \end{align}
\end{lemma}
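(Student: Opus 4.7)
The plan is to pass to the reversibly-decoded frame and combine the gadget equation~\cref{eq:gadget} with the simulation and friendly properties from~\cref{def:gadget}, invoking~\cref{cor:decoupling-errors} twice to extract the syndrome subsystem. Since the fault is supported only on the computational circuit, the environment map in~\cref{eq:gadget} factors off cleanly, so it suffices to establish the identity $\tilde{\g}_{\fault} \circ \recover_{\fault} = \Rencoder^{\text{out}}_C \circ [\g \otimes \noise_{\g,\fault}] \circ \Rdecoder^{\text{in}}_C \circ \recover_{\fault}$ of linear superoperators on the computational input.

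For an arbitrary input $\sigma$, write $\sigma' := \recover_{\fault}(\sigma)$. The friendly property guarantees $\encoder^{\text{in}}(\rho_\sigma) \dev{\baderrors^{\text{in}}} \sigma'$ for some logical state $\rho_\sigma$, so applying~\cref{cor:decoupling-errors} to the corresponding input-side error gives $\Rdecoder^{\text{in}}_C(\sigma') = \rho_\sigma \otimes \synd^{\text{in}}_\sigma$ for some syndrome operator $\synd^{\text{in}}_\sigma$. The simulation property~\cref{eq:simulation} then implies $\encoder^{\text{out}}(\g(\rho_\sigma)) \dev{\baderrors^{\text{out}}} \tilde{\g}_{\fault}(\sigma')$, and a second application of~\cref{cor:decoupling-errors} on the output side yields $\Rdecoder^{\text{out}}_C(\tilde{\g}_{\fault}(\sigma')) = \g(\rho_\sigma) \otimes \synd^{\text{out}}_\sigma$. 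Matching the two sides in the decoded frame reduces to constructing a linear $\noise_{\g,\fault}$ on the syndrome space satisfying $\synd^{\text{out}}_\sigma = \noise_{\g,\fault}(\synd^{\text{in}}_\sigma)$ for every input $\sigma$.

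The main obstacle is well-definedness of $\noise_{\g,\fault}$, namely that $\synd^{\text{out}}_\sigma$ depends on $\sigma$ only through $\synd^{\text{in}}_\sigma$ and not through $\rho_\sigma$. This is handled by a linearity argument on the fixed linear superoperator $\Phi := \Rdecoder^{\text{out}}_C \circ \tilde{\g}_{\fault} \circ \recover_{\fault} \circ \Rencoder^{\text{in}}_C$, mirroring the proof of~\cref{cor:decoupling-errors}: for fixed $\synd$ and logical states $\rho_1,\rho_2$ with $\g(\rho_1),\g(\rho_2)$ linearly independent, expanding $\Phi((\alpha \rho_1 + \beta \rho_2) \otimes \synd)$ both via linearity of $\Phi$ and via the tensor-product form derived above forces the syndrome factor of $\Phi(\rho \otimes \synd)$ to be constant in $\rho$. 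This determines a linear $\noise_{\g,\fault}$ on the syndrome subspace spanned by $\{\synd^{\text{in}}_\sigma\}$, which may be extended linearly elsewhere.

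The filter decoupling~\cref{eq:decoupling-of-filter} follows by an analogous argument applied directly to $\recover_{\fault}$: for a good input $\sigma$ with logical content $\rho$, the first clause of friendliness together with~\cref{cor:decoupling-errors} gives $\Rdecoder^{\text{in}}_C(\recover_{\fault}(\sigma)) = \rho_\sigma \otimes \synd'$, and the uniqueness of logical content on good states (forced by the second clause $\recover_{\fault}(\sigma) \dev{\baderrors^{\text{in}}} \sigma$ combined with the decoding property of~\cref{cor:decoupling-errors} applied to $\sigma$ itself) yields $\rho_\sigma = \rho$. The same linearity argument then extracts a universal linear $\noise_{\recover,\fault}$ on the input syndrome space with the claimed form.
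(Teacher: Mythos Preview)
Your approach matches the paper's: pass to the decoded frame, combine friendly + simulation with \cref{cor:decoupling-errors} to obtain a tensor-product form, then a linearity argument (mirroring the proof of \cref{cor:decoupling-errors}) to show the syndrome factor is independent of the logical input and extend $\noise_{\g,\fault}$ linearly off a basis of the relevant syndrome span.

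One technical fix: the linearity argument should be run on $\Psi := \Rdecoder^{\text{out}}_C \circ \tilde{\g}_{\fault} \circ \Rencoder^{\text{in}}_C$ (the paper's choice) rather than your $\Phi$, which includes $\recover_{\fault}$. The tensor-product form you establish has logical factor $\g(\rho_\sigma)$ with $\rho_\sigma$ the \emph{post-filter} logical content, so varying the pre-filter $\rho$ in $\Phi(\rho\otimes\synd)$ does not isolate the syndrome factor unless you already know $\rho_\sigma=\rho$ --- which is precisely the filter-decoupling claim you postpone to the last paragraph. Parametrizing instead by post-filter pairs $(\rho_\sigma,\synd^{\text{in}}_\sigma)$ via $\Psi$, as the paper does on a basis of $\operatorname{span}(M^{\text{in}})$, breaks this circularity and makes the two parts of the lemma genuinely independent.
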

\nomenclature[D, 11]{\(\noise_{\g, \fault}, \noise_{\recover, \fault}\)}{Noise superoperators acting on the syndrome subsystem of an unencoded state, which came from \(\tilde{\g}_{\fault}\) and \(\recover_{\fault}\) of a gadget. See \cref{lemma:decoupling-faults}.}
\begin{proof}
From equation~\eqref{eq:gadget}, we have that \begin{align}
    \circuitMap{\Cwithenv[\fault]} = \left(\tilde{\g}_{\fault}\otimes \circuitMap{\environC}\right)\circ \left(\recover_{\fault}\otimes I_{\environ}\right).
\end{align}
Consider an arbitrary input state \(\sigma_0\in \unphysicalstate{{\intype[\bullet]} \otimes {\intype[\environ]}}\).
By the friendly property of the gadget \(C\), we have that \(\sigma_1 = \recover_{\fault}\otimes I_{\environ}(\sigma_0)\) is a good input state, which means there exists a logical state \(\rho \in \mixedstate{{\intype[L]} \otimes {\intype[\environ]} }\) such that
\begin{align}
    \encoder^{\text{in}}(\rho) \dev{\baderrors^{\text{in}}}
    \sigma_1 
\end{align}
It suffices for us to show that there exists a superoperator \(\noise_{\g, \fault}\) such that for any such \(\sigma_1\), 
\begin{align}
    \left(\tilde{\g}_{\fault}\otimes \circuitMap{\environC}\right)(\sigma_1) = 
    \left(
    \left(\Rencoder^{\text{out}}_C \circ \left[\g\otimes \noise_{\g, \fault}\right] \circ \Rdecoder^{\text{in}}_C\right) \otimes \circuitMap{\environC} \right)
    (\sigma_1).
\end{align}
Since the input and output code types on the environment circuit \(\environC\) is trivial, the encoders and decoders act trivially on the environmental inputs and outputs.
Therefore, the above equation is equivalent to
\begin{align}\label{eq:decoupling-eq}
    \left(
    \left(\Rdecoder_C^{\text{out}} \circ \tilde{\g}_{\fault}\circ \Rencoder_C^{\text{in}}\right)\otimes \circuitMap{\environC}
    \right)
    (\Rdecoder^{\text{in}}(\sigma_1))
    &= 
     \left(
    \left[\g\otimes \noise_{\g, \fault}\right]\otimes \circuitMap{\environC} 
    \right)
    (\Rdecoder^{\text{in}}(\sigma_1)).
\end{align}
From~\cref{cor:decoupling-errors}, we see that \(\Rdecoder^{\text{in}}(\sigma_1) = \rho\otimes \synd\) for some (not necessarily physical) state \(\synd\) satisfying the input syndrome type of \(C\).
Let \(M^{\text{in}}\) be the set of (not necessarily physical) syndrome states \(\synd\in \unphysicalstate{{\syntype^{\text{in}}}}\) such that 
\begin{align}
    \encoder^{\text{in}}(\rho) \dev{\baderrors^{\text{in}}} \Rencoder^{\text{in}}(\rho\otimes \synd).
\end{align}
Similarly define \(M^{\text{out}}\).
Note that \(M^{\text{in}}, M^{\text{out}}\) may not be linear subspaces of \(\unphysicalstate{{\syntype^{\text{in}}}}\) and \(\unphysicalstate{{\syntype^{\text{out}}}}\), respectively.
However, the compositions of maps in \cref{eq:decoupling-eq} are linear. 
Therefore, let \(M\) be a basis of \(\mathsf{span}(M^{\text{in}})\).
From the simulation property of the gadget \(C\), we see that for any \(\synd\in M\), there exists \(\synd_{\fault}\in M^{\text{out}}\) such that for all states \(\rho\) of the input logical type of \(C\), we have
\begin{align}
    \left(
    \left(\Rdecoder_C^{\text{out}} \circ \tilde{\g}_{\fault}\circ \Rencoder_C^{\text{in}}\right)\otimes \circuitMap{\environC}
    \right)
    (\rho\otimes \synd)
    &= 
    \left(\g\otimes \circuitMap{\environC} \right) (\rho)\otimes \synd_{\fault}.
\end{align}
Define \(\noise_{\g, \fault}: \unphysicalstate{{\syntype^{\text{in}}}}\rightarrow \unphysicalstate{{\syntype^{\text{out}}}}\) by \(\noise_{\g, \fault}(\synd) = \synd_{\fault}\).
Then \(\noise_{\g, \fault}\) is a valid linear map defined on \(\mathsf{span}(M^{\text{in}})\). 
Extend \(\noise_{\g, \fault}\) linearly and arbitrarily onto the rest of the domain \(\unphysicalstate{{\syntype^{\text{in}}}}\).
We have that for all good input states \(\sigma_1\),~\cref{eq:decoupling-eq} holds as desired. 
Applying the same arguments to \(\recover_{\fault}\), we can construct \(\noise_{\recover, \fault}\) for \cref{eq:decoupling-of-filter}.
\end{proof}

Later on, in the proof of level reduction, the syndrome subsystem will be moved to be part of the environment circuit.

\begin{figure}
    \centering
    \includegraphics[width=0.9\linewidth]{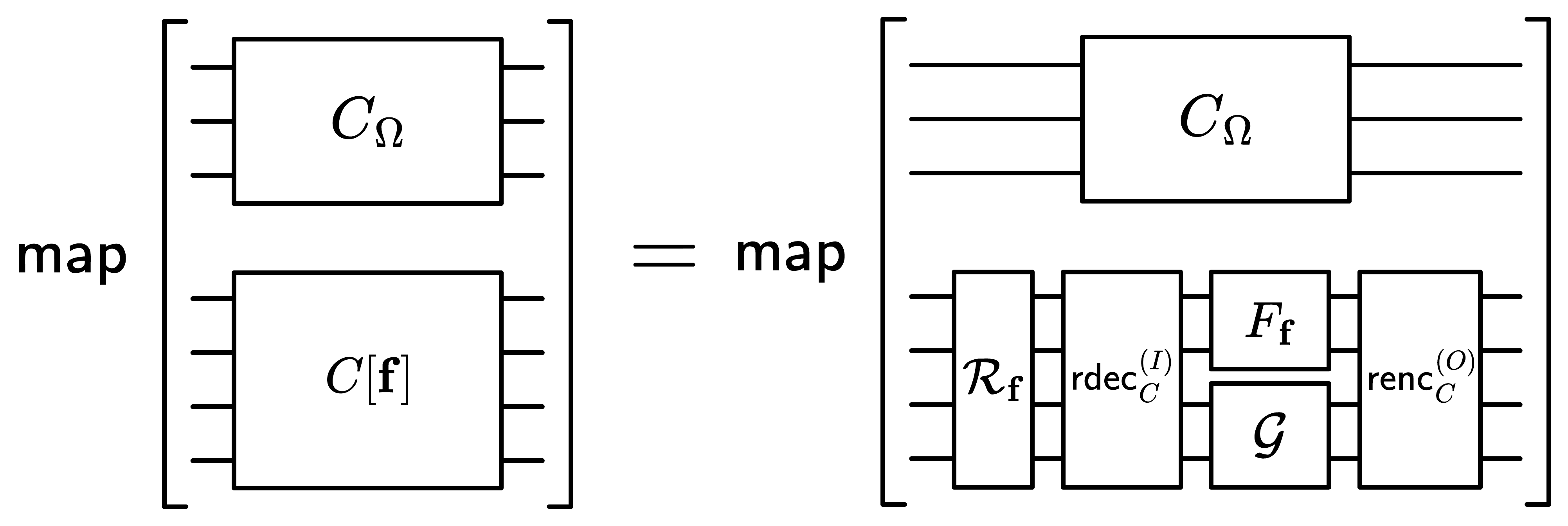}
    \caption{Pictoral representation of the decoupling lemma (\cref{lemma:decoupling-faults}).
    Whenever \(C\) accepts or outputs multiple wire bundles, \(\noise_{\g, \fault}\) acts on all syndrome subsystems and \(\g\) acts on all logical subsystems. }
    \label{fig:decoupling}
\end{figure}

\subsection{Level reduction}
We are now ready to prove an analog of ``level reduction'' from \cite{aliferis2005quantum} which represents the most challenging form of composition.
Roughly, this allows reasoning about the simulated circuit whenever gadgets in the fault-tolerant circuit fail.
The proof will crucially use the friendliness of the gadgets and the environment circuit.

\begin{theorem}[Level reduction]\label{lemma:level-reduction}
    Fix a circuit \(C = (V,E,\edgeConn, \edgeType, \nodeGate)\) (potentially with environment) and a contraction map \(\kappa \colon V \to V_\kappa\) such that every location in \(W = V_\kappa\) of the contracted circuit \(C_\kappa = (W,E_\kappa,\edgeConn_\kappa, \edgeType_\kappa, \nodeGate_\kappa)\) corresponds to a gadget \(w \in W\).
    Assign a bundling \(\bundle \colon E \to \bundleE\) to the edges of \(C_\kappa\) compatible with the bundling of each gadget and a code specification \(\spec\) such that each gadget \(w \in W\) simulates the gate \(\g_w\) with input and output code types\footnote{Actually, they need only be compatible with the code type in the sense that the code type assigned to the wire bundle may be a stronger set of bad error supports.} given by the global spec \(\spec\) and bad Pauli faults \(\mathcal{F}_w \subseteq P(\kappa^{-1}(w))\).
    We further require that: 1) The input and output types of the circuit (with respect to \(\spec\)) are trivial (see \cref{def:code-type}).
    2) All gadgets \(w\in W\) connected to the output have no bad fault paths \(\mathcal{F}_w = \{\}\) e.g. are classical decoders (classical locations do not have faults).\footnote{This is possible for circuits with classical input and output in the model of computation where classical circuits do not have faults. More generally, the corresponding statement requires encoders on the input and decoders on the output.}
        
    Consider also the logical circuit \(C_L = (W,E_L,\edgeConn_L, \edgeType_L, \nodeGate_L)\) where edge bundle \(e_\beta \in \bundleE\) of \(C_\kappa\) is replaced by an edge of the corresponding logical type (given by \(\spec\)) and, for every gadget \(w \in W\) of \(C_\kappa\), the gate \(\nodeGate_\kappa(w)\) is replaced by the gate that the gadget simulates \(\nodeGate_L(w) := \g_w\).
    
    Now fix a Pauli fault \(\fault\) supported on \(C\) and a family of bad fault paths \(\mathcal{W} \subseteq P(W)\).
    If \(\fault\) is \(\mathcal{W} \bullet \{\mathcal{F}_w\}_{w \in W}\)-avoiding, then, for some \(\mathcal{W}\)-avoiding fault\footnote{\(\fault'\) is not Pauli in general.} \(\fault'\) supported on \(C_{L,\Omega}\) (\(C_{L}\) \emph{with environment}) we have that
    \begin{align}
        \circuitMap{C[\fault]} = \circuitMap{C_{L,\Omega}[\fault']}
    \end{align}

\end{theorem}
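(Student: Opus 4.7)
The plan is to construct the fault $\fault'$ on the logical circuit $C_L$ by separating the gadgets into ``successful'' and ``failed'' based on the restriction of $\fault$ to each gadget. First I would partition $W = W_{\text{good}} \sqcup W_{\text{bad}}$, where $w \in W_{\text{bad}}$ iff $\fault \cap \kappa^{-1}(w)$ contains some element of $\mathcal{F}_w$. Unrolling \cref{def:composition} for $\mathcal{W} \bullet \{\mathcal{F}_w\}_{w\in W}$, the assumption that $\fault$ is $\mathcal{W} \bullet \{\mathcal{F}_w\}_w$-avoiding translates precisely to $W_{\text{bad}}$ being $\mathcal{W}$-avoiding on $W$. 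So the main task is to show that one can define $\fault'$ on $C_{L,\Omega}$ whose fault path is exactly $W_{\text{bad}}$, while reproducing the map $\circuitMap{C[\fault]}$.

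Next, I would construct an environment circuit for $C_L$ whose wires carry the syndrome subsystems of every edge bundle of $C_\kappa$. Concretely, along each edge bundle $b \in \bundleE$, I insert a reversible decoder $\Rdecoder^{(b)}$ at the source and a reversible encoder $\Rencoder^{(b)}$ at the destination; the logical leg remains in the computational circuit $C_L$, while the syndrome leg is routed through the environment circuit $\environC$. By construction, inserting $\Rencoder^{(b)} \circ \Rdecoder^{(b)} = \mathsf{Id}$ on every internal bundle doesn't change any map, and the trivial input/output code types ensure no additional syndromes are introduced at the circuit boundary. For every $w \in W_{\text{good}}$, I then apply the decoupling lemma (\cref{lemma:decoupling-faults}) with environment taken to be all syndromes plus the original environment circuit: the faulty gadget's action factors as $\Rencoder^{\text{out}} \circ (\g_w \otimes \noise_{\g_w, \fault_w}) \circ \Rdecoder^{\text{in}} \circ \recover_{\fault_w}$, which is precisely the ideal logical gate $\g_w$ on the logical leg and a (possibly non-CPTP) superoperator $\noise_{\g_w, \fault_w}$ on the syndrome legs, plus the filter $\recover_{\fault_w}$ acting on the input side. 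The decoupling of the filter (\cref{eq:decoupling-of-filter}) lets me absorb $\recover_{\fault_w}$ into the preceding syndrome wires as well, via $\noise_{\recover, \fault_w}$.

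For each $w \in W_{\text{bad}}$, I take the raw faulty gadget $\circuitMap{C_w[\fault_w]}$ and, after conjugating its input bundles with $\Rencoder \circ \Rdecoder$ and its output bundles with $\Rencoder \circ \Rdecoder$, view the composition $\Rencoder^{\text{out}} \circ \circuitMap{C_w[\fault_w]} \circ \Rdecoder^{\text{in}}$ as a (generally non-Pauli) superoperator acting jointly on the logical input/output legs (sitting in $C_L$) and the syndrome input/output legs (sitting in $\environC$). This superoperator is declared to be $\nodeFaulty(w)$ in the fault $\fault'$ on $C_{L,\Omega}$, contracting the corresponding logical vertex of $C_L$ with the relevant syndrome vertices of $\environC$. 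The fault path of $\fault'$ on $C_L$ is then $W_{\text{bad}}$, which is $\mathcal{W}$-avoiding by the first step. Composing everything in topological order, all the $\Rencoder \circ \Rdecoder$ insertions collapse and the resulting circuit map equals $\circuitMap{C[\fault]}$, using only the definition of $\circuitMap{\cdot}$ and associativity of the network composition.

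The main obstacle is handling the interface between a failed gadget feeding into a good gadget: the output of a failed gadget is in general an arbitrary state of the encoded Hilbert space, so the good downstream gadget's decoupling lemma only applies after its filter $\recover_{\fault}$ has acted. This is precisely the content of the \textbf{Friendly} property in \cref{def:gadget}, which guarantees that the filter maps any input to one that is $\baderrors^{\text{in}}$-deviated from a genuine encoded state, and moreover factorizes across input bundles (\cref{eq:decoupling:factorization}) so that the filter on a bundle coming from a failed predecessor can be merged into the predecessor's output without affecting the bundles coming from successful predecessors. A careful induction on the topological order of gadgets in $C_L$ — at each step using friendliness of the current good gadget to absorb its input filter into the preceding syndrome noise (via \cref{eq:decoupling-of-filter}), and using the assumption that output-connected gadgets have $\mathcal{F}_w = \varnothing$ so no failed gadget sits on the output boundary — completes the argument.
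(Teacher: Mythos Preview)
Your proposal is correct and follows essentially the same approach as the paper's proof, which likewise partitions the gadgets into good and bad, applies the decoupling lemma to each good gadget while pushing its factorized filter $\recover_i$ to the predecessor (absorbing it into syndrome noise if the predecessor is good, or into the fault map if bad), and then inserts $\Rencoder/\Rdecoder$ pairs on every bundle to split logical from syndrome. One minor slip: the conjugated map at a bad gadget should read $\Rdecoder^{\text{out}}_w \circ (\text{successors' filters}) \circ \circuitMap{C_w[\fault_w]} \circ \Rencoder^{\text{in}}_w$, not with the encoder/decoder directions swapped as you wrote.
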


\begin{proof}
    Let \(U \subseteq W\) be the set of ``good'' gadgets where, for each gadget \(w \in U\), the fault \(\fault\) is \(\mathcal{F}_w\)-avoiding on the locations of the gadget \(\kappa^{-1}(w)\).
    We will construct \(C_{L,\environ}[\fault']\) by a sequence of circuit rewrites of \(C_\kappa[\fault]\).
    
    The first step will be to rewrite the gates of \(C_\kappa[\fault]\) into a new gate map \(\nodeGate^{(1)}_\kappa\).
    First note that, because \(\fault\) is Pauli, we can take the vertex contraction to be trivial.
    Thus, in what follows, \(C\) may have an environment circuit, but we do not need to explicitly work with it.
    Let \(\nodeFaulty_\kappa(w)\) be the gate map of \(C_\kappa[\fault]\).
    
    For each gadget \(w \in W\), if the gadget is good \(w \in U\), then the support of the fault \(\fault\) is \(\mathcal{F}_w\)-avoiding and can we apply the decomposition (\cref{eq:decoupling:decomposition} and \cref{eq:decoupling:factorization}) of the decoupling lemma \cref{lemma:decoupling-faults}.
    For input edge bundles \((b_1, \dots, b_{\ell_w})\) of \(w\), the faulty gadget map has the decomposition
    \begin{align}\label{eq:level-reduction:faulty-gate-map-decomp}
        \nodeFaulty_\kappa(w) = \Rencoder^{\text{out}}_w \circ \left[\g_w\otimes \noise_{\g,\fault,w}\right] \circ \Rdecoder^{\text{in}}_w \circ \left(\recover_1^{\text{in}}\otimes \dots \otimes \recover_{\ell_w}^{\text{in}}\right).
    \end{align}
    We will ``push'' the superoperators \(\recover_i\) to the preceding gadget.
    More precisely, consider a gadget \(w \in W\) with \(m_w\) output bundles \((b_1, \dots b_{m_w})\).
    For \(i \in [{m_w}]\), let \(u_i \in W\) be the gadget associated with \(\edgeDest(b_i)\).
    If \(u_i\) is good \(u_i \in U\), then define \(\mathcal{T}_{w,i}^{\text{out}}\) to be the corresponding input bundle filter in the decomposition \cref{eq:level-reduction:faulty-gate-map-decomp} at \(u_i\).
    Otherwise, we take \(\mathcal{T}_{w,i}^{\text{out}}\) to be the identity.
    We can now write the modified gate map \(\nodeGate_\kappa^{(1)}\).
    \begin{align}
        \nodeGate_\kappa^{(1)}(w) = \begin{cases}
            \left(\mathcal{T}_{w,1}^{\text{out}} \otimes \dots \otimes \mathcal{T}_{w,{m_w}}^{\text{out}}\right) \circ \nodeFaulty_\kappa(w) & w \notin U \\
            \left(\mathcal{T}_{w,1}^{\text{out}} \otimes \dots \otimes \mathcal{T}_{w,{m_w}}^{\text{out}}\right) \circ \Rencoder^{\text{out}}_w \circ \left[\g_w\otimes \noise_{\g,\fault,w}\right] \circ \Rdecoder^{\text{in}}_w  & w \in U
        \end{cases}
    \end{align}
    By construction, the contracted circuit \(C_\kappa^{(1)} = (W,E_\kappa,\edgeConn_\kappa, \edgeType_\kappa, \nodeGate_\kappa^{(1)})\) with the modified gate map is equivalent to \(C_\kappa[\fault]\):
    \begin{align}
        \circuitMap{C_\kappa^{(1)}} = \circuitMap{C_\kappa[\fault]}.
    \end{align}

    In the next step, we will strip off the filter superoperators where we are able to.
    For each gadget \(w \in W\), let \(\baderrors^{\text{in}}_w\) and \(\baderrors^{\text{out}}_w\) be the input and output bad error supports of the gadget \(w\) (induced by \(\spec\)).
    Consider the action of \(\circuitMap{C_\kappa}\) on an arbitrary input state \(\rho_0\).
    Let \(\rho_t\), \(t \in [|W|]\) be the state after applying the gate maps of gadgets in \((1, \dots, t)\) (in topological order) to \(\rho_0\).

    For each \(t \in [|W|]\), the direct ancestors \(u\) of \(w_t\) have the filter superoperators \(\left(\mathcal{T}_{u,1}^{\text{out}} \otimes \dots \otimes \mathcal{T}_{u,{m_u}}^{\text{out}}\right)\), so there is a logical state \(\sigma_{t-1}\) such that \(\rho_{t-1}\) is \(\baderrors^{\text{in}}_{w_t}\)-deviated from \(\encoder[w_t]^{\text{in}} (\sigma_{t-1})\)\footnote{Here the input encoder of \(w_t\) is defined by \(\spec\) extended by identity for the subsystems not in the support of \(\nodeGate_\kappa(w_t)\)}.
    Thus, if \(w_t \in U\) is good, using the simulation definition (\cref{def:gadget}), it follows that \(\rho_t\) is \(\baderrors^{\text{out}}_{w_t}\)-deviated from \(\encoder[w_t]^{\text{out}} \circ \left(\g_{w_t} \otimes \noise_{\g,\fault,w_t}\right)(\sigma_{t-1})\).
    From \cref{eq:decoupling-of-filter} of \cref{lemma:decoupling-faults}, we see that on such a good logical state, the filter can be decomposed as 
    \begin{align}
        \forall i\in [m_{w_t}], \mathcal{T}_{w_t,i}^{\text{out}} &= \Rencoder^{\text{out}}_{w_t} \circ \left[I\otimes \noise_{\mathcal{T},\fault,w_t,i}\right] \circ \Rdecoder^{\text{out}}_{w_t}.
    \end{align}
    In other words, the filter superoperators at \(w_t\) (if present) act as identity on the logical state.
    We can therefore absorb the filter superoperators into the fault superoperator supported on the syndrome system at each good gadget. 
    Denote
    \begin{align}
        \noise_{\mathcal{T},\fault,w_t}
        &:= \bigotimes_{i=1}^{m_{w_t}} \noise_{\mathcal{T},\fault,w_t,i} \qquad 
        \noise_{\fault, w_t}:= \noise_{\mathcal{T},\fault,w_t}\circ \noise_{\g,\fault,w_t} 
    \end{align}
    we can define another modified gate map
    \begin{align}
        \nodeGate_\kappa^{(2)}(w) = \begin{cases}
            \left(\mathcal{T}_{w,1}^{\text{out}} \otimes \dots \otimes \mathcal{T}_{w,{m_w}}^{\text{out}}\right) \circ \nodeFaulty_\kappa(w) & w \notin U \\
            \Rencoder^{\text{out}}_w \circ \left[\g_w\otimes \noise_{\fault, w}\right] \circ \Rdecoder^{\text{in}}_w  & w \in U
        \end{cases}
    \end{align}
    By induction on the circuit locations in reverse topological order, and noting that the input code type is trivial so the base case is satisfied, we see that this rewrite of gate map results in an equivalent circuit \(C_\kappa^{(2)} = (W,E_\kappa,\edgeConn_\kappa, \edgeType_\kappa, \nodeGate_\kappa^{(2)})\).
    \begin{align}
        \circuitMap{C_\kappa^{(2)}} = \circuitMap{C_\kappa^{(1)}}
    \end{align}
    We will now unencode every edge bundle of \(C_\kappa^{(2)}\) by inserting a reversible encoder/decoder pair for each edge bundle \(b\).
    This replaces the edges of the edge bundle by two edges, one with the logical type of \(b\) and one with the syndrome type of \(b\).
    The reversible encoders and decoders cancel the reversible decoders and encoders of the good gadgets.
    \begin{align}
        \nodeGate_\kappa^{(3)}(w) = \begin{cases}
            \Rdecoder^{\text{out}}_w \circ \left(\mathcal{T}_{w,1}^{\text{out}} \otimes \dots \otimes \mathcal{T}_{w,{m_w}}^{\text{out}}\right) \circ \nodeFaulty_\kappa(w) \circ \Rencoder^{\text{in}}_w & w \notin U \\
            \g_w\otimes \noise_{\fault, w}  & w \in U
        \end{cases}
    \end{align}
    We have only inserted pairs of unitaries and their inverses.
    \begin{align}
        \circuitMap{C_\kappa^{(3)}} = \circuitMap{C_\kappa^{(2)}}
    \end{align}
    Finally, we can construct \(C_{L,\error}[\fault']\) by ``splitting'' all good gadgets of \(C_\kappa^{(3)}\):
    Every good gadget \(w\) has a gate \(\nodeGate_\kappa^{(3)}(w)=\g_w\otimes \noise_{\fault, w}\) that factorizes into a superoperator supported only on the logical subsystem \(\g_w\) and a superoperator supported only on the syndrome subsystem \(\noise_{\fault, w}\).
    
    We now equip \(C_L\) with an environment circuit \(C_{\environ}\) with network matching that of \(C_L\).
    That is, for every vertex \(w\) of \(C_L\), we add a vertex \(w^{(\mathrm{s})}\) to the environment circuit.
    The edge types are the syndrome types of each bundle of \(C_\kappa\) (which are in bijection with edges of \(C_L\)) and the gate map is arbitrary.
    The fault \(\fault'\) contracts \(w^{(\mathrm{s})}\) with \(w\) whenever \(w \notin U\) is not good.
    More precisely, let \(\nodeFaulty_L\) be the gate map of \(C_{L,\error}[\fault']\).
    \begin{align}
        w \in U & &\nodeFaulty_L(w) =& \g_w \\
                & &\nodeFaulty_L(w^{\mathrm{(s)}}) =& \noise_{\fault, w} \\
        w \notin U & &  \nodeFaulty_L(\{w,w^{\mathrm{(s)}}\}) =& \Rdecoder^{\text{out}}_w \circ \left(\mathcal{T}_{w,1}^{\text{out}} \otimes \dots \otimes \mathcal{T}_{w,{m_w}}^{\text{out}}\right) \circ \nodeFaulty_\kappa(w) \circ \Rencoder^{\text{in}}_w
    \end{align}
    Where we use the abuse of notation \(\{w,w^{\mathrm{(s)}}\}\) to refer to the vertex under the vertex contraction of the fault.
    This is a rearrangement of \(C_\kappa^{(3)}\), so
    \begin{align}
        \circuitMap{C_{L,\error}[\fault']} &= \circuitMap{C_\kappa^{(3)}} = \circuitMap{C_\kappa^{(2)}} = \circuitMap{C_\kappa^{(1)}} = \circuitMap{C_\kappa[\fault]}~. 
        \qedhere
    \end{align}
\end{proof}

\begin{figure}
    \centering
    \includegraphics[width=0.7\linewidth]{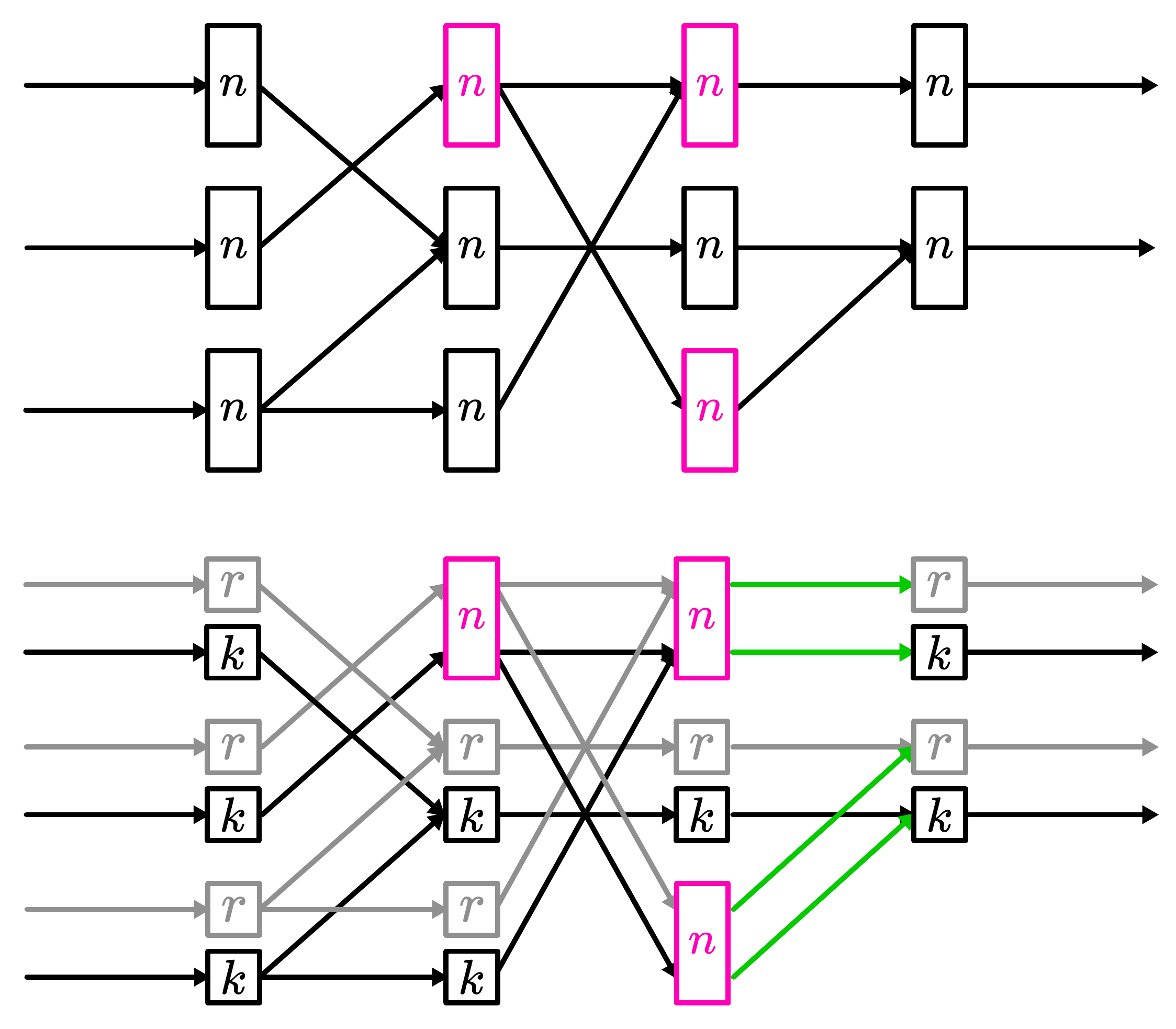}
    \caption{Pictoral transformation from \(C_\kappa[\fault]\) to \(C_{L,\environ}[\fault']\). The set of faulty gadgets \(U\) is highlighted magenta. Every vertex (not in \(U\)) in the transformed circuit (lower) is paired with an environment vertex. For clarity, connections between environment and computational vertex pairs are drawn as doubled arrows. Green highlighted connections correspond to ``pushing'' a filter map \(\recover_i\) to part of the fault on the previous vertex (see item 1 of case 1).}
    \label{fig:level-reduction}
\end{figure}

\subsection{Miscellaneous tools}
We now include two miscellaneous tools that are generally useful.
The first is the decomposition of faults into Pauli faults.
\begin{lemma}[Decomposition of faults]\label{lemma:fault-decomposition}
    For any circuit \(C\) and fault \(\fault\), there exists a new circuit with environment \(\Cwithenv\) and set of Pauli faults \(\{\fault'_i\}_i\) such that:
    \begin{itemize}[topsep = 0pt]
        \item \(\Cwithenv\) has computational circuit \(C\) and a closed environment circuit \(\environC\) (see \cref{def:circuit-environment}).
        \item The map of \(C[\fault]\) is a linear combination of maps of \(\Cwithenv\) subject to the faults \(\fault_i'\).
        \begin{align}\label{eq:decomposition-faults}
            \circuitMap{C[\fault]} = \sum_i \circuitMap{\Cwithenv[\fault'_i]} 
        \end{align}
        Note that since \(\environC\) is closed, every \(\circuitMap{\Cwithenv[\fault'_i]}\) is a superoperator with the same input and output type as \(C\) and \(C[\fault]\). 
        \item Furthermore, for each \(i\), the support of each fault \(\fault'_i\) on the computational circuit is a subset of the support of the original fault \(\fault\).
    \end{itemize}
\end{lemma}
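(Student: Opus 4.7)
The plan is to handle each faulty location \(v\) independently, using a small local ancilla in the environment to absorb the non-Pauli character of the arbitrary faulty superoperator. Fix such a location at time \(t_v\) whose ideal (contracted) gate is \(\g_v\) on the system \(A_v\) and whose faulty gate is an arbitrary superoperator \(\Phi_v\). I would augment \(\environC\) by three vertices at times \(t_v - 1, t_v, t_v + 1\) that prepare a fresh ancilla register \(B_v\) in \(\ket{0}\bra{0}\), carry it with the identity, and trace it out. Each Pauli fault \(\fault'_i\) then contracts \(v\) with the ancilla-bearing environment vertex at time \(t_v\), replacing the combined ideal gate \(\g_v \otimes I_{B_v}\) by \(\mathcal{P}'_i \circ (\g_v \otimes I_{B_v}) \circ \mathcal{P}_i\) for Pauli superoperators \(\mathcal{P}_i, \mathcal{P}'_i\) on \(A_v \otimes B_v\). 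Once the environment preparation and trace are folded in, such a fault induces a specific superoperator \(\Psi_i\) on \(A_v\); the goal becomes exhibiting scalars \(c_i\) with \(\Phi_v = \sum_i c_i \Psi_i\).

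The core technical step is to show that linear combinations of the \(\Psi_i\) span every superoperator on \(A_v\). Passing to the Liouville/vec picture, a Pauli superoperator \(\rho \mapsto c P \rho Q\) corresponds to the matrix \(c\, P \otimes Q^{T}\), and such matrices span the full matrix algebra \(\mathsf{L} := M_{d^2}(\mathbb{C})\) with \(d = \dim(A_v \otimes B_v)\). Thus the span of the Pauli-fault matrices is exactly \(\mathsf{L} \cdot G \cdot \mathsf{L}\), where \(G\) is the Liouville matrix of \(\g_v \otimes I_{B_v}\). This is a two-sided ideal of the simple algebra \(\mathsf{L}\); by simplicity it equals \(\mathsf{L}\) whenever \(G \neq 0\), and \(G\) is nonzero for any nontrivial ideal gate. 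Composing with the environment's fixed \(\ket{0}\bra{0}_{B_v}\) preparation and partial trace over \(B_v\) then surjects onto all superoperators on \(A_v\) (every CP map arises this way by Stinespring, and CP maps span the full superoperator space), so \(\Phi_v\) admits the desired decomposition.

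For a fault \(\fault\) with several faulty locations, I would repeat this construction independently at each one, with disjoint ancilla registers in \(\environC\); the closed environment \(\environC\) is then fixed across the index \(i\). Each combined fault \(\fault'_i\) picks one Pauli fault per faulty location. Distributing the per-location sums through the compositional structure of \(\circuitMap{\cdot}\) then yields \(\circuitMap{C[\fault]} = \sum_i c_i \circuitMap{\Cwithenv[\fault'_i]}\) by linearity. By construction each \(\fault'_i\) contracts only those computational vertices lying in \(\supp(\fault)\) (each paired with an environment vertex), so the support restriction on the computational circuit is automatic.

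The main obstacle is the span claim: it must hold uniformly for every ideal gate, including non-unitary ones such as measurement, reset, and preparation gates. The two-sided-ideal identification via the matrix algebra \(\mathsf{L}\) is the observation that avoids a tedious case analysis on gate type. A minor subtlety is that the scalars \(c_i\) are complex rather than probabilistic — unlike stochastic-noise decompositions — but this is harmless since the lemma is only applied inside linear expressions involving the circuit map.
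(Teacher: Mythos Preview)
Your proposal is correct and takes a genuinely different route from the paper's own proof.

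The paper's argument is considerably more concrete: it takes the environment circuit \(\environC\) to be an entire copy of \(C[\fault]\) (with its inputs initialized to \(\ket{0}\) and outputs traced out), and defines a single new fault \(\fault'\) that, at every faulty location, applies \(\SWAP\) between the computational copy and the environment copy, executes the ideal gate on the computational side and the faulty gate on the environment side, and then \(\SWAP\)s back. This makes \(\circuitMap{\Cwithenv[\fault']} = \circuitMap{C[\fault]}\) exactly, and the Pauli decomposition is then just the expansion of each \(\SWAP\) as a sum of tensor-product Paulis. No span or ideal argument is needed, and nothing about the particular gate \(\g_v\) is used beyond the fact that it is a gate.

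Your approach trades this large-but-trivial environment for a small local ancilla and an algebraic span argument. That is perfectly valid, and it yields a smaller \(\environC\). One point to tighten: your ``two-sided ideal of the simple algebra \(\mathsf{L}\)'' justification literally applies only when \(\g_v\) has equal input and output Hilbert spaces (so that \(G\) is square). For gates like \(\INITZ\) or \(\MEASZ\) the Liouville matrix \(G\) is rectangular; the correct statement there is that the linear span of \(\{A G B : A,B \text{ arbitrary}\}\) is the full space of rectangular matrices whenever \(G \neq 0\), which follows from the same rank-one decomposition reasoning but is not an ``ideal'' statement. Since you explicitly flagged non-unitary gates as the main obstacle, it would be worth stating this rectangular version directly rather than invoking simplicity of a matrix algebra.
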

\begin{proof}
    Note that general fault could replace gates in a circuit, while Pauli faults (\cref{def:pauli-superoperator}) are inserted before and after the intended gates. Therefore, the main argument for this proposition is to construct an environment circuit to help rewrite general faults on the computational circuit into Pauli faults. 

    Denote the circuit as \(C = (V,E,\edgeConn, \edgeType, \nodeGate)\), and let the fault be \(\fault = (\contract, \nodeGate_\contract)\), where \(\contract:V\rightarrow \contractV\) is a contraction map and \(\nodeGate_\contract(v)\) for \(v\in \contractV\) are faulty gates.
    We construct \(\Cwithenv\) as follows. 
    Create two independent copies of \(C\), namely \(C_1, C_2\) with corresponding data. Apply the fault \(\fault\) to \(C_2\), by taking \(\contract_2:V_2\rightarrow V_{2,\contract}\) and \(\nodeGate_{2,\contract}\) to be a labeling on \(V_{2,\contract}\).
    Then \(\Cwithenv\) is our circuit with environment, where \(C_1\) is the computational circuit and \(C_{2,\contract} = (V_{2,\contract}, E_{2,\contract}, \edgeConn_{2,\contract}, \edgeType_{2,\contract}, \nodeGate_{2,\contract})\) is the environment circuit.

    To satisfy~\cref{eq:decomposition-faults}, we add, as a new fault \(\fault'\), a collection of \(\SWAP\) gates between \(C_1\) and \(C_{2,\contract}\), and later decompose these \(\SWAP\) gates into Pauli faults.  
    We define the new fault \(\fault' = (\tau, \nodeGate_\tau)\) as follows.
    For a vertex \(v\in V\), denote its copies in \(V_1, V_2\) as \(v[1], v[2]\), respectively.
    Precisely, for every \(v\in V\) such that \(\nodeGate_{2,\contract}(\contract(v[2]))\ne \nodeGate_2(v[2])\), the set of vertices contracted to \(\contract(v[2])\) is \(\contract^{-1}(\contract(v[2]))\), and the corresponding vertices in \(V_1\) is 
    \begin{align}
        T_v := \{u[1] : u[2]\in \contract^{-1}(\contract(v[2]))\}.
    \end{align}
    Define 
    \begin{align}
        \nodeIn(T_v) = \bigcup_{u[1]\in T_v} \nodeIn(u[1]),
    \end{align}
    and similarly define \(\nodeOut(T_v)\).
    There is a natural bijection \(\phi_{\text{in}}:  \nodeIn(T_v)\rightarrow \nodeIn(\contract(v[2]))\), induced by the bijection between \(T_v\) and the vertices contracted to \(\contract(v[2])\). 
    Similarly there is a natural bijection
    \(\phi_{\text{out}}:  \nodeOut(T_v)\rightarrow \nodeOut(\contract(v[2]))\). 
    We define \(\tau\) to contract \(\contract(v[2])\) and \(T_v\), and define the gate on this contracted vertex to be
    \begin{align}
        &\nodeGate_{\tau}(\tau\left(
        T_v \cup \{\contract(v[2])\}
        \right)) =\\
        &\left(\bigotimes_{e\in \nodeOut(T_v)} \SWAP[e, \phi_{\text{out}}(e)]\right)
        \circ 
        \left(\nodeGate_{2,\contract}(\contract(v[2])) \otimes \bigotimes_{u[1]\in T_v} \nodeGate_1(u[1]) \right)
        \circ 
        \left(\bigotimes_{e\in \nodeIn(T_v)} \SWAP[e, \phi_{\text{in}}(e)]\right)
    \end{align}
    This gate is conceptually simple: we \(\SWAP\) all input edges of vertices in \(T_v\) with input edges of \(\contract(v[2])\), perform the same gates as in \(\Cwithenv\) (which performs a faulty gate on \(\contract(v[2])\) as specified by \(\fault\), and performs the ideal circuit gates as in \(C\)), and then \(\SWAP\) the output edges again. 
    This completes our description of \(\fault' = (\tau, \nodeGate_\tau)\).

    Now observe that the computational circuit of \(\Cwithenv\) is a copy of \(C\), while the environment circuit \(\environC\) is a copy of \(C[\fault]\).
    We finish the construction of \(\Cwithenv\) by initializing\footnote{This choice of initialization is completely arbitrary.} all input edges of \(\environC\) to \(\ket{0}\) and tracing out all output edges of \(\environC\), thereby closing \(\environC\).
    By adding the new fault \(\fault'\), we have \(\SWAP\)ed the faulty gates from the environment circuit into our computational circuit. 
    This gives us 
    \begin{align}
        \circuitMap{C[\fault]} = \circuitMap{\Cwithenv[\fault']}. 
    \end{align}
    We observe that the fault path of \(\fault'\) on the computational circuit of \(\Cwithenv\) is in bijection with the fault path of \(\fault\) on \(C\).
    Moreover, the fault \(\fault'\) conjugates the existing gates in \(\Cwithenv\) by \(\SWAP\) gates, which can be decomposed into non-diagonal Pauli faults. 
    This proves our claim.
\end{proof}

We will also use the following lemma, which is a statement of the standard fact that stabilizer measurements decoheres general error into Pauli errors. 
A proof was given in~\cite{nguyen2024quantum}, we include it here for completeness. 
\begin{definition}[Recoverable Set]
    For a stabilizer code \(\qcode\) on \(n\) physical qubits, a subset of qubits \(S\subseteq [n]\) is \textit{recoverable} if there are no logical operator of \(\qcode\) which is fully supported in \(S\). 
\end{definition}
\begin{lemma}[Decoherence of errors~\cite{nguyen2024quantum}]\label{lemma:deterministic-errors}
  For a \(n\)-qubit stabilizer code \(\qcode\), let \(\mathcal{M}\) be the channel that measures the \(r\) stabilizer checks of \(\qcode\) and outputs the measurement outcomes.
  For a superoperator \(\error_B\) supported on a recoverable subset of qubits
  \(B \subseteq [n]\) and a codestate \(\rho\) of \(\qcode\), the application of the noise superoperator followed by measurement of the checks can be written as
  \begin{align}
    \mathcal{M}\circ \error_B(\rho) = \sum_{s \in \F^r} \alpha_s \ketbra{s}{s}\otimes E_s \rho E_s,
  \end{align}
  where each \(E_s\) is a Pauli operator supported on \(B\) and the \(\alpha_s\) are complex coefficients.
  When \(\error_B\) is a physical noise channel, they satisfy \(\sum_s \alpha_s = 1\).
  In other words, measurement of the checks collapses the error into a single Pauli error \emph{as long as we remember the measurement outcome}.
\end{lemma}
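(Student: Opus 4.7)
The plan is to expand $\error_B$ in a Pauli operator basis and show that the stabilizer measurement forces each syndrome sector to collapse to a single Pauli conjugation. First, I would write $\error_B(\rho) = \sum_{\mu,\nu} c_{\mu,\nu}\, P_\mu \rho P_\nu^\dagger$, where $\{P_\mu\}$ ranges over a Pauli basis of operators supported on $B$ and $\{c_{\mu,\nu}\}$ are complex coefficients (obtained, e.g., by expanding each Kraus operator of $\error_B$ in the Pauli basis and collecting terms). For any Pauli $P$ supported on $B$, let $s(P) \in \F^r$ denote its syndrome, i.e.\ the commutation signature of $P$ with the $r$ stabilizer generators.

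Next, I would use the fact that the measurement channel $\mathcal{M}$ projects onto the syndrome subspaces and records the outcome: $\mathcal{M}(\sigma) = \sum_s \ketbra{s}{s} \otimes \Pi_s \sigma \Pi_s$, where $\Pi_s$ projects onto the syndrome-$s$ subspace. Since $\rho$ is a codestate (syndrome $0$), $P_\mu \rho$ lies in the syndrome-$s(P_\mu)$ subspace and $\rho P_\nu^\dagger$ lies in the syndrome-$s(P_\nu)$ subspace. Hence $\Pi_s P_\mu \rho P_\nu^\dagger \Pi_s$ vanishes unless $s(P_\mu) = s(P_\nu) = s$, so only ``diagonal-in-syndrome'' cross terms survive the measurement.

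The key step invokes recoverability. Suppose $s(P_\mu) = s(P_\nu) = s$; then $P_\mu P_\nu^\dagger$ is a Pauli operator supported on $B$ with zero syndrome, i.e.\ an element of the normalizer of the stabilizer group restricted to $B$. Since $B$ contains no nontrivial logical operator, $P_\mu P_\nu^\dagger$ must be a stabilizer $S$ up to a phase $\phi$. Writing $P_\mu = \phi S P_\nu$ and using that $S P_\nu = (-1)^a P_\nu S$ for some $a \in \{0,1\}$ together with $S\rho = \rho$, we get $P_\mu \rho P_\nu^\dagger = \phi (-1)^a P_\nu \rho P_\nu^\dagger$. Fixing any single representative Pauli $E_s$ supported on $B$ with syndrome $s$ (when the sector is nonempty), the same manipulation shows $P_\mu \rho P_\nu^\dagger = \lambda_{\mu,\nu}\, E_s \rho E_s^\dagger$ for some phase $\lambda_{\mu,\nu}$. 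Collecting all such terms yields $\Pi_s \error_B(\rho) \Pi_s = \alpha_s\, E_s \rho E_s^\dagger$ with $\alpha_s = \sum_{\mu,\nu : s(P_\mu)=s(P_\nu)=s} c_{\mu,\nu} \lambda_{\mu,\nu}$, and assembling the outcomes gives the claimed decomposition.

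Finally, if $\error_B$ is CPTP, taking the trace of both sides gives $1 = \tr(\rho) = \tr(\mathcal{M} \circ \error_B(\rho)) = \sum_s \alpha_s \tr(E_s \rho E_s^\dagger) = \sum_s \alpha_s$, since Paulis are unitary. The main obstacle is the careful bookkeeping of phases: one must track both the phase $\phi$ relating $P_\mu P_\nu^\dagger$ to a stabilizer and the commutation sign $(-1)^a$ picked up when passing $S$ through $P_\nu$. This is routine but is what makes the $\alpha_s$ genuinely complex-valued rather than nonnegative in general.
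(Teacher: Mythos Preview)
Your proposal is correct and follows essentially the same approach as the paper: expand $\error_B$ in a Pauli basis, apply the syndrome projectors to kill off-diagonal-in-syndrome terms, and use recoverability of $B$ to conclude that any two Paulis on $B$ with the same syndrome differ by a stabilizer, collapsing each sector to a single $E_s \rho E_s$. The only cosmetic difference is that you track the phases $\lambda_{\mu,\nu}$ explicitly when relating $P_\mu \rho P_\nu^\dagger$ to $E_s \rho E_s^\dagger$, whereas the paper packages the same step via the projector identity $\Pi_s K_\mu \Pi_0 = \Pi_s E_s \Pi_0$ and absorbs the phases directly into $\alpha_s$.
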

\begin{proof}
Let \(\{S_i\}_{i\in [r]}\) be a basis of the stabilizer group of \(\qcode\). For a syndrome \(s\in \F^r\), we define the projector to the syndrome space corresponding to \(s\) as \(\Pi_s = \prod_r(\frac{1}{2}I + (-1)^{s_i}S_i)\). Then the measurement channel \(\mathcal{M}\) can be expressed as 
\begin{align}
    \mathcal{M}(\rho) &= \sum_{s\in \F^r}\ketbra{s}\otimes \Pi_s\rho\Pi_s.
\end{align}
Consider the Kraus decomposition of \(\error_B\) in terms of Pauli operators \(\{K_\mu\}_\mu, 
\{K_{\nu}'\}_\nu\) supported on \(B\) and complex coefficients \(\{\alpha_{\mu,\nu}\}_{\mu,\nu}\). 
\begin{align}
    \mathcal{M}\circ \error_B(\rho)
    &= \sum_{s\in \F^r}\ketbra{s}\otimes \Pi_s \left( 
        \sum_{\mu,\nu}\alpha_{\mu,\nu} K_{\mu} \rho K_{\nu}'
     \right)\Pi_s \\
    &= \sum_{s\in \F^r}\ketbra{s}\otimes \left( 
        \sum_{\mu,\nu}\alpha_{\mu,\nu} \Pi_s K_{\mu} \rho K_{\nu}'\Pi_s
        \right)\\
    &= \sum_{s\in \F^r}\ketbra{s}\otimes \left( 
        \sum_{\mu,\nu}\alpha_{\mu,\nu} \Pi_s K_{\mu}\Pi_0 \rho \Pi_0 K_{\nu}'\Pi_s \right)
\end{align}
We see that the projectors \(\Pi_s\) annihilate all terms except those where \(K_{\mu}, K_{\nu}'\) both have syndrome \(s\). Moreover, sicne \(B\) is a recoverable set, any two Pauli oeprators supported on \(B\) which has the same syndrome are equivalent up to stabilizers.
Therefore, for every \(s\in \F^r\), there exists Pauli operator \(E_s\) supported on \(B\) which is equivalent up to stabilizers to all \(K_\mu, K_{\nu}'\) with syndrome \(s\). 
In other words, \(\Pi_s E_s\Pi_0 = \Pi_s K_\mu \Pi_0\) for such \(\mu,\nu\). 
Let \(\alpha_s\) be the sum of all \(\alpha_{\mu,\nu}\) where \(K_\mu, K_{\nu}'\) have syndrome \(s\). 
We can write 
\begin{align}
    \mathcal{M}\circ \error_B(\rho)
    &= 
    \sum_{s\in \F^r}\ketbra{s}\otimes \left( 
    \sum_{\mu,\nu}\alpha_{\mu,\nu} \Pi_s K_{\mu}\Pi_0 \rho \Pi_0 K_{\nu}'\Pi_s \right)\\
    &= \sum_{s\in \F^r}\alpha_{s} \ketbra{s}\otimes 
    \Pi_s E_s\Pi_0 \rho \Pi_0 E_s\Pi_s \\
    &= \sum_{s\in \F^r}\alpha_{s} \ketbra{s}\otimes 
    E_s \rho E_s. \qedhere
\end{align}
\end{proof}

\section{Circuit correctness}\label{sec:circuit-correctness}
Here we prove that under various noise models, the total variation distance between the output distribution of the simulated circuit and that of the fault-tolerant circuit can be bounded in terms of the evaluation of the weight enumerator for various noise models.

In what follows, let \(C\) be a classical-quantum circuit with classical input and classical output.
Note that \(\circuitMap{C}\) is a map from bitstrings to distributions.

Let \(\cft =  (V,E,\edgeConn, \edgeType, \nodeGate)\) be a gadget for \(C\) with trivial input and output types and bad fault paths \(\mathcal{F}\) (e.g. obtained from \cref{lemma:level-reduction} with \(\mathcal{W} = \{\{v\} \mid v \in V\}\)).
In what follows, in order to obtain standard threshold results, think of \(\cft\) as being an element of a family indexed by some element \(n\) such that below some critical value \(x \in [0,\epsilon_*)\), the evaluation of the weight enumerator can be upper bounded e.g. \(\weightenum{\mathcal{F}}{x} \le e^{-f(n)}\) for some positive function \(f\).
Thus, a ``universal'' threshold theorem simply provides such a family of mappings from circuits to gadgets with upper bounds on the weight enumerators.

\subsection{Adversarial noise}
Our first noise model is simply an adversarial one.
In this model, an arbitrarily powerful adversary is permitted an arbitrarily large quantum memory and access to qubits supported on a set of locations of the circuit \(S\subseteq V\) that is \(\mathcal{F}\)-avoiding.
The adversary is permitted to replace each fault location with an arbitrary channel of the same type.
This is far stronger than the capability to insert Pauli operators: They may, for example, ``steal'' qubits and insert them into the circuit at a later time.

Let \(C_{\mathsf{FT},\environ}[\fault]\) be the execution of the circuit \(\cft\) in the presence of the adversary.
Their interaction is captured by the introduction of an environment circuit and an \(\mathcal{F}\)-avoiding \(\fault\).

\begin{proposition}\label{prop:adversarial-noise}
    Let \(C_{\mathsf{FT},\environ}[\fault]\) be the execution of \(\cft\) in the presence of an arbitrary, closed environment circuit \(\environC\) and a \(\mathcal{F}\)-avoiding physical fault \(\fault\), such that $\circuitMap{C_{\mathsf{FT},\environ}[\fault]}$ is a physical CPTP channel. 
    It holds that
    \begin{align}
        \circuitMap{C_{\mathsf{FT},\environ}[\fault]} = \circuitMap{C}.
    \end{align}
\end{proposition}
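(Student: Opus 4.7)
The plan is to apply the gadget property of $\cft$ directly. Since that property is stated for Pauli faults, I first reduce to that case via \cref{lemma:fault-decomposition}, then exploit that with trivial input and output code types the friendly and simulation clauses of \cref{def:gadget} collapse to pointwise scalar identities.

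First, I would apply \cref{lemma:fault-decomposition} to the faulty circuit to write
\[
    \circuitMap{C_{\mathsf{FT},\environ}[\fault]} = \sum_i \circuitMap{\tilde C[\fault'_i]},
\]
where each $\fault'_i$ is a Pauli fault whose support on $\cft$ lies inside $\supp(\fault)$ and is therefore $\mathcal{F}$-avoiding, and $\tilde C$ is obtained by appending a fresh closed environment $\tilde\environ$ to the (already closed) $\environC$. For each Pauli $\fault'_i$, \cref{def:gadget} then gives
\[
    \circuitMap{\tilde C[\fault'_i]} = \bigl(\tilde{\g}_{\fault'_i} \otimes \circuitMap{\environC \sqcup \tilde\environ}\bigr) \circ \bigl(\recover_{\fault'_i} \otimes I\bigr).
\]

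Next I would exploit triviality of the code types. For the trivial input code type, $\baderrors^{\text{in}}$ contains every singleton, so the only $\baderrors^{\text{in}}$-avoiding subset is empty, and the relation $\sigma_1 \dev{\baderrors^{\text{in}}} \sigma_2$ collapses to $\sigma_2 = c\,\sigma_1$ for a scalar $c$. The ``acts as logical identity on good inputs'' half of the friendly clause then forces $(\recover_{\fault'_i}\otimes I)(\sigma) = \alpha_i\,\sigma$ for every $\sigma$, with $\alpha_i$ a single constant (by linearity of the superoperator $\recover_{\fault'_i}$). The same argument applied to the simulation clause with the trivial output code type yields $\tilde{\g}_{\fault'_i} = \beta_i\,\circuitMap{C}$ for some scalar $\beta_i$. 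Because $\environC \sqcup \tilde\environ$ is closed, $\circuitMap{\environC \sqcup \tilde\environ}$ is a scalar $\gamma$, and we obtain $\circuitMap{\tilde C[\fault'_i]} = c_i\,\circuitMap{C}$ with $c_i = \alpha_i \beta_i \gamma$.

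Summing, $\circuitMap{C_{\mathsf{FT},\environ}[\fault]} = \bigl(\sum_i c_i\bigr)\,\circuitMap{C}$. The left-hand side is CPTP by hypothesis and $\circuitMap{C}$ is trace-preserving, so comparing output traces on any input forces $\sum_i c_i = 1$ and the claim follows. The main obstacle is the careful bookkeeping of the scalars arising from the trivial code types and the closed environments; once one verifies that the friendly and simulation properties become pointwise proportional in the trivial-code limit, the conclusion is a one-line trace-preservation argument.
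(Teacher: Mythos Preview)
Your proposal is correct and follows essentially the same approach as the paper: decompose into Pauli faults via \cref{lemma:fault-decomposition}, invoke the gadget definition to obtain $\circuitMap{\tilde C[\fault'_i]}=c_i\,\circuitMap{C}$, and fix $\sum_i c_i=1$ by trace preservation. The paper is terser---it simply asserts the existence of the scalars $c_i$ from the simulation property---whereas you spell out why trivial input/output code types force $\recover_{\fault'_i}$ and $\tilde{\g}_{\fault'_i}$ to be scalar multiples of $I$ and $\circuitMap{C}$ respectively; this extra bookkeeping is a fair elaboration of exactly the step the paper leaves implicit.
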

\begin{proof}
    We first apply \cref{lemma:fault-decomposition} to write the map of the original circuit in terms of a sum over executions of circuit \(\cft\) with a modified environment subject to Pauli faults \(\{\fault_i'\}_i\).
    \begin{align}
        \circuitMap{C_{\mathsf{FT},\environ}[\fault]} = \sum_i\circuitMap{C_{\mathsf{FT},\environ'}[\fault_i']}.
    \end{align}
    The Pauli faults are also \(\mathcal{F}\)-avoiding, so we can apply the definition of a fault-tolerant gadget (simulation property,~\cref{def:gadget}).
    For some complex coefficients \(\{c_i\}_i\) we can write the following.
    \begin{align}
        \sum_i\circuitMap{C_{\mathsf{FT},\environ'}[\fault_i']} = \sum_i c_i \circuitMap{C} = \circuitMap{C}.
    \end{align}
    Where the last equality holds because \(\circuitMap{C_{\mathsf{FT},\environ}[\fault]}\) is a physical (CPTP) channel (which is the case, for example, when each (faulty) gate in \( C_{\mathsf{FT},\environ}[\fault] \) is physical).
\end{proof}

\subsection{Local stochastic noise}
Our next noise model is related to the adversarial stochastic noise mode of \cite{aliferis2005quantum}.
It is the adversarial noise model of the previous section, but the support of the adversary is permitted to be a random variable that obeys a ``locally stochastic'' property. 

\begin{proposition}\label{prop:local-stochastic-noise}
    Let \(C_{\mathsf{FT},\environ}[\fault]\) be the execution of \(\cft\) in the presence of an arbitrary, closed environment circuit and a physical fault \(\fault\) distributed according to some distribution \(M\) such that, for all subsets \(S\subseteq V\) of locations of the computational circuit, \(\Pr_M(S \subseteq \supp \fault) \le \epsilon^{|S|}\). 
    Then, for any input bitstring \(x\), the total variation distance of the output distributions (equivalently, trace distance of the corresponding density operators) is at most \(\epsilon_L = \weightenum{\mathcal{F}}{\epsilon}\).
    That is, denoting the probability density of \(\fault\) by \(P_M\),
    \begin{align}
        \frac{1}{2}\tr \left| \circuitMap{C}\left(\ketbra{x}{x}\right) - \left(\sum_{\fault} P_M(\fault) \circuitMap{C_{\mathsf{FT},\environ}[\fault]}(\ketbra{x}{x})\right)\right| \le \weightenum{\mathcal{F}}{\epsilon}.
    \end{align}
\end{proposition}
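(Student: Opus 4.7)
The plan is to condition on whether the realized fault path avoids the bad family $\mathcal{F}$. Partition the support of $M$ into the $\mathcal{F}$-avoiding faults $A = \{\fault : \supp \fault \text{ is } \mathcal{F}\text{-avoiding}\}$ and its complement, and set $p = \Pr_M(A)$. Averaging the circuit channel over $M$ yields
\begin{equation}
    \sum_{\fault} P_M(\fault)\, \circuitMap{C_{\mathsf{FT},\environ}[\fault]} = \sum_{\fault \in A} P_M(\fault)\, \circuitMap{C_{\mathsf{FT},\environ}[\fault]} + \sum_{\fault \notin A} P_M(\fault)\, \circuitMap{C_{\mathsf{FT},\environ}[\fault]}.
\end{equation}

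For each $\fault \in A$, the hypothesis that $\fault$ is physical (so $\circuitMap{C_{\mathsf{FT},\environ}[\fault]}$ is CPTP) combined with $\mathcal{F}$-avoidance lets me invoke \Cref{prop:adversarial-noise} fault-by-fault to conclude $\circuitMap{C_{\mathsf{FT},\environ}[\fault]} = \circuitMap{C}$. The first sum therefore collapses to $p \cdot \circuitMap{C}$, giving
\begin{equation}
    \circuitMap{C}(\ketbra{x}{x}) - \sum_{\fault} P_M(\fault)\, \circuitMap{C_{\mathsf{FT},\environ}[\fault]}(\ketbra{x}{x}) = (1-p)\bigl(\circuitMap{C}(\ketbra{x}{x}) - \tau\bigr),
\end{equation}
where $\tau = \tfrac{1}{1-p}\sum_{\fault \notin A} P_M(\fault)\, \circuitMap{C_{\mathsf{FT},\environ}[\fault]}(\ketbra{x}{x})$ is a convex combination of density operators, hence itself a density operator. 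Since two density operators differ by at most $2$ in trace norm, the trace distance on the left is bounded by $1-p = \Pr_M(\neg A)$.

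It remains to bound $\Pr_M(\neg A)$. By a union bound over bad sets together with local stochasticity,
\begin{equation}
    \Pr_M(\neg A) \le \sum_{F \in \mathcal{F}} \Pr_M(F \subseteq \supp \fault) \le \sum_{F \in \mathcal{F}} \epsilon^{|F|} = \weightenum{\mathcal{F}}{\epsilon},
\end{equation}
exactly as in the motivating example of \Cref{sec:weight-enumerators}. Combining the two estimates yields the claim.

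I expect no serious obstacle: the argument is essentially a conditioning identity together with the triangle inequality for trace distance. The only point worth double-checking is that \Cref{prop:adversarial-noise} is applied pointwise in $\fault$, not to an averaged channel; this requires each individual $\fault$ in the support of $M$ to be a physical $\mathcal{F}$-avoiding fault, which is exactly what the hypothesis provides. Also worth noting is that the environment circuit is assumed closed, matching the hypotheses of \Cref{prop:adversarial-noise}, so the pointwise application is valid for every $\fault \in A$ without modification.
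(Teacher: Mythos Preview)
Your proposal is correct and follows essentially the same approach as the paper: condition on whether $\supp\fault$ is $\mathcal{F}$-avoiding, invoke \Cref{prop:adversarial-noise} on the good event, and bound the bad-event probability by $\weightenum{\mathcal{F}}{\epsilon}$ via the union bound. The only cosmetic difference is that the paper applies the triangle inequality to split into a sum over faults before bounding each bad term by $2$, whereas you first collapse the bad faults into a single density operator $\tau$ and bound $\|\rho-\tau\|_1\le 2$; these are equivalent.
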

\begin{proof}
    We first compute the probability that the support of \(\fault\) is not \(\mathcal{F}\)-avoiding.
    \begin{align}
        \Pr_M(\supp \fault \text{ is not \(\mathcal{F}\)-avoiding}) \le \sum_{S \in \mathcal{F}} \Pr_M(S \subseteq \supp \fault) \le \sum_{S \in \mathcal{F}} \epsilon^{|S|} = \weightenum{\mathcal{F}}{\epsilon}.
    \end{align}
    Let \(\bar{E}\) be this subset of events.
    Whenever the fault is not in the set, the two density matrices are equal by \cref{prop:adversarial-noise}.
    Otherwise, the difference may be at most \(2\) by normalization.
    \begin{align}
        &\frac{1}{2}\tr \left| \circuitMap{C}\left(\ketbra{x}{x}\right) - \left(\sum_{\fault} P_M(\fault) \circuitMap{C_{\mathsf{FT},\environ}[\fault]}(\ketbra{x}{x})\right)\right|\\
        &\le \frac{1}{2}\sum_{\fault} P_M(\fault) \tr \left| \circuitMap{C}\left(\ketbra{x}{x}\right) - \circuitMap{C_{\mathsf{FT},\environ}[\fault]}(\ketbra{x}{x})\right|\\
        &= \frac{1}{2}\sum_{\fault \in \bar{E}} P_M(\fault) \tr \left| \circuitMap{C}\left(\ketbra{x}{x}\right) - \circuitMap{C_{\mathsf{FT},\environ}[\fault]}(\ketbra{x}{x})\right| \\
        &\le \sum_{\fault \in \bar{E}} P_M(\fault) \le \weightenum{\mathcal{F}}{\epsilon}.\qedhere
    \end{align}
\end{proof}

In the above proposition, the statement holds for any input string $x$ because we have assumed classical gates in $\cft$ to be noiseless. This assumption can be dropped if we incorporate classical fault-tolerance constructions. 

\subsection{Coherent noise}

We can also bound against coherent noise in much the same way using a standard technique \cite{aharonov1997fault,aliferis2005quantum}. However, a more careful bound on the contribution from locations outside of the fault support is required.
\begin{proposition}\label{prop:coherent-noise}
    Let \(\cft[\fault]\) be the execution of \(\cft\) subject to a fault \(\fault\) that does not contract any vertices and applies the correct gate of \(\cft\) followed by an operation with diamond distance \( \leq \epsilon\) from the identity.
    Additionally, suppose that 
    \begin{enumerate}[topsep = 0pt]
        \item \(\cft\) is a composition of gadgets,
        \begin{align}
            \circuitMap{\cft} &= \circuitMap{C_m}\circ \circuitMap{C_{m-1}} \circ \dots \circ \circuitMap{C_1},
        \end{align}
        and each gadget \(C_i\) has bad fault paths \(\badfaults_i\) such that \(\badfaults = \boxplus_{i=1}^m \badfaults_i\);
        \item Let \(V_i\) denote the set of locations of \(C_i\).
        There exists \(\ratio\) such that for each \(i\in [m]\), for every \(F_i\in \badfaults_i\), it holds that \(|V_i|\le \ratio |F_i|\).
    \end{enumerate}    
    Set \(\eta = \max_i\weightenum{\mathcal{F}_i}{2(1+\epsilon)^{\ratio-1}\epsilon}\).
    Then, for any input bitstring \(x\), the trace distance of the corresponding density operators is
    \begin{align}
        \frac{1}{2}\tr \left| \circuitMap{C}\left(\ketbra{x}{x}\right) - \circuitMap{\cft[\fault]}(\ketbra{x}{x})\right| \le m(1+\eta)^m\eta.
    \end{align}
\end{proposition}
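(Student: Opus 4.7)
The plan is to reduce the coherent-noise setting to the Pauli-fault regime of \cref{prop:adversarial-noise} via a standard expansion over subsets of locations, and then combine the contributions across gadgets by telescoping. For each location $v \in V$, I would first use \cref{lemma:fault-decomposition} to rewrite the noisy gate as its ideal counterpart followed by a (non-CPTP) deviation $\delta_v$ with $\|\delta_v\|_\diamond \le \epsilon$, and further decompose $\delta_v$ into a linear combination of non-diagonal Pauli superoperators on an ancilla environment, at the cost of an $O(1)$ coefficient-sum overhead (the source of the factor of $2$ in the statement's $\eta$). Expanding the resulting product across all locations yields
\begin{align}
    \circuitMap{\cft[\fault]} = \sum_{S\subseteq V} \mathcal{L}_S,
\end{align}
where $\mathcal{L}_S$ places a Pauli deviation at each $v\in S$ and the ideal gate elsewhere.

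Next, partition the sum by how $S$ meets each gadget $C_i$. For any $S$ such that $S_i := S \cap V_i$ is $\badfaults_i$-avoiding for every $i$, the corresponding $\mathcal{L}_S$ represents $\cft$ subjected to a single Pauli fault whose support $S$ is $\badfaults = \boxplus_i \badfaults_i$-avoiding; by \cref{prop:adversarial-noise} it contributes exactly the corresponding coefficient times $\circuitMap{C}$. Summing these reconstructs $\circuitMap{C}$, so the deviation comes entirely from $S$ that is ``bad'' in at least one gadget. Within a single gadget $C_i$, the $1$-norm of bad contributions is bounded by
\begin{align}
    \sum_{F \in \badfaults_i}\, \sum_{\substack{S_i \supseteq F \\ S_i \subseteq V_i}} (2\epsilon)^{|F|}\, \epsilon^{|S_i|-|F|} \le \sum_{F \in \badfaults_i}(2\epsilon)^{|F|}(1+\epsilon)^{|V_i|-|F|},
\end{align}
charging the Pauli-decomposition overhead only at the $|F|$ locations where it is actually required to invoke \cref{prop:adversarial-noise} on the good gadgets. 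Hypothesis (2), $|V_i| \le \ratio |F|$, then gives the bound $\weightenum{\badfaults_i}{2(1+\epsilon)^{\ratio-1}\epsilon} \le \eta$.

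To combine across gadgets, let $\mathcal{G}_i$ and $\mathcal{B}_i$ denote the sum of good and bad contributions within gadget $i$, so that $\|\mathcal{G}_i+\mathcal{B}_i\|_\diamond \le 1+\eta$ and $\|\mathcal{B}_i\|_\diamond \le \eta$. Using the telescoping identity
\begin{align}
    \prod_{i=1}^m (\mathcal{G}_i+\mathcal{B}_i) - \prod_{i=1}^m \mathcal{G}_i = \sum_{i=1}^m \Bigl(\prod_{j<i}(\mathcal{G}_j+\mathcal{B}_j)\Bigr) \mathcal{B}_i \Bigl(\prod_{j>i}\mathcal{G}_j\Bigr),
\end{align}
the trace distance is bounded by $m(1+\eta)^m \eta$ once one observes that the all-good product equals $\circuitMap{C}$ acting on the computational output.

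The main obstacle will be carefully justifying that the all-good contribution $\prod_i \mathcal{G}_i$ faithfully yields $\circuitMap{C}$: the Pauli deviations at good locations across distinct gadgets are coupled through the shared ancilla environment introduced by \cref{lemma:fault-decomposition}, so one must verify that \cref{prop:adversarial-noise} applies globally to the composed Pauli fault on $\cft$ (whose support is automatically $\badfaults$-avoiding by the assumption $\badfaults = \boxplus_i \badfaults_i$), rather than only gadget-by-gadget. Managing the Pauli-decomposition overhead so that the final bound comes out as $2(1+\epsilon)^{\ratio-1}\epsilon$ rather than $2\epsilon(1+2\epsilon)^{\ratio-1}$ is the related bookkeeping subtlety, resolved by charging the factor of $2$ only to the locations of the actual bad fault path $F$ and not to the spectator locations $S_i \setminus F$.
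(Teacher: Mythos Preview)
Your overall strategy (expand each noisy gate, split the subset sum into good and bad parts per gadget, and telescope across gadgets) is exactly the paper's. However, two steps in your plan are off.

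\textbf{The factor of $2$ does not come from Pauli decomposition.} The $1$-norm of the Pauli-expansion coefficients of a superoperator is \emph{not} bounded by an $O(1)$ multiple of its diamond norm; it can blow up with the dimension, so ``charging the Pauli-decomposition overhead only at the $|F|$ locations'' is neither justified nor salvageable. The paper avoids this entirely by splitting at the diamond-norm level: writing $\nodeFaulty(v) = (1-\epsilon)\,\nodeGate(v) + 2\epsilon\,\noise(v)$ with $\|\noise(v)\|_\diamond \le 1$ (triangle inequality), the $S$-term has diamond norm at most $(2\epsilon)^{|S|}(1-\epsilon)^{|V_i|-|S|}$ directly, and summing over $S\supseteq F$ gives $(2\epsilon)^{|F|}(1+\epsilon)^{|V_i|-|F|}$. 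Pauli decomposition is used only to establish \emph{proportionality} of the good part to $\circuitMap{C}$, where the coefficient magnitudes are irrelevant. (Incidentally, if you use your $\mathrm{Id}+\delta_v$ split and bound diamond norms directly, you get $\epsilon^{|S|}$ per term and hence $\weightenum{\mathcal{F}_i}{(1+\epsilon)^{\ratio-1}\epsilon}$, which is tighter than the statement; the $2$ is an artifact of the paper's particular split.)

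\textbf{The all-good product is not $\circuitMap{C}$ but $\alpha\,\circuitMap{C}$.} \cref{prop:adversarial-noise} concludes equality only when the \emph{total} map is CPTP; the good sum $\C_+$ alone is not. The gadget definition gives only proportionality for each $\badfaults$-avoiding Pauli fault, so $\C_+=\alpha\,\circuitMap{C}$ with $\alpha$ a priori unknown. The paper closes this by trace preservation: $1=\tr\bigl(\circuitMap{\cft[\fault]}(\ketbra{x}{x})\bigr)=\alpha+\tr\bigl(\C_-(\ketbra{x}{x})\bigr)$, hence $|1-\alpha|\le\|\C_-\|_\diamond\le m(1+\eta)^m\eta$, and the stated bound follows from $\tfrac{1}{2}\bigl(|1-\alpha|+\|\C_-\|_\diamond\bigr)$. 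Your write-up needs this step.
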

\begin{proof}

    Since the faults do not contract any locations and act on each gate independently, we can decompose the fault \(\fault\) into \(\fault_1, \dots \fault_m\) supported on the locations in \(C_1, \dots, C_m\).
    \begin{align}
        \circuitMap{\cft[\fault]} &= \circuitMap{C_m[\fault_m]} \circ \dots \circ \circuitMap{C_1[\fault_1]}
    \end{align}
    We first analyze each individual gadget \(C_i[\fault_i]\).
    Let \(\nodeFaulty_i\) be the gate map of \(\fault_i\). For each $v \in V_i$, we can decompose 
    \begin{align}
        \nodeFaulty_i(v) = (1-\epsilon) \nodeGate_i(v) + 2\epsilon \noise_i(v), \qquad \text{where } \|\noise_i(v)\|_{\diamond} \leq 1.
    \end{align}
    For a subset \(S \subseteq V_i\), let \(\nodeFaulty_{i, S}\) denote the gate map that takes the difference of the noisy and perfect gate whenever \(v \in S\).
    \begin{align}
        v \in V_i \qquad \nodeFaulty_{i, S}(v) = \begin{cases}
            2\epsilon \noise_i(v) & v \in S \\
            (1-\epsilon)\nodeGate_i(v) & v \notin S
        \end{cases}.
    \end{align}
    Let \(\tilde{\fault}_{i, S}\) denote a fault with this gate map.
    Using the sub-multiplicativity of the diamond norm, we note that 
    \begin{align}
        \|\circuitMap{C_i[\tilde{\fault}_{i, S}]}\|_{\diamond} \le (2\epsilon)^{|S|}(1-\epsilon)^{|V_i|-|S|}
    \end{align}
    since \(\circuitMap{C_i[\tilde{\fault}_{i, S}]}\) contains \(|S|\) operators each with diamond norm at most \(2\epsilon\) and the remaining operators each have diamond norm \(1-\epsilon\).
    For simplicity of notation, let us call a set \(S\) good if it is \(\badfaults_i\)-avoiding and bad otherwise.
    We can rewrite the execution of the noisy circuit in terms of these faults:
    \begin{align}
        \circuitMap{C_i[\fault_i]} 
        &= \sum_{S \subseteq V_i} \circuitMap{C_i[\tilde{\fault}_{i, S}]} \\
        &= \sum_{\text{good } S \subseteq V_i} \circuitMap{C_i[\tilde{\fault}_{i, S}]}
        + \sum_{\text{bad } S \subseteq V_i} \circuitMap{C_i[\tilde{\fault}_{i, S}]}.
    \end{align}
    Let us denote
    \begin{align}
        \cig &= \sum_{\text{good } S \subseteq V_i} \circuitMap{C_i[\tilde{\fault}_{i, S}]}, \qquad 
        \cib = \sum_{\text{bad } S \subseteq V_i} \circuitMap{C_i[\tilde{\fault}_{i, S}]}.
    \end{align}
    We can bound the total contribution of the `bad' fault paths within a single gadget in terms of diamond norm.
    \begin{align}
        \left \| \cib \right \|_\diamond 
        &= 
        \left \| \sum_{\substack{\text{bad } S \subseteq V_i}} \circuitMap{C_i[\tilde{\fault}_{i, S}]} \right \|_\diamond \\
        &\leq \sum_{F \in \mathcal{F}_i} \sum_{F \subseteq S \subseteq V_i} (2\epsilon)^{|F|} (2\epsilon)^{|S|-|F|}(1-\epsilon)^{|V_i|-|S|} \\
        & = \sum_{F \in \mathcal{F}_i} (2 \epsilon)^{|F|} (1+\epsilon)^{|V_i|-|F|}  \\
        &\le \sum_{F \in \mathcal{F}_i} (2(1+\epsilon)^{\ratio-1}\epsilon)^{|F|}\\
        &= \weightenum{\mathcal{F}_i}{2(1+\epsilon)^{\ratio-1}\epsilon}.
    \end{align}
    Set \(\eta = \max_i\weightenum{\mathcal{F}_i}{2(1+\epsilon)^{\ratio-1}\epsilon}\).
    By the triangular inequality of diamond norm, we see that \(\left \| \cig \right \|_\diamond  \leq 1+\eta\) for all \(i\).
    Using these upper bounds, we can bound the total contribution of the bad fault paths on the whole circuit.
    \begin{align}
        \circuitMap{\cft[\fault]} 
        &= \circuitMap{C_m[\fault_m]} \circ \dots \circ \circuitMap{C_1[\fault_1]}  \\
        &= (\C_{m,+} + \C_{m, -})\circ \dots \circ (\C_{1,+} + \C_{1, -}).
    \end{align}
    Denote 
    \begin{align}
        \C_+ &= \C_{m,+} \circ \C_{m-1,+} \circ \dots \C_{1,+}, \qquad \C_- = \circuitMap{\cft[\fault]} - C_+.
    \end{align}
    \(\C_-\) has a decomposition based on the last gadget that failed, as follows.
    \begin{align}
        \C_- &= \sum_{i=1}^m \C_{m,+}\circ \C_{m-1,+}\circ \dots \circ \C_{i,-} \circ (\C_{i-1,+} + \C_{i-1, -})\circ \dots \circ (\C_{1,+} + \C_{1, -}).
    \end{align}
    This decomposition enables us to bound the diamond norm of \(\C_-\). By the sub-multiplicativity of the diamond norm, and the fact that \(\C_{i,+} + \C_{i, -}\) has diamond norm \(1\) for all \(i\),
    \begin{align}
        \left \| \C_-  \right \|_{\diamond}
        &\leq \sum_{i=1}^m \left \|
        \C_{m,+}\circ \C_{m-1,+}\circ \dots \circ \C_{i,-} \circ (\C_{i-1,+} + \C_{i-1, -})\circ \dots \circ (\C_{1,+} + \C_{1, -})
        \right \|_{\diamond} \\
        &\leq \sum_{i=1}^m (1+\eta)^{m-i}\eta \\
        &\leq m(1+\eta)^m\eta.
    \end{align}
    On the other hand, since \(\C_+\) is the sum over noisy executions of \(\cft\) with \(\badfaults\)-avoiding faults, it holds by the definition of gadget that
    \begin{align}
        \C_+ &= \alpha\cdot \circuitMap{C}.
    \end{align}
    for some \(\alpha\in \mathbb{C}\). Now fix an input state \(\ketbra{x}\), we write
    \begin{align}
        &\tr \left| \circuitMap{C}\left(\ketbra{x}{x}\right) - \circuitMap{\cft[\fault]}(\ketbra{x}{x})\right| \\
        &= \tr \left| \circuitMap{C}\left(\ketbra{x}{x}\right) - (\C_+ + \C_-)(\ketbra{x}{x})\right| \\
        &\le \tr \left| \circuitMap{C}\left(\ketbra{x}{x}\right) - \alpha \cdot\circuitMap{C}\left(\ketbra{x}{x}\right)\right| + \tr \left| \C_-\ketbra{x}{x}\right| \\
        &\le |1-\alpha| + \|C_-\|_{\diamond}\\
        &\le |1-\alpha| + m(1+\eta)^m\eta,
    \end{align}
    where we used here that the trace norm is bounded by the diamond norm. 
    To bound \(|1-\alpha|\), note that 
    \begin{align}
        1 = \tr(\circuitMap{\cft[\fault]}(\ketbra{x}{x}))
        &= \tr(C_+\ketbra{x}{x}) + \tr(C_-\ketbra{x}{x}) = \alpha + \tr(C_-\ketbra{x}{x}). 
    \end{align}
    Therefore we can bound
    \begin{align}
        |1-\alpha| &= |\tr(C_-\ketbra{x}{x})| \le \tr\|C_-\ketbra{x}{x}\| \le \|C_-\ketbra{x}{x}\|_{\diamond} = m(1+\eta)^m\eta.
    \end{align}
    We conclude that 
    \begin{align}
        \frac{1}{2}\tr \left| \circuitMap{C}\left(\ketbra{x}{x}\right) - \circuitMap{\cft[\fault]}(\ketbra{x}{x})\right|
        &= m(1+\eta)^m\eta. \qedhere
    \end{align}
\end{proof}

\section{Application: qLDPC codes and transversal gates}\label{sec:qldpc-gadgets}
We show that a standard construction of error correction gadgets (originating from \cite{dennis2002topological,gottesman2014fault}) for quantum-low density parity check (qLDPC) codes satisfies our gadget definition.

\label{sec:surface:ec-gadget}

\subsection{Correctable sets}
A CSS code defined by the check matrices \((H_X, H_Z)\) is said to be \(\Delta\)-qLDPC if every row and column of \(H_X\) and \(H_Z\) has weight at most \(\Delta\).
\nomenclature[E, 01]{\(H_X, H_Z, \Delta\)}{A CSS code has stabilizer check matrices \(H_X, H_Z\). We say that the code is \(\Delta\)-qLDPC if every row and column of \(H_X\) and \(H_Z\) has weight at most \(\Delta\).}

For an error rate \(p\) (IID) on \(n\) qubits, on average, errors have weight around \(np\).
Thus, naively one might conclude that linear distance scaling is a necessary condition for a code family to achieve a threshold.
This turns out to be too strong: It suffices to be able to correct ``most'' errors of weight \(np\).
It turns out that the qLDPC property imposes strong constraints\footnote{More specifically, the qLDPC property induces a sort of ``geometry'' on the space of qubits, and logical operators must be the union of large connected components.} on the low weight logical operators of the code, so that it is rare for a random error to have large overlap with a logical operator.
\cite{kovalev2013fault} showed that all qLDPC code families with polynomial distance scaling posses a threshold -- below a critical error rate, recovery from depolarizing noise fails with exponentially small (in the code distance) probability when decoding with a minimum weight decoder.

We begin by defining a minimum-weight decoder for a classical linear code as well as a minimum-weight decoder in the \(X/Z\) basis which we refer to as the CSS minimum-weight decoder.
\begin{definition}[Minimum-weight decoder]\label{def:min-weight-decoder}
    For a classical code defined by a \(m\times n\) parity check matrix \(H\), given a syndrome \(s\in \F^m\) in the column space of \(H\), a minimum weight decoder outputs a minimum Hamming weight vector in the set of valid corrections, \(\{x\in \F^m: Hx = s\}\). 
    For a CSS code defined by \((H_X, H_Z)\), a CSS minimum weight decoder \(D\colon \F^{n-k} \mapsto \{I,X,Y,Z\}^n\) consists of two minimum weight decoders, \(D_X, D_Z\), for matices \(H_X, H_Z\). 
    For a syndrome \(s = (s_X,s_Z)\), where \(s_X, s_Z\) are the \(X\) and \(Z\) syndromes respectively,
    \begin{align}
        D(s) = X^{D_Z(s_Z)}\cdot Z^{D_X(s_X)}.
    \end{align}
    Here for a vector \(v\in \F^n\), we use \(X^v\) to denote the Pauli operator \(X^{v_1}\otimes \dots X^{v_n}\), where \(X^1 = X\) and \(X^0 = I\). \(Z^v\) is defined similarly.
\end{definition}

We now need a parameterized notion of bad sets (from \cite{gottesman2014fault}) of the code.
These are the sets that will trick our decoder into applying the wrong correction.
\begin{definition}[Adjacency graph]\label{def:adjacency-graph}
    For a check matrix \(H \in \F^{r \times n}\), the adjacency graph of \(H\), denoted \(\adjgraph{H}\) is a graph on the vertices \([n]\), where two vertices \(i \ne j \in [n]\) are adjacent if there exists a row \(k\) which is non-zero on columns \(i\) and \(j\). I.e. if bit vertices \(i\) and \(j\) are connected by a constraint vertex \(k\) in the Tanner graph associated with \(H\).
\end{definition}
\nomenclature[E, 02]{\(\adjgraph{H}\)}{The adjacency graph of a parity check matrix \(H\), see \cref{def:adjacency-graph}.}

We will now associate length-\(n\) bitstrings with subsets of \([n]\).
\begin{proposition}[\cite{kovalev2013fault}]
    For a pair of check matrices \((H_X,H_Z)\) defining a CSS code of distance \(d\), any non-trivial \(X\) logical operator with support \(x\) must induce a subgraph of \(\adjgraph{H_Z}\) with at least one connected component of size at least \(d\).
\end{proposition}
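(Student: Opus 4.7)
The plan is to argue the contrapositive: if every connected component of the induced subgraph on $x$ has size strictly less than $d$, then $X^x$ must in fact be a stabilizer (equivalently, a trivial logical), contradicting non-triviality. The proof will hinge on the observation that, up to the adjacency structure encoded by $\adjgraph{H_Z}$, an $X$ logical operator can be ``factored'' along connected components into smaller $X$-type operators that each still commute with every $Z$ stabilizer.

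First I would unpack the hypothesis that $X^x$ is a logical (in particular, a valid commuting operator): since $X^x$ commutes with $Z^r$ for every row $r$ of $H_Z$, the overlap $|\operatorname{supp}(r) \cap x|$ is even. Let $x = x_1 \sqcup \dots \sqcup x_k$ be the decomposition of $x$ into the vertex sets of the connected components of the induced subgraph of $\adjgraph{H_Z}$ on $x$.

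The key step is to show that each row $r$ of $H_Z$ has $\operatorname{supp}(r) \cap x$ lying entirely within a single $x_i$. This follows directly from the definition of $\adjgraph{H_Z}$: any two qubits in $\operatorname{supp}(r) \cap x$ are adjacent in $\adjgraph{H_Z}$ by the common row $r$, hence adjacent in the induced subgraph, and therefore belong to the same connected component. It follows that for each row $r$ and each component $x_i$, the intersection $|\operatorname{supp}(r) \cap x_i|$ equals either $0$ or $|\operatorname{supp}(r) \cap x|$, which is even in both cases. Thus every $X^{x_i}$ commutes with every $Z$ stabilizer, so each $X^{x_i}$ is either a stabilizer or an $X$-type logical representative.

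Now I would close the argument by invoking the distance hypothesis: if $|x_i| < d$ for every $i$, then no $X^{x_i}$ can be a non-trivial logical, forcing each $X^{x_i}$ to be a $Z$-stabilizer element of the $X$-type Pauli group; consequently $X^x = \prod_i X^{x_i}$ is a product of stabilizers and therefore trivial as a logical, contradicting the hypothesis that $X^x$ is non-trivial. I do not expect a real obstacle here; the only point requiring care is the adjacency argument above, which must use precisely the definition of $\adjgraph{H_Z}$ via common rows of $H_Z$ rather than any finer geometric structure, so that the factorization works for any CSS code (not just geometrically local ones).
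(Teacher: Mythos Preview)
Your argument is correct and is precisely the standard one: the definition of $\adjgraph{H_Z}$ guarantees that the support of any row of $H_Z$ intersects $x$ in at most one connected component, so each component carries an $X$-type operator with trivial $Z$-syndrome; if all components have size $< d$, each such operator is a stabilizer and hence so is their product. The paper does not supply its own proof of this proposition (it is cited to \cite{kovalev2013fault}), but it uses exactly this reasoning in the proofs of later results (e.g., \cref{prop:successful_correction}), so your approach matches what the paper relies on. One minor wording slip: ``$Z$-stabilizer element of the $X$-type Pauli group'' should just read ``$X$-type stabilizer'' (i.e., $x_i \in \rowspan(H_X)$).
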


\begin{definition}[Clustering sets]\label{def:clustering-sets}
    For a graph \(G=(V,E)\) the \((d,t)\)-clustering sets \(\clusterset_{(d,t)}\subseteq P(V)\) of \(G\) are the subsets \(W \subseteq U \subseteq V\) of \(V\) of size \(|W| = t\) contained in a subset \(U\) of size \(|U| = d\) that induces a connected subgraph of \(G\).
    \begin{align}
        \clusterset_{(d,d)} &= \left\{U \subseteq V \mid |U| = d \text{ and \(U\) induces a connected subgraph of \(G\)}\right\} \\
        t<d \quad \clusterset_{(d,t)} &= \bigcup_{U \in \clusterset_{(d,d)}} \left\{W \subseteq U \mid |W| = t\right\}
    \end{align}
\end{definition}
\nomenclature[E, 03]{\(\clusterset_{(d,t)}\)}{The \((d,t)\)-clustering sets of a graph \(G = (V, E)\) which are all size \(t\) subsets of vertices of  \(d\)-vertex connected subgraphs of \(G\). See \cref{def:clustering-sets}.}

The \(t\) parameter of these clusters is the appropriate analog of the weight of an error for qLDPC code families with sublinear distance: The error cannot be too high weight in any one region.
As we will see, this parameter will behave much like Hamming weight in that we can recovery a sort of sub-additivity, see \cref{lemma:union-correctable-sets}.

We give upper bounds on the weight enumerator of the clustering sets by bounding the number of bad sets analogously to \cite{gottesman2014fault}.
We first need the following lemma from \cite{aliferis2007accuracy}.\footnote{We actually require the stronger variant proven (but not stated) in \cite{aliferis2007accuracy}. See Lemma 2 of \cite{gottesman2014fault}.}
\begin{lemma}[Cluster counting \cite{aliferis2007accuracy} Lemma 5]\label{lemma:cluster-counting}
    Let \(G = (V,E)\) be a graph with maximum degree \(\Delta\).
    Fix a subset \(S \subseteq V\) of vertices.
    Then, consider subsets \(C \subseteq V\) that is a union of connected sets \(C = \cup_i C_i\) in \(G\) where each set \(C_i\) contains at least one element of \(S\), \(C_i \cap S \ne \varnothing\).
    We say that \(C\) is a \textbf{cluster cover} of \(S\), and denote the collection of cluster covers of size \(m\) to be \(\cc(m, S)\).
    The number of such subsets of size \(m\), \(M(m,S)\) satisfies the upper bound
    \begin{align}
        M(m,S) \le e^{|S| - 1} (\Delta e)^{m-|S|}
    \end{align}
\end{lemma}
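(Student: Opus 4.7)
The plan is a spanning-forest walk-encoding argument that generalizes the classical bound of $(\Delta e)^{k-1}$ on the number of connected subgraphs of size $k$ in a graph of maximum degree $\Delta$ containing a fixed vertex. Given a cluster cover $C = \bigcup_{i=1}^{t} C_i$ of size $m$, I would first pick a canonical spanning tree $T_i$ of each component $C_i$, rooted at a canonically chosen vertex $r_i \in C_i \cap S$ (for instance, the one of smallest vertex index). This produces a rooted spanning forest with $t \le |S|$ trees containing a total of $m$ vertices and $m - t$ non-root vertices.

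Next, each rooted tree $T_i$ is encoded by a depth-first walk in $G$: at each step, either move forward to an unvisited neighbor of the current vertex (at most $\Delta$ choices) or backtrack to the parent (forced). A tree with $k_i$ vertices has $k_i - 1$ forward steps, and a standard Catalan / plane-tree analysis bounds the number of size-$k_i$ rooted connected subgraphs of $G$ containing a fixed vertex by $(\Delta e)^{k_i - 1}$. Multiplying across the $t$ trees yields at most $(\Delta e)^{m - t}$ rooted spanning forests compatible with a fixed choice of roots.

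The remaining step sums over the choice of the $t$ roots from $S$ and over $t \in \{1, \ldots, |S|\}$. A naive bound $\sum_{t=1}^{|S|} \binom{|S|}{t} (\Delta e)^{m-t}$ gives only the weaker estimate $(1 + \Delta e)^{|S|} (\Delta e)^{m-|S|}$, so to recover the tight factor $e^{|S|-1}$ I would absorb the $|S|-t$ non-root vertices of $S$ into the walk encoding itself: since each such vertex must appear somewhere in a tree, it is already visited by the walk and should not contribute an additional free combinatorial factor. Concretely, this can be achieved by a unified walk that traverses all trees contiguously with root-to-root transitions encoded via the canonical ordering on $S$, so that only the $m - |S|$ genuinely new vertices incur the factor $\Delta e$.

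The main obstacle will be the tight bookkeeping in this last step: ensuring the encoding is injective while simultaneously absorbing the $|S|$ vertices of $S$ so that the exponential base degrades from $1 + \Delta e$ down to $e$. I expect the cleanest route is an induction on $|S|$ that peels off one element at a time, using a recursion of the form $M(m, S) \le e \cdot \sum_{m'} M(m', S \setminus \{s\}) \cdot (\Delta e)^{m - m' - 1}$ and invoking the single-seed bound $M(m, \{s\}) \le (\Delta e)^{m-1}$ as the base case, with the extra factor of $e$ per peeled root absorbing the combinatorial cost of attaching a new seed to an existing cluster or spawning a new component.
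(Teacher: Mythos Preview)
The paper does not prove this lemma; it is quoted from \cite{aliferis2007accuracy} (with the footnote pointing to the sharper form in Gottesman's Lemma~2), so there is no in-paper argument to compare against. That said, your ``unified walk'' route is exactly the standard one in those references: adjoin a fictitious root $v_0$ adjacent to every vertex of $S$, so that cluster covers of $S$ of size $m$ biject with connected subsets of $G' = G \cup \{v_0\}$ of size $m+1$ containing $v_0$. The crucial move you have not yet pinned down is that, because $v_0$ is adjacent to all of $S$, you may \emph{choose} the spanning tree so that every element of $S$ is a child of $v_0$; then the enumeration factorises over the $|S|$ subtrees hanging below $S$, giving
\[
M(m,S) \;\le\; [x^m]\,B(x)^{|S|} \;=\; \frac{|S|}{m}\binom{\Delta m}{m-|S|}
\]
by Lagrange inversion, where $B(x)=x(1+B(x))^\Delta$. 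A Stirling estimate on this single binomial coefficient then yields $e^{|S|-1}(e\Delta)^{m-|S|}$. Going through the intermediate bound $b_k \le (e\Delta)^{k-1}$ and summing over compositions is too lossy: it produces $\binom{m-1}{|S|-1}(e\Delta)^{m-|S|}$, and $\binom{m-1}{|S|-1}$ can exceed $e^{|S|-1}$.

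Your induction-on-$|S|$ proposal, as written, does not close. Plugging the inductive hypothesis into your recursion $M(m,S)\le e\sum_{m'} M(m',S\setminus\{s\})(\Delta e)^{m-m'-1}$ gives a sum of $\Theta(m)$ identical terms, so each peeled seed costs a factor of $m$ rather than $e$, and you end up with an unwanted $m^{|S|-1}$. If you want to salvage an inductive argument, you would need to carry the sharper quantity $\frac{|S|}{m}\binom{\Delta m}{m-|S|}$ (or its generating function) through the induction rather than the looser exponential form. Finally, note that the bound as stated implicitly requires $S\subseteq C$ (this is Gottesman's formulation, and is how the paper uses the lemma); without it the inequality already fails at $m=1$.
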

\nomenclature[E, 04]{\(\cc(m, S)\)}{For a set of vertices \(S\) of graph \(G\), \(\cc(m, S)\) is the collection of size \(m\) cluster covers of \(S\), see \cref{lemma:cluster-counting}.}

This allows us to upper bound the weight enumerator of this family.
\begin{lemma}\label{lemma:weight-enume-cluster-sets}
    For a maximum degree-\(\Delta\) graph \(G=(V,E)\) with the \((d,t)\)-clustering sets \(\clusterset_{(d,t)}\subseteq P(V)\), we have that (\(x \in [0,1]\))
    \begin{align}
        \weightenum{\clusterset_{(d,t)}}{x} \le \left(\binom{d}{t - 1} (\Delta e)^{d-1} |V|\right) x^{t}
    \end{align}
\end{lemma}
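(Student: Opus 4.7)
The plan is to bound $A_t = |\clusterset_{(d,t)}|$, the number of weight-$t$ elements, and observe that the weight enumerator is simply $A_t x^t$ since every member of $\clusterset_{(d,t)}$ has size exactly $t$. So the whole game reduces to counting subsets $W \subseteq V$ with $|W| = t$ that lie inside some connected $U \subseteq V$ of size $d$.

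First, I would count ordered triples $(v, U, W)$ where $v \in V$, $U$ is a connected subset of size $d$ with $v \in U$, and $W \subseteq U$ has size $t$ and contains $v$. Since every $W$ contains at least one element, and for every $v \in W$ there is at least one connected $U \supseteq W$ of size $d$ (by the defining property of $\clusterset_{(d,t)}$), every $W \in \clusterset_{(d,t)}$ is represented by at least one such triple. So $A_t$ is at most the number of triples.

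Now I would bound the number of triples factor by factor. There are $|V|$ choices for the seed vertex $v$. For each $v$, the number of connected subsets $U$ of size $d$ containing $v$ can be upper bounded by applying the cluster counting lemma (\cref{lemma:cluster-counting}) to the singleton $S = \{v\}$, giving $e^{|S| - 1}(\Delta e)^{d - |S|} = (\Delta e)^{d - 1}$. For each such $U$, the number of size-$t$ subsets $W \subseteq U$ containing $v$ is $\binom{d-1}{t-1}$.

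Multiplying, $A_t \leq |V| \cdot (\Delta e)^{d-1} \cdot \binom{d-1}{t-1} \leq |V| \cdot (\Delta e)^{d-1} \cdot \binom{d}{t-1}$, which is the claimed bound. No step looks like a real obstacle — the only thing that needs care is verifying that every $W \in \clusterset_{(d,t)}$ is witnessed by the triple construction (which follows from $W$ being non-empty and from the defining connected $U$ of size $d$) and that invoking \cref{lemma:cluster-counting} with $S = \{v\}$ really yields the desired $(\Delta e)^{d-1}$ count of connected $d$-sets through $v$.
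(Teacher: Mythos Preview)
Your proof is correct and is essentially the same argument as the paper's: both fix a seed vertex $v$, invoke \cref{lemma:cluster-counting} with $S=\{v\}$ to bound the number of connected $d$-sets through $v$ by $(\Delta e)^{d-1}$, and then count the size-$t$ subsets of each such $U$ containing $v$. Your phrasing in terms of triples $(v,U,W)$ is just a repackaging of the paper's union-bound decomposition, and you even obtain the slightly sharper $\binom{d-1}{t-1}$ before relaxing to $\binom{d}{t-1}$.
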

\begin{proof}
    We compute the number of elements of \(\clusterset_{(d,t)}\) by taking a union bound over the following decomposition:
    \begin{align}
        \clusterset_{(d,t)} \subseteq \bigcup_{v \in V} \bigcup_{\substack{U \subseteq V \\ |U| = d \\ \text{\(U\) is connected in \(G\)}}} \left\{W \subseteq U \mid |W| = t \text{ and } v \subseteq W\right\}
    \end{align}

    \cref{lemma:cluster-counting} upper bounds the number of connected (in \(G\)) sets \(U \subseteq V\) of size \(|U|=d\) containing a particular vertex \(v \in V\) as \((\Delta e)^{d-1}\).
    There are \(\binom{d}{t-1}\) subsets of \(U\) that contain \(v\).
    By construction, each element of \(\clusterset_{(d,t)}\) has weight \(t\).
    The result follows.
\end{proof}

We are now ready to define our code types.
From now on, fix a \(\Delta\)-qLDPC code \(\qcode\) with parameters \(\dsl n, k, d \dsr\) defined by check matrices \((H_X,H_Z)\) and an arbitrary choice of irreversible encoding map implementable by a Clifford circuit \(M\) acting on a logical state \(\mixedstate{\hilbert^k}\) and an \(n-k\) ancilla register initialized to \(\ket{0}^{\otimes (n-k)}\).
Call this the \textbf{computational code}.
\(M\) satisfies the property that for some choice of \(n-k\) generators of the stabilizer \(\{s_i\}_{i=1}^{n-k} \subseteq \{I,X,Y,Z\}^{n-k}\), any eigenstate in eigenspace \(s \in \F^{n-k}\) of the stabilizer generators (the \textbf{syndrome space} \(s\)) can be written as \(M\left(\ket{\psi} \otimes \ket{s}\right)\) for some state \(\ket{\psi} \in \hilbert^k\).

\begin{definition}[Purified decoder]\label{def:purified-decoder}
    Let \(\sigma \colon \{I,X,Y,Z\}^n \mapsto \F^{n-k}\) be the syndrome map with respect to the generating set of the stabilizer \(\{s_i\}_{i=1}^{n-k} \subseteq \{I,X,Y,Z\}^n\) of \(\qcode\) and \(D\colon \F^{n-k} \mapsto \{I,X,Y,Z\}^n\) be a decoding algorithm outputting a correction such that \(\sigma \circ D (s) = s\).
    We can define the purification \(U_{D}\) of \(D\) as the unitary \(U\) that measures the syndrome \(s\), applies the corresponding correction \(D(s)\), and places the value of the syndrome in the ancilla register after applying the unencoding circuit \(M^{\dagger}\).
    Define the syndrome space projector \(\Pi_s = M\left(I\otimes \ketbra{s}{s}\right)M^{\dagger}\).
    Then \(U_{D}\) can be written in terms of a sum over syndrome spaces and the corresponding corrections.
    \begin{align}\label{eq:ldpc:unitary_decoder}
        U_{D} = \sum_{s \in \F^{n-k}} (I\otimes \ketbra{s}{0}) M^{\dagger} D(s) \Pi_s
    \end{align}
\end{definition}

\begin{definition}[Code types for computational code]\label{def:qldpc-codetype}
    Let \(\clusterset_{(m,t)}^{(X)}\) and \(\clusterset_{(m,t)}^{(Z)}\) be the \((m,t)\)-clustering sets in \(\adjgraph{H_X}\) and \(\adjgraph{H_Z}\), respectively.
    Let \(c \in [0,1]\).
    All the bad error supports will be stated in terms of \(c\).
    We define the bad error supports to be all sufficiently large subsets of connected subsets of size at least \(d\).
    \begin{align}
        \baderrors^{(X)}_c &= \boxplus^{n}_{m=d} \clusterset_{(m,\lceil c\cdot m/2\rceil)}^{(X)} \\
        \baderrors^{(Z)}_c &= \boxplus^{n}_{m=d} \clusterset_{(m,\lceil c\cdot m/2\rceil)}^{(Z)}
    \end{align}
    We define \(\baderrors_c \equiv \baderrors^{(X)}_c \boxplus \baderrors^{(Z)}_c\).

    The reversible decoding unitary \(U_{\qcode}\) is the purification of a CSS minimum-weight decoder (see \cref{def:min-weight-decoder}) for \(\qcode\), as defined in \cref{eq:ldpc:unitary_decoder}. 
    The code type \(\compCodeType[c]\) is defined to be the pair of bad sets and the purified minimum weight decoder \(\left(U_{\qcode}, \baderrors_c \equiv \baderrors^{(X)}_c \boxplus \baderrors^{(Z)}_c\right)\).
\end{definition}
\nomenclature[E, 10]{\((U_{\qcode}, \baderrors_c)\)}{\(\compCodeType[c] = (U_{\qcode}, \baderrors_c)\) is the code type of a computational qLDPC code \(\qcode\), parametrized by a constant \(c\in [0,1]\). \(U_{\qcode}\) is the purification of a decoder. See \cref{def:purified-decoder} and \cref{def:qldpc-codetype}.}

Note that \(U_{\qcode}\) inverts \(M\) on the trivial syndrome space and can be easily confirmed to be unitary.
\begin{align}
    \forall \ket{\psi} \in \hilbert^k \quad \ket{\psi} = (U_{\qcode} M) \ket{\psi} \ket{0}^{\otimes n-k}
\end{align}

We now show that \(\compCodeType[c]\) is a valid code type \cref{def:code-type}.
\begin{proposition}[Correctable errors~\cite{kovalev2013fault,gottesman2014fault}]
    For \(c \in [0,1]\), let \(\compCodeType[c] = (U_{\qcode}, \baderrors_c)\).
    Fix a logical state \(\rho \in \mixedstate{\hilbert^k}\) and a state \(\sigma \in \unphysicalstate{\hilbert^n}\) such that \(\sigma\) is \(\baderrors_c\)-deviated from the encoding of \(\rho\): \(\encoder(\rho) \dev{\baderrors_c} \sigma\).

    Then, the decoding of \(\sigma\) is proportional to \(\rho\).
    \begin{align}
        \decoder(\sigma) \propto \rho
    \end{align}
\end{proposition}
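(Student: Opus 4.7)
The strategy is to reduce the claim to Pauli errors via \cref{lemma:deterministic-errors} and then invoke the Kovalev--Pryadko argument to show that the CSS minimum-weight decoder corrects every Pauli supported on a $\baderrors_c$-avoiding set up to a stabilizer.

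First, unpack the definition of $\dev{\baderrors_c}$: there exists a $\baderrors_c$-avoiding set $S \subseteq [n]$ and a (not necessarily CPTP) superoperator $\error_S$ supported on $S$ such that $\sigma = (I_{[n]\setminus S}\otimes \error_S)(\encoder(\rho))$. A $\baderrors_c$-avoiding set cannot contain the support of any non-trivial logical: every non-trivial logical has support containing a connected cluster of size $m \geq d$ in the relevant adjacency graph by \cite{kovalev2013fault}, which in turn contains some $\lceil cm/2\rceil$-subset lying in $\clusterset_{(m,\lceil cm/2\rceil)} \subseteq \baderrors_c$. So $S$ is recoverable. The purified decoder $U_{\qcode}$ of \cref{def:purified-decoder} coherently measures the syndrome, applies $D(s)$, and unencodes; $\decoder$ then discards the syndrome register. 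Applying \cref{lemma:deterministic-errors} to the measurement step decomposes $\error_S(\encoder(\rho))$ as $\sum_s \alpha_s\,\ketbra{s}{s} \otimes E_s\,\encoder(\rho)\,E_s$ with Paulis $E_s$ supported on $S$ of syndrome $s$. By linearity it therefore suffices to show that $D(s)E_s$ is a stabilizer of $\qcode$ for each such $s$: then $D(s)E_s\,\encoder(\rho)\,E_s D(s) = \encoder(\rho)$, and tracing out the syndrome register yields $\decoder(\sigma) = (\sum_s \alpha_s)\rho$, which is the required proportionality.

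It remains to prove the stabilizer claim. Split $E_s = X^{e_X}Z^{e_Z}$ and $D(s) = X^{D_Z(s_Z)}Z^{D_X(s_X)}$; up to Pauli signs, the $X$- and $Z$-parts of $D(s)E_s$ decouple, and by symmetry it suffices to show that $e_X + D_Z(s_Z)$ is in the row space of $H_X$ (i.e., represents a trivial class in the $X$-logical quotient). Suppose for contradiction that it represents a non-trivial $X$-logical $L$. By \cite{kovalev2013fault}, $\supp(L)$ induces a subgraph of the relevant adjacency graph with at least one connected component $U$ of size $m := |U| \geq d$. Because $U$ is a whole connected component of $\supp(L)$, the restriction $L|_U$ is again in the normalizer, so $D_Z(s_Z) + L|_U$ is another valid correction with the same syndrome. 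Minimality of $D_Z(s_Z)$ therefore forces $|D_Z(s_Z) \cap U| \leq |U|/2 = m/2$. Since $(e_X + D_Z(s_Z))|_U = L|_U = \mathbf{1}_U$, we get $|e_X \cap U| + |D_Z(s_Z) \cap U| = m$, yielding $|e_X \cap U| \geq m/2 \geq \lceil cm/2 \rceil$ for $c \leq 1$. Thus $e_X \cap U \subseteq S$ is a subset of size at least $\lceil cm/2\rceil$ of a connected size-$m$ cluster in the adjacency graph, i.e., an element of $\clusterset_{(m,\lceil cm/2\rceil)} \subseteq \baderrors_c$, contradicting the $\baderrors_c$-avoidance of $S$.

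The main obstacle is the Kovalev--Pryadko structural lemma, which supplies the ``large connected cluster'' geometry of low-weight logicals in a qLDPC code; this is precisely why the bad error supports in \cref{def:qldpc-codetype} are parametrized by connected clusters rather than Hamming weight. With that input in hand, the remainder is bookkeeping: \cref{lemma:deterministic-errors} reduces general superoperators to Pauli operators, the $X$/$Z$ decoupling splits the CSS decoder into two independent classical minimum-weight problems, and minimality of the decoder delivers the contradiction. The proportionality constant $\sum_s \alpha_s$ depends only on $\error_S$ and not on $\rho$, which is why the conclusion is proportionality rather than equality.
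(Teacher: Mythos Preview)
Your proposal is correct and follows essentially the same approach as the paper: reduce to Pauli errors, then use the Kovalev--Pryadko connected-cluster structure together with minimality of the CSS decoder to derive a contradiction with \(\baderrors_c\)-avoidance. The only cosmetic difference is that you invoke \cref{lemma:deterministic-errors} to collapse the general superoperator to diagonal Pauli errors, whereas the paper writes out the Kraus expansion of \(\error_S\) and the projector algebra of \(U_{\qcode}\) explicitly; both arrive at the same conclusion that only the diagonal syndrome terms survive the partial trace and that the residual Pauli on each syndrome branch is corrected up to a stabilizer.
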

\begin{proof}
    Let \(\error\) be a superoperator whose support \(S\subseteq [n]\) is \(\baderrors_c\)-avoiding. 
    First, we write the Kraus decomposition of the superoperator in terms of Pauli operators \(\{K_\mu\}_\mu, 
    \{K_{\nu}'\}_\nu\) supported on \(S\subseteq [n]\) and complex coefficients \(\{\alpha_{\mu,\nu}\}_{\mu,\nu}\)
    \begin{align}
        \error\circ \encoder(\rho) &= \sum_{\mu,\nu}\alpha_{\mu,\nu} K_\mu M (\rho\otimes \ketbra{0}{0})M^{\dagger} K_{\nu}'
    \end{align}
    Using this decomposition and the definition of \(U_{\qcode}\) (\cref{eq:ldpc:unitary_decoder}), we consider the action of \(\decoder\) on \(\sigma\).
    \begin{align}
        &\decoder(\sigma) = \tr_{\hilbert^{r}} \left( U_{\qcode} (\error\circ \encoder (\rho) )U_{\qcode}^\dagger \right) \\
        &= \tr_{\hilbert^{r}} \left( \sum_{s,t \in \F^{n-k}} (I\otimes \ketbra{s}{0}) M^{\dagger} D(s) \Pi_s \left( \sum_{\mu,\nu}\alpha_{\mu,\nu} K_\mu M (\rho\otimes \ketbra{0}{0})M^{\dagger} K_{\nu}' \right) \Pi_t D(t) M(I\otimes \ketbra{0}{t})\right) \\
        &= \tr_{\hilbert^{r}} \left( \sum_{s,t \in \F^{n-k}} \left( \sum_{\mu,\nu}\alpha_{\mu,\nu}
             (I\otimes \ketbra{s}{0}) M^{\dagger} D(s) \Pi_s K_\mu M (\rho\otimes \ketbra{0}{0})M^{\dagger} K_{\nu}' \Pi_t D(t) M(I\otimes \ketbra{0}{t})\right) \right)
    \end{align}
    Let \(s_\mu, s_\nu\) denote the syndrome of \(K_\mu, K_{\nu}'\). 
    The projectors \(\Pi_s\) and \(\Pi_t\) annihilate all terms except for those where \(s_\mu = s\) and \(s_\nu = t\).
    Suppose \(D(s)K_\mu\) and \(D(t)K_{\nu}'\) are both stabilizers, we can write
    \begin{align}
        D(s) \Pi_s K_\mu M (\rho\otimes \ketbra{0}{0})M^{\dagger} K_{\nu}' \Pi_t D(t)
        &= M (\rho\otimes \ketbra{0}{0})M^{\dagger}. 
    \end{align}
    We can then simplify the equation for \(\decoder(\sigma)\).
    \begin{align}
        \decoder(\sigma)
        &= \tr_{\hilbert^{r}} \left( \sum_{s,t \in \F^{n-k}}  \left( \sum_{\mu,\nu: s_\mu = s, s_\nu = t}\alpha_{\mu,\nu}
        (I\otimes \ketbra{s}{0}) M^{\dagger} M (\rho\otimes \ketbra{0}{0})M^{\dagger} M(I\otimes \ketbra{0}{t})\right) \right) \\
        &= \tr_{\hilbert^{r}} \left( \sum_{s,t \in \F^{n-k}} \left( \sum_{\mu,\nu: s_\mu = s, s_\nu = t}\alpha_{\mu,\nu} \rho\otimes \ketbra{s}{t} \right) \right).
    \end{align}
    The partial trace annihilates all terms where \(s\ne r\). Let \(\alpha_s\) be the sum of all \(\alpha_{\mu,\nu}\) where \(s_\mu = s_\nu = s\). Then the above simplifies to
    \begin{align}
        \decoder(\sigma) &= \sum_{s\in \F^{n-k}}\alpha_s \rho.
    \end{align}
    It suffices for us to show that for 
    an arbitrary Pauli error \(E\) with \(\baderrors^{(X)}_c \boxplus \baderrors^{(Z)}_c\)-avoiding support, the minimum weight decoder will correct \(E\) up to a stabilizer. 
    Decompose the Pauli error \(E = p E_X E_Z\) into \(X\) and \(Z\) parts \(E_X \in \{X,I\}^{n}\) and \(E_Z \in \{Z,I\}^{n}\) up to a scalar \(p \in \{\pm 1, \pm i\}\).
    Consider the correction of \(E_X\), the case of \(E_Z\) is identical with the roles of \(Z\) and \(X\) interchanged.
    
    Since \(E\) has \(\baderrors_c\)-avoiding support, \(E_X\) must have \(\baderrors^{(Z)}_c = \boxplus^{n}_{m=d} \clusterset_{(m,\lceil c\cdot m/2\rceil)}^{(Z)}\)-avoiding support.
    The minimum weight decoder applies a minimum weight correction \(C_X \in \{I,X\}^n\) such that \(C_X E_X\) has trivial syndrome.
    We will show that \(C_X E_X\) is in the stabilizer of the code, following \cite{kovalev2013fault,gottesman2014fault}.

    We first establish some properties of the corrected error \(C_X E_X\).
    \(\supp (C_X E_X)\) induces a subgraph of the adjacency graph \(\adjgraph{H_Z}\) that has connected cluster \(L_1, \dots, L_\ell\).
    From the definition of the adjacency graph, each connected cluster must have trivial syndrome (commute with the stabilizer of the code) if \(C_X E_X\) does.
    Thus, the restriction of \(C_X E_X\) to any connected cluster \(L'\) of size strictly less than the code distance \(d\) must be in the stabilizer of the code.
    It also must be the case that on \(L'\), the error is supported on at least half of the set, \(|L' \cap \supp E_X|\ge |L'|/2\).
    Otherwise, \(E_X|_{L'}\) would be a lower weight correction than \(C_X|_{L'}\) for \(E_X|_{L'}\).
    
    Now suppose, for a contradiction, that \(C_X E_X\) is not a stabilizer of the code.
    Then, at least one connected cluster \(L\) of \(\supp (C_X E_X)\) in \(\adjgraph{H_Z}\) must have size \(m \ge d\).
    Since \(L\) satisfies \(|L \cap \supp E_X|\ge |L|/2\), \(\supp E_X\) contains a subset of \(L\) of size at least \(m/2\).
    So this subset this is a superset of an element of \(\clusterset_{(m,\lceil c\cdot m/2\rceil)}^{(Z)}\), contradicting the assumption that \(E\) was \(\baderrors_c\)-avoiding.

    An identical argument holds for \(C_Z E_Z\), so \(C_Z E_Z C_X E_X\) is in the stabilizer of the code and \(E\) is corrected by the CSS minimum-weight decoder.
    This completes our proof.
    \end{proof}

We will also need to be able to analyze the structure of unions of sets that avoid the clustering sets.

\begin{definition}[Separable Bad Sets]\label{def:separable-bad-sets}
    For \(\baderrors,\baderrors_1, \baderrors_2\subseteq \power(\Omega)\), we say that \(\baderrors\) is \textbf{separable} into \(\baderrors_1, \baderrors_2\) if for all \(\baderrors_1\)-avoiding set \(X\) and \(\baderrors_2\)-avoiding set \(Y\), \(X\cup Y\) is \(\baderrors\)-avoiding. 
\end{definition}

\begin{proposition}[Union of avoiding sets]\label{lemma:union-correctable-sets}
    Let \(G=(V,E)\) be a graph, and \(\clusterset_{(d,t)}\) be the \((d,t)\)-clustering sets of \(G\).
    For \(s,t \in \mathbb{N}\) such that \(s+t \le d\), \(\clusterset_{(d,{s+t})}\) is separable into \(\clusterset_{(d,{s})}\) and \(\clusterset_{(d,{t})}\).
\end{proposition}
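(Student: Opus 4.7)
The plan is to prove the statement by contrapositive: I will show that if \(X\cup Y\) fails to be \(\clusterset_{(d,s+t)}\)-avoiding, then at least one of \(X\) or \(Y\) must already fail its own avoidance property. Concretely, suppose there exists \(W \in \clusterset_{(d, s+t)}\) with \(W \subseteq X \cup Y\). By the definition of the clustering sets (\cref{def:clustering-sets}), \(|W| = s+t\) and there is a connected subgraph \(U\) of \(G\) with \(|U|=d\) and \(W \subseteq U\).

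The central combinatorial observation is a pigeonhole step. Decompose \(W\) into \(W_X := W \cap X\) and \(W_Y := W \setminus W_X\); the latter sits inside \(Y\) because \(W \subseteq X \cup Y\). Since \(|W_X| + |W_Y| = s+t\), we cannot simultaneously have \(|W_X| < s\) and \(|W_Y| < t\), so at least one of \(|W_X| \ge s\) or \(|W_Y| \ge t\) holds. Without loss of generality, assume \(|W_X| \ge s\) (the other case is symmetric, exchanging the roles of the two bad families).

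From here, the witness is immediate: choose any subset \(W_X' \subseteq W_X\) of size exactly \(s\). Because \(W_X' \subseteq W \subseteq U\), and \(U\) is a connected size-\(d\) subgraph of \(G\), the set \(W_X'\) lies inside a connected \(d\)-vertex induced subgraph. Since \(s \le s+t \le d\), the set \(W_X'\) satisfies the defining conditions of \(\clusterset_{(d,s)}\), namely it has size \(s\) and is contained in an element of \(\clusterset_{(d,d)}\). But \(W_X' \subseteq X\), so \(X\) contains an element of \(\clusterset_{(d,s)}\), contradicting the assumption that \(X\) is \(\clusterset_{(d,s)}\)-avoiding.

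I do not foresee any genuine obstacle here; the whole content of the statement is a simple pigeonhole together with the observation that the connected size-\(d\) set \(U\) witnessing membership of \(W\) in \(\clusterset_{(d,s+t)}\) can be reused as a witness for the smaller subset \(W_X'\). The only tiny care point is the trivial check that \(s \le d\) (which follows from \(s+t \le d\) when \(t \ge 0\)), ensuring \(\clusterset_{(d,s)}\) is indeed defined via the same condition; the symmetric case handling \(|W_Y| \ge t\) is identical with \(Y\) and \(t\) in place of \(X\) and \(s\).
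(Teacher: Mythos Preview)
Your proof is correct and takes essentially the same approach as the paper's: both argue by contradiction/contrapositive, exploit the same connected size-\(d\) set \(U\) as the witness, and use the pigeonhole observation that \(|W \cap X| + |W \cap Y| \ge s+t\) forces one of the intersections to be large enough to contain an element of the smaller clustering family. The paper phrases this slightly more tersely by bounding \(|U \cap X| < s\) and \(|U \cap Y| < t\) directly, but the content is identical.
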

\begin{proof}
    Consider \(X,Y\subseteq [V]\) such that \(X\) is \(\clusterset_{(d,s)}\)-avoiding and \(Y\) is \(\clusterset_{(d,t)}\)-avoiding, let \(Z = X \cup Y\).
    Suppose, for a contradiction, that \(Z\) is not \(\clusterset_{(d,{s+t})}\)-avoiding.
    Then, there exists a subset of vertices \(U \subseteq V\) of size \(|U| = d\) such that \(U\) induces a connected subgraph of \(G\) and \(|U \cap Z| \ge s+t\).

    However, since \(X\) and \(Y\) are \(\clusterset_{(d,s)}\)-avoiding and \(\clusterset_{(d,t)}\)-avoiding, respectively, we have that their intersections with \(U\) must be small: \(|U \cap X| < s\) and \(|U \cap Y| < t\).
    The contradiction follows.
    \begin{align}
        s+t \le |U \cap Z| &\le |U\cap X| + |U\cap Y| < s + t. \qedhere
    \end{align}
\end{proof}

It is a straightforward corollary that if we combine errors (e.g. by applying a two-qubit gate transversally) then the resulting error is well controlled.
\begin{corollary}\label{lemma:separable-bad-errors}
    For \(c+c' \le 1\), \(\baderrors_{c+c'}\) is separable into \(\baderrors_{c}\) and \(\baderrors_{c'}\). 
\end{corollary}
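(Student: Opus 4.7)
The plan is to reduce the claim to separability within each \((m,\cdot)\)-clustering family and argue by direct counting inside a connected cluster. Since a set is avoiding for a \(\boxplus\)-sum iff it is avoiding for each summand, the decomposition \(\baderrors_c=\baderrors^{(X)}_c\boxplus\baderrors^{(Z)}_c\) reduces the claim to proving separability of \(\baderrors^{(X)}_{c+c'}\) into \(\baderrors^{(X)}_c\) and \(\baderrors^{(X)}_{c'}\) (the \(Z\) statement is symmetric). Unfolding \(\baderrors^{(X)}_c=\boxplus_{m=d}^n \clusterset^{(X)}_{(m,\lceil cm/2\rceil)}\), it then suffices to show, for each such \(m\), that \(\clusterset^{(X)}_{(m,\lceil(c+c')m/2\rceil)}\) is separable into \(\clusterset^{(X)}_{(m,\lceil cm/2\rceil)}\) and \(\clusterset^{(X)}_{(m,\lceil c'm/2\rceil)}\).

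One might hope to invoke \cref{lemma:union-correctable-sets} directly with \(s=\lceil cm/2\rceil\) and \(t=\lceil c'm/2\rceil\), but this only yields that \(X\cup Y\) is \(\clusterset^{(X)}_{(m,\lceil cm/2\rceil+\lceil c'm/2\rceil)}\)-avoiding, which is in general strictly weaker than what is needed, since \(\lceil cm/2\rceil+\lceil c'm/2\rceil\) can exceed \(\lceil(c+c')m/2\rceil\) by one. I would therefore argue directly by counting in a fixed connected cluster: fix a connected \(m\)-vertex subset \(U\) of \(\adjgraph{H_Z}\). The hypothesis that \(X\) is \(\clusterset^{(X)}_{(m,\lceil cm/2\rceil)}\)-avoiding means \(X\) contains no size-\(\lceil cm/2\rceil\) subset of \(U\), so \(|X\cap U|\le \lceil cm/2\rceil-1\), and similarly \(|Y\cap U|\le \lceil c'm/2\rceil-1\).

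The key arithmetic input is the inequality \(\lceil a\rceil+\lceil b\rceil\le \lceil a+b\rceil+1\) for real \(a,b\), which follows from \(\lceil a\rceil+\lceil b\rceil<a+b+2\le\lceil a+b\rceil+2\) combined with integrality of the left-hand side. Applied with \(a=cm/2\) and \(b=c'm/2\), this gives
\[
|(X\cup Y)\cap U|\;\le\;|X\cap U|+|Y\cap U|\;\le\;\lceil cm/2\rceil+\lceil c'm/2\rceil-2\;\le\;\lceil(c+c')m/2\rceil-1,
\]
so \(X\cup Y\) contains no size-\(\lceil(c+c')m/2\rceil\) subset of \(U\). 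Since \(U\) was arbitrary, \(X\cup Y\) is \(\clusterset^{(X)}_{(m,\lceil(c+c')m/2\rceil)}\)-avoiding, completing the per-\(m\) step and hence the proof after reassembling the \(\boxplus\)-decompositions over \(m\) and over \(X/Z\). The expected main obstacle is exactly this loss-of-one in ceiling arithmetic, which is why the preceding proposition cannot be used as a black box; the remainder is routine bookkeeping.
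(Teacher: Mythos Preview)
Your proof is correct. You correctly identify a subtlety that the paper glosses over: the paper's one-line proof simply says ``Apply \cref{lemma:union-correctable-sets},'' but a black-box application with \(s=\lceil cm/2\rceil\), \(t=\lceil c'm/2\rceil\) only yields separability of \(\clusterset_{(m,\lceil cm/2\rceil+\lceil c'm/2\rceil)}\), which can differ from the desired \(\clusterset_{(m,\lceil(c+c')m/2\rceil)}\) by one due to ceiling rounding.

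Your fix is to open up the proof of \cref{lemma:union-correctable-sets} and observe that it actually yields the sharper bound \(|(X\cup Y)\cap U|\le (s-1)+(t-1)=s+t-2\), which together with \(\lceil a\rceil+\lceil b\rceil\le\lceil a+b\rceil+1\) is exactly enough. This is not a genuinely different route---it is the same argument, just carried out with the care the paper's terse citation omits. The paper presumably intends the reader to rerun the proposition's proof with this sharper observation; you have made that step explicit.
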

\begin{proof}
    Apply \cref{lemma:union-correctable-sets} to the clustering sets in the definition of \(\baderrors_c\).
\end{proof}

Concluding our definition of the computational code type, we bound the weight enumerator of \(\baderrors_c\).
\begin{lemma}\label{lemma:weight-enum-baderrors-c}
    \begin{align}
        \weightenum{\baderrors_c}{x} 
        &\le n\cdot [O_{\Delta,c}(x)]^{\frac{cd}{2}}.
    \end{align}
\end{lemma}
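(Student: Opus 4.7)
The plan is to unwind the definitions using the algebra of weight enumerators from~\cref{lemma:enumerator-ring} and reduce everything to repeated application of~\cref{lemma:weight-enume-cluster-sets}. First, by the sum rule applied to the definition
\(\baderrors_c = \baderrors_c^{(X)} \boxplus \baderrors_c^{(Z)}\) and to the further decompositions
\(\baderrors_c^{(X/Z)} = \boxplus_{m=d}^n \clusterset_{(m, \lceil cm/2 \rceil)}^{(X/Z)}\),
we get
\[\weightenum{\baderrors_c}{x} = \sum_{m=d}^n \left( \weightenum{\clusterset_{(m,\lceil cm/2\rceil)}^{(X)}}{x} + \weightenum{\clusterset_{(m,\lceil cm/2\rceil)}^{(Z)}}{x} \right).\]
So it is enough to bound each cluster-set weight enumerator individually.

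Next, I would invoke~\cref{lemma:weight-enume-cluster-sets} on the adjacency graphs \(\adjgraph{H_X}\) and \(\adjgraph{H_Z}\). Since \(\qcode\) is \(\Delta\)-qLDPC, each qubit sits in at most \(\Delta\) checks and each check touches at most \(\Delta\) qubits, so the maximum degree of either adjacency graph is at most \(\Delta(\Delta-1) \le \Delta^2\). Therefore, writing \(t = \lceil cm/2 \rceil\),
\[\weightenum{\clusterset_{(m,t)}^{(X/Z)}}{x} \le \binom{m}{t-1}(\Delta^2 e)^{m-1} \, n \, x^t.\]
Bounding \(\binom{m}{t-1} \le 2^m\) and \(x^t \le x^{cm/2}\) (for \(x \in [0,1]\)), each term is at most \(n \cdot (2\Delta^2 e)^m x^{cm/2}\).

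The final step is to sum the geometric series starting from \(m = d\). Setting \(y := (2\Delta^2 e)^{2/c} x\) so that \((2\Delta^2 e)^m x^{cm/2} = y^{cm/2}\), and restricting to \(x\) small enough that \(y^{c/2} < 1/2\) say, the sum is dominated by the \(m=d\) term up to a constant depending only on \(\Delta\) and \(c\):
\[\weightenum{\baderrors_c}{x} \le 2n \sum_{m=d}^n y^{cm/2} \le C_{\Delta,c} \cdot n \cdot y^{cd/2} = n \cdot [O_{\Delta,c}(x)]^{cd/2}.\]
No step should be genuinely hard; the mild subtlety is just keeping track of constants so that the final bound has the product \(cd/2\) as its exponent, which comes out naturally once one substitutes \(t \ge cm/2\) inside the exponent of \(x\) and recognizes \((\text{const})^m x^{cm/2}\) as a single geometric factor raised to the power \(cm/2\).
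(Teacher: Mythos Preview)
The proposal is correct and follows essentially the same approach as the paper: decompose via the sum rule, bound the adjacency-graph degree by \(\Delta^2\), invoke \cref{lemma:weight-enume-cluster-sets} with \(\binom{m}{t-1}\le 2^m\), and sum the resulting geometric series in \(m\). Your writeup is in fact slightly more explicit about the geometric-series step than the paper's, but there is no substantive difference.
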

\begin{proof}
    Since \(H_X, H_Z\) has row and column weights bounded by \(\Delta\), the graphs \(\adjgraph{H_X}, \adjgraph{H_Z}\) has degree bounded by \(\Delta^2\). 
    We can bound the weight enumerator of \(\baderrors_c\) with \cref{lemma:weight-enume-cluster-sets}. 
    \begin{align}
        \weightenum{\baderrors^{(X)}_c}{x} 
        &\le \sum_{m\ge d} \weightenum{\clusterset^{(X)}_{(m,\lceil c\cdot m/2\rceil)}}{x} \\
        &\le \sum_{m\ge d} n\cdot (2e\Delta^2)^m x^{\frac{cm}{2}} \\
        &\le n\cdot [O_{\Delta,c}(x)]^{\frac{cd}{2}}
    \end{align}
    Since \(\baderrors_c = \baderrors^{(X)}_c\boxplus \baderrors^{(Z)}_c\), we multiply the above upper bound by a factor of 2, which is absorbed into the constants in \(O_{\Delta,c}(x)\). 
\end{proof}

\subsection{Spacetime code}\label{sec:spacetime-code}
Before we are ready to construct and prove properties of the error correction gadget, we first need to construct the following classical code commonly referred to as a ``spacetime code.''
In what follows, we will use the CSS property to focus on correction of \(X\) errors.
The case of \(Z\) errors will follow with \(X\) and \(Z\) interchanged.
\begin{definition}[Spacetime code]\label{def:spacetime-code}
    Let \((H_X,H_Z)\) be a CSS code.
    The spacetime code of \(H_Z \in \F^{r_Z \times n}\) is a modified check matrix that corresponds to the spacetime history of repeated syndrome extraction.\footnote{Syndrome bits of this code are sometimes called ``detectors.''}

    The spacetime code check matrix \(H \in \F^{(T r_Z) \times T(n + r_Z)}\) will have \(T\) copies of \(H_Z\) along the diagonal and \(T\) blocks of \(r_Z\) extra columns corresponding to syndrome measurement errors.
    Each measurement error will be able to flip the syndrome of a check in two consecutive rounds.
    
    We describe this more explicitly.
    In what follows, for two intervals \(I\) and \(J\), we will use \(H[I,J]\) to denote the sub-matrix with rows in \(I\) and columns in \(J\).
    For \(t \in [T]\), we define the data error intervals and syndrome error intervals (see \cref{fig:spacetime-check-matrix}).
    \begin{align}
        I_t &= [(t-1)\cdot r_Z + 1, ~t\cdot r_Z - 1]\\
        J_t^{\mathrm{d}} &= [(t-1)\cdot n + 1, ~t\cdot n - 1]\\
        J_t^{\mathrm{s}} &= [(t-1)\cdot r_Z + T n, ~ t\cdot r_Z + T n - 1]
    \end{align}
    We can now specify the entries of \(H\) (all unspecified entries are 0).
    \begin{align}
        t \in [T]& & H[I_t, J_t^{\mathrm{d}}] &= H_Z\\
        t \in [T]& & H[I_t, J_t^{\mathrm{s}}] &= \mathsf{Id}_{r_Z \times r_Z} \\
        t \in [T-1]& & H[I_t, J_{t+1}^{\mathrm{s}}] &= \mathsf{Id}_{r_Z \times r_Z}
    \end{align}
    We will also require an ``interpretation'' of the results of decoding the spacetime code.
    Define the \textbf{flattening map} \(\phi \colon \F^{T n + Tr_z} \to \F^{n}\) given by dropping the syndrome error blocks and summing together the data
    error blocks \(x \mapsto \sum_{t \in [T]} x[J_t^{\mathrm{d}}]\).
\end{definition}
\nomenclature[E, 11]{\(H, T, I_t, J_t, \phi\)}{\(H\) is the spacetime check matrix (commonly referred to as the detector check matrix) of \(T\) rounds of measurements of \(H_Z\). For \(t\in [T]\), \(I_t\) denotes the detectors at time \(t\), \(\Jd_t, \Js_t\) denote the data bits and syndrome bits. \(\phi\) is the flattening map. See \cref{def:spacetime-code}.}
\begin{figure}
    \centering
    \includegraphics[width=0.5\linewidth]{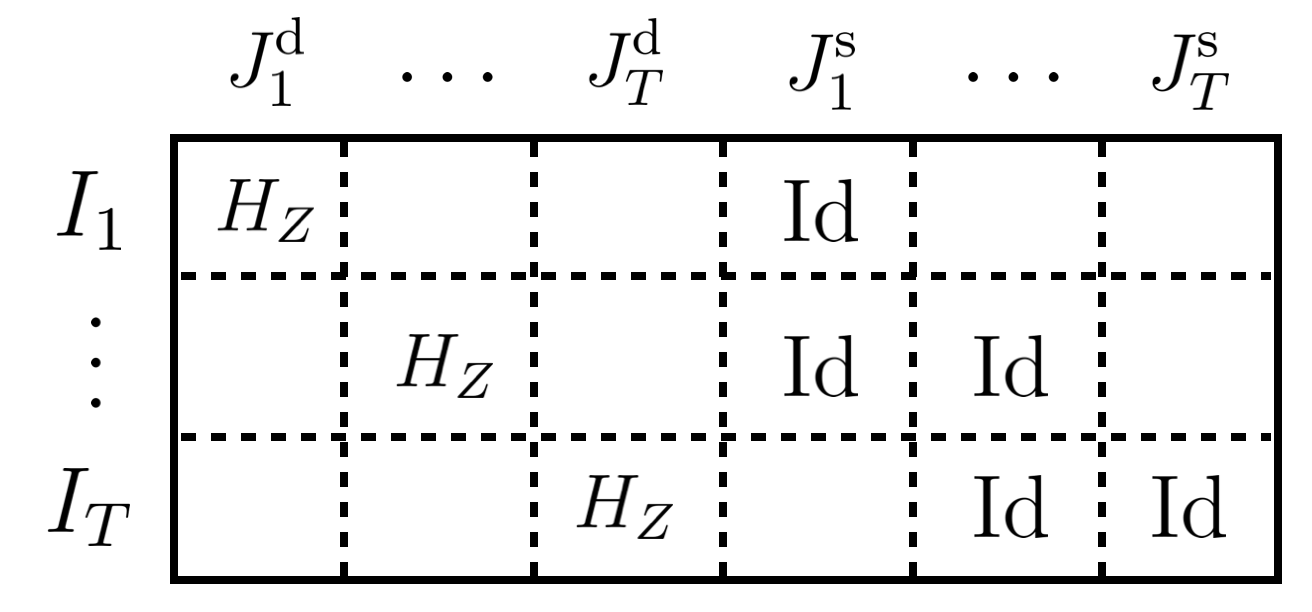}
    \caption{Spacetime code check matrix}
    \label{fig:spacetime-check-matrix}
\end{figure}
The rows \(I_t\) correspond to the change in syndrome measurements in the \(t\)-th round.
The columns \(J_t^{\mathrm{s}}\) correspond to syndrome measurement errors affecting the syndrome at rounds \(t-1\) and \(t\) while the columns \(J_t^{\mathrm{d}}\) correspond to new data errors accumulated in round \(t\).

For our error correction gadget, we will consider the \textbf{input error}, which corresponds to bits of \(J_{1}^{\mathrm{d}}\), separately from the remaining error. 
For this purpose, we define the \textbf{bulk} of the spacetime code with bits \(\Jbulk := [T(n+r_Z)]\setminus \Jd_1\) and checks \(\Ibulk = I_2\cup \dots \cup I_T\) namely, all data and syndrome bits except for the input error bits and all rows except for the first.
We denote this \textbf{bulk check matrix} \(\Hbulk\).
Similarly, we define the \textbf{bulk flatting map}
\(\phi_\bulk \colon \F^{(T-1) n + Tr_z} \to \F^{n}\) as \(\phi_\bulk(x) = \sum_{t=2}^{T} x[J_t^{\mathrm{d}}]\).
We call the last \(r_Z\) bits \(J_{T}^{\mathrm{s}}\) of a bitstring \(\F^{T n + Tr_z}\) the \textbf{final round syndrome error} bits, and the first \(r_Z\) bits \(J_{1}^{\mathrm{s}}\) the \textbf{initial round syndrome error}.
\nomenclature[E, 12]{\(\Hbulk, \phi_\bulk\)}{The bulk check matrix and flattening map, which is defined with \(\Jd_1\) and \(I_1\) removed.}

We will now prove some properties about the spacetime code: Let \(d\) be the distance of the original CSS code.

\begin{lemma}[Final round syndrome]\label{prop:final-round-syndrome}
    For a bitstring \(x \in \F^{T n + Tr_Z}\) with trivial spacetime syndrome \(Hx = 0\), the syndrome of the flattened bitstring \(\phi(x)\) is equal to the final round syndrome error.
    \begin{align}
        H_Z \phi(x) = x[J_{T}^{\mathrm{s}}]
    \end{align}
\end{lemma}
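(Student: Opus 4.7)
The plan is to verify the identity by summing the row-block equations of $Hx = 0$ directly in $\mathbb{F}_2$. By construction of the spacetime check matrix, each interior row block $I_t$ for $t \in [T-1]$ contributes the relation
\[
    H_Z\, x[J_t^{\mathrm{d}}] + x[J_t^{\mathrm{s}}] + x[J_{t+1}^{\mathrm{s}}] = 0,
\]
coupling the incremental data error at round $t$ to its own and an adjacent round's syndrome-error block via the two identity blocks. The boundary row block $I_T$ is missing the off-diagonal coupling (since the rule $H[I_t, J_{t+1}^{\mathrm{s}}] = \mathsf{Id}$ is only stated for $t \in [T-1]$) and yields the simpler relation $H_Z\, x[J_T^{\mathrm{d}}] + x[J_T^{\mathrm{s}}] = 0$.

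Next I would sum all $T$ equations. The data-error contributions on the left-hand side collapse to $H_Z \phi(x)$ directly from the definition of the flattening map $\phi$. On the right-hand side, each syndrome-error block $x[J_t^{\mathrm{s}}]$ appears in the sum with multiplicity equal to the number of detector rows it participates in: interior blocks appear with multiplicity two and therefore cancel in $\mathbb{F}_2$, while the temporal-boundary asymmetry leaves exactly one surviving term, namely $x[J_T^{\mathrm{s}}]$. Rearranging yields the claimed identity.

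The argument is essentially a one-line telescoping sum, so the main (small) risk is an off-by-one error when tallying which syndrome-error blocks participate in which detector rows and identifying the surviving boundary term. I would sanity-check the parity count on a small case such as $T=2$ (where the two equations immediately give $H_Z(x[J_1^{\mathrm{d}}] + x[J_2^{\mathrm{d}}]) = x[J_2^{\mathrm{s}}]$) before writing the general calculation, after which the identity follows as a single displayed equation.
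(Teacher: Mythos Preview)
Your approach---sum the row-block equations and let the syndrome-error terms telescope---is exactly what the paper does. But your parity count is off, precisely the risk you flagged. With the equations you wrote,
\[
t\in[T-1]:\quad H_Z\, x[J_t^{\mathrm{d}}] + x[J_t^{\mathrm{s}}] + x[J_{t+1}^{\mathrm{s}}]=0,
\qquad
t=T:\quad H_Z\, x[J_T^{\mathrm{d}}] + x[J_T^{\mathrm{s}}]=0,
\]
the block $x[J_T^{\mathrm{s}}]$ appears \emph{twice} (once in the $t=T-1$ equation, once in the $t=T$ equation) and hence cancels in $\F$, while $x[J_1^{\mathrm{s}}]$ appears only once (in $t=1$) and survives. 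Your own $T=2$ sanity check, carried out with these equations, actually yields $H_Z(x[J_1^{\mathrm{d}}]+x[J_2^{\mathrm{d}}]) = x[J_1^{\mathrm{s}}]$, not $x[J_2^{\mathrm{s}}]$.

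The paper's proof places the boundary at the \emph{first} row block instead: it uses
\[
H_Z\, x[J_1^{\mathrm{d}}] + x[J_1^{\mathrm{s}}]=0,
\qquad
t\in[T-1]:\quad H_Z\, x[J_{t+1}^{\mathrm{d}}] + x[J_t^{\mathrm{s}}] + x[J_{t+1}^{\mathrm{s}}]=0,
\]
so that $x[J_1^{\mathrm{s}}]$ appears twice and it is $x[J_T^{\mathrm{s}}]$ that survives. (This is not literally what the displayed block specification of $H$ says---the off-diagonal identity is written as $H[I_t,J_{t+1}^{\mathrm{s}}]=\mathsf{Id}$ rather than $H[I_{t+1},J_t^{\mathrm{s}}]=\mathsf{Id}$---so the definition and the proof are mildly inconsistent on the index convention; the intended convention is the one used in the proof, under which the lemma holds as stated.)
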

\begin{proof}
    \(x\) has zero syndrome with respect to the spacetime code \(Hx = 0\) which implies (by definition of \(H\)) the following system of equations.
    \begin{align}
        t \in [T-1]\quad H_Z x[J_{t+1}^{\mathrm{d}}] + x[J_{t}^{\mathrm{s}}] + x[J_{t+1}^{\mathrm{s}}] &= 0 \\
        H_Z x[J_{1}^{\mathrm{d}}] + x[J_{1}^{\mathrm{s}}] &= 0
    \end{align}
    Summing over this linear system yields the result.
    \begin{align}
        \sum_{t \in [t]} H_Z x[J_{t}^{\mathrm{d}}] + x[J_{T}^{\mathrm{s}}] &= H_Z \phi(x) + x[J_{T}^{\mathrm{s}}] = 0. 
        \qedhere
    \end{align}
    \qedhere
\end{proof}

\begin{lemma}\label{cor:bulk-syndrome}
    Similarly, for a bitstring \(x_\bulk \in \F^{(T-1) n + Tr_Z}\cong \F^{\Jbulk}\) with no bulk spacetime syndrome \(\Hbulk x = 0\), its flattened syndrome is the difference between the initial and final round syndrome errors.
    \begin{align}
        H_Z \phi_\bulk(x_\bulk) = x_\bulk[J_{T}^{\mathrm{s}}] + x_\bulk[J_{1}^{\mathrm{s}}]
    \end{align}
\end{lemma}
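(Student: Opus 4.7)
The plan is to mimic the proof of Lemma \ref{prop:final-round-syndrome} exactly, with one adjustment: since the bulk drops the first check row $I_1$ and the first data block $\Jd_1$, we only have linear equations indexed by $t = 2, \ldots, T$.

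First I would write out the linear system imposed by $\Hbulk x_\bulk = 0$. Reading off the nonzero blocks of $\Hbulk$ (namely the diagonal copies of $H_Z$ in rows $I_t$, columns $\Jd_t$ for $t \ge 2$, and the two identity blocks in each row $I_t$ acting on $\Js_t$ and $\Js_{t+1}$), the constraint at row $I_t$ for $t \in \{2,\dots,T\}$ reads
\begin{align}
    H_Z\, x_\bulk[\Jd_t] + x_\bulk[\Js_{t-1}] + x_\bulk[\Js_t] = 0.
\end{align}
Note that all of these equations lie entirely in $\Jbulk$, since $\Jd_1$ does not appear in any of them.

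Next I would sum these $T-1$ equations over $t = 2,\dots,T$ in $\F_2$. The data part collapses to $H_Z \phi_\bulk(x_\bulk)$ by the definition of $\phi_\bulk$. For the syndrome-error part, each $x_\bulk[\Js_t]$ with $2 \le t \le T-1$ appears exactly twice (once via $\Js_t$ in the row $I_t$ equation, once via $\Js_{t-1}$ in the row $I_{t+1}$ equation) and therefore cancels, while $x_\bulk[\Js_1]$ appears only in the row $I_2$ equation and $x_\bulk[\Js_T]$ appears only in the row $I_T$ equation. The result is
\begin{align}
    H_Z \phi_\bulk(x_\bulk) + x_\bulk[\Js_1] + x_\bulk[\Js_T] = 0,
\end{align}
which rearranges to the claim.

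There is no real obstacle here: the argument is a direct $\F_2$ telescoping identical in spirit to Lemma \ref{prop:final-round-syndrome}, with the sole care being to verify that dropping row $I_1$ is exactly what prevents the initial syndrome error $x_\bulk[\Js_1]$ from cancelling (it now appears only once in the sum), producing the asymmetric boundary term on the right-hand side.
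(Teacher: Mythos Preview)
Your proof is correct and follows exactly the paper's approach, which simply cites the telescoping argument of Lemma~\ref{prop:final-round-syndrome} and writes down the summed equation. One minor slip: your parenthetical says the identity blocks in row $I_t$ act on $\Js_t$ and $\Js_{t+1}$, but the equation you then write uses $\Js_{t-1}$ and $\Js_t$; the latter is the correct convention (consistent with the paper's proof of Lemma~\ref{prop:final-round-syndrome} and with the description that $\Js_t$ affects rounds $t-1$ and $t$), and with it the telescoping goes through exactly as you describe.
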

\begin{proof}
    As in the proof of \cref{prop:final-round-syndrome}, the zero syndrome condition implies the following equation from which the result follows.
    \begin{align}
        H_Z \phi_\bulk(x_\bulk) + x_\bulk[J_{T}^{\mathrm{s}}] + x_\bulk[J_{1}^{\mathrm{s}}] &= 0 \qedhere
    \end{align}
\end{proof}

\begin{lemma}\label{prop:successful_correction}
    For a bitstring \(x \in \F^{T n + Tr_Z}\) with trivial spacetime syndrome \(Hx = 0\) and trivial final round syndrome error \(x[J_{T}^{\mathrm{s}}]= 0\), if \(x\) is the union of connected components in the adjacency graph of the spacetime code \(\adjgraph{H}\) of size strictly less than the distance \(<d\), then \(x\) flattens to a trivial logical operator of the CSS code, \(\phi(x) \in \rowspan(H_X)\).   
\end{lemma}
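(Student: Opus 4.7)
The plan is to localize to each connected component of $\supp(x)$ in $\adjgraph{H}$, apply \cref{prop:final-round-syndrome} componentwise, and rule out non-trivial logicals by a weight argument.

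I would first argue that the hypotheses decompose over the components $C_1,\ldots,C_\ell$ that partition $\supp(x)$. If $b\in C_i$ and $b'\in C_j$ with $i\ne j$, then $b$ and $b'$ cannot both appear in any common row of $H$, for otherwise they would be adjacent in $\adjgraph{H}$ and hence lie in the same component. Consequently $Hx = \sum_{i=1}^{\ell} H(x|_{C_i})$ with pairwise disjoint row supports, which together with $Hx = 0$ forces $H(x|_{C_i}) = 0$ for each $i$; the same reasoning restricted to the columns $J_T^{\mathrm{s}}$ gives $x|_{C_i}[J_T^{\mathrm{s}}] = 0$. Applying \cref{prop:final-round-syndrome} to each piece then yields $H_Z \phi(x|_{C_i}) = 0$, so $X^{\phi(x|_{C_i})}$ commutes with every $Z$-stabilizer and is therefore either an $X$-stabilizer (i.e., $\phi(x|_{C_i})\in\rowspan(H_X)$) or a representative of a non-trivial $X$-logical of $\qcode$.

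To rule out the logical case, I would observe that $\phi$ sums only the data-bit blocks $J_1^{\mathrm{d}},\ldots,J_T^{\mathrm{d}}$, so the weight of $\phi(x|_{C_i})$ is at most $|C_i \cap \bigcup_t J_t^{\mathrm{d}}| \le |C_i| < d$. Since every non-trivial $X$-logical of a distance-$d$ CSS code has weight at least $d$, this forces $\phi(x|_{C_i}) \in \rowspan(H_X)$. Linearity of $\phi$ and closure of $\rowspan(H_X)$ under addition then give $\phi(x) = \sum_i \phi(x|_{C_i}) \in \rowspan(H_X)$, as desired. The only step requiring genuine care is the componentwise decomposition of the syndrome via the adjacency-graph structure; once that is established, the rest is an immediate combination of the final-round-syndrome lemma, the definition of code distance, and linearity of $\phi$.
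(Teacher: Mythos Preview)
Your proof is correct and follows essentially the same route as the paper's: decompose into connected components, use the adjacency-graph structure to show each component individually has trivial spacetime syndrome, apply \cref{prop:final-round-syndrome} componentwise, bound the weight of the flattening by the component size to rule out a nontrivial logical, and sum. One minor remark: the step ``the same reasoning restricted to the columns $J_T^{\mathrm{s}}$ gives $x|_{C_i}[J_T^{\mathrm{s}}]=0$'' is even simpler than you suggest, since $x[J_T^{\mathrm{s}}]=0$ already means $\supp(x)\cap J_T^{\mathrm{s}}=\varnothing$ and hence $C_i\cap J_T^{\mathrm{s}}=\varnothing$ trivially.
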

\begin{proof}
    Consider a single connected component \(x'\) of \(x\) in \(\adjgraph{H}\).
    Since no other row of \(H\) with support on \(x'\) has support on \(x - x'\), it must be that \(Hx' = 0\) which implies \(H_Z \phi(x') = 0\) by \cref{prop:final-round-syndrome} and the assumption on the final round syndrome bits \(x[J_{T}^{\mathrm{s}}] = 0\).
    Using the assumption on the size of each connected component, the Hamming weight of the flattened bitstring is less than the distance \(|\phi(x')| \le |x'| < d\).
    
    \(\phi(x')\) is in the kernel of \(H_Z\) and has Hamming weight less than \(d\), so by the distance of the CSS code \((H_X,H_Z)\), it must be in the rowspace of \(H_X\), \(\phi(x') \in \rowspan(H_X)\).
    It then follows that every connected component of \(x\) flattens to an element of the row space of \(H_X\).
    \(\phi(x)\) is the sum over the flattening of every connected component so it is also in the row space of \(H_X\).
\end{proof}

\begin{definition}[Extended spacetime decoding graph]\label{def:extended-decoding-graph}
To bound the residual error on the code after correction, we introduce a slight modification to \(\adjgraph{H}\). 
We expand \(H\) by \(n\) columns and \(r_Z\) rows, which we denote \(J_{T+1}^{\mathrm{d}}\) and \(I_{T+1}\).
We denote this new matrix \(\Hext\) with the additional entries 
\begin{align}
    \Hext[I_{T+1}, J_{T+1}^{\mathrm{d}}] &= H_Z, \qquad 
    \Hext[I_{T+1}, J_{T}^{\mathrm{s}}] = \mathsf{Id}_{r_Z \times r_Z}~.
\end{align}
The graph \(\adjgraph{\Hext}\), compared to \(\adjgraph{H}\), has one more layer of \(n\) vertices labeled by \(J_{T+1}^{\mathrm{d}}\) which are connected to vertices labeled by \(J_{T}^{\mathrm{s}}\). 
We say that \(\adjgraph{\Hext}\) is the \textbf{extended spacetime decoding graph}.
\end{definition}
\nomenclature[E, 13]{\(\Hext\)}{The extended spacetime check matrix with residue error \(\Jd_{T+1}\), see \cref{def:extended-decoding-graph}.}

\begin{lemma}[Residual Error~\cite{gottesman2014fault}]\label{prop:residue-error-support}
    For a bitstring \(x \in \F^{T n + Tr_Z}\) with trivial spacetime syndrome \(Hx = 0\), 
    let \(y\in \F^{n}\) be a minimum weight vector (supported on \(J_{T+1}^{\mathrm{d}}\)) such that \(H_Z\phi(x) = H_Zy\). 
    Let \(S\subseteq J_{T+1}^{\mathrm{d}}\) be a set. 
    If \(y\) contains \(S\), then there must exist a cluster cover \(R\) of \(S\) (see~\cref{lemma:cluster-counting}), \(R\subseteq x\sqcup y\), such that \(|R|\ge 2|S|\) and for every cluster \(C\) of \(R\), we have \(|x\cap C|\ge |C|/2\).
\end{lemma}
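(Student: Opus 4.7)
The plan is to take $R$ to be the union of all connected components of $x \cup y$ in the extended spacetime decoding graph $\adjgraph{\Hext}$ that meet $S$. By construction $R \subseteq x \sqcup y$, $R \supseteq S$, and every connected component of $R$ intersects $S$, so $R$ is automatically a cluster cover of $S$ in the sense of \cref{lemma:cluster-counting}. Once the local density bound $|x \cap C| \ge |C|/2$ is established for every connected component $C$ of $R$, summing over components will give $|R| \ge 2|R \cap \Jd_{T+1}| \ge 2|S|$, so the whole lemma reduces to this bound.

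To set up the minimum-weight argument I would first verify that the vector $(x, y) \in \F^{Tn + Tr_Z + n}$, with $x$ on the spacetime coordinates and $y$ on $\Jd_{T+1}$, lies in $\ker(\Hext)$: the rows $I_1, \dots, I_T$ are satisfied because $Hx = 0$, while the new block $I_{T+1}$ demands $H_Z y + x[\Js_T] = 0$, which holds by the hypothesis $H_Z y = H_Z \phi(x)$ together with \cref{prop:final-round-syndrome}. Fix now a connected component $C$ and write $C = C_1 \sqcup C_2$ with $C_1$ on the spacetime coordinates and $C_2 \subseteq \Jd_{T+1}$. A maximality argument (any row of $\Hext$ meeting $C$ must evaluate to $0$ on any coordinate outside $C$, since otherwise that coordinate would lie in $x \cup y$ and hence in $C$) upgrades this to $\mathbf{1}_C \in \ker(\Hext)$. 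Reading off the $I_{T+1}$ rows and applying \cref{prop:final-round-syndrome} to $\mathbf{1}_{C_1}$ yields $H_Z \phi(\mathbf{1}_{C_1}) = H_Z \mathbf{1}_{C_2}$. Hence $w := \phi(\mathbf{1}_{C_1}) \oplus \mathbf{1}_{C_2}$ lies in $\ker(H_Z)$, and $y' := y \oplus w$ is another valid candidate for minimum-weight decoding.

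The heart of the argument is the bookkeeping comparing $|y|$ and $|y'|$. Setting $\alpha := |\{i \in C_2 : \phi(\mathbf{1}_{C_1})_i = 1\}|$ and $\beta := |\{i \notin C_2 : \phi(\mathbf{1}_{C_1})_i = 1\}|$, a direct computation gives $|w| = (|C_2| - \alpha) + \beta$, and using $C_2 \subseteq y$ one has $|y \cap w| \ge |C_2| - \alpha$. The minimality inequality $|w| \ge 2|y \cap w|$ then forces $\alpha + \beta \ge |C_2|$, i.e.\ $|\phi(\mathbf{1}_{C_1})| \ge |C_2|$, and since $|C_1| \ge |\phi(\mathbf{1}_{C_1})|$ we get $|C_1| \ge |C_2|$, which is exactly $|x \cap C| = |C_1| \ge |C|/2$. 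The main subtlety will be this XOR bookkeeping: one has to carefully track how the cancellations inside $\phi(\mathbf{1}_{C_1})$ interact with $\mathbf{1}_{C_2}$ and with $y$; everything else amounts to unpacking \cref{def:spacetime-code} and \cref{def:extended-decoding-graph}.
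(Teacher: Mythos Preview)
Your proposal is correct and follows essentially the same approach as the paper: both take \(R\) to be the union of connected components of \(x \sqcup y\) in \(\adjgraph{\Hext}\) meeting \(S\), use maximality to get \(H_Z\phi(x|_C) = H_Z(y|_C)\) on each component, and then invoke minimality of \(y\) to conclude \(|x \cap C| \ge |\phi(x|_C)| \ge |y \cap C|\). Your \(\alpha,\beta\) bookkeeping is a more explicit version of the paper's one-line appeal to ``\(y'|_C\) is the minimum weight vector with \(H_Z(y'|_C) = H_Z\phi(x'|_C)\),'' arriving at the same inequality \(|\phi(\mathbf{1}_{C_1})| \ge |C_2|\).
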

\begin{proof}
    Consider the set of vertices \(x \sqcup y\).
    Let \(x' \sqcup y'\) denote the restriction of \(x \sqcup y\) to maximal connected clusters in \(\adjgraph{\Hext}\) containing vertices in \(S\), with \(x'\subseteq x, y'\subseteq y\). Note that \(S\subseteq y' \subseteq y\) and \(R = x' \sqcup y'\) is a cluster cover of \(S\) in \(\adjgraph{\Hext}\). 
    By~\cref{prop:final-round-syndrome}, 
    for every cluster \(C\) in \(R\),
    we have \(H_Z(y'\vert_C) = H_Z\phi(x'\vert_C)\). 
    Moreover, since \(y\) is the minimum weight vector with \(H_Zy = H_Z\phi(x)\), we have that \((y'\vert_C)\) is the minimum weight vector with \(H_Z(y'\vert_C) = H_Z\phi(x'\vert_C)\).
    Therefore 
    \begin{align}
        |x\cap C| = |x'\vert_C|\ge |\phi(x'\vert_C)|\ge |y'\vert_C| = |C\cap S|,
    \end{align}
    which implies that \(2|x\cap C| \ge |x'\vert_C| + |y'\vert_C| = |C|\). Our claim follows.
\end{proof}

\begin{lemma}\label{prop:residue-logical-error}
    For a bitstring \(x \in \F^{T n + Tr_Z}\) with trivial spacetime syndrome \(Hx = 0\), 
    let \(y\in \F^{n}\) be a minimum weight vector (supported on \(J_{T+1}^{\mathrm{d}}\)) such that \(H_Z\phi(x) = H_Zy\). 
    If \(\phi(x) - y \notin \rowspan (H_X)\), then there must exists a connected cluster of vertices \(S\) of size at least \(d\) in \(\adjgraph{\Hext}\) such that \(|x\cap S|\ge |S|/2\).
\end{lemma}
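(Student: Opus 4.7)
The plan is to lift the Kovalev--Pryadko cluster-counting argument from $\adjgraph{H_Z}$ to the extended spacetime decoding graph $\adjgraph{\Hext}$. Set $L := \phi(x) + y \in \F^n$. Since $y$ was chosen so that $H_Z y = H_Z \phi(x)$, we have $L \in \ker H_Z$, and the hypothesis $L \notin \rowspan(H_X)$ says that $L$ represents a non-trivial $X$-logical operator of the CSS code. View $x' := x \sqcup y$ as a bitstring on the bits of $\Hext$. A short computation confirms $\Hext x' = 0$: the existing rows are satisfied by $Hx = 0$, and the new row $I_{T+1}$ of $\Hext$ enforces exactly the relation $H_Z y = x[J_T^s]$ guaranteed by~\cref{prop:final-round-syndrome}.

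The core step is to decompose $\supp(x')$ into the connected components $K_1, K_2, \dots$ of the induced subgraph $\adjgraph{\Hext}[\supp(x')]$, and to observe that each restriction $x'|_{K_j}$ individually has trivial $\Hext$-syndrome. This holds because the support of any row of $\Hext$ is a clique in $\adjgraph{\Hext}$, so all $x'$-bits appearing in that row lie in a single component; restricting to $K_j$ either retains all of a row's $x'$-support (matching its value $0$ on $x'$) or none of it. The analog of~\cref{prop:final-round-syndrome} applied to $\Hext$ then yields $H_Z \phi_{\mathrm{ext}}(x'|_{K_j}) = 0$ where $\phi_{\mathrm{ext}}(z) := \sum_{t=1}^{T+1} z[J_t^d]$. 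Summing over $j$ recovers $\sum_j \phi_{\mathrm{ext}}(x'|_{K_j}) = L \notin \rowspan(H_X)$, so at least one component $K^*$ yields a non-trivial logical $\phi_{\mathrm{ext}}(x'|_{K^*})$.

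Now Kovalev--Pryadko supplies a subset $C^* \subseteq \supp(\phi_{\mathrm{ext}}(x'|_{K^*}))$ that is connected in $\adjgraph{H_Z}$ and has $|C^*| \ge d$. For each $i \in C^*$ some time index $t_i \in [T+1]$ satisfies $(i,t_i) \in K^*$ (since $\phi_{\mathrm{ext}}(x'|_{K^*})_i = 1$ forces at least one contributing $t$), yielding a lift $\tilde C^* \subseteq K^*$ with $|\tilde C^*| \ge d$. Taking $S := K^*$, which is connected in $\adjgraph{\Hext}$ by construction, gives the desired connected cluster of size at least $d$.

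The hard part will be verifying $|x \cap S| \ge |S|/2$. If $K^*$ contains no vertex of $\supp(y)$, then $K^* \subseteq \supp(x)$ and the bound is immediate. Otherwise I will adapt the minimum-weight argument from the proof of~\cref{prop:residue-error-support}: the component-level trivial syndrome gives $H_Z \phi(x|_{K^*}) = H_Z(y|_{K^*})$, so the modified candidate $y^{*} := y + y|_{K^*} + \phi(x|_{K^*}) \in \F^n$ still satisfies $H_Z y^{*} = H_Z \phi(x)$, and the minimality of $y$ forces $|\phi(x|_{K^*})| \ge |y|_{K^*}|$. Combined with $|\phi(x|_{K^*})| \le |x \cap K^*|$ and the disjointness $|S| = |x \cap S| + |y \cap S|$, this yields $|x \cap S| \ge |y \cap S|$ and hence $|x \cap S| \ge |S|/2$. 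The main technical care is in cleanly separating data and syndrome bits of $x$ inside $K^*$ so that this weight arithmetic goes through.
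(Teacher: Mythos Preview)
Your proposal is correct and follows essentially the same argument as the paper's proof: both identify a connected component $S$ of $\supp(x\sqcup y)$ in $\adjgraph{\Hext}$ that carries the non-trivial logical after flattening, then use the minimality of $y$ to conclude $|y|_S|\le|\phi(x|_S)|\le|x\cap S|$ and hence $|x\cap S|\ge|S|/2$. Your write-up is in fact more careful than the paper's in two places: you explicitly verify $\Hext(x\sqcup y)=0$ and justify the component-wise trivial syndrome, whereas the paper's sentence ``this implies that $x\sqcup y$ contains a connected cluster $S$ of size at least $d$ in $\adjgraph{\Hext}$'' compresses this reasoning; and your swap argument $y\mapsto y+y|_{K^*}+\phi(x|_{K^*})$ spells out the minimality step that the paper states without proof.
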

\begin{proof}
    Since  \(\phi(x) - y \notin \rowspan (H_X)\), it must be the case that \(\phi(x) - y\) contains a connected cluster of size at least \(d\) in \(\adjgraph{H_Z}\), the adjacency matrix of the base code \(H_Z\). 
    This implies that \(x\sqcup y\) contains a connected cluster \(S\) of size at least \(d\) in \(\adjgraph{\Hext}\). 
    Restricting to this cluster \(S\), we see that \(H_Zy\vert_{S} = H_Z\phi(x\vert_{S})\), and \(y\vert_S\) is the minimum weight vector satisfying this equality. Therefore we have 
    \begin{align}
        |x\cap S| = |x\vert_{S}|\ge |\phi(x\vert_{S})|\ge |y\vert_{S}|.
    \end{align}
    This implies that \(|x\cap S|\ge |S|/2\), as desired. 
\end{proof}

We are now ready to prove when a correction procedure for the spacetime code succeeds.

\begin{definition}[Decoding]\label{def:min-weight-decoding}
    Consider the extended spacetime decoding graph \(\adjgraph{\Hext}\).
    Let \(\ein\in \F^{|\Jd_1|}\) be an input error, and \(e\in \F^{|\Jbulk|}\) be an error on the bulk. 
    We denote \(\tilde{e} = \ein\sqcup e\).
    Let \(c\in \F^{|\Jbulk|}\) be the minimum weight vector such that \(H(e) = H(c)\), and let \(\cin \in \F^{n}\cong \F^{|\Jd_1|}\) be the minimum weight vector such that
    \begin{align}
        H_Z(\cin) = H_Z(\ein) + (e+c)\vert_{\Js_1}.
    \end{align}
    Denote \(\tilde{c} = \cin\sqcup c\), then \(\tilde{c}\) is the output correction of this minimum-weight decoding procedure, with \(H(\tilde{e}+\tilde{c}) = 0\).
    Let \(y\in \F^{|\Jd_{T+1}|}\) be the minimum weight vector that satisfies 
    \begin{align}
        H_Z\phi(\tilde{e} + \tilde{c}) = H_Zy,
    \end{align}
    we say that \(y\) is the \textbf{residue error} on the code block after decoding.
\end{definition}

\begin{lemma}[Successful Decoding]\label{lemma:spacetime-code-faults}
    Recall the code type and bad error supports defined in~\cref{def:qldpc-codetype}.
    Following the definitions and notations in \cref{def:min-weight-decoding}, let \(\reserror\subseteq \power(\Jd_{T+1})\) be a collection of bad error supports for the residue error.
    There exists a family of bad fault paths 
c    \(\Fst\subseteq \power(\Jbulk)\), which depends on \(\reserror\), such that the following conditions hold.
    \begin{itemize}[topsep = 0pt]
        \item \textbf{Bounded residue error. }
        If \(e\) is \(\Fst\)-avoiding, then \(y\) is \(\reserror\)-avoiding.
    
        \item \textbf{No logical error on good input. } If additionally \(\ein\) is \(\baderrors^{(Z)}_{1/2}\)-avoiding, then \(\phi(\tilde{e} + \tilde{c}) + y \in \rowspan (H_X)\).
    \end{itemize}
    Additionally, \(\mathcal{F}_{\mathsf{spacetime}}\) has weight enumerator
    \begin{align}
        \weightenum{\mathcal{F}_{\mathsf{spacetime}}}{x} \le |J|\cdot [O_{\Delta}(x)]^{d/40} + \weightenum{\reserror}{O_{\Delta}(x^{1/2})}~.
    \end{align}
\end{lemma}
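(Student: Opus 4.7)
The plan is to define the family of bad fault paths $\Fst$ as the $\boxplus$-union of two pieces: one $\Fst^{(1)}$ ruling out bulk errors that could force a logical error on top of a good input, and one $\Fst^{(2)}$ ruling out bulk errors that could produce a residue $y$ containing an element of $\reserror$. Throughout, the key tool is the minimum-weight property of the correction $\tilde{c}$: for any connected cluster $K$ of $\tilde{e}+\tilde{c}$ in $\adjgraph{\Hext}$, the restriction $(\tilde{e}+\tilde{c})\vert_K$ has matching syndromes on $\tilde{e}\vert_K$ and $\tilde{c}\vert_K$, so by minimality $|\tilde{c}\cap K|\le |\tilde{e}\cap K|$ and hence $|\tilde{e}\cap K|\ge |K|/2$. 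This same idea, applied to $c$ versus $e$ on $\Jbulk$ (where $\cin$ is chosen to absorb the initial syndrome mismatch), will let us push claims about $x=\tilde{e}+\tilde{c}$ down to claims about $e$ alone.

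For the \emph{logical error} piece, I would invoke \cref{prop:residue-logical-error}: if $\phi(\tilde e+\tilde c)+y\notin\rowspan(H_X)$, there is a connected cluster $S$ of size $\ge d$ in $\adjgraph{\Hext}$ with $|x\cap S|\ge |S|/2$, so by the minimum-weight observation above at least $|S|/4$ sits in $\tilde e=\ein\cup e$. Since $\adjgraph{H_Z}$ embeds as the time-slice $t=1$ (and $t=T+1$) copy in $\adjgraph{\Hext}$, a large intersection of $\ein$ with $S\cap \Jd_1$ would force $\ein$ to contain a set in $\baderrors^{(Z)}_{c}$ for a suitable small $c$; under the good-input hypothesis $\ein$ is $\baderrors^{(Z)}_{1/2}$-avoiding, which (combined with the linear thickness of the spacetime graph relative to the code graph) caps the $\ein$-contribution to $S$ by a small constant fraction. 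The residual contribution must come from $e$, forcing $|e\cap S|\ge |S|/C$ for some absolute constant $C$ (the value $C=40$ absorbs the factors of $2$ from minimum weight, the input/bulk split, and the density drop from $\baderrors^{(Z)}_{1/2}$). I define $\Fst^{(1)}$ to be exactly the subsets of $\Jbulk$ that witness this: sets $F\subseteq \Jbulk$ with $F\subseteq S$ and $|F|\ge |S|/C$ for some $S$ induced connected in $\adjgraph{\Hext}$ of size $\ge d$.

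For the \emph{residue} piece, I would invoke \cref{prop:residue-error-support}: if $y$ contains some $S\in\reserror$, there is a cluster cover $R\subseteq x\sqcup y$ of $S$ in $\adjgraph{\Hext}$ with $|R|\ge 2|S|$ and $|x\cap C|\ge |C|/2$ for every cluster $C$ of $R$. Minimum weight again gives $|\tilde e\cap C|\ge |C|/4$, and summing over clusters yields $|\tilde e\cap R|\ge |R|/4\ge |S|/2$. Since $R$ may intersect $\Jd_1$, I cannot assume a good input here; instead I define $\Fst^{(2)}$ to contain every projection $F=(R\cap \Jbulk)\cup(\text{half of each cluster not in }\Jd_1)$ arising from a cluster cover $R$ of some $S\in\reserror$ that witnesses the inequality. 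By case analysis on whether $|\ein\cap C|$ exceeds $|C|/8$ for some cluster $C$: if yes, one can enlarge $S$ by that cluster and note that the corresponding cluster cover is already accounted for; if no, then $|e\cap C|\ge |C|/8$ for each cluster, putting $e$ into $\Fst^{(2)}$. This is the combinatorial heart of the proof and is where constants have to be tuned carefully so that both cases funnel into a common family of bad sets.

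The weight-enumerator bound then factors along $\Fst=\Fst^{(1)}\boxplus\Fst^{(2)}$ via \cref{lemma:enumerator-ring}. For $\Fst^{(1)}$ I would apply \cref{lemma:weight-enume-cluster-sets} to $\adjgraph{\Hext}$, whose maximum degree is bounded by a function of $\Delta$ and whose total vertex count is $|J|=|J_{\mathrm{bulk}}|+|\Jd_1|+|\Jd_{T+1}|$: summing over cluster sizes $m\ge d$ with threshold weight $m/C$ yields $|J|\,[O_\Delta(x)]^{d/C}$, and the exponent contracts to $d/40$ for $C=40$. For $\Fst^{(2)}$, for each fixed $S\in\reserror$ the number of cluster covers of size $m\ge 2|S|$ is $\le e^{|S|-1}(\Delta e)^{m-|S|}$ by \cref{lemma:cluster-counting}, and since the bulk portion $F$ has size $\ge |S|/2$, summing $x^{|F|}$ over $m$ and over $F\subseteq R$ gives a bound of $[O_\Delta(x^{1/2})]^{|S|}$ per $S$; summing over $\reserror$ gives exactly $\weightenum{\reserror}{O_\Delta(x^{1/2})}$. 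The main obstacle I anticipate is bookkeeping: precisely disentangling the three contributions ($\ein$, $e$, $\cin$, $c$) to a connected cluster in $\adjgraph{\Hext}$ so that the promised fractions hold uniformly, without accidentally building a dependence on $\ein$ into $\Fst^{(2)}$.
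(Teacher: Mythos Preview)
Your opening minimum-weight claim---that $|\tilde c\cap K|\le|\tilde e\cap K|$ for every connected cluster $K$ of $\tilde e+\tilde c$---does not hold, because $\tilde c=\cin\sqcup c$ is computed in two stages (\cref{def:min-weight-decoding}) and is \emph{not} a global minimum-weight correction of $\tilde e$ with respect to $H$. The local inequality is valid only for connected components of $e+c$ lying entirely in $\Jbulk$, where $c$ alone is min-weight. This breaks both your logical-error step (from $|x\cap S|\ge|S|/2$ to $|\tilde e\cap S|\ge|S|/4$) and your residue case analysis: in the ``$|\ein\cap C|>|C|/8$'' branch you obtain no constraint on $e$ at all, yet the bounded-residue property must hold for \emph{every} $\ein$, so that branch cannot be closed by ``enlarging $S$'' or by any redefinition of $\Fst^{(2)}\subseteq\power(\Jbulk)$.

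The paper circumvents this by first proving a separation lemma. It defines $\badfaults_1$ as the clustering sets of density $1/10$ in $\adjgraph{\Hbulk}$; since $c$ is min-weight on the bulk, $\badfaults_1$-avoiding $e$ forces every component of $e+c$ to have size $<d$, so no component of $\tilde e+\tilde c$ can span from $\Jd_1$ to $\Js_T$. Residue clusters (which must touch $\Js_T$) therefore never reach $\Jd_1$, and the bulk min-weight inequality gives $|e\cap R|\ge|R|/4$ cleanly---this is the paper's $\badfaults_2$, which matches your $\Fst^{(2)}$ once the separation is in place. Input components require a separate third family $\badfaults_3$: the paper introduces $\fin$, the min-weight string with $H_Z\fin=(e+c)|_{\Js_1}$, uses $\badfaults_3$ to force $\fin$ to be $\baderrors^{(Z)}_{1/10}$-avoiding, and then invokes separability (\cref{lemma:separable-bad-errors}) to conclude that $\ein+\fin$ is $\baderrors^{(Z)}_{6/10}$-avoiding, so the input-side min-weight decoder computing $\cin$ succeeds. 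Your attempt to bound $|\ein\cap S|$ directly from the $\baderrors^{(Z)}_{1/2}$-avoiding hypothesis does not work: $S\cap\Jd_1$ need not be a dense subset of any connected set of size $\ge d$ in $\adjgraph{H_Z}$, and you have no independent control over $|\cin\cap S|$.
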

\begin{proof}
We construct \(\Fst\) out of three collection of bad fault paths. 
Let \(\adjgraph{\Hbulk}\) denote the induced subgraph of \(\adjgraph{\Hext}\) with vertices \(\Jbulk\).
The first collection, \(\badfaults_1\), are error supports that have significant overlap with any connected cluster of size at least \(d\) in \(\adjgraph{\Hbulk}\).
Recall that \(\clusterset_{(m,t)}^{(\Hbulk)}\) are the \((m,t)\) clustering sets in \(\adjgraph{\Hbulk}\). Define 
\begin{align}
    \badfaults_1 = \boxplus_{m \ge d} \clusterset_{(m,\lceil m/10\rceil)}^{(\Hbulk)}.
\end{align}
Now suppose \(e\) is a \(\badfaults_1\)-avoiding fault, and let us consider the connected components of \(\tilde{e}+\tilde{c}\). 
There are three types of components:
\begin{enumerate}[leftmargin=*, align = left, topsep = 0pt]
    \item[\textbf{Type 1.}]
    \textbf{Input components} that contain vertices in \(\Jd_1\) and do not contain vertices in \(\Js_T\).
    
    \item[\textbf{Type 2.}]
    \textbf{Internal components} that  do not contain vertices in \(\Jd_1\) and \(\Js_T\).

    \item[\textbf{Type 3.}]
    \textbf{Residue components} that do not contain vertices in \(\Jd_1\) and contain vertices in \(\Js_T\).
\end{enumerate}
In particular, there are no components that contain vertices in both \(\Jd_1\) and \(\Js_T\).
To see this, let \(c\) be the restriction of \(\tilde{c}\) onto \(\Jbulk\). If \(\tilde{e}+\tilde{c}\) contains a connected component that traverses from \(\Jd_1\) to \(\Js_T\), then \(e + c\) must contain a connected component \(C\) that traverses from \(\Js_1\) to \(\Js_T\), which means \(|C|\ge d\).
Since \(c\) is the minimum weight correction to \(e\), we must have \(|e\cap C|\ge |C|/2\ge d/2\), which is ruled out by \(\badfaults_1\).

Observe that the same argument implies that all internal components must have size less than \(d\). 
By~\cref{prop:successful_correction}, these components are flattened to stabilizers. They induces no logical error and no residue syndrome. We can safely ignore them for the rest of this proof.

To bound the residue error \(y\) and therefore show the first condition, we can safely ignore input components and consider residue components. 
Let \(B\) be a set in \(\reserror\).
From~\cref{prop:residue-error-support}, if \(y\) contains \(B\), then there must be a cluster cover \(R\) of \(B\) in the extended graph \(\adjgraph{\Hext}\) such that 
\begin{enumerate}[topsep = 0pt]
    \item \(R\subseteq (e+c)\sqcup y\), and \(|R|\ge 2|B|\);
    \item For every cluster \(C\) of \(R\), we have \(|(e+c)\cap C|\ge |C|/2\). Note that \(C\) must contain some vertex of \(B\), which is in \(\Jd_{T+1}\).
\end{enumerate}
Taking the union over disjoint connected clusters \(C\), we have
\begin{align}
    (e+c)\cap R \ge |R|/2.
\end{align}
Due to minimality of \(c\), we have \(|e\cap R|\ge |R|/4\). 
For a set \(B\subseteq \Jd_{T+1}\), recall that \(\cc(r, B)\) denote the collection of cluster covers of \(B\) of size at least \(r\) in \(\adjgraph{\Hext}\).
Define
\begin{align}
    \badfaults_2 = \boxplus_{B\subseteq \Jd_{T+1},B\in \reserror} \boxplus_{r\ge 2|B|} \boxplus_{R\in \cc(r, B)} 
    \{T\subseteq R: |T| = |R|/4, \text{ and } T\subseteq \Jbulk\}.
\end{align}
We see that if \(y\) is not \(\reserror\)-avoiding, then \(e\) must contain a set from \(\badfaults_2\).
In other words, if \(e\) is \((\badfaults_1\boxplus \badfaults_2)\)-avoiding, then \(y\) is \(\reserror\)-avoiding, as desired.

Next, we want to rule out the case that \(\phi(\tilde{e}+\tilde{c}) + y \notin \rowspan(H_X)\). 
As discussed earlier, internal components do not induce logical errors when flattened. 
For residue components, by~\cref{prop:residue-logical-error}, 
there must exists a connected cluster \(C\subset (e+c)\sqcup y\) of size at least \(d\) in \(\adjgraph{\Hext}\) (\(C\) needs to contain vertices in \(\Jd_{T+1}\)) such that \(|(e+c)\cap C| \ge |C|/2\).
We have 
\begin{align}
   |(e+c)\cap C| \ge |C|/2 \Rightarrow |e\cap C|\ge |C|/4\ge d/4.
\end{align}
This is forbidden by the fact that \(e\) is \(\badfaults_1\)-avoiding.\footnote{This is not directly implied by the definition of \(\badfaults_1\), since it is defined over connected components in the bulk graph, while \(C\) is a connected component in the extended graph. 
To prove our claim, we construct a connected cluster \(C'\) in the bulk graph.
Let \(z\in \F^{n}\) be the restriction of \(C\) to \(\Jd_{T+1}\). 
Let \(z'\) denote the same set (equivalently vector) of vertices as \(z\), but supported over \(\Jd_{T}\). 
Let \(C' = C\vert_{\Jbulk}\cup z'\) (note that we used \(\cup\) instead of \(\sqcup\) here). 
Then \(C'\) is a connected component in the bulk graph of size at least \(d\), and \(|e\cap C'| = |e\cap C|\ge d/4\).
This cannot happen since \(e\) is \(\badfaults_1\)-avoiding. 
}
Therefore residue components cannot induce logical errors when flattened.

Finally, we consider input components.
Let \(\mathcal{T}\) be the union of all input components. 
Recall that in our decoding process (\cref{def:min-weight-decoding}), we first computed \(c\) as a correction to \(e\) and then computed \(\cin\) as a correction to \(\ein\).
Therefore \(\mathcal{T}\cap \Jbulk = \mathcal{T}\cap (e+c)\), which has no spacetime syndrome under \(\Hbulk\). 
By~\cref{cor:bulk-syndrome}, we have 
\begin{align}
    H_Z\phi_\bulk(\mathcal{T}\cap (e+c)) = (e+c)\vert_{\Js_1}.
\end{align}
Let \(\fin\in \F^n\) be the minimum weight vector such that \(H_Z\fin = (e+c)\vert_{\Js_1}\), then \(\cin\) is the minimum weight vector such that \(H_Z\cin = H_Z(\ein + \fin)\).
We can now repeat our arguments above, as \(\fin\) is canonical to \(y\). 
Since \(e\) is \(\badfaults_1\)-avoiding, we have that 
\begin{align}
    \phi_\bulk(\mathcal{T}\cap (e+c)) + \fin \in \rowspan(H_X).
\end{align}
Moreover, define
\begin{align}
    \badfaults_3 = \boxplus_{B\subseteq \Jd_{1},B\in \baderrors^{(Z)}_{1/10}} \boxplus_{r\ge 2|B|} \boxplus_{R\in \cc(r, B)} 
    \{T\subseteq R: |T| = |R|/4, \text{ and } T\subseteq \Jbulk\}.
\end{align}
If \(e\) is \(\badfaults_3\)-avoiding, we know that 
\(\fin\) is \(\baderrors^{(Z)}_{1/10}\)-avoiding.

Now suppose \(\ein\) is \(\baderrors^{(Z)}_{1/2}\)-avoiding. 
By~\cref{lemma:separable-bad-errors}, \(\ein+\fin\) is \(\baderrors^{(Z)}_{6/10}\)-avoiding, which implies that \(\ein+\fin+\cin \in \rowspan(H_X)\). 
Therefore, 
\begin{align}
    \phi(\mathcal{T}) = \ein + \cin + \phi_\bulk(\mathcal{T}\cap (e+c)) = (\ein+\cin+\fin) + (\fin + \phi_\bulk(\mathcal{T}\cap (e+c))) \in \rowspan(H_X).
\end{align}
In other words, input components cannot induce logical errors.

Putting things together, we define 
\begin{align}
    \Fst = \badfaults_1 \boxplus \badfaults_2 \boxplus \badfaults_3,
\end{align}
and we have argued that if \(e\) is \(\Fst\)-avoiding, then both conditions hold. 
It remains for us to bound the weight enumerator of \(\Fst\).
Let \(\Delta_H\) denote the maximum degree of \(\adjgraph{\Hext}\), we know that \(\Delta_H = O(\Delta)\).
By the same calculation as in \cref{lemma:weight-enum-baderrors-c}, we have 
\begin{align}\label{eq:}
    \weightenum{\badfaults_1}{x} \le |J|\cdot [O_{\Delta}(x)]^{d/10}.
\end{align}
To bound \(\badfaults_2\), we consider bad error supports in \(\reserror\). 
Invoking \cref{lemma:cluster-counting},
\begin{align}
    \weightenum{\badfaults_2}{x}
    &\le \sum_{B\in \reserror}\sum_{r\ge 2|B|}\sum_{R\in \cc(r, B)}2^{|R|}x^{|R|/4} \\
    &\le \sum_{B\in \reserror}\sum_{r\ge 2|B|}(\Delta_H e)^r 2^{r}x^{r/4} \\
    &\le \sum_{B\in \reserror}[O_{\Delta}(x)]^{|B|/2} \\
    &\le \weightenum{\reserror}{O_{\Delta}(x^{1/2})}~.
\end{align}

To bound \(\badfaults_3\), we need to consider \(\baderrors^{(Z)}_{1/10}\).
\begin{align}
    \weightenum{\badfaults_3}{x} 
    &\le 
    \sum_{B\in \baderrors^{(Z)}_{1/10}} \sum_{r\ge 2|B|}\sum_{R\in \cc(r, B)} 2^{r}x^{r/4} \\
    &\le \sum_{B\in \baderrors^{(Z)}_{1/10}}  \sum_{r\ge 2|B|} (\Delta_H e)^r 2^{r}x^{r/4} \\
    &\le \weightenum{\baderrors^{(Z)}_{1/10}}{O_{\Delta}(x^{1/2})} \\
    &\le |J|\cdot [O_{\Delta}(x)]^{d/40}.
\end{align}
Therefore, putting everything together we see that
\begin{align}
    \weightenum{\Fst}{x} 
    &\le |J|\cdot [O_{\Delta}(x)]^{d/40} + \weightenum{\reserror}{O_{\Delta}(x^{1/2})}. 
    \qedhere
\end{align}

\end{proof}

\subsection{Error correction gadget}

\begin{theorem}[Error correction gadget \cite{dennis2002topological,gottesman2014fault}]\label{lemma:qldpc:ec-gadget}
    We can construct a gadget $\gadget_{\mathsf{EC}} = (V,E,\edgeConn, \edgeType, \nodeGate)$ 
    of depth \(O(d)\) for the computational code with the following guarantee.
    \begin{enumerate}[topsep = 0pt]
        \item For input code type \(\compCodeType[1/2] = (U, \baderrors_{1/2})\) and any output code type \((U, \reserror)\), let \(\reserror_1, \reserror_2\) be two collections\footnote{Here we seperate the output error \(\reserror\) into two collections to account for two sources of residue errors: one from the decoder ouput as in \cref{lemma:spacetime-code-faults}, and another from the last round of syndrome extraction circuit.}
        such that \(\reserror\) is separable into \(\reserror_1, \reserror_2\) as defined in \cref{def:separable-bad-sets}. 
        There is a family of bad fault paths \(\mathcal{F}_{\mathsf{EC}}\) such that \(\gadget_{\mathsf{EC}}\) is a fault-tolerant gadget for identity with respect to the input/output code types and \(\mathcal{F}_{\mathsf{EC}}\).

        \item This family of bad fault paths \(\mathcal{F}_{\mathsf{EC}}\) has weight enumerator
        \begin{align}\label{eq:qldpc-ec-fault-weight-enumerator}
            \weightenum{\mathcal{F}_\mathsf{EC}}{x} &\le d \cdot n \left( O_{\Delta}(x)\right)^{\frac{d}{80\Delta}} + \weightenum{\reserror_1}{O_{\Delta}(x^{\frac{1}{4\Delta}})} + \weightenum{\reserror_2}{O_{\Delta}(x^{\frac{1}{2\Delta}})}~.
        \end{align}
    \end{enumerate}
\end{theorem}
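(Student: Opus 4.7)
The plan is to construct \(\gadget_{\mathsf{EC}}\) as \(T = O(d)\) rounds of syndrome extraction for the computational code \(\qcode\), followed by a classical decoder that performs CSS minimum-weight decoding on the associated spacetime code of \cref{def:spacetime-code} and applies the recommended Pauli correction on the outgoing block. One extra terminal round of syndrome measurement is included to supply the final-round detector bits used by the decoder, and the measurement-and-correction procedure plays the role of the filter \(\recover_{\fault}\) in \cref{def:gadget}.

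I would first reduce to Pauli faults via \cref{lemma:fault-decomposition} and then apply \cref{lemma:deterministic-errors} round-by-round to decohere the propagated errors into classical Pauli errors conditioned on each syndrome history. The resulting combinatorial residue is a bitstring on the spacetime check matrix \(H\): the incoming block contributes an input error \(\ein \in \F^{|\Jd_1|}\), each circuit fault in the bulk contributes at most \(O(\Delta)\) data-bit and syndrome-bit flips, and the decoder sees the spacetime syndrome and outputs the correction described in \cref{def:min-weight-decoding}. Let \(\pi\) denote this fault-to-spacetime map.

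I then define the bad fault paths as the pullback under \(\pi\) of the spacetime bad set, together with a contribution from the last round:
\begin{align}
    \mathcal{F}_{\mathsf{EC}} := \pi^{-1}(\Fst) \,\boxplus\, \mathcal{F}_{\mathrm{final}},
\end{align}
where \(\Fst\) is instantiated from \cref{lemma:spacetime-code-faults} with output residue parameter \(\reserror_1\), and \(\mathcal{F}_{\mathrm{final}}\) consists of the circuit-fault paths in the terminal extraction round whose direct propagation onto the outgoing data qubits fails to be \(\reserror_2\)-avoiding. For an \(\mathcal{F}_{\mathsf{EC}}\)-avoiding Pauli fault on an input that is \(\baderrors_{1/2}\)-deviated from the codespace, \cref{lemma:spacetime-code-faults} guarantees that the decoder's residual error on the outgoing block is \(\reserror_1\)-avoiding and that no logical error is introduced; the last-round residue is \(\reserror_2\)-avoiding by construction; and their union is \(\reserror\)-avoiding by the assumed separability (\cref{def:separable-bad-sets}). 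This yields the simulation property for the identity gate. The friendly property of \cref{def:gadget} is verified similarly: the filter always projects an arbitrary input state to a deviated code state via syndrome measurement and correction, and it acts as logical identity on a \(\baderrors_{1/2}\)-deviated input because the bulk spacetime checks commute with such deviations.

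The weight-enumerator bound is a direct lift of \cref{lemma:spacetime-code-faults} through \(\pi\). Because each circuit location touches at most \(O(\Delta)\) spacetime bits and conversely (the extraction circuit is constant-depth per round and the checks are \(\Delta\)-local), realizing a weight-\(w\) spacetime error requires a circuit fault of weight at least \(\Omega(w/\Delta)\). This produces the substitution \(x \mapsto O_{\Delta}(x^{1/\Delta})\) when lifting from the spacetime weight enumerator to the circuit-fault one. Combining with \(|J| = O(dn)\) and the separability of \(\reserror\), this produces the first two terms of \cref{eq:qldpc-ec-fault-weight-enumerator}; the \(\mathcal{F}_{\mathrm{final}}\) contribution, which involves no minimum-weight cluster-cover factor-of-two, produces the third. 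The main obstacle is the careful bookkeeping in \(\pi\): matching the cluster-cover counting inside \cref{lemma:spacetime-code-faults} against the \(\Delta\)-fan-out of each circuit location so that the substituted exponents match, and verifying that the last-round residue is genuinely separable from the spacetime decoder's residue so the two contributions combine cleanly.
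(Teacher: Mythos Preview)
Your proposal is correct and follows essentially the same approach as the paper: construct \(T=d\) rounds of syndrome extraction, pull back \(\Fst\) from \cref{lemma:spacetime-code-faults} through the bounded-degree map between circuit locations and spacetime bits, add a separate contribution for the last-round residue, and use \cref{lemma:deterministic-errors} to decohere non-diagonal Pauli faults. One small remark: the gadget definition (\cref{def:gadget}) already restricts to Pauli faults, so \cref{lemma:fault-decomposition} is not needed inside this proof; the paper instead invokes \cref{lemma:deterministic-errors} directly at each measurement round after first treating the diagonal case.
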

\nomenclature[F, 00]{\(\gadget_{\mathsf{EC}}\)}{The error correction gadget defined by performing \(d\) rounds of syndrome extraction, see \cref{lemma:qldpc:ec-gadget}.}
\begin{construction}
    \textbf{Syndrome extraction circuit.}
    First, we construct a constant-depth syndrome extraction circuit for the code in the standard way.
    A \(\Delta\)-qLDPC code defined by the check matrices \((H_X,H_Z)\) has a depth \(\Delta+2\) syndrome extraction circuit each for the \(X\) syndrome and the \(Z\) syndrome.
    Let us consider the case of measuring the \(Z\) syndrome (\(X\) type operators associated with \(H_X\)).
    The case of the \(X\) syndrome is identical with \(\CNOT\) replaced by \(\CZ\) and \(H_X\) replaced by \(H_Z\).

    First, compute an edge coloring of the Tanner graph of \(H_Z \in \F^{r \times n}\):
    Define the bipartite graph \(G = ([r] \sqcup [n], E)\) where two vertices \((i,j) \in [r] \times [n]\) are adjacent if \(H[i,j] = 1\).
    \(G\) has maximum degree \(\Delta\) by the \(\Delta\)-qLDPC property.
    The edge coloring of \(G\) is a map \(\phi \colon E \mapsto [\Delta]\) such that no vertex has two edges of the same color incident to it.
    Equivalently, for each color \(c \in [\Delta]\), the subset of edges \(\phi^{-1}(m)\) is a matching in \(G\).
    \(G\) is bipartite so such a map exists and is efficiently computable~\cite{schrijver2003combinatorial}.

    A single round of syndrome extraction is as follows:
    \begin{enumerate}
        \item Prepare \(r\) ancilla qubits in \(\ket{+}\).
        \item For each color \(c \in [\Delta]\), perform the gate layer:
        \begin{itemize}
            \item For each edge \((i,j) \in \phi^{-1}(c)\), apply \(\CNOT(i,j)\) (controlled on the ancilla qubit \(i\))
        \end{itemize}
        \item Measure all \(r\) ancilla qubits in the \(X\) basis. The \(i\)-th measurement outcome is the eigenvalue of the \(X\) operator associated with the \(i\)-th row of the check matrix \(H_X\).
    \end{enumerate}
    Let us refer to a single round of syndrome \(Z\) and \(X\)-type syndrome extraction as the circuit \(M\).
    The full error correction gadget alternates layers of \(Z\) and \(X\)-type syndrome extraction rounds \(T\) (to be picked later) rounds.

    That is, the gadget applies \(M\) for \(T\) times where each application of \(M\) should be interpreted as producing an \(Z\) and \(X\) measurement vector \(m_1, \dots, m_T \in \F^{r_X + r_Z}\).
\end{construction}

\begin{proof}
    From these measurments, we will compute a correction and apply it.
    
    \textbf{Decoding.} 
    Let \(r_X\) be the number of rows of \(H_X\) and \(r_Z\) be the number of rows of \(H_Z\).  
    The syndrome extraction circuit produces a pair of measurements \(m_X \in \F^{r_X \times T}\) and \(m_Z \in \F^{r_Z \times T}\).
    We take pairwise differences to obtain syndromes \(s_X\) and \(s_Z\).
    \begin{align}\label{eq:spacetime-syndrome-diffs}
        s_{X/Z}[I_1] &= m_{X/Z}[I_1]  \\
        t \in [T-1], \quad s_{X/Z}[I_{t+1}] &= m_{X/Z}[I_t] + m_{X/Z}[I_{t+1}]
    \end{align}
    
    In what follows, to reduce notation, we will not distinguish between diagonal Pauli superoperators and Pauli operators.
    Additionally, for a bit string \(x \in \F^m\), we will also use \(x\) to denote the state \(\ketbra{x}{x}\) when clear from context.
    For now, suppose that the fault \(\fault\) is a diagonal Pauli fault and the input state \(\rho_{\mathrm{0}}\) is \(\baderrors_c\)-deviated from a state \(\encoder(\sigma)\) by a diagonal Pauli superoperator \(E_0\)
    \begin{align}
        \circuitMap{M^{\otimes t}[\fault]} (\rho_0) = \rho_t\otimes m_X[I_1] \otimes m_Z[I_1] \otimes \dots \otimes m_X[I_t] \otimes m_Z[I_t]
    \end{align}
    be the state (including measurement outputs) after the \(t\)-th round of \(Z\) and \(X\) syndrome extraction, and \(E_t\) be the diagonal Pauli superoperator supported on the code block that takes \(\rho_{t-1}\) to \(\rho_{t}\), \(E_t(\rho_{t-1}) = \rho_t\).
    Let us write \(\xi\) for the syndrome map \(\mathcal{P}^n \to \F^{r_X + r_Z}\).
    For \(t \in [T]\), define \(\tilde{m}_t = m_t - \xi(E_{t-1} \dots E_0)\) to be the difference between the measured syndrome and syndrome of the state input to the \(t\)-th application of \(M\), and decompose \(E_t = E_t^{(X)} E_t^{(Z)}\) into the product of Pauli \(X\) and Pauli \(Z\).
    Using that, in the absence of any further faults, the measurement output is the syndrome of the state and that \(\xi\) is a group homomorphism, \(\xi(AB) = \xi(A)+\xi(B)\), we can see that the syndrome differences (\cref{eq:spacetime-syndrome-diffs} give the syndrome of any new errors applied up to measurement errors.
    \begin{align}
        s_{X/Z}[I_1] &= \xi(E_0^{(Z/X)}) + \tilde{m}_{X/Z}[I_1] \\
        t \in [T-1], \quad s_{X/Z}[I_{t+1}] &= \xi(E_{t-1}^{(Z/X)}) + \tilde{m}_{X/Z}[I_t] + \tilde{m}_{X/Z}[I_{t+1}]
    \end{align}

    For convenience, we will decode these syndromes separately.
    We focus on decoding \(X\) errors with syndrome \(s_Z\). 
    The case of \(Z\) errors is identical.
    
    We now decode the syndrome \(s_Z\) with respect to the spacetime code \(H\) of \(H_Z\), as specified by the decoding procedure in \cref{def:min-weight-decoding}, to produce a correction \(c_X \in \F^{(T n + Tr_z)}\).
    We apply an \(X\) correction to the \(i\)-th qubit when the \(i\)-th bit of the flattening \(\phi\) of the correction \(c_X\) is \(1\), \(C_X =X^{\phi(c_X)}\).

    The overall gadget \(\gadget_\mathsf{EC}\) is then given in figure \cref{fig:ec-gadget}. (Runtime of the decoder ignored.)
    \begin{figure}[h]
        \centering
        \includegraphics[width=0.7\linewidth]{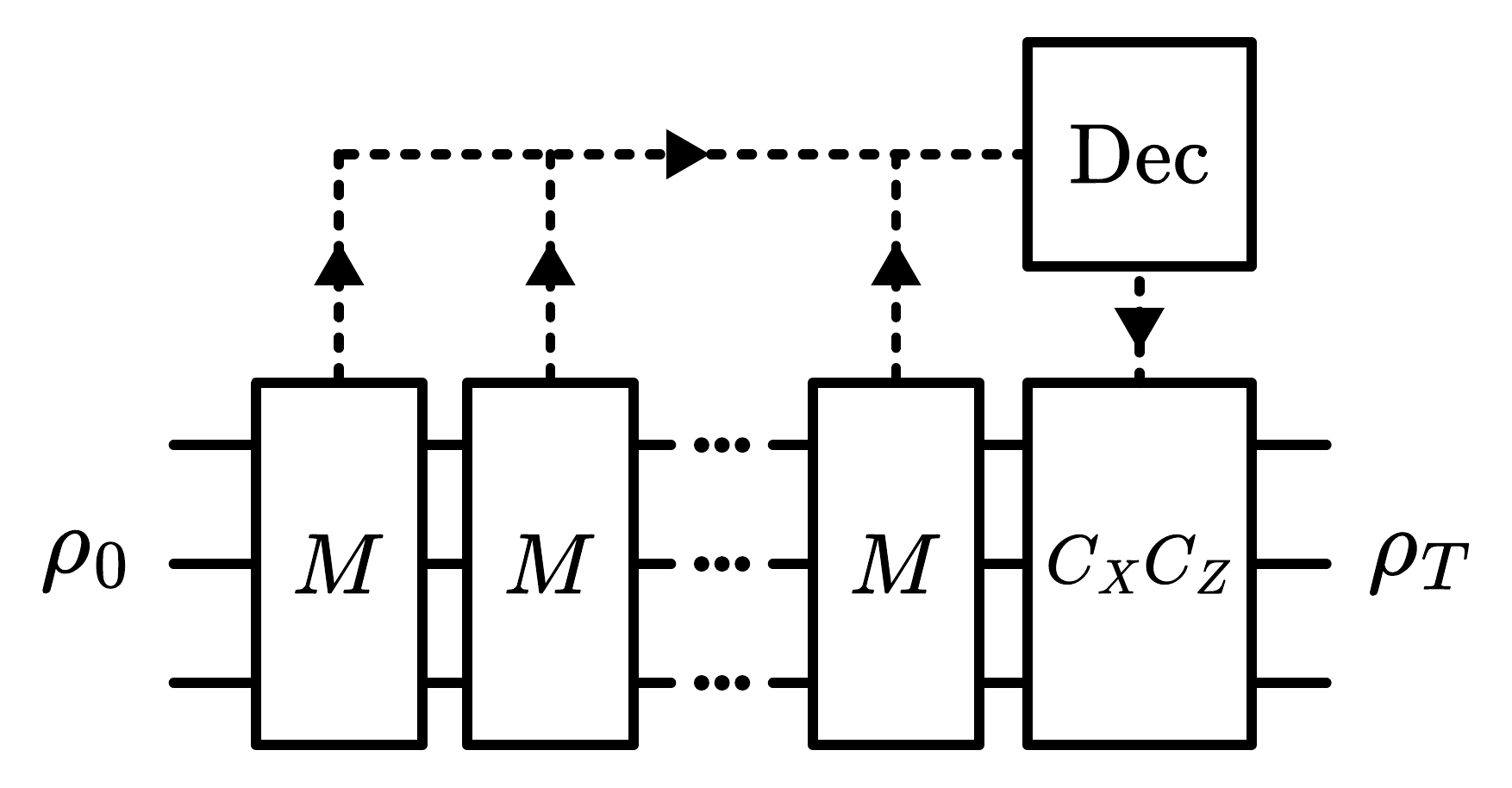}
        \caption{Error correction gadget \(\gadget_\mathsf{EC}\), the circuit \(M\) is repeated \(T\) times.}
        \label{fig:ec-gadget}
    \end{figure}

    \textbf{Fault analysis.} 
    We would now like to analyze the residual error.
    We will set \(T = d\).
    First note that there is an error \(e \in \F^{(T n + Tr_z)}\) of the spacetime code corresponding to the error that in the circuit that yields the measured syndrome differences.
    \begin{align}\label{eq:ec-gadget:spacetime-code-error}
        &t \in [T], \quad E^{(X)}_{t-1} = X^{e[J_t^{\mathrm{d}}]} \\
        &t \in [T], \quad \tilde{m}_{Z}[I_t] = e[J_t^{\mathrm{s}}]
    \end{align}
    With the exception of the final round error, the flattening also matches the residual error: \(X^{\phi(e)} =  E^{(X)}_{T-1} \dots E_0^{(X)}\).
    We now utilize the obvious bijection between the supports of \(E^{(X)}_{T-1}, \dots, E^{(X)}_{0}\), \(\tilde{m}_{Z}\), and bits of the spacetime code \([T n + Tr_z]\).
    For a bit \(i \in [T n + Tr_z]\), let \(\chi_i \subseteq V\) denote the set of locations of \(\gadget_\mathsf{EC}\) for which a fault may flip bit \(i\).
    A fault supported on a single location may flip at most \(s = 2 \Delta\) bits of the spacetime code: A two-qubit gate failure affects a data and ancilla qubit. and each of these has a gate shared with \(\Delta-1\) other qubits.\footnote{These are the well known ``hook errors.'' A more careful counting would should that the majority of the spread of the error does not intersection any of the sets of interest in so many places.}
    Likewise, a single bit of the spacetime code may be flipped by a fault at one of at most \(\Delta(2\Delta + 4)\) locations.
    
    Define a bipartite graph \(G = (V \sqcup [Tn + Tr], E)\) between the locations of the gadget and the spacetime code bits.
    Bit \(i\) is connected to spacetime location \(v\) if \(v \in \chi_i\).
    \(G\) has left degree at most \(2\Delta\) and right degree at most \(\Delta (2\Delta + 4)\).
    We can now define a family of bad fault in the gadget corresponding to the bad spacetime code errors.
    Let \(\mathcal{F}_{\mathsf{spacetime}}^{(X)}\) be the family of bad error supports in the statement of \cref{lemma:spacetime-code-faults}, with the guarantee that the residue error is \(\reserror_1\)-avoiding.
    For a subset of vertices \(S \subseteq [Tn + Tr]\), let \(N(S)\) denote the neighborhood of \(S\) in \(G\).
    If a location is capable of flipping a particular spacetime code bit then the two must be adjacent in \(G\).
    \begin{align}
        \mathcal{F}_{\mathsf{EC},1}^{(X)} = \boxplus_{f \in \mathcal{F}_{\mathsf{spacetime}}^{(X)}} \{f' \subseteq N(f) \mid \forall u \in f, \hspace{1ex} |N(u) \cap f'| > 0\}
    \end{align}
    There are at most \(2^{|f| \Delta(2\Delta+4)}\) subsets of \(N(f)\) and each subset must have weight at least \(|f|/(2\Delta)\).
    Thus for \(x \in [0,1]\), we can upper bound the weight enumerator as
    \begin{align}
        \weightenum{\mathcal{F}_{\mathsf{EC},1}^{(X)}}{x} \le \weightenum{\mathcal{F}_{\mathsf{spacetime}}^{(X)}}{2^{\Delta(2\Delta + 4)} x^{\frac{1}{2\Delta}}}
    \end{align}
    We repeat this construction with \(X\) and \(Z\) interchanged to obtain \(\mathcal{F}_{\mathsf{EC},1}^{(Z)}\).

    We also need to bound the error caused by the last application of \(M\) that propagates to \(E_T\).
    We can perform an identical argument with \(\mathcal{F}_{\mathsf{spacetime}}^{(X)}\) replaced by \(\reserror_2\) and bits of the spacetime code replaced by \(E_T\).
    \begin{align}
        \weightenum{\mathcal{F}_{\mathsf{EC},2}}{x} \le \weightenum{\reserror_2}{2^{\Delta(2\Delta + 4)} x^{\frac{1}{2\Delta}}}
    \end{align}

    Our bad fault paths are
    \begin{align}
        \mathcal{F}_{\mathsf{EC}} = \mathcal{F}_{\mathsf{EC},1}^{(X)} \boxplus \mathcal{F}_{\mathsf{EC},1}^{(Z)} \boxplus \mathcal{F}_{\mathsf{EC},2}~.
    \end{align}

    Now, let \(\fault\) be a \(\mathcal{F}_{\mathsf{EC}}\)-avoiding diagonal Pauli fault.
    In the execution of \(M[\fault]\) on a state \(\rho_0\), the spacetime code error \(e\) (\cref{eq:ec-gadget:spacetime-code-error}) is \(\mathcal{F}_{\mathsf{EC},1}^{(X)}\)-avoiding in the \(Z\)-type syndrome extraction and \(\mathcal{F}_{\mathsf{EC},1}^{(Z)}\)-avoiding in the \(X\)-type syndrome extraction. 
    In the case where \(E_0\) is \(\baderrors_{1/2}\)-avoiding, it follows by \cref{lemma:spacetime-code-faults} that \(C_X C_Z E_{T-1}\dots E_0\) is \(\reserror_1\)-avoiding.
    Otherwise, there exists a logical operator of the code \(L\) such that \(LC_X C_Z E_{T-1}\dots E_0\) is \(\reserror_1\)-avoiding (friendly).

    The final round of errors \(E_T\) has \(\reserror_2\)-avoiding support,
    by the separability of \(\reserror\) as \(\reserror_1\) and \(\reserror_2\), the output is \(\reserror\)-deviated from the codespace.
    In the case that the input is \(\baderrors_{1/2}\)-deviated from the encoding of a logical state \(\encoder(\sigma)\), then the output is \(\reserror\)-deviated from \(\encoder(\sigma)\).

    We now drop the assumption that the faults are diagonal: For a general input error and faults (non-diagonal Pauli) \cref{lemma:deterministic-errors} can be applied at each step, so that the errors at each step is either equivalent to a diagonal Pauli error (the left and right sides are stabilizer equivalent) up to some part in the lightcone of the fault in that round or the state is annihilated by the measurement projectors (so the deviation property trivially holds with coefficient \(0\)).

    We now plug in the bounds on the weight enumerators to obtain (recall \(T\) was picked to be \(d\) so there are at most \(\propto n d\) locations in the spacetime graph.)
    \begin{align}
        \weightenum{\mathcal{F}_\mathsf{EC}}{x} &\le d \cdot n [O_{\Delta}(x)]^{\frac{d}{80\Delta}} + \weightenum{\reserror_1}{O_{\Delta}(x^{\frac{1}{4\Delta}})} + \weightenum{\reserror_2}{O_{\Delta}(x^{\frac{1}{2\Delta}})}~. 
    \end{align}
    Following the definition of a friendly, fault-tolerant gadget \cref{def:gadget}, we see that this entire error correction gadget implements the filter superoperator \(\recover_{\fault}\) and an identity logical gate, \(\tilde{\g}_{\fault} = \IDENT\).
\end{proof}
\begin{remark}[Friendliness]\label{remark:ECgadget-friendliness}
    Through \cref{lemma:spacetime-code-faults}, we have argued that the error correction gadget is friendly. I.e., when the input error is bad, the residue error of the output state is still bounded, although the logical state is erroroneous. 
    By a closer inspection of the gadget construction and decoding, we note that the EC gadget is \textit{friendly in both the \(X\) and \(Z\) bases}. 
    More precisely, suppose the input error is a diagonal Pauli error \(P_XP_Z\). 
    If \(P_X\) has \(\baderrors_{1/5}\)-avoiding support while \(P_Z\) is arbitrarily supported, the logical error \(L\) will be a \(Z\) error. 
    Similarly, if \(P_Z\) has \(\baderrors_{1/5}\)-avoiding support while \(P_X\) is arbitrarily supported, the logical error \(L\) will be a \(X\) error. 
\end{remark}

In later constructions of fault-tolerance schemes, we will often construct a logical gate gadget by first applying this error correction gadget, followed by the corresponding logical operation. 
The constructed gadgets are therefore friendly and the error support on the input state to the logical operation can be easily bounded. 

\begin{remark}[Single-shot decoding]
    The error correction gadget constructed above repeatedly measure the sydnrome information for \(O(d)\) rounds, which protects the encoded information from measurement errors. 
    For most codes this repetition is necessary to control the residue error distribution.
    There are families of qLDPC codes which can be single-shot decoded~\cite{bombin2015gauge,Kubica2022single,bridgeman2024lifting,fawzi2020constant,gu2024single,nguyen2024quantum}. 
    More precisely, these codes have efficient decoders which only requires \(O(1)\) rounds of syndrome information to compute corrections with logical error rates $e^{-O(d)}$ and controlled residue errors.
    Using these codes as computation code could improves the depth overhead of fault-tolerant gadgets. 
    It is worth noting, however, that an error correction gadget constructed from a single-shot decoder may not be friendly. 
    For instance, the decoder of \cite{gu2024single}, which is a modified version of the decoder proposed by \cite{leverrier2022parallel}, assumes that the input qubit and measurement error weights are both upper bounded by \(cn\) for some constant \(c\) (the decoded code is asymptotically good). 
    For the bad input state where this assumption is not satisfied, the decoder makes no guarantee on the output state and residue error. 
    Conceptually this can be seen as a \textit{logical leakage error}, where not only is the logical information corrupted, the logical qubit itself is also lost and requires reinitialization (for instance, by a friendly gadget).
\end{remark}

\begin{remark}[Code switching]\label{remark:code-switching}
    The error correction gadget we constructed measures the same set of stabilizers repeatedly and implements a logical identity. 
    Alternatively, by measuring different groups of stabilizers corresponding to different codes, we can fault-tolerantly {switch} from one code (and code type) to another~\cite{bombin2009quantum}. 
    Such \textit{code-switching} protocols are highly versatile. 
    One motivating example is to switch between a two-dimensional topological code which admits transversal Clifford gates~\cite{calderbank1996good,steane1996multiple,Bombin2006topological} and a three-dimensional topological code which admits transversal non-Clifford gates~\cite{bombin2015gauge,kubica2015universal,Paetznick2013universal,Kubica2015unfolding}, see~\cite{bombin2016dimensional}. 
    This enables one to utilize two computational codes and implement a universal gate set transversally. 

    Another well-studied type of code-switching is \textit{code surgery}, which implements multi-qubit Pauli measurement on logical qubits. 
    Such multi-qubit Pauli measurement, when supplied with magic states, can be used to implement universal quantum computation in the framework of Pauli-based compuation (PBC)~\cite{bravyi2016trading}. 
    The prototypical example of code surgery is lattice surgery~\cite{Horsman2012LatticeSurgery,fowler2012surface} on surface codes, which can be applied to design fault-tolerant surface code architectures with two-dimensional nearnest neighbor connectivity~\cite{Fowler2018lattice,litinski2019game}. 
    These methods are later generalized to other qLDPC codes~\cite{hastings2016weight,hastings2021weight,cohen2022low} under the name \textit{qLDPC surgery}. 
    While the first proposals of qLDPC surgery~\cite{cohen2022low} incurs a heavy space overhead, recent works~\cite{cowtan2024ssip,cross2024improved,williamson2024low,ide2024fault,swaroop2024universal,zhang2024time} presented improved proposals with significantly lower theoretical and practical overheads.
    These methods are subsequently improved and applied in~\cite{cowtan2025parallel,he2025extractors,yoder2025tour}. 
    In particular, the extractor architectures proposed in~\cite{he2025extractors} enables fault-tolerant PBC on arbitrary qLDPC codes with low-overhead and fixed connectivity. 
    It would be interesting to prove the fault-tolerance of these surgery gadgets in our weight enumerator formalism.
    
    The aforementioned examples of code-switching can all be described using the langauge of subsystem codes and gauge-fixing~\cite{vuillot2019deformation}. 
    In contrast, the dynamical codes proposed in~\cite{Hastings2021dynamically} demonstrated that logical information can be stored and operated on in a continuous sequence of code-switching~\cite{davydova2024quantum}.
    It would be interesting to describe the fault-tolerance properties of these dynamically generated logical qubits in terms of weight enumerators as well. 
\end{remark}

\subsection{Constant depth gadgets}
Gate gadgets for transversal gates are a short corollary.
\begin{corollary}[Transversal gate gadget]\label{lemma:qldpc:transversal-gates}
    Let \(\g\) be a gate on either one or two computational code blocks that can be implemented transversally.
    Then, there exists a gadget for \(\g\) with input and output code types \(\compCodeType[1/5]\) and a family of bad fault paths \(\mathcal{F}_\g\) where
    \begin{align}
        \weightenum{\mathcal{F}_\g}{x} \le 2 \weightenum{\mathcal{F}_{\mathsf{EC}}}{x} +\weightenum{\baderrors_{1/10}}{x} \le d\cdot n\cdot [O_{\Delta}(x)]^{\frac{d}{80\Delta}}. 
    \end{align}
\end{corollary}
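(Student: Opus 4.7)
The plan is to build $\gadget_\g$ as a short sequential composition: first one error correction gadget per input code block (one block for single-qubit gates, two for two-qubit gates), then the single layer of physical gates implementing $\g$ transversally. The correctness and weight-enumerator bound will follow from gadget composition (\cref{lemma:gadget-composition}) once I specify how to separate the output error into the contribution from the EC gadgets and the contribution from the single transversal layer.

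First I would feed each input block into the EC gadget of \cref{lemma:qldpc:ec-gadget}. Note that $\baderrors_{1/5}$ has strictly more bad sets than $\baderrors_{1/2}$ (smaller $c$ makes more subsets qualify as bad), so any $\baderrors_{1/5}$-avoiding input is also $\baderrors_{1/2}$-avoiding and the EC gadget applies. I would instantiate the EC with output code type $(U,\baderrors_{1/10})$, splitting it via \cref{lemma:separable-bad-errors} as $\reserror_1=\reserror_2=\baderrors_{1/20}$, since $\baderrors_{1/10}$ is separable into two copies of $\baderrors_{1/20}$. Next, the transversal layer is a single depth-one circuit of one- or two-qubit physical gates, so every fault in it introduces at most one (or two, in the two-block case) Pauli error located on an output qubit, and the support of the induced output error is at most the support of the transversal-layer fault path. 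I would therefore define the bad fault paths for the transversal layer, $\mathcal{F}_{\text{trans}}$, by pulling $\baderrors_{1/10}$ on each output block back through the obvious bijection (or two-to-one map, for two-qubit gates) between transversal-layer locations and affected output qubits.

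With these choices, \cref{lemma:separable-bad-errors} gives that $\baderrors_{1/10}\boxplus \baderrors_{1/10}\preceq \baderrors_{1/5}$, so an EC residue that is $\baderrors_{1/10}$-avoiding combined with a transversal error that is $\baderrors_{1/10}$-avoiding yields a $\baderrors_{1/5}$-avoiding output on each block. Parallel and sequential composition (\cref{lemma:gadget-composition}) then show $\gadget_\g$ is a fault-tolerant gadget for $\g$ with input and output type $\compCodeType[1/5]$ and bad fault paths
\[
\mathcal{F}_\g = \mathcal{F}_{\mathsf{EC}}^{(1)} \boxplus \mathcal{F}_{\mathsf{EC}}^{(2)} \boxplus \mathcal{F}_{\text{trans}},
\]
where the second EC term is absent for a single-block gate. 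By \cref{lemma:enumerator-ring}, the weight enumerator is the sum of the three, bounded above by $2\weightenum{\mathcal{F}_{\mathsf{EC}}}{x}+\weightenum{\baderrors_{1/10}}{x}$. The displayed asymptotic estimate then follows by inserting the EC bound from \cref{lemma:qldpc:ec-gadget} (with $\reserror_1=\reserror_2=\baderrors_{1/20}$) and \cref{lemma:weight-enum-baderrors-c} to bound $\weightenum{\baderrors_{1/10}}{x}$ by $n\cdot [O_{\Delta}(x)]^{d/20}$, which is dominated by the $d\cdot n\cdot [O_{\Delta}(x)]^{d/(80\Delta)}$ term from the EC.

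The only non-routine step, and where I would spend the most care, is the separate tracking of $X$- and $Z$-errors when $\g$ is a transversal two-qubit Clifford such as $\CNOT$: a physical $\CNOT$ can copy $X$ errors from the control block to the target block and $Z$ errors in the reverse direction. The friendliness of the EC gadget in both bases (\cref{remark:ECgadget-friendliness}) is exactly what lets me argue that the residues on the two blocks emerging from the ECs remain independently controllable in the relevant basis after the transversal layer, so that the separability argument above still applies block-by-block and basis-by-basis. Once this bookkeeping is pinned down, the rest of the argument is a direct application of the composition and weight-enumerator algebra developed in \cref{sec:gadgets,sec:weight-enumerators}.
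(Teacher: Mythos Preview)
Your overall strategy matches the paper's: precede the single transversal layer by an EC gadget on each input block, then take the bad fault paths to be the $\boxplus$-sum of the EC bad fault paths and the pullback of $\baderrors_{1/10}$ to the transversal-layer locations. The parameter choices differ slightly (you instantiate the EC with output type $\baderrors_{1/10}$ split as two copies of $\baderrors_{1/20}$; the paper uses $\baderrors_{1/5}$ split as two copies of $\baderrors_{1/10}$), but this does not affect the asymptotic weight-enumerator bound, and either choice is absorbed by the $d\cdot n\cdot[O_\Delta(x)]^{d/(80\Delta)}$ term.

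There is, however, a gap in your error budget for the two-block case. A transversal two-qubit gate such as $\CNOT$ does not merely introduce a new error on each block; it also \emph{copies} the existing residue from one block onto the other. Hence after the transversal layer, each block's error support can be as large as the union of its own EC residue, the other block's EC residue, and the transversal-layer fault support. With your numbers that is $\baderrors_{1/10}$ from each of three sources, giving $\baderrors_{3/10}$ by \cref{lemma:separable-bad-errors}, not the $\baderrors_{1/5}$ you claim. Invoking friendliness in both bases (\cref{remark:ECgadget-friendliness}) is a red herring here: friendliness concerns how the \emph{next} EC gadget tolerates a bad input, not how errors propagate through the transversal layer. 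The paper handles the spread simply by separability: with its EC output $\baderrors_{1/5}$ and transversal-fault budget $\baderrors_{1/10}$, it concludes that each output block is $\baderrors_{1/2}$-deviated (indeed, the paper's own proof ends at $\baderrors_{1/2}$ rather than the $\baderrors_{1/5}$ appearing in the statement; this is harmless for composition since every subsequent gadget begins with an EC that accepts $\baderrors_{1/2}$ input). Your argument is easily repaired by the same move: drop the appeal to basis-wise friendliness and just accept the looser output type that the separability budget actually gives you.
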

\begin{proof}
    We consider the more complicated case of two code blocks.
    
    We execute the error correction gadget \cref{lemma:qldpc:ec-gadget} on the two inputs block,
    setting \(\reserror = \baderrors_{1/5}\) separable into two copies of \(\baderrors_{1/10}\). 
    From \cref{lemma:weight-enum-baderrors-c}, we know
    \begin{align}
        \weightenum{\baderrors_{1/10}}{x}
        &\le n\cdot [O_{\Delta}(x)]^{\frac{d}{20}}.
    \end{align}
    Combined with \cref{eq:qldpc-ec-fault-weight-enumerator}, we have an upper bound on the weight enumerator of \(\mathcal{F}_{\mathsf{EC}}\). 
    \begin{align}
        \weightenum{\mathcal{F}_{\mathsf{EC}}}{x}\le d\cdot n\cdot [O_{\Delta}(x)]^{\frac{d}{80\Delta}}. 
    \end{align}
    We pick \(\mathcal{F}_\g\) to be the sum of \(\mathcal{F}_{\mathsf{EC}}\) on the two error correction gadgets and \(\baderrors_{1/10}\) on the transversal gates. 
    The upper bound in our claim follows.
    
    Assume the fault is \(\mathcal{F}_\g\)-avoiding. Then, if the input state \(\rho\) is \(\baderrors_{1/2}\boxplus \baderrors_{1/2}\)-deviated from a logical state \((\encoder\otimes\encoder)(\sigma)\), then the output of the error correction gadgets is  \(\baderrors_{1/5}\boxplus \baderrors_{1/5}\)-deviated from \((\encoder\otimes\encoder)(\sigma)\).
    The transversal gate gadget can at most create a \(\baderrors_{1/10}\)-avoiding error and spread a \(\baderrors_{1/5}\)-avoiding error to each block.
    It follows by \cref{lemma:separable-bad-errors} that the output state is \(\baderrors_{1/2}\boxplus \baderrors_{1/2}\)-deviated from \((\encoder\otimes\encoder)\circ \g(\sigma)\).

    For an arbitrary input state, the output of the error correction gadgets is still  \(\baderrors_{1/5}\boxplus \baderrors_{1/5}\)-deviated from the code space.
\end{proof}

Note that we precede all transversal gates with a \(O(d)\)-round error correction gadget, which incurs a notable time (equivalently depth) overhead in theoretical and practical fault-tolerance. 
Alternatively, we could interleave transversal gates with syndrome extraction rounds to amortize the time overhead of a full error correction gadgets across multiple logical operations. 
This approach forms the backbone of \textit{transversal algorithmic fault-tolerance} as proposed and studied in \cite{zhou2024algorithmic}. 
It is an interesting future direction to formulate this approach in terms of gadgets and weight enumerators. 

Beyond transversal gates, we further consider constant depth gates supported on one computational code block, and show the standard claim that constant depth gates are fault tolerant via a lightcone argument. 

\begin{lemma}[Constant depth gate gadget]\label{lemma:qldpc:constant-depth-gadget}
    Let \(\g\) be a gate on one computational code block that can be implemented by a circuit \(C_\g\) of depth \(\depth\). 
    Set \(c = \frac{1}{5(T+1)2^{T+1}}\).
    Then there exists a gadget for \(\g\) with input and output code type \(\compCodeType[1/5]\) and a family of bad fault paths \(\mathcal{F}_\g\) where
    \begin{align}
        \weightenum{\mathcal{F}_\g}{x} \le 
        dn\cdot [O_\Delta(x)]^{\frac{d}{80\Delta}} + n\cdot [O_{\Delta,T}(x)]^{\frac{cd}{16\Delta}} + Tn\cdot [O_{\Delta,T}(x)]^{\frac{cd}{2}}.
    \end{align}
\end{lemma}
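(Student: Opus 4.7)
The proof will parallel the transversal gate corollary~\cref{lemma:qldpc:transversal-gates}, but will require a lightcone argument because a fault at a single location in \(C_\g\) can propagate to at most \(2^{\depth+1}\) qubits at the output of \(C_\g\). I would pick the parameter \(c = 1/(5(\depth+1)\cdot 2^{\depth+1})\) precisely so that \(\baderrors_c\)-avoiding errors at each of the \(\depth+1\) stages (the input to \(C_\g\), plus each of the \(\depth\) circuit layers), once propagated via lightcones and unioned together, remain \(\baderrors_{1/5}\)-avoiding at the output.

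\textbf{Construction and bad fault paths.} I would precede \(C_\g\) by an error correction gadget \(\gadget_{\mathsf{EC}}\) from~\cref{lemma:qldpc:ec-gadget} with output code type \(\compCodeType[c]=(U,\baderrors_c)\), setting \(\reserror_1 = \reserror_2 = \baderrors_{c/2}\). The bad fault paths will be \(\mathcal{F}_\g = \mathcal{F}_{\mathsf{EC}} \boxplus \mathcal{F}_{\mathsf{circ}}\), where \(\mathcal{F}_{\mathsf{circ}}\) is defined, for each layer \(t \in [\depth]\) and each connected cluster \(K\) of size \(m \ge d\) in the adjacency graph of \(H_X\) or \(H_Z\), as the collection of size-\(\lceil cm/2\rceil\) subsets of the lightcone preimage of \(K\) at layer \(t\).

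\textbf{Fault-tolerance.} Given an \(\mathcal{F}_\g\)-avoiding fault \(\fault\), my argument would be that \(\gadget_{\mathsf{EC}}\) outputs a state that is \(\baderrors_c\)-deviated from the encoding of some logical state \(\sigma\) (with the EC filter serving as the filter for the composed gadget by~\cref{lemma:gadget-composition}), and that the execution of \(C_\g\) preserves \(\baderrors_{1/5}\)-avoidance at the output. Specifically, the input error and the layer-\(t\) faults each contribute at most \(cm/2\) corrupted qubits per size-\(m\) output cluster via their lightcone images; summing over all \(\depth+1\) contributions, each possibly expanded by \(2^{\depth+1}\), gives total corruption at most \((\depth+1)\cdot(cm/2)\cdot 2^{\depth+1} = m/10\), hence \(\baderrors_{1/5}\)-avoiding. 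Because \(C_\g\) implements \(\g\) exactly on codestates, the simulation property follows.

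\textbf{Weight enumerator and main obstacle.} For the bound, substituting \(\reserror_{1,2}=\baderrors_{c/2}\) into~\cref{eq:qldpc-ec-fault-weight-enumerator} and using \(\weightenum{\baderrors_{c/2}}{x}\le n[O_\Delta(x)]^{cd/4}\) from~\cref{lemma:weight-enum-baderrors-c} gives the first two terms, after absorbing \(\depth\)-dependent constants into \(O_{\Delta,\depth}\). The third term \(\depth n\cdot[O_{\Delta,\depth}(x)]^{cd/2}\) will come from bounding \(\weightenum{\mathcal{F}_{\mathsf{circ}}}{x}\): each size-\(m\) output cluster has lightcone preimage of size at most \(m\cdot 2^{\depth+1}\) in \(C_\g\), so by~\cref{lemma:cluster-counting} the number of size-\(\lceil cm/2\rceil\) subsets per cluster is at most \((m\cdot 2^{\depth+1})^{\lceil cm/2\rceil}\); summing over the \(\depth\) layers produces the claimed factor of \(\depth\). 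The main subtlety I anticipate is the clean tracking of overlapping lightcones across layers; once this bookkeeping is fixed, the \(m/10\) density bound at each cluster is a direct consequence of the choice of \(c\) and the assumption that \(C_\g\) consists of 1- and 2-qubit gates, which is what bounds a single fault's lightcone at \(2^{\depth+1}\).
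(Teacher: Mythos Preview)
Your overall plan and construction (EC gadget followed by $C_\g$) match the paper, but your choice $\reserror_1 = \reserror_2 = \baderrors_{c/2}$ for the EC output code type is where the argument breaks. The condition that $B_1$ is $\baderrors_c$-avoiding says only that $|B_1 \cap K'| < c m'/2$ for every size-$m'$ set $K'$ that is \emph{connected in the code adjacency graph} $\adjgraph{H_X}$ or $\adjgraph{H_Z}$. To bound $|\lightcone(B_1) \cap K|$ for an output cluster $K$, what you actually need is a bound on the number of input qubits in $B_1$ whose forward lightcone meets $K$; but the set of such input qubits is determined by the lightcone structure of $C_\g$, which is entirely unrelated to the Tanner graph of the code, and hence need not be connected in $\adjgraph{H_X}$ or $\adjgraph{H_Z}$. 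Thus $\baderrors_c$-avoidance gives no control on this quantity, and your sentence ``the input error \ldots contribute[s] at most $cm/2$ corrupted qubits per size-$m$ output cluster'' does not follow. Notice the inconsistency: for the circuit layers $t \in [T]$ you correctly define $\mathcal{F}_{\mathsf{circ}}$ via lightcone preimages of output clusters, but for the input layer you revert to code-graph connectivity.

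The paper fixes this by introducing a \emph{lightcone adjacency graph} $G_{\lightcone}$ on $[n]$ that contains both $\adjgraph{H_X}$ and $\adjgraph{H_Z}$ as subgraphs and additionally joins $e_i$ to $e_j$ whenever $e_j^{T+1} \in \lightcone(e_i^1)$. Defining $\badlcerrors_c$ as the clustering sets of $G_{\lightcone}$, one proves that if $B$ (at any layer) is $\badlcerrors_c$-avoiding then $\lightcone(B)$ is $\baderrors_{c\cdot 2^{T+1}}$-avoiding at the output; the key point is that for any output cluster $\bar{S}$, the set $\bar{S}$ together with the input qubits whose lightcones reach it is connected in $G_{\lightcone}$ by construction. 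The EC output type is then taken to be $\reserror = \badlcerrors_c$ (separated into two copies of $\badlcerrors_{c/2}$), and the bad fault paths inside $C_\g$ are likewise phrased in terms of $\badlcerrors_c$ at each layer. The $O_{\Delta,T}$ dependence in the second term of the stated bound arises precisely because $G_{\lightcone}$ has maximum degree $O_{\Delta,T}(1)$; with your choice $\reserror = \baderrors_c$ that $T$-dependence would be absent, which is already a hint that the residual-error control is too weak to interface with $C_\g$.
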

\begin{proof}
    We will first apply the error correction gadget \cref{lemma:qldpc:ec-gadget} on the code block, with \(\reserror\) to be specified later. 
    The residue error being \(\reserror\)-avoiding and the fault path on \(C_\g\) being \(\badfaults_\g\)-avoiding will guarantee that the final output error is \(\baderrors_{1/5}\)-avoiding. 

    We use a lightcone argument to define \(\reserror\) and \(\badfaults_\g\). 
    Since \(C_\g= (V,E,\edgeConn, \edgeType, \nodeGate)\) acts on the code qubits labelled by \([n]\) and has depth \(T\), we partition the edges in \(C_\g\) by depth as \(E = \cup_{t=1}^{T+1}E_t\), where \(E_1, E_{T+1}\) are the input and output qubits respectively and each \(E_t\) is a copy of \([n]\).
    We denote the qubits in \(E_t\) as \(e_i^t\), for \(i\in [n]\), with the implicit identification that \(\{e_i^t\}_{t=1}^{T+1}\) denote the same physical qubit at different time.
    For a qubit \(e_i^t\in E_t\), the \textit{forward lightcone} of \(e_i^t\), denoted \(\lightcone(e_i^t)\), are all qubits \(e'\in E_{T+1}\) such that there is a directed path from \(e\) to \(e'\) in the directed network of \(C_\g\).
    Observe that for a qubit \(e_i\) and two time steps \(t_1 < t_2\), we have 
    \(\lightcone(e_i^{t_1}) \supseteq \lightcone(e_i^{t_1})\). 
    Note further that for all \(i\in [n]\), \(|\lightcone(e_i^1)|\le 2^T\) since we assumed that every gate in \(C_\g\) acts on at most two qubits.
    
    To account for error propagation through the circuit \(C_\g\), we define the \textit{lightcone adjacency graph}, which we denote \(G_\lightcone\). 
    \(G_\lightcone\) has vertices corresponding to qubits \([n]\) of the computational code. 
    Given the stabilizer check matrices \(H_X, H_Z\), we add an edge between any pair of two qubits which are acted on by at least one common stabilizer. 
    In other words, \(\adjgraph{H_X},\adjgraph{H_Z}\) are both subgraphs of \(G_\lightcone\). 
    Further, for every qubit \(e_i\), for every qubit \(e_j\) such that \(e_j^{T+1}\in \lightcone(e_i^1)\), we add an edge between \(e_i, e_j\). 
    This completes our definition of \(G_\lightcone\). 
    The maximum degree of \(G_\lightcone\), denoted \(\Delta_\lightcone\), satisfies \(\Delta_\lightcone = O_{\Delta, T}(1)\). 

    We now bound the output error support \(\outerror\subseteq E_{T+1}\) using the support of the input error \(B_1\subseteq E_1\) and the fault path \(F\subseteq V\). 
    Recall that the vertices of \(C_\g\) is partitioned into depth, \(V = \cup_{t=1}^T V_t\). 
    We similarly partition the fault path \(F = \cup_{t=1}^T F_t\), where \(F_t = F\cap V_t\). 
    For each \(F_t\), let \(B_{t+1}\subseteq E_{t+1}\) denote the output qubits of vertices in \(F_t\), and note that \(|B_{t+1}|\le 2|F_t|\). 
    We define the lightcone of a set of qubits simply as the union of lightcones of individual qubits.
    \begin{align}
        \lightcone(B):= \cup_{e\in B}\lightcone(e).
    \end{align}
    By the definition of lightcone, we note that faults on a set of qubits \(B\subseteq E\) can propagate to only qubits in \(\lightcone(B)\subseteq E_{T+1}\). 
    This implies that
    \begin{align}
        \outerror\subseteq \lightcone\left( \bigcup_{t=1}^{T+1}B_t\right) = \bigcup_{t=1}^{T+1}\lightcone(B_t).
    \end{align}

    It suffices for us to bound the error supports of \(\lightcone(B)\).
    We define bad error supports \(\badlcerrors_c\subseteq \power([n])\) according to the graph \(G_\lightcone\). 
    Let \(\clusterset_{m,m'}\) denote the \((m,m')\)-clustering sets in \(G_\lightcone\). 
    Similar to \cref{def:qldpc-codetype}, we define
    \begin{align}
        \badlcerrors_c = \boxplus_{m=d}^n \clusterset_{m, \lceil cm/2\rceil}.
    \end{align}
    From the definition of \(G_\lightcone\), we have the following claim.
    \begin{claim}
        For all \(t\le T+1\), if \(B\subseteq E_t\) is \(\badlcerrors_{c/2^{T+1}}\)-avoiding, then \(\lightcone(B)\subseteq E_{T+1}\) is \(\baderrors_{c}\)-avoiding.
    \end{claim}
    \begin{proofclaim}
    Suppose for the sake of contradiction \(\lightcone(B)\) contains a set \(S\subseteq E_{T+1}\cong [n]\) such that \(S\) is a subset of a connected set \(\bar{S}\) of size \(m\ge d\) and \(|S|\ge \lceil cm/2\rceil\).
    Since \(\bar{S}\) is a connected set in \(\adjgraph{H_X}\) or \(\adjgraph{H_X}\), which are both subgraphs of \(G_\lightcone\), \(\bar{S}\) must be connected in \(G_\lightcone\) as well. 
    Let \(N(\bar{S})\) denote the neighborhood of \(\bar{S}\) in \(G_\lightcone\). 
    Then 
    \begin{align}
        |B\cap (N(\bar{S})\cup \bar{S})| \ge \frac{|S|}{2^T} \ge \frac{c|\bar{S}|}{2^{T+1}},
    \end{align}
    since every lightcone has size bounded by \(2^T\). 
    For convenience let \(R:= B\cap (N(\bar{S})\cup S)\).     
    Then \(R\cup \bar{S}\) is a connected set of vertices in \(G_\lightcone\) of size at least \(d\), and 
    \begin{align}
        \frac{c|R\cup \bar{S}|}{2^{T+2}}
        &\le \frac{c|\bar{S}|}{2^{T+2}} + \frac{c|R|}{2^{T+2}}
        \le |R|.
    \end{align}
    This contradicts our assumption that \(B\) is \(\badlcerrors_{c/2^{T+1}}\)-avoiding, which proves our claim.
    \end{proofclaim}
    
    Set \(c = \frac{1}{5(T+1)2^{T+1}}\).
    Following \cref{lemma:separable-bad-errors}, we see that if \(B_t\) is \(\badlcerrors_{c}\)-avoiding for all \(t\le T+1\), then \(\outerror\) is \(\baderrors_{1/5}\)-avoiding. 
    Therefore, we set \(\reserror = \badlcerrors_{c}\), which is separable as two copies of \(\badlcerrors_{c/2}\).
    The EC gadget, \cref{lemma:qldpc:ec-gadget}, gives us \(\badfaults_\mathsf{EC}\) with bounded weight enumerator.
    \begin{align}
        \weightenum{\badfaults_\mathsf{EC}}{x}
        &\le dn\cdot [O_\Delta(x)]^{\frac{d}{80\Delta}} + 2\weightenum{\badlcerrors_{c/2}}{O_\Delta(x^{\frac{1}{4\Delta}})} \\
        &\le dn\cdot [O_\Delta(x)]^{\frac{d}{80\Delta}} + n\cdot [O_{\Delta,T}(x)]^{\frac{cd}{16\Delta}}.
    \end{align}   
    We further define the bad fault paths on the circuit \(C_\g\), \(\badfaults_C\), as follows. 
    For every \(t\le T\), for \(e\in E_{t+1}\), let \(s(e)\subseteq V_{t}\) denote the locations at time \(t\) that has \(e\) as an output qubit.
    We define
    \begin{align}
        \badfaults_{C, t} = \boxplus_{B\in \badlcerrors_c}\{F_t\subseteq V_t\mid \forall e\in B, |s(e)\cap F_t| > 0\}.
    \end{align}
    There are at most \(2|B|\) locations in \(V_t\) with output qubits overlapping with \(B\), so there are at most \(2^{2|B|}\) such sets \(F_t\). Each set must have weight at least \(|B|/2\). Therefore,
    \begin{align}
        \weightenum{\badfaults_{C, t}}{x} \le \weightenum{\badlcerrors_c}{4x^{\frac{1}{2}}}.
    \end{align}
    We define \(\badfaults_C\) to be the product of \(\badfaults_{C, t}\).
    \begin{align}
        \badfaults_C 
        &= \boxplus_{t=1}^T \badfaults_{C, t}
    \end{align}
    Invoking \cref{lemma:weight-enum-baderrors-c},
    the weight enumerator can be bounded as 
    \begin{align}
        \weightenum{\badfaults_C}{x} 
        &\le T\weightenum{\badlcerrors_c}{4x^{\frac{1}{2}}} \\
        &\le Tn\cdot [O_{\Delta,T}(x)]^{\frac{cd}{2}}.
    \end{align}
    Define \(\badfaults_\g = \badfaults_C\boxplus \badfaults_{\mathsf{EC}}\), and the bound on its weight enumerator follows.
\end{proof}

\begin{lemma}[CSS Code Initialization]\label{lemma:qldpc:initialization}
    Let \(\INITZ\) be the gate that initializes \(k
    \) qubits to the \(\ket{0}\) state. There exists a gadget for \(\INITZ\) with trivial input code type and output code type \(\compCodeType[1/5]\) and a family of bad fault paths \(\mathcal{F}\) where
    \begin{align}
        \weightenum{\mathcal{F}}{x} \le d\cdot n\cdot [O_{\Delta}(x)]^{\frac{d}{80\Delta}}. 
    \end{align}
\end{lemma}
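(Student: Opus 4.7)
The plan is to construct the gadget by initializing $n$ physical qubits to $\ket{0}$ and then applying the error correction gadget $\gadget_\mathsf{EC}$ from \cref{lemma:qldpc:ec-gadget} with output code type $\compCodeType[1/5]$ (choosing $\reserror = \baderrors_{1/5}$, separable into two copies of $\baderrors_{1/10}$). In the noiseless execution, this produces the encoded logical $\ket{0}^{\otimes k}$: the product state $\ket{0}^{\otimes n}$ is a $+1$ eigenstate of every $Z$-type Pauli, hence stabilized by every $Z$-stabilizer and every logical-$Z$ operator of the CSS code. Inside $\gadget_\mathsf{EC}$, the $Z$-syndromes are trivial while the $X$-syndromes are random; the minimum-weight decoder returns a $Z$-type correction that projects the state into the trivial-$X$-syndrome codespace. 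Because such corrections commute with $Z$-stabilizers and logical $Z$, the final state lies in the simultaneous $+1$ eigenspace of every stabilizer and of every logical $Z$, which is precisely the encoded $\ket{0}^{\otimes k}$, up to a random logical-$Z$ phase that acts trivially on this state.

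For the fault analysis, I first reduce to Pauli faults via \cref{lemma:fault-decomposition}. An initialization Pauli fault reduces to an $X$ error on the initialized qubit, since a $Z$ before initialization is absorbed by the $\ket{0}$ preparation and a $Z$ after preserves $\ket{0}$. I therefore set
\begin{align*}
    \mathcal{F} = \mathcal{F}_\mathsf{EC} \boxplus \baderrors^{(X)}_{1/5}~,
\end{align*}
where the second term is identified with subsets of initialization locations via the bijection with physical qubits; it rules out clustered initialization failures that would render the resulting $X$-error uncorrectable. When $\fault$ is $\mathcal{F}$-avoiding, the $X$-error $e$ deposited by initialization has $\baderrors^{(X)}_{1/5}$-avoiding support, and the spacetime decoder analyzed in \cref{lemma:spacetime-code-faults} corrects it modulo a possible logical-$Z$ error (invoking the refined $X$-basis friendliness of $\gadget_\mathsf{EC}$ noted in \cref{remark:ECgadget-friendliness}), yielding an output that is $\baderrors_{1/5}$-deviated from the encoded $\ket{0}^{\otimes k}$. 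The weight enumerator bound follows by combining \cref{lemma:weight-enum-baderrors-c}, which gives $\weightenum{\baderrors^{(X)}_{1/5}}{x} \le n\cdot[O_\Delta(x)]^{d/10}$, with the EC bound $\weightenum{\mathcal{F}_\mathsf{EC}}{x} \le dn\cdot[O_\Delta(x)]^{d/(80\Delta)}$ from \cref{lemma:qldpc:ec-gadget}; the EC term has the slower exponential decay and thus dominates the sum.

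The main obstacle is that $\ket{0}^{\otimes n}$ is not itself a codestate, so the EC gadget's standard simulation statement (which assumes input of code type $\compCodeType[1/2]$) does not apply verbatim. The resolution is the observation that $\ket{0}^{\otimes n}$ and the encoded $\ket{0}^{\otimes k}$ differ only in the $X$-syndrome sector: they share identical eigenvalues under every $Z$-stabilizer and every logical $Z$. Consequently the spacetime decoder only needs to correct $X$-errors originating from initialization and EC gate faults, the random $X$-syndrome outcomes are absorbed into logical-$Z$ phases that act trivially on the $+1$-eigenstate of logical $Z$, and the refined $X$-basis friendliness suffices in place of the stronger two-sided $\baderrors_{1/2}$ input condition.
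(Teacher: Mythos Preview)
Your approach is essentially the same as the paper's: it too initializes $\ket{0}^{\otimes n}$, applies $\gadget_\mathsf{EC}$ with $\reserror=\baderrors_{1/5}$, expresses $\ket{0}^{\otimes n}$ as the encoded $\ket{0}^{\otimes k}$ corrupted by arbitrary $Z$-type Paulis (via an explicit expansion $\ketbra{0}^{\otimes n}\propto\sum_{u,v}\PAULIZ^u\encoder(\ketbra{0}^{\otimes k})\PAULIZ^v$ rather than your eigenspace argument), and then invokes the bi-basis friendliness of \cref{remark:ECgadget-friendliness} to conclude that the residual logical error is $Z$-type and hence trivial on $\ket{0_L}^{\otimes k}$. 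One notational slip: $X$-errors are detected by $H_Z$, so the bad-support family constraining your initialization $X$-error should be $\baderrors^{(Z)}_{1/5}$ (or the full $\baderrors_{1/5}$) rather than $\baderrors^{(X)}_{1/5}$; the paper simply takes $\badfaults_{\INIT}=\baderrors_{1/10}$.
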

\nomenclature[F, 01]{\(\INITZ, \INITX\)}{Gadgets which initialize a CSS code state in the all \(\ket{0}\) or \(\ket{+}\) basis, see \cref{lemma:qldpc:initialization}.}
\begin{proof}
    We use \(\INITZ^{\otimes n}\) to initialize \(n\) qubits to the \(\ket{0}\) state, and then apply the error correction gadget \cref{lemma:qldpc:ec-gadget} with \(\reserror = \baderrors_{1/5}\) separable into two copies of \(\baderrors_{1/10}\). 
    Define the bad fault paths \(\badfaults_{\INIT}\) on the locations of the transversal initialization to be \(\baderrors_{1/10}\)
    We define the overall bad fault paths \(\mathcal{F}\) to be \(\badfaults_{\INIT}\boxplus \badfaults_{\ecgadget}\). 
    The upper bound on the weight enumerator then follows from the same analysis as in \cref{lemma:qldpc:transversal-gates}.

    It remains for us to show that this gadget behaves correctly. We make use of the following fact.
    \begin{fact}
        For a CSS code with \(\PAULIX\) parity check matrix \(H_X\), let \(m\) be the dimension of \(\rowspan(H_X)\), i.e., \(|\rowspan(H_X)| = 2^m\).
        Let \(\ket{0_L}^{\otimes k}\) denote the logical \(\ket{0^k}\) state of the code.
        It holds that
        \begin{align}
            \ket{0}^{\otimes n} &= \frac{2^{m/2}}{2^{n}}\sum_{v\in \F^n} \PAULIZ^v \ket{0_L}^{\otimes k}.
        \end{align}
        As a corollary, it holds that
        \begin{align}
            \ketbra{0}^{\otimes n} &= \frac{1}{2^{2n-m}}\sum_{u,v\in \F^n} \PAULIZ^u \encoder(\ketbra{0}^{\otimes k})\PAULIZ^v.
        \end{align}
    \end{fact}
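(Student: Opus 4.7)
The plan is to prove the vector identity first and then obtain the matrix identity as an immediate corollary by outer-producting the first with itself. The key algebraic ingredient is the standard CSS form of the logical zero state combined with the discrete Fourier / orthogonality identity $\sum_{v \in \F^n}(-1)^{u \cdot v} = 2^n \delta_{u,0}$.

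First, I would recall (or briefly justify) the standard fact that the logical $\ket{0_L}^{\otimes k}$ of the CSS code is the uniform superposition
\begin{align}
    \ket{0_L}^{\otimes k} = \frac{1}{\sqrt{2^m}}\sum_{u\in \rowspan(H_X)} \ket{u},
\end{align}
since it is the unique (up to phase) common $+1$ eigenstate of all $X$-stabilizers $X^h$ for $h \in \rowspan(H_X)$ and all logical operators $Z^{\ell'}$ for $\ell'\in \ker(H_X)$ (using $\ell' \cdot h = 0$ and the fact that $X^h$ permutes $\rowspan(H_X)$ by translation). I would justify this in one or two sentences, citing the standard construction.

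Next, using $Z^v\ket{u} = (-1)^{v\cdot u}\ket{u}$ and interchanging the order of summation,
\begin{align}
    \sum_{v\in \F^n} Z^v \ket{0_L}^{\otimes k}
    &= \frac{1}{\sqrt{2^m}}\sum_{u\in \rowspan(H_X)} \left(\sum_{v\in \F^n} (-1)^{v\cdot u}\right) \ket{u}
    = \frac{1}{\sqrt{2^m}}\sum_{u\in \rowspan(H_X)} 2^n\, \delta_{u,0}\, \ket{u}
    = \frac{2^n}{\sqrt{2^m}}\ket{0}^{\otimes n},
\end{align}
where the only surviving term is $u = 0$ (which lies in $\rowspan(H_X)$). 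Multiplying through by $2^{m/2}/2^n$ yields the first identity exactly as stated.

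For the corollary, I would simply take the outer product of the first identity with its conjugate transpose:
\begin{align}
    \ketbra{0}^{\otimes n} = \frac{2^m}{2^{2n}} \sum_{u,v\in \F^n} Z^u \ketbra{0_L}^{\otimes k} Z^v = \frac{1}{2^{2n-m}}\sum_{u,v\in \F^n} Z^u\, \encoder(\ketbra{0}^{\otimes k})\, Z^v,
\end{align}
using that $Z^v$ is Hermitian and that $\encoder(\ketbra{0}^{\otimes k}) = \ketbra{0_L}^{\otimes k}$ by definition of the irreversible encoder. There is no real technical obstacle here: the only subtle point is the initial claim about the form of $\ket{0_L}^{\otimes k}$, which depends on the CSS convention for assigning logical operators. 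Provided this convention is fixed consistently with the definition of $\encoder$ used elsewhere in the paper, the rest of the proof is a two-line computation.
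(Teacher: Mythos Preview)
Your proof is correct and essentially identical to the paper's: both recall the CSS form $\ket{0_L}^{\otimes k} = 2^{-m/2}\sum_{r\in\rowspan(H_X)}\ket{r}$, apply $Z^v$ to pick up phases $(-1)^{v\cdot r}$, and use the orthogonality identity $\sum_{v}(-1)^{v\cdot r}=2^n\delta_{r,0}$ to isolate the $r=0$ term. The paper leaves the density-matrix corollary implicit, whereas you spell out the outer-product step, but there is no substantive difference.
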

    \begin{proofclaim}
        Recall that for a CSS code, 
        \begin{align}
            \ket{0_L}^{\otimes k} &= \frac{1}{2^{m/2}}\sum_{r\in \rowspan(H_X)}\ket{r}.
        \end{align}
        We can thereby compute
        \begin{align}
            \frac{2^{m/2}}{2^{n}}\sum_{v\in \F^n} \PAULIZ^v \ket{0_L}^{\otimes k}
            &= \frac{1}{2^n}\sum_{r\in \rowspan(H_X)} \left(\sum_{v\in \F^n} \PAULIZ^v \ket{r} \right) \\
            &= \frac{1}{2^n}\sum_{r\in \rowspan(H_X)} \left (\sum_{v\in \F^n} (-1)^{v\cdot r} \ket{r} \right )
        \end{align}
        For \(r\ne 0^n\), the sum \(\sum_{v\in \F^n} (-1)^{v\cdot r} \ket{r}\) evaluates to 0. The only term that remains is \(\ket{0}^{\otimes n}\), which proves this claim.
    \end{proofclaim}

    Let \(\fault\) be a \(\badfaults\)-avoiding Pauli fault, which can be divided into \(\fault_1\) supported on locations of \(\INITZ^{\otimes n}\) and \(\fault_2\) supported on locations of the error correction gadget. 
    Suppose the faulty circuit \(\INITZ^{\otimes n}[\fault_1]\) introduces an error \(\error\), whose support is \(\baderrors_{1/5}\)-avoiding, onto the state \(\ketbra{0}^{\otimes n}\). 
    In the error correction gadget, we repeatedly measure the stabilizers of the CSS code, which enables us to invoke a variant of the decoherence of errors lemma (\cref{lemma:deterministic-errors}) on the input state. Specifically, consider the Kraus decomposition of \(\error\) in terms of Pauli operators \(\{K_\mu\}_\mu, \{K_{\nu}'\}_\nu\) and complex coefficients \(\{\alpha_{\mu,\nu}\}_{\mu,\nu}\). 
    The measurement channel can be written as 
    \begin{align}
        \mathcal{M}(\rho) &= \sum_{\text{syndrome } s}\ketbra{s}\otimes \Pi_s\rho\Pi_s,
    \end{align}
    where \(\Pi_s\) is the projector onto the syndrome eigenspace corresponding to \(s\). 
    \begin{align}
        \Pi_s \error(\ketbra{0}^{\otimes n}) \Pi_s
        &= \frac{1}{2^{2n-m}}\sum_{\mu,\nu} \alpha_{\mu,\nu} \Pi_s K_\mu \left( 
            \sum_{u,v\in \F^n} \PAULIZ^u \encoder(\ketbra{0}^{\otimes k})\PAULIZ^v
         \right) K_{\nu}' \Pi_s \\
         &= \frac{1}{2^{2n-m}}\sum_{\mu,\nu,u,v} \alpha_{\mu,\nu} \Pi_s K_\mu \PAULIZ^u \encoder(\ketbra{0}^{\otimes k})\PAULIZ^v K_{\nu}' \Pi_s.
    \end{align}
    As in the proof of \cref{lemma:deterministic-errors}, only terms where both \(K_\mu \PAULIZ^u\) and \(K_{\nu}' \PAULIZ^v\) has the same syndrome \(s\) passes the projectors while the other terms are annihilated. 
    Moreover, since we have a CSS code, the \(X\) and \(Z\) errors have independent syndromes. 
    All \(X\) errors must come from the operators \(K_\mu, K_{\nu}'\), whose support is \(\baderrors_{1/5}\)-avoiding and therefore recoverable. As in the proof of \cref{lemma:deterministic-errors}, the \(X\) terms in \(K_\mu \PAULIZ^u\) and \(K_{\nu}' \PAULIZ^v\) can be chosen to be the same up to stabilizers. 
    Denote this \(X\) error \(P_{X, s}\). 
    Similarly, the \(Z\) terms can be chosen to be the same up to stabilizers since \(\encoder(\ketbra{0}^{\otimes k})\) is stabilized by all \(Z\) logical operators. 
    Denote this \(Z\) error \(P_{Z, s}\). 
    We see that 
    \begin{align}
        \Pi_s \error(\ketbra{0}^{\otimes n}) \Pi_s
         &\propto P_{X, s}P_{Z, s} \encoder(\ketbra{0}^{\otimes k}) P_{X, s}P_{Z, s}.
    \end{align}
    Therefore, the measured state is a linear combination of \(\baderrors_{1/5}\)-avoiding \(X\) errors and arbitrarily supported \(Z\) errors acting on the ideal state \(\encoder(\ketbra{0}^{\otimes k})\).
    \begin{align}
        \mathcal{M}\circ \error (\ketbra{0}^{\otimes n})
        &= \sum_{\text{syndrome } s} \ketbra{s}\otimes \left( P_{X, s}P_{Z, s} \encoder(\ketbra{0}^{\otimes k}) P_{X, s}P_{Z, s} \right)
    \end{align}
    From \cref{remark:ECgadget-friendliness}, we know that the error correction gadget is friendly in both bases. Since \(X\) errors have bounded support, the logical error resulting from the EC gadget is a \(Z\) error, which has no impact on the state \(\encoder(\ketbra{0}^{\otimes k})\). 
    We conclude that our initialization gadget has the correct output state and output code type. 
\end{proof}

\section{qLDPC Threshold Theorems}\label{sec:qldpc-threshold-theorems}
We first begin by proving a threshold theorem for the surface code roughly following the outline in \cite{dennis2002topological}.
The scheme we construct utilizes transversal gates and magic state distillation, and dose not limit the connectivity.
To prove a threshold under connectivity constraints (such as a 2D lattice), one could utilize physical \(\SWAP\) gates to route far-apart surface code patches near to each other (such as in~\cite{pattison2023hierarchical}) for transversal gates, or implement logical gates with lattice surgery (see \cref{remark:code-switching}). 
Both of these approaches can be captured by our formalism, so it would be useful to add them.

\subsection{Surface codes}
We now begin a threshold proof using surface codes.
Here, we will introduce surface codes as a computational code as in \cref{sec:surface:ec-gadget}.

\begin{definition}[Surface code]\label{def:surface-code}
    For \(d\in\mathbb{N}\), we use \(\surfacecode{d}\) to refer to a surface code of distance \(d\) encoding one logical qubit into \(O(d^2)\) physical qubits.
    Concretely, we use the planar surface code (\cite{bravyi1998quantum,freedman2001projective}) as defined in \cite{dennis2002topological} with parameters \(\dsl d^2+(d-1)^2, 1, d \dsr\).
    For convenience, \(\surfacecode{1}\) refers to the trivial quantum code on one qubit with identity encoding map.
    We follow \cref{def:qldpc-codetype} and let \(\sctype{d}{c} = (U, \baderrors_c)\) denote the code types of a surface code of distance \(d\). 
\end{definition}
\nomenclature[F, 10]{\(\surfacecode{d}, \sctype{d}{c}\)}{Surface code of distance \(d\) and its code type, parametrized by constant \(c\).}

\begin{proposition}
    A classical linear code defined by a check matrix \(H \in \mathbb{F}^{r \times n}\) with column weight at most 2 is efficiently minimum-weight decodable by the min-weight perfect matching decoder~\cite{dennis2002topological}.
\end{proposition}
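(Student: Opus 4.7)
The plan is to reduce minimum-weight decoding to a minimum-weight perfect matching problem on a graph extracted from \(H\). First I would build the \textbf{syndrome graph} \(G_H\) as follows: create one vertex per row of \(H\), plus an auxiliary ``boundary'' vertex \(\partial\). For each column \(j \in [n]\) of weight 2 supported on rows \(i_1, i_2\), add an edge \(\{i_1, i_2\}\) labeled by \(j\); for each column of weight 1 supported on row \(i\), add an edge \(\{i, \partial\}\) labeled by \(j\); weight-0 columns are ignored. Under this correspondence, a bitstring \(x \in \F^n\) corresponds to a subset \(E_x\) of edges, and the syndrome \(Hx\) corresponds to the set of vertices (other than \(\partial\)) that are incident to an odd number of edges of \(E_x\). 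The Hamming weight of \(x\) equals \(|E_x|\).

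Next I would argue that a minimum-weight correction for a syndrome \(s\) is, up to cycles, a disjoint union of paths pairing up the violated vertices (and optionally connecting them to \(\partial\)). Concretely, any edge subset \(E_x\) with the given odd-degree set can be decomposed into edge-disjoint paths between odd-degree vertices together with some cycles; since removing cycles strictly reduces weight without changing the syndrome, any minimum-weight correction is a disjoint union of simple paths pairing violated vertices (or connecting them to \(\partial\)). Among such pairings, the minimum-weight choice for a fixed pairing uses shortest paths between each matched pair.

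The algorithm is then standard: compute the violated vertex set \(S \subseteq [r]\) from the syndrome; compute shortest-path distances from every \(v \in S\) to every other \(u \in S\) and to \(\partial\) (via BFS in \(G_H\)); form a weighted complete graph on \(S\) with an extra copy of each vertex connected to \(\partial\) at the boundary-distance weight, and run Edmonds' blossom algorithm to obtain a minimum-weight perfect matching; then return the symmetric difference of the shortest paths realizing the matching. Correctness of returning a minimum-weight preimage follows from the path-decomposition argument above, and each step (BFS, blossom matching, symmetric difference) runs in time polynomial in \(n\) and \(r\).

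The only mild subtlety, and what I would be careful about, is the treatment of columns of weight 1 and of edges incident to \(\partial\): one must allow an odd number of violated vertices to be matched to \(\partial\), which is precisely handled by the standard ``virtual boundary'' trick (duplicate \(S\) and connect pairs via \(\partial\)-edges at the relevant distance, or equivalently add a vertex \(\partial'\) matched to \(\partial\) of weight 0). After this bookkeeping the reduction to min-weight perfect matching is exact, and the result follows.
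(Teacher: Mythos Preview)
Your proposal is correct and follows essentially the same construction as the paper's proof sketch: build the syndrome graph on the rows of \(H\) plus a boundary vertex, turn columns into edges, and reduce minimum-weight decoding to min-weight perfect matching on the violated vertices with shortest-path edge weights. Your treatment of the boundary via the virtual-boundary duplication is a standard equivalent of the paper's simpler ``add \(b\) to \(\sigma\) if \(|\sigma|\) is odd'' trick, and your path-decomposition justification is more detailed than what the paper (which explicitly labels its argument a sketch) provides.
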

\begin{proof}[Proof sketch]
    The proof is the standard construction of a matching decoder for the surface code.
    We give a brief sketch here.
    We introduce a graph \(G_{\mathrm{decoding}} = ([r]\cup \{b\}, E)\) with \(r+1\) vertices: One vertex for every row of \(H\) and then an additional ``boundary vertex'' \(b\).
    For every column of \(H\) introduce an edge in \(G_{\mathrm{decoding}}\): If the column is weight two, connect the corresponding vertices \((r_1, r_2)\). If the column is weight one, connect the vertex to the boundary \((r_1, b)\).
    
    Given a syndrome \(s \in \im H \subseteq \F^r \simeq \sigma \subseteq [r]\), add \(b\) to the set if \(\sigma\) is odd.
    Then, we compute a minimum-weight perfect matching \(M \subseteq \sigma \times \sigma\) on the weighted fully connected graph \(G_{\mathrm{matching}} = (\sigma, \sigma \times \sigma, w)\) where \(w(v_1,v_2)\) is the weight of the shortest path connecting \(v_1,v_2\) in \(G_{\mathrm{decoding}}\).
    The correction is the \(\F\)-sum of the shortest paths corresponding to each element of the matching \(M\).
\end{proof}

\begin{corollary}[Surface code error correction gadget~\cite{dennis2002topological,kitaev2003fault,bravyi1998quantum,kitaev1997quantum}]
    There is an error correction gadget \(\ecgadget[\surfacecode{d}]\) (implementing identity) which has input code type \(\sctype{d}{1/2}\) and output code type \(\sctype{d}{1/5}\). 
    Its family of bad fault paths has weight enumerator bounded as in \cref{eq:qldpc-ec-fault-weight-enumerator}. 
\end{corollary}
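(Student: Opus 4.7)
The plan is to instantiate the general qLDPC error correction gadget construction of \cref{lemma:qldpc:ec-gadget} with the computational code taken to be the surface code $\surfacecode{d}$. The only things that need to be checked are (i) that the surface code fits into the framework as a $\Delta$-qLDPC CSS code, (ii) that a CSS minimum-weight decoder (in the sense of \cref{def:min-weight-decoder}) exists and can serve as the decoder used by the gadget, and (iii) that the input/output code type parameters match those claimed in the statement.

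For (i), the planar surface code defined in \cref{def:surface-code} has $X$- and $Z$-type stabilizer check matrices whose rows and columns are of weight at most $4$, so the code is $4$-qLDPC and the bad error supports $\baderrors_c$ from \cref{def:qldpc-codetype} are well defined. For (ii), the spacetime code $H$ of $H_Z$ (\cref{def:spacetime-code}) has column weight at most $2$: each data-error column contributes two adjacent detectors (from consecutive rounds of the same check) or a single one on the first round, and each syndrome-error column contributes exactly two detectors (flipping the syndrome of the same check in two consecutive rounds) or a boundary column of weight one. The same holds for the spacetime code of $H_X$. By the preceding proposition, these spacetime codes are efficiently minimum-weight decoded by the matching decoder of \cite{dennis2002topological}, so we can let $U_{\qcode}$ in \cref{def:purified-decoder} be the purification of this CSS minimum-weight decoder and the code types $\sctype{d}{c} = (U_\qcode,\baderrors_c)$ of \cref{def:surface-code} coincide with $\compCodeType[c]$ from \cref{def:qldpc-codetype}.

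For (iii), we invoke \cref{lemma:qldpc:ec-gadget} with input code type $\sctype{d}{1/2}$ and output code type $\sctype{d}{1/5}$, choosing the separation $\baderrors_{1/5} = \baderrors_{1/10} \text{ separable into } \baderrors_{1/10}$ and $\baderrors_{1/10}$ guaranteed by \cref{lemma:separable-bad-errors} (since $1/10 + 1/10 \le 1/5$). The construction of $\ecgadget[\surfacecode{d}]$ is then exactly that of \cref{lemma:qldpc:ec-gadget}: $d$ alternating rounds of constant-depth $X$- and $Z$-type syndrome extraction, followed by the classical matching correction applied transversally. Since the surface code is $4$-qLDPC and the decoder is a CSS minimum-weight decoder, all hypotheses of \cref{lemma:qldpc:ec-gadget} are satisfied, so the gadget simulates the logical identity with respect to $(\sctype{d}{1/2},\sctype{d}{1/5})$ and has a family of bad fault paths $\mathcal{F}_{\ecgadget[\surfacecode{d}]}$ satisfying the weight enumerator bound of \cref{eq:qldpc-ec-fault-weight-enumerator} with $\Delta = 4$ and $n = d^2 + (d-1)^2 = \Theta(d^2)$.

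There is essentially no hard step here; the work has already been done in \cref{lemma:qldpc:ec-gadget} and in \cref{sec:spacetime-code}. The only piece of content specific to the surface code is the observation that its spacetime code is a graph code (column weight $\le 2$), which makes the minimum-weight decoder implementable in polynomial time by matching rather than being merely combinatorially defined. This keeps the gadget from being only an abstract object and is the only place where surface code structure is genuinely used beyond the $\Delta$-qLDPC property.
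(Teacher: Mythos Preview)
Your proposal is correct and follows exactly the paper's approach: observe that the surface-code spacetime check matrix has column weight at most $2$, so the matching decoder furnishes an efficient CSS minimum-weight decoder, and then invoke \cref{lemma:qldpc:ec-gadget}. One small inaccuracy worth fixing: in the paper's spacetime code (\cref{def:spacetime-code}) a data-error column in $J_t^{\mathrm{d}}$ appears only in row block $I_t$ via $H[I_t,J_t^{\mathrm{d}}]=H_Z$, so its weight equals the column weight of $H_Z$ (at most $2$ because each surface-code qubit lies in at most two $Z$-checks), not because the error touches two consecutive rounds; the conclusion and the rest of your argument are unaffected.
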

\begin{proof}
    Consider the spacetime code studied in \cref{sec:spacetime-code}. 
    Its parity check matrix, illustrated in \cref{fig:spacetime-check-matrix}, has column weight bounded by 2 due to the structure of stabilizers in the surface code. 
    Therefore, minimum weight decoding of the spacetime code is equivalent to finding a min-weight perfect matching on a graph, which can be efficiently solved. 
    The rest of this lemma follows from \cref{lemma:qldpc:ec-gadget}. 
\end{proof}

\begin{lemma}[Transversal Clifford gates]\label{lemma:surface:transversal-gates}
    There exists a constant \(\epsilon_{*,\mathsf{TRANSVERSAL}} \in (0,1)\) and friendly gate gadgets for the set of operations \(\mathsf{TRANSVERSAL} = \{\IDENT, \PAULIX, \PAULIZ, \CNOT, \HAD, \PHAS, \MEASX, \MEASZ\}\) (and their classically controlled analogs, if unitary) such that 
    \begin{itemize}[topsep = 0pt]
        \item The gadgets for \(\IDENT, \PAULIX, \PAULIZ, \CNOT, \HAD, \PHAS\) and their classically controlled analogs has input and output code types \(\sctype{d}{1/5}\). The gadgets for \(\MEASX, \MEASZ\) has input code type \(\sctype{d}{1/5}\) and trivial output code type.
        \item Each gadget has width \(O(d^2)\) and (identical) depth \(O(d)\).
        \item For each operation, \(\qGate \in \mathsf{TRANSVERSAL}\), the family of bad fault paths of the corresponding gadget \(\badfaults_{\qGate}\) has weight enumerator bounded on \(x\in [0,\epsilon_{*,\mathsf{TRANSVERSAL}}]\) as
        \begin{align}
            \weightenum{\badfaults_{\qGate}}{x} \le e^{-\beta d} \poly(d)
        \end{align}
        for some constant \(\beta > 0\).
    \end{itemize}
\end{lemma}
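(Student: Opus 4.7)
The plan is to construct each gadget in \(\mathsf{TRANSVERSAL}\) as a composition of an error correction gadget \(\ecgadget[\surfacecode{d}]\) with a constant-depth physical implementation of the logical operation, and then to invoke \cref{lemma:qldpc:transversal-gates} (or \cref{lemma:qldpc:constant-depth-gadget}) together with \cref{lemma:weight-enum-baderrors-c} specialized to \(n = O(d^2)\) and \(\Delta = O(1)\). Concretely, for \(\IDENT\) one simply takes a single \(\ecgadget[\surfacecode{d}]\); for \(\PAULIX, \PAULIZ\) one applies the physical Pauli operators along the corresponding logical representatives and follows with \(\ecgadget[\surfacecode{d}]\); for \(\CNOT\) one applies transversal \(\CNOT\) between two equally-sized patches; for \(\MEASZ\) and \(\MEASX\) one measures each data qubit in the appropriate basis and classically decodes the outcomes with the standard matching decoder (input code type \(\sctype{d}{1/5}\), trivial output). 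The Hadamard gadget applies physical \(\HAD\) transversally and then a constant-depth \(\SWAP\) network that reflects the surface-code lattice so the input/output orientations agree. For \(\PHAS\), one uses a standard constant-depth implementation for the surface code (e.g., gate teleportation via an injected \(\ket{Y}\) state produced within the gadget, or a twist/folded-patch construction); the point is only that the circuit has depth \(T = O(1)\), so \cref{lemma:qldpc:constant-depth-gadget} applies with a constant \(c\).

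Next I would verify the width and depth claims. For all gadgets, width is dominated by the \(n=O(d^2)\) physical qubits of the surface patch(es) acted on, giving \(O(d^2)\). The depth is dominated by the \(\ecgadget[\surfacecode{d}]\) preamble, which performs \(T = d\) rounds of syndrome extraction plus a constant-depth correction layer, yielding depth \(O(d)\); the constant-depth physical implementation of each gate adds only \(O(1)\). Friendliness is inherited from \cref{lemma:qldpc:ec-gadget} (and preserved by sequential and parallel composition via \cref{lemma:gadget-composition}), and the separability arguments behind \cref{lemma:qldpc:transversal-gates} ensure that an arbitrary input state is mapped to one whose output deviation lies in \(\baderrors_{1/5}\). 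Classically controlled analogs are handled by noting that the classical conditioning bit does not couple to the quantum noise support; the same fault paths control the output deviation.

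The weight enumerator bound is then obtained by plugging the surface-code parameters into the general qLDPC gadget bounds. From \cref{lemma:qldpc:transversal-gates} one gets for each transversal gate
\begin{align}
  \weightenum{\badfaults_{\qGate}}{x} \le d\cdot n \cdot [O(x)]^{d/(80\Delta)} = \poly(d)\cdot [O(x)]^{\Omega(d)},
\end{align}
since \(n = O(d^2)\) and \(\Delta = O(1)\). For the Hadamard and \(\PHAS\) gadgets, \cref{lemma:qldpc:constant-depth-gadget} gives a bound of the same shape, \(\poly(d)\cdot[O(x)]^{\Omega(d)}\), where the constant in the exponent depends only on the (constant) depth \(T\). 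Choosing \(\epsilon_{*,\mathsf{TRANSVERSAL}} \in (0,1)\) small enough that the \(O(x)\) factor is bounded by some \(\lambda < 1\) on \([0,\epsilon_{*,\mathsf{TRANSVERSAL}}]\) yields
\begin{align}
  \weightenum{\badfaults_{\qGate}}{x} \le \poly(d)\cdot \lambda^{\Omega(d)} = \poly(d)\cdot e^{-\beta d}
\end{align}
for \(\beta = \Omega(|\log \lambda|) > 0\), uniformly in \(\qGate\).

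The main obstacle is justifying a clean, friendly, constant-depth implementation of \(\PHAS\) on a single surface-code patch that fits into the framework of \cref{lemma:qldpc:constant-depth-gadget} (which is stated for one code block) without invoking the magic-state distillation machinery developed later. The cleanest route is to absorb this by using a folded or twist-equipped surface-code layout in which \(\PHAS\) admits a genuinely constant-depth realization, or to treat \(\PHAS\) as a sequential composition of injection, transversal Clifford corrections, and a subsequent \(\ecgadget[\surfacecode{d}]\)-protected measurement. Either way the construction yields a bad-fault-path family of the required \(\poly(d)\cdot e^{-\beta d}\) form, and I would simply point to \cref{lemma:qldpc:transversal-gates} and \cref{lemma:qldpc:constant-depth-gadget} to handle the combinatorics uniformly; the threshold \(\epsilon_{*,\mathsf{TRANSVERSAL}}\) is the minimum over the thresholds inherited from each of these invocations.
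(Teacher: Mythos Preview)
Your overall strategy matches the paper: compose \(\ecgadget[\surfacecode{d}]\) with a constant-depth physical implementation and invoke \cref{lemma:qldpc:transversal-gates}/\cref{lemma:qldpc:constant-depth-gadget}. A few points of divergence are worth noting.

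First, on ordering: the paper places the EC gadget \emph{before} the transversal operation (and this is also what \cref{lemma:qldpc:transversal-gates} does). Your construction description puts EC after for \(\PAULIX,\PAULIZ\), though you later call it a ``preamble.'' The ordering matters for friendliness: the filter \(\recover_{\fault}\) is furnished by the EC gadget, and it must act first so that, e.g., a \(\CNOT\) does not spread an arbitrary input error from one block into the other before any cleanup occurs.

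Second, on \(\PHAS\): the paper commits to the fold-transversal implementation of \cite{moussa2016transversal,breuckmann2024foldtransversal}, a depth-\(1\) circuit of \(\PHAS\) and \(\CZ\) gates, so \cref{lemma:qldpc:constant-depth-gadget} applies directly. Your alternative of ``gate teleportation via an injected \(\ket{Y}\) state produced within the gadget'' would \emph{not} meet the lemma's requirements: the injection gadget has depth \(O(d^2)\) (violating the \(O(d)\) depth claim) and its bad-fault-path weight enumerator is only \(O(x)\), not \(\poly(d)e^{-\beta d}\). So the fold-transversal route is not optional here; it is what makes the statement hold as written.

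Third, the paper's treatment of \(\MEASZ/\MEASX\) is more explicit than yours: after the EC preamble and transversal measurement, it invokes the decoherence-of-errors lemma (\cref{lemma:deterministic-errors}) applied to the trivial code generated by all single-qubit \(\PAULIZ\) (resp.\ \(\PAULIX\)) operators, so that the input deviation and the post-measurement Pauli fault collapse to a diagonal bit-flip pattern with \(\baderrors_{2/5}\)-avoiding support, which the minimum-weight decoder then corrects. Your sketch is not wrong, but this is the mechanism that makes the measurement gadget fit the friendly-gadget definition with trivial output type.
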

\begin{construction}
    The logical operations \(\IDENT, \PAULIX, \PAULIZ\) and \(\CNOT\) can all be implemented transversally on any CSS code. 
    A depth-2 implementation of \(\HAD\) applying \(H^{\otimes n}\) and one layer of \(\SWAP\) is given in~\cite{moussa2016transversal,breuckmann2024foldtransversal}, with the idea originating from \cite{dennis2002topological}.
    A depth-1 implementation of \(\PHAS\) applying a combination of \(\PHAS\) and \(\CZ\) is given in \cite{moussa2016transversal,breuckmann2024foldtransversal}.
    The gate gadgets for these unitary gates are their low-depth implementation preceded by the EC gadget \(\ecgadget[\surfacecode{d}]\).\footnote{Technically, if the unitary implementation is depth-1, we add a layer of identity gates afterwards to make them depth-2 in order to satisfy the identical depth condition in lemma statement.} 
    Gadgets for classically controlled analogs are constructed similarly.

    For measurement \(\MEASX\), we first apply the EC gadget \(\ecgadget[\surfacecode{d}]\), then transversally measure all physical qubits of \(\surfacecode{d}\) in \(\PAULIX\) basis to obtain a classical bitstring \(b_X\in \F^n\). 
    Let \(H_X\) denote the \(\PAULIX\) stabilizer check matrix of \(\surfacecode{d}\), compute \(s_X = H_Xb_X\), and apply a minimum weight decoder \(D_X\) to compute a correction \(c_X\in \F^n\) where \(H_Xc_X = s_X\). 
    Let \(L_X\) be a \(\PAULIX\) logical operator of \(\surfacecode{d}\), and let \(\ell_X\in \F^n\) be the indicator vector of its support. Output \(\ell_X\cdot (b_X+c_X)\). The gadget for \(\MEASZ\) can be constructed analogously.\footnote{We pad these gadgets by identity gates (on classical registers) so that they have the same depth as the unitary gate gadgets.}
\end{construction}
\begin{proof}
    The unitary gate gadgets for \(\{\IDENT, \PAULIX, \PAULIZ, \CNOT, \HAD, \PHAS\}\) have bad fault paths and weight enumerator bounds given by \cref{lemma:qldpc:constant-depth-gadget}.

    For \(\MEASZ\) (and similarly \(\MEASX\)), denote the proposed gadget \(\mz\). 
    We first argue that in the absence of input error and fault, the perfect gadget \(\mz\) will correctly measure the logical qubit in \(\PAULIZ\) basis destructively. 
    To see that, since surface code is a CSS code, its logical state has a standard basis. 
    \begin{align}
        \forall v\in \ker(H_Z), \ket{v + H_X} &\propto \sum_{r\in \rowspan(H_X)}\ket{v+r}.
    \end{align}
    Transversal measurement in the \(\PAULIZ\) basis will therefore measure to \(v + r\) for some \(v\in \ker(H_Z)\) and \(r\in \rowspan(H_X)\). 
    We see that \(H_Z(v+r) = 0\), which means the decoder acts trivially. 
    For a \(\PAULIZ\) logical operator with indicator vector \(\ell_Z\in \ker(H_X)\), we see that \(\ell_Z\cdot (v+r) = \ell_Z\cdot v\). 
    This is the correct logical measurement result due to the commutation rules of the logical operators of CSS codes. 
    
    In the presence of input error \(\errorin\) and Pauli fault \(\fault\), 
    the fault \(\fault\) is separable as \(\fault_{\ecgadget}\) supported on the locations of the EC gadget and \(\fault_{\PAULIZ}\) supported on the locations of the transversal measurement. 
    Let us consider the output state of the EC gadget,
    \begin{align}
        \error \circ \encoder(\rho') 
        &= \circuitMap{\ecgadget[\surfacecode{d}][\fault_{\ecgadget}]} (\errorin \circ \encoder(\rho)).
    \end{align}
    Assuming \(\errorin\) and \(\fault\) is well-behaving as in \cref{lemma:qldpc:ec-gadget}, \(\rho' = \rho\) and \(\error\) is \(\surfacecode{d}_{1/5}\)-avoiding.
    Now consider the faulty transversal measurement circuit, which can be decomposed as 
    \begin{align}
        \circuitMap{\MEASZ^{\otimes n}[\fault_\PAULIZ]}
        &= \fault_{\PAULIZ, 2} \circ \MEASZ^{\otimes n} \circ \fault_{\PAULIZ, 1}
    \end{align}
    for Pauli superoperators \(\fault_{\PAULIZ, 1}, \fault_{\PAULIZ, 2}\). 
    Note that the faults must respect the input and output types of the affected gates. Since the measurement gate output classical bits, \(\fault_{\PAULIZ, 2}\) must be a diagonal Pauli fault which is a tensor product of \(\PAULIX\) operators acting on a collection of bits \(F_2\). 
    To analyze \(\error\) and \(\fault_{\PAULIZ, 1}\), because we measured all physical qubits individually in the \(\PAULIZ\) basis, by the decoherence of errors lemma (\cref{lemma:deterministic-errors}) \(\fault_{\PAULIZ, 1}\circ \error\) decoheres into single-qubit Pauli \(\PAULIX\) errors acting on a collection of qubits \(F_1\cup E\subseteq [n]\), where \(F_1\) is a subset of the support of \(\fault_{\PAULIZ, 1}\) and \(E\) is a subset of the support of \(\error\).\footnote{Technically, we apply \cref{lemma:deterministic-errors} with the trivial quantum code whose stabilizer group consists of all \(n\) single-qubit \(\PAULIX\) operators.}

    Since the support of \(\error\) is \(\surfacecode{d}_{1/5}\)-avoiding, \(E\) is \(\surfacecode{d}_{1/5}\)-avoiding. 
    Let \(\badfaults_{\MEASZ} = \surfacecode{d}_{1/5}\) denote the bad fault paths on the transversal measurement. 
    If \(\fault_\PAULIZ\) is \(\badfaults_{\MEASZ}\)-avoiding, we see that \(F = F_1\cup F_2\) is \(\surfacecode{d}_{1/5}\)-avoiding.
    Let \(1_E, 1_F\) denote the indicator vectors for sets \(E, F\), and let \(e = 1_E+1_F\). 
    We see that the support of \(e\) is \(\surfacecode{d}_{2/5}\)-avoiding, and the measurement outcome is \(v+r+e\).
    Since we used a minimum weight decoder \(D_Z\), the correction \(c_Z\) satisfies \(c_Z+e\in \rowspan(H_X)\) (\cref{prop:successful_correction}). Therefore, \(v+r+e+c_Z\in v+\rowspan(H_X)\) and the output classical bit is again the correct logical measurement outcome.
    We conclude that the proposed gadget fault-tolerantly implement \(\MEASZ\), with bad fault paths and weight enumerate bound given by \cref{lemma:qldpc:transversal-gates}.
\end{proof}
We remark that the transversal measurement gadgets can be straightforwardly generalized to any CSS code, measuring out all logical qubits in the \(\PAULIX\) or \(\PAULIZ\) basis.

\subsubsection{State injection}
To employ magic state distillation \cite{bravyi2005universal}, we need a source of noisy logical magic states.
Here, we will employ a brute-force approach to construct a state-injection gadget for the surface code.
We utilize the following unitary circuit for growing a surface code, which was presented in Figure~19 of \cite{dennis2002topological}.

\begin{fact}[Surface code growth]\label{prop:surface-code-const-growth}
    For \(\ell \in \mathbb{N}\), there exists a depth \(O(\ell)\) unitary circuit \(U_{(d+\ell)\leftarrow d}\) that acts on a \(\surfacecode{d}\) codeblock and \(O(\ell d+\ell^2)\) qubits initialized to \(\ket{0}\) and returns a \(\surfacecode{d+\ell}\) codeblock.
    That is, as superoperators, we have that
    \begin{align}
        U_{(d+\ell)\leftarrow d} \circ \encoder[\surfacecode{d}] = \encoder[\surfacecode{d+\ell}]~.
    \end{align}
\end{fact}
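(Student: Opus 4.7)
The plan is to give an explicit construction of the growth unitary \(U_{(d+\ell)\leftarrow d}\) by iterating a constant-depth ``add one layer'' primitive. Recall that the planar surface code \(\surfacecode{d}\) sits on a \(d\times d\)-cell lattice with physical qubits on edges, Z-type star stabilizers at vertices, and X-type plaquette stabilizers on faces (one pair of opposite boundaries is ``rough,'' the other is ``smooth''). To go from \(\surfacecode{d}\) to \(\surfacecode{d+1}\), one appends a single strip of \(O(d)\) new qubits along one rough boundary and another strip along one smooth boundary, introducing \(O(d)\) new weight-\(\le 4\) stabilizers and extending the corresponding logical representative by one physical qubit.

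The first step is to isolate the ``grow by one'' circuit. The new qubits along the rough boundary are the ones that will be constrained by new plaquette (X-type) stabilizers; we initialize those in \(\ket{+}\) by applying a single layer of Hadamards on top of the given \(\ket{0}\) ancillas. The new qubits along the smooth boundary will be constrained by new star (Z-type) stabilizers and stay in \(\ket{0}\). Next, a constant number of CNOT layers ``sew'' each strip into the code: since every new plaquette involves exactly one pre-existing code qubit plus a constant number of new qubits, the required CNOTs can be scheduled in a color-class fashion so that no qubit is touched twice per layer, giving constant depth. This produces \(U_{(d+1)\leftarrow d}\) of depth \(O(1)\).

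The second step is to iterate. Setting
\begin{align}
    U_{(d+\ell)\leftarrow d} = U_{(d+\ell)\leftarrow (d+\ell-1)} \circ \cdots \circ U_{(d+2)\leftarrow (d+1)} \circ U_{(d+1)\leftarrow d},
\end{align}
the total depth is \(O(\ell)\) and the total number of fresh ancillas used is \(\sum_{i=0}^{\ell-1} O(d+i) = O(\ell d + \ell^2)\), matching the claimed budget. Correctness of each layer is a Clifford calculation: one verifies that conjugating the stabilizer group of \(\surfacecode{d+i}\) (tensored with the group generated by \(Z\) on the new smooth-boundary ancillas and \(X\) on the new rough-boundary ancillas) by the layer's Hadamards and CNOTs yields exactly the stabilizer group of \(\surfacecode{d+i+1}\), and similarly that a chosen pair of logical operators extends to the canonical logicals of the larger code.

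The main thing one has to be careful about is that the growth is implemented by a single \emph{unitary} (no measurements, no post-selection), so the circuit must be fully deterministic and reversible. This is exactly why the rough/smooth distinction matters: initializing the new qubits in the eigenbasis of the stabilizers they will be acted on by ensures that the conjugation step maps \(\ket{0}\)- or \(\ket{+}\)-stabilizers into the new code stabilizers without requiring a syndrome to be measured and corrected. With that observation the equality \(U_{(d+\ell)\leftarrow d}\circ \encoder[\surfacecode{d}] = \encoder[\surfacecode{d+\ell}]\) follows directly from the Clifford commutation check, as detailed in Figure~19 of \cite{dennis2002topological}.
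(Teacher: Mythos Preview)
Your proposal is correct and aligns with the paper's treatment: the paper states this as a \emph{Fact} without proof, simply citing Figure~19 of \cite{dennis2002topological}, and your sketch is precisely an elaboration of that cited construction (iterate a constant-depth ``grow by one'' Clifford primitive along the rough and smooth boundaries). There is nothing to compare---you have filled in the details the paper deliberately omits.
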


We can use the constant-depth circuit in \cref{prop:surface-code-const-growth} alternately with the error correction gadget to prepare an arbitrarily large surface code block.
The main idea is that the error correction gadget has exponential error suppression in the block length, so there are relatively few ways that the earlier parts of this procedure can fail.
\begin{lemma}[Surface code state injection gadget \cite{dennis2002topological}]\label{lemma:state-injection}
    There exists a constant \(\epsilon_{*,\inject} \in (0,1)\) such that,
    given a constant-sized circuit \(C\) producing a state \(\ket{\psi}\), there exists a gadget \(\inject^{(d)}_{\ket{\psi}}\) with output code type \(\sctype{d}{1/5}\) that prepares \(\encoder[\surfacecode{d}](\ketbra{\psi})\) and bad fault paths \(\badfaults_{\inject}^{d}\) with weight enumerator bounded on \(x \in [0,\epsilon_{*,\inject}]\) as
    \begin{align}
        \weightenum{\badfaults_{\inject}^{d}}{x} \le c\cdot x
    \end{align}
    for some absolute constant \(c > 0\).
    The depth is \(O(d^2)\) and the width is \(O(d^2)\).
\end{lemma}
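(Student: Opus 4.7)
The plan is to build \(\inject^{(d)}_{\ket{\psi}}\) by bootstrapping the distance of the surface code. We first prepare \(\ket{\psi}\) on a single physical qubit using \(C\) (identified with a trivial \(\surfacecode{1}\) block), and then iteratively enlarge the code block: for each \(d_i = 1, 2, \ldots, d-1\), apply the constant-depth growth unitary \(U_{(d_i+1)\leftarrow d_i}\) from \cref{prop:surface-code-const-growth} on fresh \(\ket{0}\) ancillas, followed by the error correction gadget \(\ecgadget[\surfacecode{d_i+1}]\). The resulting circuit has width \(O(d^2)\) and depth \(O(1) + \sum_{i=1}^{d-1}(O(1) + O(d_i)) = O(d^2)\). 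In the fault-free case, each growth step maps \(\encoder[\surfacecode{d_i}](\ketbra{\psi})\) to \(\encoder[\surfacecode{d_i+1}](\ketbra{\psi})\) by \cref{prop:surface-code-const-growth}, and each EC gadget acts as logical identity by \cref{lemma:qldpc:ec-gadget}, so the output is \(\encoder[\surfacecode{d}](\ketbra{\psi})\).

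The bad fault paths will be organized as
\begin{align}
    \badfaults_{\inject}^{d} = \badfaults_C \boxplus \bigboxplus_{i=1}^{d-1} \badfaults_i,
\end{align}
where \(\badfaults_C\) consists of all non-empty subsets of \(C\)'s locations (since \(C\) operates on an unprotected qubit, any single fault may corrupt \(\ket{\psi}\)), and \(\badfaults_i\) captures fault patterns in the \(i\)-th growth-plus-EC round that cause it to fail to output a \(\baderrors_{1/5}\)-avoiding encoding of the correct state. Because \(|C|\) is a fixed constant depending only on the chosen preparation circuit, \(\weightenum{\badfaults_C}{x} \le |C|\,x\,(1+x)^{|C|-1} = O(x)\) on any bounded interval. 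For each \(\badfaults_i\) the plan is to combine the EC bad fault paths from \cref{lemma:qldpc:ec-gadget} (applied at distance \(d_i+1\)) with a lightcone-style accounting of the constant-depth growth step analogous to \cref{lemma:qldpc:constant-depth-gadget}, so that any fault pattern violating the \(\baderrors_{1/2}\) input condition of the EC gadget must contribute \(\Omega(d_i)\) clustered faults. This should yield
\begin{align}
    \weightenum{\badfaults_i}{x} \le \poly(d_i)\cdot\left(O_\Delta(x)\right)^{d_i/O(1)}
\end{align}
valid for \(x\) below a fixed universal threshold.

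Summing these contributions and choosing \(\epsilon_{*,\inject}\) small enough, the initial stages (small \(d_i\)) together contribute a polynomial in \(x\) with no constant term whose value on \([0,\epsilon_{*,\inject}]\) is at most \(O(x)\), while the later stages (say \(d_i\) at least some fixed \(d_*\) chosen so that the exponent \(d_*/O(1) \ge 2\)) form a geometric tail in \(x^{d_i/O(1)}\) bounded by \(O(x^2)\), hence again \(O(x)\) on the interval. Combining this with the \(O(x)\) bound for \(\weightenum{\badfaults_C}{x}\) delivers the desired estimate \(\weightenum{\badfaults_{\inject}^{d}}{x} \le c\cdot x\) for a universal constant \(c\).

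The main obstacle will be accounting for bad fault paths across the growth unitary, which changes code types and therefore does not fit the standard gadget definition of \cref{def:gadget} directly. The approach will be to treat each growth-plus-EC round as a single composite gadget with input code type \(\sctype{d_i}{1/5}\) and output code type \(\sctype{d_i+1}{1/5}\), redoing the spacetime-code and lightcone analyses of \cref{lemma:spacetime-code-faults} and \cref{lemma:qldpc:constant-depth-gadget} on the enlarged check matrix of the grown block to certify the claimed form of \(\badfaults_i\). The bookkeeping across the growth boundary (in particular tracking how unprotected input errors from \(C\) propagate and get absorbed as the distance grows) requires care, but the key exponential-in-\(d_i\) suppression is inherited directly from the EC gadget analysis and is what drives the final bound.
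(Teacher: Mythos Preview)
Your plan matches the paper's approach almost exactly: bootstrap the distance one step at a time, split the bad fault paths into an ``early'' constant-size piece contributing $O(x)$ and a ``late'' tail where exponential-in-$d_i$ suppression makes the sum geometric. Two small differences are worth noting. First, the paper orders each round as $\ecgadget[\surfacecode{s}]$ \emph{then} $\grow_{s+1\leftarrow s}$ (so the final step is a grow, not an EC), whereas you do grow-then-EC; either works. Second, the paper handles your ``main obstacle'' (growth changes the code type) not by redoing the spacetime/lightcone analysis but by invoking \cref{lemma:qldpc:constant-depth-gadget} directly, treating $\grow_{s+1\leftarrow s}\circ\ecgadget[\surfacecode{s}]$ as an instance of that lemma with input type $\sctype{s}{1/5}$ and output type $\sctype{s+1}{1/5}$. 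This is admittedly a slight extension of \cref{lemma:qldpc:constant-depth-gadget} as stated (which has matching input/output types), so your instinct to be careful here is reasonable; but the combinatorics are identical and the paper simply cites the lemma rather than replaying it. For the early stages the paper is even blunter than you: it fixes an absolute $\bar d$ and declares \emph{every singleton} in the first $\bar d$ rounds a bad fault path, giving exactly $|V|\cdot x$ with no per-round analysis.
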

\nomenclature[F, 11]{\(\inject^{(d)}_{\ket{\psi}}\)}{Surface code state injection gadget, see \cref{lemma:state-injection}.}
\begin{construction}
    Let \(\grow_{d' \leftarrow d}\) be the circuit that introduces the necessary ancilla qubits and applies the growth unitary \(U_{d'\leftarrow d}\) from \cref{prop:surface-code-const-growth}.
    We define the \(\inject^{(d)}\) gadget inductively as
    \begin{align}
        \inject^{(d)} &= \grow_{d \leftarrow (d-1)} \circ \ecgadget[\surfacecode{d-1}] \circ \inject^{(d-1)} \\
        &= \grow_{d \leftarrow (d-1)} \circ \ecgadget[\surfacecode{d-1}] \circ \dots \circ \grow_{3 \leftarrow 2} \circ \ecgadget[\surfacecode{2}] \circ \grow_{2 \leftarrow 1}~.
    \end{align}
    This is a gadget for identity. The gadget that injects a given state \(\ket{\psi}\) is simply the circuit \(C\) followed by injection.
    \begin{align}
        \inject^{(d)}_{\ket{\psi}}
        &= \inject^{(d)} \circ C. 
    \end{align}
\end{construction}
\begin{proof}
    For a code distance \(s\), we analyze the gadget \(\grow_{s+1\leftarrow s} \circ \ecgadget[\surfacecode{s}]\). 
    Since the growth circuit from \cref{prop:surface-code-const-growth} has constant depth, by \cref{lemma:qldpc:constant-depth-gadget}, this is a friendly gadget that implements identity with input code type \(\sctype{s}{1/5}\) and output code type \(\sctype{s+1}{1/5}\), with bad fault paths \(\badfaults_s\). 

    We now define the bad fault paths \(\badfaults_{\inject}^{d}\). 
    Let \(\bar{d}\) be a absolute constant to be specified later.
    Let \(V\) be the set of all locations in the circuit \(\inject^{(\bar{d})}\). Define \(\badfaults_1\) to be all singletons of \(V\). 
    \begin{align}
        \badfaults_1 &= \{\{v\}: v\in V\}.
    \end{align}
    Since \(\bar{d}\) is a constant, the number of locations in \(V\) is also a constant. 
    For \(s\geq \bar{d}\), recall that \(\badfaults_s\) is defined by \cref{lemma:qldpc:constant-depth-gadget}. 
    We define \(\badfaults_{\inject}^{d}\) to be the sum of all of these fault paths where \(\mathcal{F}_s\) should be understood as subsets of locations of the \(s\)-th step in the growth sequence and \(\mathcal{F}_{1}\) as subsets of locations of the first \(\bar{d}\)-steps in the growth sequence.
    \begin{align}
        \badfaults_{\inject}^{d}
        &= \badfaults_1 \boxplus_{s=\bar{d}}^{d} \badfaults_s.
    \end{align}
    Intuitively, this definition enforces that the injection gadget suffers no faults until the surface code size reaches a large enough constant, after which the exponential error suppression of the EC gadgets will dominate and ensure that the weight enumerator is sufficiently bounded. 
 
    We proceed to the analysis. From \cref{lemma:qldpc:constant-depth-gadget}, there exists a polynomial \(\poly\) and absolute constants \(c_1, c_2\) such that the weight enumerator of \(\badfaults_s\) can be bounded as follows.
    \begin{align}
        \weightenum{\badfaults_s}{x} &\leq \poly(s)(c_1 x^{1/c_2})^{s}.
    \end{align}
    We want to restrict \(x\) and \(s\) such that the following inequalities hold
    \begin{align}\label{eq:sc-injection-bound-2}
         \poly(s)(c_1 x^{1/c_2})^{s} \leq 2^{-s}x^{\frac{s}{2c_2}}\leq 2^{-s}x. 
    \end{align}
    Note that the first inequality is equivalent to
    \begin{align}\label{eq:sc-injection-bound-1}
        \poly(s)(2c_1 x^{\frac{1}{2c_2}})^{s} \leq 1.
    \end{align}
    Let \(\epsilon_{*,\inject} = (\frac{1}{4c_1})^{2c_2}\). 
    For all \(x \in [0, \epsilon_{*,\inject})\), it holds that \(2c_1 x^{\frac{1}{2c_2}} < 1/2\). 
    Now let be \(\bar{d}\) an integer which is at least \(2c_2\), such that for all \(s \ge \bar{d}\), it holds that \(\poly(s)2^{-s} \le 1\).
    Then \cref{eq:sc-injection-bound-1} holds, which implies \cref{eq:sc-injection-bound-2}. 
    The weight enumerator can then be bounded using the preceding equations.
    \begin{align}
        \weightenum{\badfaults_{\inject}^{d}}{x}
        &\leq \weightenum{\badfaults_1}{x} + \sum_{s=\bar{d}}^{d} \weightenum{\badfaults_s}{x} \\
        &\leq |V|\cdot x + \sum_{s=\bar{d}}^{d} \poly(s)(c_1 x^{1/c_2})^{s} \\
        &\leq |V|\cdot x + \sum_{s=\bar{d}}^{d} 2^{-s}x \\
        &\leq (|V|+1) \cdot x.
    \end{align}
    The depth of the gadget is \(O(d^2)\) since every error correction gadget has depth bounded by \(O(d)\) and every growth circuit has constant depth. This completes the analysis of \(\inject^{(d)}\). 

    For a given state \(\ket{\psi}\) produced by a constant-sized circuit \(C\), let \(V_C\) be the set of locations of \(C\), and define \(\badfaults_C\) to be all singletons of \(V\). 
    The collection of bad fault paths for \(\inject^{(d)}_{\ket{\psi}}\) is 
    \begin{align}
        \badfaults_{\inject, \ket{\psi}}^{d}
        &= \badfaults_{\inject}^{d} \boxplus \badfaults_C,
    \end{align}
    and its weight enumerator is bounded by \((|V|+|V_C|+1)\cdot x\).
\end{proof}

In practice, it is better to grow the lattice all at once~\cite{li2015magic}, but the analysis of the lattice growth leads to some involved combinatorics.
Interested readers should refer to \cite{lodyga2015simple,strikis2023quantum}.

\subsubsection{Magic state distillation}
We will use the magic states \(\ket{\mathsf{T}} \equiv \mathsf{T}\ket{+}\).
\begin{fact}\label{fact:distillation-code}
    There exists a quantum CSS code \(\qcode[\mathrm{distill}]\) with parameters \(\dsl 15, 1, 3 \dsr\) such that the encoding map is implementable by a Clifford circuit and transversal \(\mathsf{T}\) applies logical \(\mathsf{T}\).
    That is, 
    \begin{align}
        \decoder[\qcode[\mathrm{distill}]] \mathsf{T}^{\otimes 15} \encoder[\qcode[\mathrm{distill}]] = \mathsf{T}~.
    \end{align}
\end{fact}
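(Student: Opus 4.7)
The plan is to exhibit the classical \([[15,1,3]]\) quantum Reed--Muller code (originally due to Knill--Laflamme--Zurek and used for magic state distillation by Bravyi--Kitaev~\cite{bravyi2005universal}) and verify each claim of the fact in turn. The code is CSS with \(X\)-type and \(Z\)-type stabilizers derived from the punctured Reed--Muller codes \(\mathrm{RM}(1,4)^{*}\) (a \([15,4,8]\) code) and its dual \(\mathrm{RM}(2,4)^{*}\) (a \([15,11,3]\) Hamming code). Concretely, I would take \(H_X\in\F^{10\times 15}\) to be a generator matrix for \(\mathrm{RM}(2,4)^{*}\) quotiented by the all-ones vector, and \(H_Z\in\F^{4\times 15}\) to be a generator matrix for the subcode of \(\mathrm{RM}(2,4)^{*}\) corresponding to \(\mathrm{RM}(1,4)^{*}\). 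The single logical \(Z\) is the all-ones vector \(1^{15}\), and the single logical \(X\) is any vector in \(\mathrm{RM}(2,4)^{*}\setminus\mathrm{RM}(1,4)^{*}\), representable by a weight-7 vector. Standard Reed--Muller computations then give code parameters \(\dsl 15,1,3\dsr\).

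Once the code is fixed, the first three conditions follow quickly. The code is CSS by construction, and so it admits a Clifford encoding circuit via the standard CSS encoder: initialize the \(k\) logical qubits plus \(n-k\) ancillas in \(\ket{0}\), apply Hadamards on the qubits corresponding to a basis for the \(X\)-stabilizer generators, and then apply the \(\CNOT\) network encoding the parity structure specified by \(H_X\) and \(H_Z\) (together with the chosen logical operator representatives). The resulting unitary \(M\) maps \(\ket{\psi}\ket{0^{n-k}}\) to the unique stabilizer codeword with logical content \(\ket{\psi}\), so the irreversible encoder \(\encoder\) is indeed Clifford-implementable; \(\decoder\) is obtained as its purification in the sense of~\cref{def:purified-decoder}.

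The nontrivial content is the transversal \(\mathsf{T}\) claim, which I would prove by verifying the \emph{triorthogonality} condition of the chosen generators. Writing \(H_X\) in terms of its rows \(g_1,\dots,g_{10}\) and letting \(\ell_X\in\F^{15}\) be the logical-\(X\) representative of weight 7, the condition to check is that for every row \(g_i\) the weight \(|g_i|\) is divisible by 8, for every pair \(g_i,g_j\) the overlap \(|g_i\wedge g_j|\) is divisible by 4, and for every triple \(g_i,g_j,g_k\) the overlap \(|g_i\wedge g_j\wedge g_k|\) is even; and that the same divisibility properties hold when any subset of rows is replaced by \(\ell_X\). These are precisely the weight enumerator properties of \(\mathrm{RM}(1,4)^{*}\) and its cosets in \(\mathrm{RM}(2,4)^{*}\), which can be verified by direct computation on the 15 columns (equivalently by noting that punctured first-order Reed--Muller codewords have weights in \(\{0,7,8,15\}\)). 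Given triorthogonality, the standard calculation (expanding \(\mathsf{T} = e^{-i\pi/8}(\cos(\pi/8)\mathsf{I} + i\sin(\pi/8)\mathsf{Z})\) and applying \(\mathsf{T}^{\otimes 15}\) to a generic codeword) shows that \(\mathsf{T}^{\otimes 15}\) maps the codespace to itself and acts on the logical qubit as \(\mathsf{T}\) (up to a global phase), with the divisibility conditions ensuring that the Clifford corrections produced by commuting \(\mathsf{T}^{\otimes 15}\) past each \(X\)-stabilizer generator all lie in the stabilizer group and hence act trivially.

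The main obstacle I expect is not conceptual but bookkeeping: one has to fix a consistent convention for \(\ket{\mathsf{T}}\), a specific logical representative \(\ell_X\), and the phase introduced by transversal \(\mathsf{T}\), since depending on choices the induced logical gate is \(\mathsf{T}\) or \(\mathsf{T}^{\dagger}\). The convention can always be fixed by applying an additional transversal \(\mathsf{S}^{\dagger}\) (a Clifford, hence harmless) if the direct transversal action yields \(\mathsf{T}^{\dagger}\). Since the statement of the fact allows any code meeting the specification, one simply absorbs this Clifford into the definition of \(\encoder\) and \(\decoder\), and the identity \(\decoder \mathsf{T}^{\otimes 15} \encoder = \mathsf{T}\) holds as stated.
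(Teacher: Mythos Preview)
Your proposal is correct and outlines the standard construction and verification of the \(\dsl 15,1,3\dsr\) punctured Reed--Muller code with transversal \(\mathsf{T}\). Note, however, that the paper does not actually prove this statement: it is stated as a \emph{Fact} without proof, implicitly deferring to the literature (the code is the one used in Bravyi--Kitaev magic state distillation~\cite{bravyi2005universal}). So there is nothing to compare against; your write-up simply supplies the argument that the paper omits, and it is the expected one.
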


\begin{fact}[\(\mathsf{T}\) teleportation circuit]\label{fact:teleportation-circuit}
    It is a standard result that \(\ket{\mathsf{T}}\) can be consumed by a Clifford circuit containing a classically-controlled \(\mathsf{SX}\) operation in order to implement \(\mathsf{T}\). For reference, see Section~10.6.2, Figure~10.25 of \cite{NielsenAndChuang}.
\end{fact}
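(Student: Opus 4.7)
The plan is to exhibit an explicit Clifford circuit that teleports a $\mathsf{T}$ gate onto an arbitrary data qubit using one copy of $\ket{\mathsf{T}} = \mathsf{T}\ket{+}$ as a resource, and then verify correctness by direct calculation. The circuit I would write down is the standard one: prepare $\ket{\mathsf{T}}$ on an ancilla, apply $\CNOT$ with the magic-state ancilla as control and the data qubit $\ket{\psi} = \alpha \ket{0} + \beta \ket{1}$ as target, measure the ancilla in the $\mathsf{Z}$ basis to obtain a bit $m \in \{0,1\}$, and then classically condition on $m = 1$ to apply $\PHAS\PAULIX$ to what is now the output register. Every operation other than the classically-controlled $\PHAS\PAULIX$ is manifestly Clifford, so the circuit satisfies the structural conditions claimed in the statement.

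Next, I would verify that the output is indeed $\mathsf{T}\ket{\psi}$. Expanding
\begin{align}
\ket{\psi}\otimes\ket{\mathsf{T}} = \frac{1}{\sqrt{2}}\Bigl(\alpha\ket{00} + \alpha e^{i\pi/4}\ket{01} + \beta\ket{10} + \beta e^{i\pi/4}\ket{11}\Bigr),
\end{align}
and applying the $\CNOT$ with the second qubit as control and the first as target, the state becomes
\begin{align}
\frac{1}{\sqrt{2}}\Bigl(\alpha\ket{0} + \beta e^{i\pi/4}\ket{1}\Bigr)\ket{0} + \frac{1}{\sqrt{2}}\Bigl(\beta\ket{0} + \alpha e^{i\pi/4}\ket{1}\Bigr)\ket{1}.
\end{align}
On the $m=0$ branch, the data register is already in $\mathsf{T}\ket{\psi}$ and no correction is needed. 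On the $m=1$ branch, applying $\PHAS\PAULIX$ sends $\beta\ket{0} + \alpha e^{i\pi/4}\ket{1}$ to $e^{i\pi/4}(\alpha\ket{0} + e^{i\pi/4}\beta\ket{1}) = e^{i\pi/4}\mathsf{T}\ket{\psi}$, which agrees with $\mathsf{T}\ket{\psi}$ up to a global phase.

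A cleaner and more conceptual alternative would be to invoke the gate-teleportation identity $\mathsf{T}\PAULIX\mathsf{T}^\dagger = e^{-i\pi/4}\PHAS\PAULIX$, together with the fact that post-selecting on the $\ket{0}$ outcome of a $\CNOT$-based teleportation trivially recovers the applied unitary. I would mention this as a remark so that the proof is not purely computational but also explains why $\PHAS\PAULIX$ is the correct Pauli-frame correction once a non-Clifford gate is teleported.

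There is no real obstacle in this proof; the only mildly delicate point is bookkeeping of the global phase $e^{i\pi/4}$ on the $m=1$ branch, which is irrelevant for the physical output state. The proof as sketched is essentially that of Section~10.6.2 of \cite{NielsenAndChuang}, and I would keep it to a short verification rather than develop any new machinery.
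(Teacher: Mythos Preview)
The paper does not prove this fact at all; it is stated as a \emph{Fact} with a bare citation to Nielsen--Chuang, so your explicit verification already goes beyond what the paper provides and is the standard argument.

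There is, however, a bookkeeping slip in your circuit description. With the ordering $\ket{\psi}\otimes\ket{\mathsf{T}}$ and $\CNOT$ controlled on the ancilla (second factor) targeting the data qubit (first factor), the post-$\CNOT$ state is actually
\[
\tfrac{1}{\sqrt{2}}\bigl(\alpha\ket{0}+\beta\ket{1}\bigr)\ket{0}+\tfrac{e^{i\pi/4}}{\sqrt{2}}\bigl(\beta\ket{0}+\alpha\ket{1}\bigr)\ket{1},
\]
so measuring the \emph{ancilla} would give $\ket{\psi}$ on the $m=0$ branch, not $\mathsf{T}\ket{\psi}$. The state you display and the $\PHAS\PAULIX$ correction you verify instead match the standard circuit in which one measures the original \emph{data} qubit and the output lives on the ancilla wire (equivalently, take the tensor order $\ket{\mathsf{T}}\otimes\ket{\psi}$, $\CNOT$ from the first wire to the second, measure the second). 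Once the description of which qubit is measured is corrected, your computation and conclusion are right.
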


\begin{proposition}[Single-level state distillation]\label{prop:single-level-distill}
    Let \(\mathcal{S}_{(2,15)}\) be all size-2 subsets of the set \([15]\).
    There exists a constant-sized Clifford circuit with measurements and classical processing acting on a \(15\)-qubit state such that when the input is \(\mathcal{S}_{(2,15)}\)-deviated from \(\ketbra{\mathsf{T}}^{\otimes 15}\), the output is proportional to (trivially deviated from) \(\ketbra{\mathsf{T}}\).
\end{proposition}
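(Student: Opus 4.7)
The plan is to implement the Bravyi--Kitaev 15-to-1 magic-state-distillation protocol built on the distillation code \(\qcode[\mathrm{distill}]\) from Fact~\ref{fact:distillation-code}. The Clifford circuit with measurements consists of the following steps: apply the inverse encoding unitary \(U_{\qcode[\mathrm{distill}]}^{\dagger}\) to the 15 input qubits, measure the 14 ancilla output qubits in the \(Z\) basis (producing a syndrome), and return the remaining qubit if and only if the syndrome is trivial; on a nontrivial syndrome output the zero operator (which is trivially proportional to \(\ketbra{T}\)).

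First I would establish the noiseless decoding identity
\[
\Pi_0\,\ketbra{T}^{\otimes 15}\,\Pi_0 \;=\; \lambda\,\encoder[\qcode[\mathrm{distill}]]\!\left(\ketbra{T}\right)
\]
for some scalar \(\lambda\), where \(\Pi_0\) is the projector onto the codespace of \(\qcode[\mathrm{distill}]\). Writing \(\ket{T}^{\otimes 15}=T^{\otimes 15}\ket{+}^{\otimes 15}\) and using that \(T^{\otimes 15}\) commutes with \(\Pi_0\) (because it implements a logical \(T\) by Fact~\ref{fact:distillation-code}), this reduces to computing \(\Pi_0 \ket{+}^{\otimes 15}\). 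Because \(\ket{+}^{\otimes 15}\) is already fixed by every \(X\)-type stabilizer, only the \(Z\)-type stabilizer projection contributes; a short explicit computation using the CSS structure of \(\qcode[\mathrm{distill}]\) shows that \(\Pi_0\ket{+}^{\otimes 15}\) is proportional to a logical \(X\)-basis state of the code, and the identity follows after applying the transversal \(T\).

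Given the identity, I would analyze the effect of the weight-\(\le 2\) deviation \(\mathcal{E}\). Decompose \(\mathcal{E}\) in the Pauli basis, \(\mathcal{E}(\sigma)=\sum_{\mu,\nu}\alpha_{\mu,\nu}\,K_\mu\,\sigma\,K_\nu'\), with \(K_\mu,K_\nu'\) Pauli operators supported on the same size-\(\le 2\) set. Sandwiching with \(\Pi_0\) and using the standard commutation \(\Pi_0 K = K\,\Pi_{\xi(K)}\) (where \(\xi\) is the syndrome map) shows that on the accept branch only terms with \(\xi(K_\mu)=\xi(K_\nu')=0\) survive. The distance-3 property of \(\qcode[\mathrm{distill}]\) then forces every surviving \(K_\mu,K_\nu'\) to lie in the stabilizer group, hence to act trivially on the codespace, so each surviving term reduces to a scalar multiple of \(\Pi_0\ketbra{T}^{\otimes 15}\Pi_0\). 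Applying \(U_{\qcode[\mathrm{distill}]}^{\dagger}\) then places this state in the output register as a scalar multiple of \(\ketbra{0}^{\otimes 14}\otimes\ketbra{T}\); post-selecting on the trivial measurement outcome yields an output proportional to \(\ketbra{T}\), as required.

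The main obstacle is the noiseless decoding identity above: it is an explicit Clifford computation specific to the punctured Reed--Muller [[15,1,3]] code and is where the algebraic content of 15-to-1 distillation resides. Once it is in hand, the noisy analysis is a direct application of the Pauli-decomposition pattern used repeatedly in the paper (cf.\ \cref{lemma:deterministic-errors}); the only novelty is that the input state \(\ket{T}^{\otimes 15}\) is not itself a codeword, which is handled by projecting with \(\Pi_0\) before invoking the decoherence argument rather than applying \cref{lemma:deterministic-errors} as a black box.
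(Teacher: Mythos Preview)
Your construction differs from the paper's: you project the noisy $\ket{\mathsf T}^{\otimes 15}$ directly onto the codespace and post-select on trivial syndrome, whereas the paper prepares a fresh logical $\ket{+}$ in $\qcode[\mathrm{distill}]$, consumes the fifteen inputs as teleportation resources to apply $\mathsf T$ transversally, and then performs syndrome extraction with a minimum-weight correction. The paper's detour buys a clean analysis: teleportation converts the single-qubit input error into a single-qubit error on the \emph{codeword} $\encoder[\mathrm{distill}](\ketbra{\mathsf T})$, and on a codeword the argument you sketch (only trivial-syndrome Kraus terms survive the $\Pi_0$ sandwich, and weight-$\le 1$ trivial-syndrome Paulis are stabilizers by the distance bound) goes through verbatim.

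Your circuit does produce the right output, but your proof of it has a real gap. You assert that $\Pi_0 K=K\,\Pi_{\xi(K)}$ forces all terms with $\xi(K_\mu)\neq 0$ to vanish. That inference needs the inner state to lie in the codespace, and $\ket{\mathsf T}^{\otimes 15}$ does not: it has nonzero component in every $Z$-syndrome sector. Concretely, $\xi(X_i)\neq 0$ yet $\Pi_0 X_i\ket{\mathsf T}^{\otimes 15}=\tfrac{1}{\sqrt 2}\,\Pi_0\ket{\mathsf T}^{\otimes 15}\neq 0$; the term survives, it merely happens to remain proportional to the encoded $\ket{\mathsf T}$. A correct argument for your circuit instead observes that $\{\ket{\mathsf T},\,Z\ket{\mathsf T}\}$ is an orthonormal basis of $\mathbb C^2$, so any single-qubit superoperator applied to $\ketbra{\mathsf T}^{\otimes 15}$ expands as $\sum_{a,b\in\{0,1\}}c_{ab}\,Z_i^{a}\,\ketbra{\mathsf T}^{\otimes 15}\,Z_i^{b}$, and then uses only the identity $\Pi_0 Z_i\ket{\mathsf T}^{\otimes 15}=0$ (pull $\mathsf T^{\otimes 15}$ through $\Pi_0$ and note that $\ket{+}^{\otimes 15}$ has trivial $X$-syndrome while $Z_i$ does not).
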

\begin{construction}
    The circuit consists of four steps.
    \begin{enumerate}
        \item \(\mathsf{PREP}\): Prepare the logical \(\ket{+}\) state of the code \(\qcode[\mathrm{distill}]\) from \cref{fact:distillation-code}.
        \item \(\mathsf{TP}\): Use the teleportation circuit from \cref{fact:teleportation-circuit} to apply \(\mathsf{T}\) to the logical \(\ket{+}\) state.
        \item \(\mathsf{EC}_{\mathrm{distill}}\): Perform syndrome extraction and apply a minimum weight correction.
        \item Apply \(\decoder[\mathrm{distill}]\) and output the remaining qubit.
    \end{enumerate}
\end{construction}
\begin{proof}
    We assumed that the circuit is executed without fault and the input is \(\errorin(\ketbra{\mathsf{T}}^{\otimes 15})\), where \(\errorin\) is supported on a single-qubit.
    Consider the state 
    \begin{align}
        \sigma &= \mathsf{TP} \circ \mathsf{PREP} \circ \errorin(\ketbra{\mathsf{T}}^{\otimes 15}) 
    \end{align}
    The error \(\errorin\) is propagated to an error \(\error\) acting on one of the qubits of \(\qcode[\mathrm{distill}]\), while the remaining 14 qubits have the correct \(\mathsf{T}\) gates applied. 
    \begin{align}
        \sigma &= (\error\otimes \mathsf{T}^{\otimes 14}) \circ \encoder[\mathrm{distill}](\ketbra{+}) \\
        &= (\error\circ \mathsf{T}^{\dagger})\circ \mathsf{T}^{\otimes 15} \circ \encoder[\mathrm{distill}](\ketbra{+}) \\
        &= (\error\circ \mathsf{T}^{\dagger}) \circ \encoder[\mathrm{distill}](\ketbra{\mathsf{T}})
    \end{align}
    Upon application of \(\mathsf{EC}_{\mathrm{distill}}\), by the decoherence of errors lemma (\cref{lemma:deterministic-errors}), the error superoperator \(\error\circ \mathsf{T}^{\dagger}\) decomposes into a linear combination of (diagonal) Pauli errors supported on the affected qubit, which is then corrected by the minimum weight correction. 
    \begin{align}
        \mathsf{EC}_{\mathrm{distill}}(\sigma)
        &\propto \encoder[\mathrm{distill}](\ketbra{\mathsf{T}})
    \end{align}
    Therefore applying \(\decoder[\mathrm{distill}]\) gives us the desired \(\ketbra{\mathsf{T}}\) state. 
\end{proof}

\begin{lemma}[Logical level state distillation gadget \cite{bravyi2005universal}]\label{lemma:surface:logical-t-prep}
    Let \(\mathsf{PREPARE}_{\ket{T}}\) be the superoperator that outputs a \(\ket{T}\) state.
    There exists a constant \(\epsilon_{*,\mathsf{PREPARE}} \in (0,1)\) such that for \(d \in \mathbb{N}\), there exists a family of gadgets \(\mathsf{\overline{PREPARE}}^{(d)}_{\ket{T}}\) for \(\mathsf{PREPARE}_{\ket{T}}\) that has output code type \(\sctype{d}{1/5}\) and a family of bad fault paths \(\mathcal{F}_{d}^{\mathsf{\overline{PREPARE}}_{\ket{T}}}\) satisfying the upper bound
    \begin{align}
        \weightenum{\mathcal{F}_{d}^{\mathsf{\overline{PREPARE}}_{\ket{T}}}}{x} = O\left(\poly(d)\cdot e^{-\beta d}\right)
    \end{align}
    on \(x \in [0,\epsilon_{*,\mathsf{PREPARE}}]\).

    \(\mathsf{\overline{PREPARE}}^{(d)}\) has width \(O(d^{2+\log_2(15)})\) and depth \(O(d^2)\).
\end{lemma}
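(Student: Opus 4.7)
The plan is to recursively distill physical-level noisy \(\ket{T}\) states via a tree-structured circuit whose internal nodes apply the constant-sized distillation routine of~\cref{prop:single-level-distill} at the logical level using the transversal Clifford and measurement gadgets of~\cref{lemma:surface:transversal-gates}. First, I fix a level count \(k = \Theta(\log d)\) and prepare \(15^k\) noisy logical \(\ket{T}\) states at the leaves via independent copies of the injection gadget \(\inject^{(d)}_{\ket{T}}\) from~\cref{lemma:state-injection}. At each internal node of a balanced 15-ary tree of depth \(k\), the 15 child \(\ket{T}\) outputs feed into a constant-sized logical circuit implementing the four steps of~\cref{prop:single-level-distill}: a Clifford encoder producing the logical \(\ket{+}\) of \(\qcode[\mathrm{distill}]\), the \(\mathsf{T}\)-teleportation step (Clifford once the \(\ket{T}\) resources are given), the syndrome-extraction-and-correction block for \(\qcode[\mathrm{distill}]\), and the Clifford unencoder \(\decoder[\mathrm{distill}]\). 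Every logical operation is a Clifford, classically-controlled Clifford, single-qubit Pauli measurement, or classical post-processing, so each is realized at the surface-code level by a friendly gadget of input/output code type \(\sctype{d}{1/5}\) via~\cref{lemma:surface:transversal-gates} and~\cref{lemma:qldpc:initialization}. Sequential and parallel composition through~\cref{lemma:gadget-composition} assembles the entire tree into a single gadget simulating \(\mathsf{PREPARE}_{\ket{T}}\). The width is dominated by the \(15^k\) simultaneous leaf patches, giving \(O(15^k d^2) = O(d^{2+\log_2 15})\); the depth is the injection depth \(O(d^2)\) plus \(k\) logical layers of depth \(O(d)\), totaling \(O(d^2)\).

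The fault analysis is inductive on tree depth. For a subtree of depth \(l\), let \(\mathcal{F}_l\) denote a family of bad fault paths such that, whenever the physical fault is \(\mathcal{F}_l\)-avoiding, the subtree's surface-code output is \(\sctype{d}{1/5}\)-deviated from \(\encoder[\surfacecode{d}](\ketbra{T})\). At the leaves, \cref{lemma:state-injection} yields \(\mathcal{F}_0 = \badfaults_{\inject, \ket{T}}^{d}\) with \(\weightenum{\mathcal{F}_0}{x} \le c\,x\). For the inductive step I apply~\cref{lemma:level-reduction} at each node: its contraction map groups the physical locations of each child subtree into a single logical-level vertex whose ideal gate is \(\mathsf{PREPARE}_{\ket{T}}\) with bad fault paths \(\mathcal{F}_l\), and groups the surface-code Clifford gadgets of the distillation circuit into logical-level vertices with bad fault paths \(\mathcal{F}_{\mathrm{Clifford}}\) bounded by \(e^{-\beta d}\poly(d)\). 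At the logical level, \cref{prop:single-level-distill} guarantees that the output is the correct \(\ket{T}\) whenever at most one of the 15 inputs differs from \(\ket{T}\) and no internal Clifford gadget fails. Hence the aggregate bad fault paths at level \(l+1\) can be taken as
\begin{align}
    \mathcal{F}_{l+1} \;:=\; \bigl(\mathcal{S}_{(2,15)} \bullet \mathcal{F}_l\bigr) \;\boxplus\; \mathcal{F}_{\mathrm{Clifford}},
\end{align}
where \(\mathcal{S}_{(2,15)}\) is the collection of size-\(2\) subsets of \([15]\). Using \(\weightenum{\mathcal{S}_{(2,15)}}{y} = \binom{15}{2}y^{2} = 105\,y^{2}\) together with~\cref{lemma:enumerator-ring} and~\cref{lemma:composition-upper-bound}, I obtain the scalar recursion
\begin{align}
    \weightenum{\mathcal{F}_{l+1}}{x} \;\le\; 105\,\weightenum{\mathcal{F}_l}{x}^{2} \;+\; e^{-\beta d}\poly(d).
\end{align}

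A short induction, valid for \(x \in [0,\epsilon_{*,\mathsf{PREPARE}}]\) with \(\epsilon_{*,\mathsf{PREPARE}}\) chosen so that \(210\,c\,\epsilon_{*,\mathsf{PREPARE}} < 1\) and \(d\) large enough to absorb the additive cross terms, yields
\begin{align}
    \weightenum{\mathcal{F}_l}{x} \;\le\; \frac{(210\,c\,x)^{2^{l}}}{210} \;+\; l\cdot e^{-\beta d}\poly(d).
\end{align}
Setting \(k := \bigl\lceil \log_2\!\bigl(\beta d / \lvert\log(210\,c\,\epsilon_{*,\mathsf{PREPARE}})\rvert\bigr)\bigr\rceil = \Theta(\log d)\) forces the first term below \(e^{-\beta d}\), while the second contributes only \(\poly(d)\cdot e^{-\beta d}\). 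Taking \(\mathcal{F}_d^{\mathsf{\overline{PREPARE}}_{\ket{T}}} := \mathcal{F}_k\) delivers the claimed bound, and the width/depth estimates from the construction match the lemma's asymptotics.

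The principal technical obstacle is the clean iteration of~\cref{lemma:level-reduction} across \(\Theta(\log d)\) rounds of distillation while respecting its structural requirements (in particular, the trivial-output and no-bad-fault-path-at-output conditions): each internal node must genuinely expose its children as logical \(\ket{T}\) states carrying only a classical good/bad label, so that the next level's invocation of~\cref{prop:single-level-distill} is valid. This closure crucially leverages the friendliness of every constituent gadget — the injection gadget (built recursively atop the friendly EC gadget of~\cref{lemma:qldpc:ec-gadget}) and each surface-code Clifford/measurement gadget of~\cref{lemma:surface:transversal-gates} — which ensures that faults inside a failing gadget decouple the logical data from the syndrome-side fallout, exposing only a bounded logical Pauli error to the parent node. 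Granted friendliness, the weight-enumerator algebra of \(\boxplus\) and \(\bullet\) from~\cref{sec:weight-enumerators} handles the rest of the bookkeeping cleanly.
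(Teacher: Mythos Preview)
Your construction (15-ary tree of depth $\ell=\Theta(\log d)$, injection at the leaves, logical distillation at each internal node via transversal Clifford gadgets) and the width/depth accounting match the paper's exactly. The substantive difference is in how the bad fault paths and the correctness argument are organized.

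The paper does \emph{not} define the bad fault paths recursively. It takes $\mathcal{F}=\mathcal{F}_1\boxplus\mathcal{F}_2$ where $\mathcal{F}_2$ is simply the $\boxplus$-sum of the bad fault paths of \emph{every} transversal gadget across \emph{all} $\ell$ levels (bounded by $O(\ell\,15^\ell)\cdot\poly(d)\,e^{-\beta d}$ via one union bound), and $\mathcal{F}_1=\mathcal{S}_{(2,15)}^{\bullet\ell}\bullet\badfaults_{\inject}$ lives only at the leaves, giving the closed form $\weightenum{\mathcal{F}_1}{x}\le(cx)^{2^\ell}$ with no recursion and no cross terms. Correctness is then argued in two decoupled passes: since the fault is $\mathcal{F}_2$-avoiding, every Clifford gadget simulates perfectly, so after level $i$ the physical state is $\sctype{d}{1/5}$-deviated on each block from $\encoder(\bar\rho_i)$ for a logical state $\bar\rho_i$ obtained by running the \emph{noiseless} distillation circuit on the (possibly bad) injected states; then \cref{prop:single-level-distill} is iterated purely on the logical states $\bar\rho_i$. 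Your recursive bookkeeping is a legitimate alternative, but it manufactures the cross terms you then have to hand-wave away.

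Your invocation of \cref{lemma:level-reduction} is a genuine misstep: that theorem requires the circuit's input and output types (under $\spec$) to be trivial and the output-touching gadgets to have empty bad fault paths, neither of which holds at an internal distillation node whose output is a $\sctype{d}{1/5}$ block. What actually carries the correctness---and what the paper uses---is just the simulation property of the Clifford gadgets from \cref{def:gadget} together with friendliness of the EC gadget to snap a failed child's arbitrary output back into the codespace, so that at the logical level the distillation sees an $\mathcal{S}_{(2,15)}$-deviated input and \cref{prop:single-level-distill} applies. Your final paragraph already identifies friendliness as the crux; drop level reduction and make that argument directly.
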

\nomenclature[F, 12]{\(\mathsf{PREPARE}_{\ket{T}}\)}{Logical level state distillation gadget, which prepares a \(\ket{T}\) state by repeated rounds of 15-to-1 distillation.}
\begin{construction}
    We fix a constant \(\ell = \lceil \log_2 \beta d \rceil\) corresponding to the number of rounds of distillation.
    Run the state injection gadget \(\inject^{(d)}_{\ket{\mathsf{T}}}\) (\cref{lemma:state-injection}) for the circuit \(\mathsf{T}\ket{+}\) in parallel \(N_0=15^\ell\) times followed by the error correction gadget\footnote{The EC gadget is not necessary, but we perform it for conceptual clarity.} \(\ecgadget[\surfacecode{d}]\) to obtain the state \(\rho_0\) which is a noisy copy of \(\encoder[\surfacecode{d}](\ketbra{\mathsf{T}})^{\otimes N_0}\).
    Let \(\mathsf{DISTILL}\) be the distillation circuit for \(\ket{T}\) (\cref{prop:single-level-distill}) where every gate is replaced by \(\surfacecode{d}\) transversal gate gadgets (\cref{lemma:surface:transversal-gates}).
    Perform \(\ell\) rounds of the following procedure to obtain the sequence of states \(\rho_1, \dots \rho_\ell\):
    In round \(i \in [\ell]\), execute \(\mathsf{DISTILL}\) \(15^{\ell-i}\) times in parallel on \(\rho_{i-1}\) to obtain \(\rho_i\).
    It follows that \(\rho_i\) is \(N_{i} = 15^{\ell-i-1}\) noisy copies of \(\encoder[\surfacecode{d}](\ketbra{\mathsf{T}})\).
    Output \(\rho_\ell\).
\end{construction}
\begin{proof}
    \(\inject^{(d)}_{\ket{\mathsf{T}}}\) has depth \(O(d^2)\) while the transversal gate gadgets have depth \(O(d)\). There are \(O(\ell)\) rounds of distillation, each one having \(O(d)\) depth, so the overall depth is \(O(d^2 + \ell \cdot d) = O(d^2)\).
    Overall there are at most \(O(15^{\ell})\) gadgets at any time, each of width \(O(d^2)\), so the total width is \(O\left(d^{2+\log_2(15)}\right)\).

    \paragraph{Weight enumerators}
    Let \(\Omega_{\inject}\) be the set of \(\inject^{(d)}_{\ket{\mathsf{T}}}\) gadgets and \(\Omega_{\mathsf{TRANSVERSAL}}\) the set of all error correction and transversal gate gadgets in the distillation steps (that is, including the error correction gadget after the distillation gadget but excluding any error correction gadgets within the injection gadgets).
    Let \(\Omega_{\mathsf{DISTILL}}\) be all instances of the distillation circuit \(\mathsf{DISTILL}\).
    The gadget will have two families of bad fault paths corresponding to failure of too many injection gadgets \(\badfaults_1\) and failure of a transversal gate of the distillation circuit \(\badfaults_2\), respectively.

    \(\badfaults_2\) will be the sum of the families of bad faults associated with each transversal gate gadget.
    In the following expression, we use \(\badfaults_{\qGate}\) to mean the family of bad fault paths associated with the transversal gate gadgets \(\qGate \in \Omega_{\mathsf{TRANSVERSAL}}\).
    \begin{align}
        \badfaults_2 = \wtsum_{\qGate \in \Omega_{\mathsf{TRANSVERSAL}}} \badfaults_{\qGate}
    \end{align}
    There are \(O(\ell~ 15^\ell)\) transversal gate gadgets, so utilizing the bound in \cref{lemma:surface:transversal-gates}, on values of \(x \in [0,\epsilon_{*,\mathsf{TRANSVERSAL}}]\), the weight enumerator of \(\badfaults_2\) obeys the bound
    \begin{align}
        \weightenum{\badfaults_2}{x} \le O(\ell~15^\ell~\poly(d)~e^{-\beta d})
    \end{align}
    Let \(\mathcal{S}_{(2,15)}\) be all size-2 subsets of \([15]\).
    For \(i \in \{0, \dots, \ell\}\), the code blocks of \(\rho_i\) are in bijection with the \((\ell-i)\)-fold Cartesian product \([15]^{\times (\ell-i)}\).
    For conciseness, we define \([15]^{\times 0} \equiv \{\varnothing\}\) (singleton set) and \(() \in \{\varnothing\}\) (empty tuple).
    The distillation circuits \(\Omega_{\mathsf{DISTILL}}\) are in bijection with \(\{\varnothing\} \cup [15] \cup [15]^{\times 2} \cup \dots [15]^{\ell-1}\) such that the distillation circuit at coordinates \((i_1, \dots, i_{\ell-i}) \in [15]^{\times (\ell-i)}\) consumes the code blocks with coordinates \((i_1, \dots, i_{\ell-i}) \times [15]\).
    Let \(\mathcal{S} \subseteq P([15]^{\times \ell})\) be the set (recall the definition of \(\bullet\) from \cref{def:composition})
    \begin{align}
        \mathcal{S} = \underbrace{\mathcal{S}_{(2,15)} \bullet \dots \bullet \mathcal{S}_{(2,15)}}_{\ell \text{ times}}~.
    \end{align}
    That is, for an \(\mathcal{S}\)-avoiding set \(X\subseteq [15]^{\times \ell}\) and every \(i \in \{0, \dots, \ell\}\), \(X\) is \((\mathcal{S}_{(2,15)})^{\bullet i}\)-avoiding on every \(i\)-rectangle \((i_1, \dots i_{(\ell-i)}) \times [15]^{\times i} \subseteq [15]^{\times \ell}\) except for an \((\mathcal{S}_{(2,15)})^{\bullet(\ell-i)}\)-avoiding subset \(I \subseteq [15]^{\times (\ell-i)}\) of \(i\)-rectangles.

    Let \(\badfaults_{\inject}\) be the bad fault paths of the injection gadget \(\inject^{(d)}_{\ket{\mathsf{T}}}\).
    We will define \(\badfaults_1 \subseteq \Omega_{\inject}\) as the fault paths that lead to the failure of a set of injection gadgets that is not \(\mathcal{S}\)-avoiding.
    \begin{align}
        \badfaults_1 = \mathcal{S} \bullet \badfaults_{\inject}
    \end{align}
    Using the composition evaluation formula (\cref{lemma:composition-upper-bound}) and the simple weight enumerator \(\weightenum{\mathcal{S}_{(2,15)}}{x} = 105 x^2\), a short calculation\footnote{For a function \(f(x) = c\cdot x^2\), the \(\ell\)-fold composition \(f^{(\ell)}(x)\) obeys the recursion \(f^{(\ell)}(x) = \left(\sqrt{c}\cdot f^{(\ell-1)}(x)\right)^2\) and has the closed form \(f^{(\ell)}(x) = c^{2^\ell - 1}x^{2^\ell}\).} shows that the weight enumerator of \(\mathcal{S}\) is
    \begin{align}
        \weightenum{\mathcal{S}}{x} = (105)^{2^\ell - 1} x^{2^\ell}~.
    \end{align}
    Thus, on \(x \in [0,\epsilon_{*,\inject}]\), the weight enumerator of \(\badfaults_1\) can be bounded as
    \begin{align}
        \weightenum{\badfaults_1}{x} &\le \weightenum{\mathcal{S}}{\weightenum{ \badfaults_{\inject}}{x}} \le  (c\cdot x)^{2^\ell}
    \end{align}
    for some constant \(c > 0\) that depends on the absolute constant in the weight enumerator upper bound of \(\badfaults_{\inject}\) (\cref{lemma:state-injection}).
    Thus, there exists a constant 
    \begin{align}
        \epsilon_{*,\mathsf{PREPARE}} = \min\left(\epsilon_{*,\inject}, \frac{1}{e\cdot c}, \epsilon_{*,\mathsf{TRANSVERSAL}}\right)
    \end{align}
    such that on \(x\in [0,\epsilon_{*,\mathsf{PREPARE}}]\) the weight enumerator of \(\badfaults_1\) is bounded as
    \begin{align}
        \weightenum{\badfaults_1}{x} \le e^{-2^\ell}~.
    \end{align}

    The family of bad fault paths for the gadget is the sum
    \begin{align}
        \badfaults^{d}_{\mathsf{\overline{PREPARE}}_{\ket{T}}} = \badfaults_1 \wtsum \badfaults_2~.
    \end{align}
    Using the choice of \(\ell = \lceil \log_2 \beta d \rceil\), for \(x \in [0,\epsilon_{*,\mathsf{PREPARE}}]\) the weight enumerator is bounded as
    \begin{align}
        \weightenum{\badfaults^{d}_{\mathsf{\overline{PREPARE}}_{\ket{T}}}}{x} &= O\left(\poly(d,\ell)~15^\ell~e^{-\beta d}+ e^{-2^\ell}\right) \\
        &= O\left(\poly(d) e^{-\beta d}\right)~.
    \end{align}

    \paragraph{Correctness}
    It remains to show that the output satisfies the code type \(\sctype{d}{1/5}\) when the fault path is \(\badfaults^{d}_{\mathsf{\overline{PREPARE}}_{\ket{T}}}\)-avoiding.
    Let \(\fault\) be a \(\badfaults^{d}_{\mathsf{\overline{PREPARE}}_{\ket{T}}}\)-avoiding Pauli fault.
    Recall that \(\rho_0\) is the state after state injection and error correction \((\ecgadget[\surfacecode{d}] \circ \inject_{\ket{T}}^{(d)})^{\otimes N_0} [\fault]\).

    From the construction of \(\badfaults^{d}_{\mathsf{\overline{PREPARE}}_{\ket{T}}}\), \(\fault\) is \(\badfaults_1\)-avoiding and \(\badfaults_2\)-avoiding.
    It follows that there is an \(\mathcal{S}\)-avoiding set \(B \subseteq \Omega_{\inject} \equiv [15]^\ell\) of \(\inject^{(d)}_{\ket{\mathsf{T}}}\) gadgets that have an \(\mathcal{F}_{\inject}\)-avoiding fault path.
    The state after the injection gadgets is therefore \(\mathcal{S}\bullet \sctype{d}{1/5}\)-deviated from \(\encoder[\surfacecode{d}](\ketbra{\mathsf{T}}^{\otimes N_0})\).
    Using the friendliness property\footnote{Informally, the blocks that are in the \(\mathcal{S}\)-avoiding set are arbitrarily damaged and are corrected to some arbitrary codestate.} of the error correction gadgets, it follows that \(\rho_0\) is \(\sctype{d}{1/5}\)-deviated on all blocks from a state \(\encoder[\surfacecode{d}](\bar{\rho}_0)\) where \(\bar{\rho_0}\) is \(\mathcal{S}\)-deviated from \(\ketbra{\mathsf{T}}^{\otimes{N_0}}\).
    \begin{align}
        \rho_0 &\precapprox_{\sctype{d}{1/5}} \encoder[\surfacecode{d}](\bar{\rho}_0) \\
        \bar{\rho}_0 &\precapprox_{\mathcal{S}} \ketbra{\mathsf{T}}^{\otimes{N_0}}
    \end{align}
    Using the property that \(\mathcal{F}_2\) is the sum of the bad fault paths for each transversal gate gadget of \(\Omega_{\mathsf{DISTILL}}\) and applying the gate the gadget definition (\cref{def:gadget}), for each distillation level \(i \in [\ell]\), \(\rho_i\) is \(\sctype{d}{1/5}\)-deviated on each block from a state \(\encoder[\surfacecode{d}](\bar{\rho}_i)\) where \(\bar{\rho}_i\) is the result of applying \(i\) layers of the distillation protocol \cref{prop:single-level-distill} recursively to \(\bar{\rho}_0\).

    We will now apply the distillation property of \cref{prop:single-level-distill} inductively to establish that each \(\bar{\rho}_i\) is deviated from a tensor product of \(\ketbra{\mathsf{T}}\).
    Consider layer-\(i\) of executions of the distillation protocol,
    \begin{align}
        \Omega_{\mathsf{DISTILL}}^{(i)} \equiv [15]^{\times (\ell-i)} \subseteq \Omega_{\mathsf{DISTILL}}.
    \end{align}

    \(\Omega_{\mathsf{DISTILL}}^{(i-1)}\) indexes the input qubits which are grouped together in sets labeled by indices in \(\Omega_{\mathsf{DISTILL}}^{(i)}\) (see construction).
    Suppose that there exists a \(\left(\mathcal{S}_{(2,15)}\right)^{\bullet (\ell-i)}\)-avoiding subset \(I \subseteq \Omega_{\mathsf{DISTILL}}^{(i)}\) such that for each collection of inputs \(A = (j_1, \dots, j_{\ell-i})\times [15] \in \Omega_{\mathsf{DISTILL}}^{(i-1)}\), either \(\bar{\rho}_{i-1}\) is \(\mathcal{S}_{2}^{(15)}\)-deviated from \(\ketbra{\mathsf{T}}^{\otimes 15}\) on \(A\),\footnote{More accurately, \(\bar{\rho}_{i-1}\) is \(\mathcal{S}_{2}^{(15)}\)-deviated from a state \(\ketbra{\mathsf{T}}^{\otimes 15} \otimes \sigma\) where \(\sigma\) is an arbitrary state of the complement \(A^{c}\) and the deviation superoperator may act arbitrarily on \(\sigma\). Here, \(\sigma\) should be considered as an input to an environment circuit.} or \((j_1, \dots, j_{\ell-i})\in I\).
    Then, applying \cref{prop:single-level-distill} to every distillation protocol execution in 
    \begin{align}
        I^c \equiv \Omega_{\mathsf{DISTILL}}^{(i)}\setminus I
    \end{align}
    implies that \(\bar{\rho}_{i}\) is \(\ketbra{\mathsf{T}}^{\otimes N_i}\) except for a superoperator supported on \(I\).
    In other words, \(\bar{\rho}_{i}\) is \(\left(\mathcal{S}_{(2,15)}\right)^{\bullet (\ell-i)}\)-deviated from \(\ketbra{\mathsf{T}}^{\otimes N_i}\).
    Since \(\mathcal{S} \equiv \left(\mathcal{S}_{(2,15)}\right)^{\bullet (\ell)}\), the base case is satisfied and \(\left(\mathcal{S}_{(2,15)}\right)^{\bullet 0} \equiv \{\{1\}\}\) so \(\bar{\rho}_\ell\) is proportional to \(\ketbra{\mathsf{T}}\).
    The output state \(\rho_\ell\) was previously shown to be \(\sctype{d}{1/5}\)-deviated from \(\encoder[\surfacecode{d}](\bar{\rho}_\ell)\), so this completes the proof.
\end{proof}

\subsubsection{Assembly}
\begin{theorem}[Threshold theorem for surface code quantum computation]\label{thm:surface-code-threshold}
    There exists a constant \(\epsilon_* \in (0,1)\) such that, for any Clifford+T circuit \(C\) of width \(W\) and depth \(D\) and any \(\epsilon \in (0,1)\), there exists a circuit \(\overline{C}\) that is a fault-tolerant gadget for \(C\) with bad fault paths \(\mathcal{F}\).
    When \(C\) is a circuit with only classical inputs and classical outputs, \(\overline{C}\) has trivial input and output types.
    The type of any quantum inputs and outputs is \(\sctype{d}{1/5}\) where \(d = O(\log(V) \mathrm{polyloglog}(V))\).

    Let \(V = \frac{WD}{\epsilon}\).
    Then, \(\overline{C}\) has width \(\overline{W}\) and depth \(\overline{D}\) satisfying the bounds
    \begin{align}
        \overline{W} &= O(W \log^{5.91}(V) \mathrm{polyloglog}(V)) \\
        \overline{D} &= O(D \log^2(V) \mathrm{polyloglog}(V))~.
    \end{align}
    On \(x \in [0,\epsilon_*]\), the weight enumerator of \(\mathcal{F}\) satisfies the bound
    \begin{align}
        \weightenum{\mathcal{F}}{x} \le \epsilon~.
    \end{align}
\end{theorem}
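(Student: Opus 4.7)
The plan is to construct $\overline{C}$ by replacing each location of $C$ with its surface-code gadget at distance $d$, chosen polylogarithmic in $V$. For each classical input qubit of $C$, insert $\INITZ$ from \cref{lemma:qldpc:initialization} to obtain a logical $\ket{0}$ with output code type $\sctype{d}{1/5}$. For each Clifford or measurement location, substitute the corresponding transversal gate gadget from \cref{lemma:surface:transversal-gates}, whose depth is $O(d)$. For each $T$ location, run $\mathsf{\overline{PREPARE}}^{(d)}_{\ket{T}}$ from \cref{lemma:surface:logical-t-prep} in parallel to produce a logical $\ket{T}$, then consume it via the standard $T$-teleportation circuit of \cref{fact:teleportation-circuit}, which is a logical $\CNOT$ gadget followed by a transversal $\MEASX$ and a classically-controlled transversal $\PHAS\PAULIX$ correction; all sub-gadgets are drawn from \cref{lemma:surface:transversal-gates}. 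Pad every logical timestep with identity $\ecgadget[\surfacecode{d}]$ gadgets so that all wires in that layer have the same physical depth $O(d^2)$ (dominated by the $T$-preparation depth) and all quantum wires carry the common code type $\sctype{d}{1/5}$. This common-type padding is what makes \cref{lemma:gadget-composition} apply uniformly.

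By iterated application of parallel and sequential composition (\cref{lemma:gadget-composition}), $\overline{C}$ is then a fault-tolerant gadget for $C$ with bad fault paths
\begin{align}
    \mathcal{F} \;=\; \wtsum_{g \text{ a gadget in } \overline{C}} \mathcal{F}_g.
\end{align}
There are $O(WD)$ logical locations, and by \cref{lemma:surface:transversal-gates,lemma:state-injection,lemma:surface:logical-t-prep,lemma:qldpc:initialization} each gadget's weight enumerator is bounded, on the interval $x\in[0,\epsilon_{**}]$ with $\epsilon_{**} = \min(\epsilon_{*,\mathsf{TRANSVERSAL}},\epsilon_{*,\mathsf{PREPARE}},\epsilon_{*,\inject})$, by $\poly(d)\,e^{-\beta d}$ for an absolute constant $\beta>0$. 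The weight-enumerator sum rule (\cref{lemma:enumerator-ring}) gives
\begin{align}
    \weightenum{\mathcal{F}}{x} \;\le\; O(WD)\cdot \poly(d)\, e^{-\beta d}.
\end{align}
Setting $d = \Theta\!\bigl(\log(V)\,\mathrm{polyloglog}(V)\bigr)$ with a large enough constant, and using $V = WD/\epsilon$, makes the right-hand side at most $\epsilon$; this is the threshold statement we want, with $\epsilon_* := \epsilon_{**}$.

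The width and depth accounting is then a direct substitution. Each non-$T$ logical location occupies $O(d^2)$ physical qubits; each $T$ location additionally requires a magic-state factory of width $O(d^{2+\log_2 15})$ from \cref{lemma:surface:logical-t-prep}. Taking $W$ logical qubits in parallel and substituting $d = O(\log V\,\mathrm{polyloglog}(V))$ yields $\overline{W} = O(W\log^{2+\log_2 15}(V)\,\mathrm{polyloglog}(V)) = O(W\log^{5.91}(V)\,\mathrm{polyloglog}(V))$. The per-timestep depth is $O(d^2)$ from the $T$-preparation, so $\overline{D} = O(D\log^2(V)\,\mathrm{polyloglog}(V))$. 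The main obstacle I expect is not the probabilistic bound, which is essentially a union bound assembled from results we have already proved, but the bookkeeping needed to align gadgets of very different depths and widths (especially $\mathsf{\overline{PREPARE}}^{(d)}_{\ket{T}}$ against the shallower Clifford gadgets) so that composition preserves the friendliness and the uniform code type $\sctype{d}{1/5}$ on every inter-gadget wire bundle; this is handled by inserting identity EC gadgets and by routing the magic-state outputs spatially so they meet the corresponding $T$-teleportation $\CNOT$ at the correct logical timestep.
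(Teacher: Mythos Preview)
Your proposal is correct and follows essentially the same approach as the paper: build a $T$ gadget from $\mathsf{\overline{PREPARE}}^{(d)}_{\ket{T}}$ plus teleportation, pad all other gate gadgets to the same depth $\tau=O(d^2)$ with EC gadgets, replace every location of $C$ by its gadget, apply \cref{lemma:gadget-composition} to obtain $\mathcal{F}=\wtsum_g \mathcal{F}_g$, and use the uniform bound $\poly(d)e^{-\beta d}$ on each summand together with $|\bar V|=O(WD)$ to choose $d=O(\log V\,\mathrm{polyloglog}V)$. The paper's proof is organized slightly differently (it first defines a uniform-depth gadget set and then substitutes), but the content and all the estimates are the same.
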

\begin{construction}
    We will defer the choice of the surface code size \(d \in \mathbb{N}\) to the proof.
    Let \(\epsilon_* = \min\left(\epsilon_{*,\mathsf{PREPARE}}, \epsilon_{*,\mathsf{TRANSVERSAL}}\right)\).

    \(C\) is Clifford+T, at the loss of a constant multiplicative factor in the depth, we can assume it contains gates and their classically controlled analogs (except for measurement) from the following set 
    \begin{align}
        \mathsf{GATESET} = \{\mathsf{T}, \IDENT, \PAULIX, \PAULIY \PAULIZ, \CNOT, \HAD, \PHAS, \MEASX, \MEASZ, \INITZ\}~.
    \end{align}
    We begin by construction a set of gadgets that implement gates in \(\mathsf{GATESET}\) and have identical depth.
    We construct a gadget \(\mathsf{G}_{\mathsf{T}}\) for \(\mathsf{T}\) in the following way: First prepare a magic state with the gadget \(\mathsf{\overline{PREPARE}}^{(d)}_{\ket{T}}\) (\cref{lemma:surface:logical-t-prep}), then consume it to apply a teleported \(\mathsf{T}\) gate using the transversal gate gadgets (\cref{lemma:surface:transversal-gates}) implementing the \(\mathsf{T}\) teleportation circuit (\cref{fact:teleportation-circuit}).
    Let \(\tau\) be the duration of this gadget.

    For all other gates \(\mathsf{g} \in \mathsf{GATESET}\setminus \{\mathsf{T}\}\), the operation is in \(\mathsf{TRANSVERSAL}\).
    Therefore, we can construct the corresponding gadget \(\mathsf{G}_\mathsf{g}\) by padding gadgets from \cref{lemma:surface:transversal-gates} to length \(\tau\) with an additional error correction gadget.

    Construct the family of circuits \(\overline{C}_d\) by replacing every gate \(\mathsf{g}\) of \(C\) with the corresponding gadget \(\mathsf{G}_{\mathsf{g}}\).
\end{construction}
\begin{proof}
    First, since \(\mathsf{G}_{\mathsf{T}}\) is the composition of a constant number of gadgets with the state preparation gadget, \(\tau = O(d^2)\) (i.e. the logical cycle time is not too long).
    Consider an operation \(\mathsf{g} \in \mathsf{GATESET}\) and let \(\mathcal{F}_\mathsf{g}\) be the weight enumerator for the gadget \(\mathsf{G}_{\mathsf{g}}\).
    Applying the weight enumerator upper bounds (\cref{lemma:surface:transversal-gates}, \cref{lemma:surface:logical-t-prep}), it follows that the previously constructed gadgets are \((\mathsf{g}, \mathcal{F}_{\mathsf{g}})\)-FT gadgets in the sense of \cref{def:gadget} where for \(x \in [0,\epsilon_*]\), the weight enumerator of \(\mathcal{F}_{\mathsf{g}}\) is bounded as 
    \begin{align}
        \weightenum{\mathcal{F}_{\mathsf{g}}}{x} \le \poly(d) e^{-\beta d}~.
    \end{align}

    Let \(\bar{V}\) be the set of gadgets of \(\overline{C}_d\).
    For each gadget \(g \in \bar{V}\) of \(\overline{C}_d\), let \(\mathcal{F}_{g}\) be its bad fault paths.
    We define the bad fault paths of \(\overline{C}_d\) to
    \begin{align}
        \mathcal{U}_d = \boxplus_{g \in \bar{V}} \mathcal{F}_{g}~.
    \end{align}
    It follows from the construction and the application of \cref{lemma:gadget-composition} inductively on a topological sort of the bundled circuit corresponding to \(\overline{C}_d\) that \(\overline{C}_d\) is a \((C,\mathcal{U}_d)\)-FT gadget.
    Using the weight enumerator upper bounds and \cref{lemma:enumerator-ring}, it follows that for \(x \in [0,\epsilon_*]\), 
    \begin{align}
        \weightenum{\mathcal{U}_d}{x} \le \bar{V} \poly(d) e^{-\beta d}~.
    \end{align}
    Since \(\bar{V} = O(WD)\), there exists \(d = O(\log(V)~\mathrm{polyloglog}(V))\) such that \(\weightenum{\mathcal{U}_d}{x} \le \epsilon\).
    \(\overline{C}\) is \(\overline{C}_d\) with the minimum such choice of \(d\).

    By using the depth and width bounds of the corresponding gadgets, depth \(\overline{D}\) and width \(\overline{W}\) of \(\overline{C}\) satisfies
    \begin{align}
        \overline{D} = O(D \tau) &= O(D \log(V)^2~\mathrm{polyloglog}(V))\\
        \overline{W} = O(W d^{2+\log_2(15)}) &= O(W \log(V)^{5.91}~\mathrm{polyloglog}(V))~. \qedhere
    \end{align}
\end{proof}

\begin{remark}[Spacetime overhead]
    It is possible to obtain a depth overhead of \(\widetilde{O}(\log(V))\) by growing the lattice all at once~\cite{dennis2002topological,li2015magic}, but this requires some more complicated combinatorics.
    It is possible to reduce the width overhead to \(\widetilde{O}(\log^{2+\gamma}(V))\) where \(\gamma\) is the exponent of the magic state distillation scheme distillation cost by using a different magic state distillation schemes (e.g. with post selection or different codes).
    For example, it is possible to achieve \(\gamma \le 2.47\)~\cite{bravyi2005universal}, \(\gamma \le 1.6\)~\cite{bravyi2012magic}, \(\gamma \le 0.68\)~\cite{hastings2018distillation}.
    For states different from \(\ket{T}\) but convertible to \(\ket{T}\) with a catalyst state~\cite{beverland2020lower}, it is possible to achieve \(\gamma=o_{d\to \infty}(1)\) without substantially harming the time complexity~\cite{wills2024constant,golowich2025asymptotically,nguyen2025good,nguyen2024quantum}. 
\end{remark}

A direct corollary of the above theorem is that we can obtain a surface code fault tolerance scheme against local stochastic noise with a constant threshold.
Using~\Cref{prop:coherent-noise}, we can also obtain a fault tolerance scheme against coherent noise, albeit the threshold is sub-constant.

\begin{corollary}[Sub-constant threshold against coherent noise] There exists a polylogarithmically decaying function \(\delta_* = O(1/\log^{6.91}(WD/\epsilon)) \) such that, for any Clifford+T circuit \(C\) of width \(W\) and depth \(D\) and any \(\epsilon \in (0,1)\), there exists a circuit \(\overline{C}\) that is a fault-tolerant gadget for \(C\), whose parameters are given in~\Cref{thm:surface-code-threshold}.
    When \(C\) is a circuit with classical inputs and classical outputs, \(\overline{C}\) has trivial input and output types. 
    The outputs of $C$ and $\overline{C}$ are $\epsilon$-close in trace distance when $\overline{C}$ is subject to coherent noise $\fault$ of strength at most $\delta_*$.
\end{corollary}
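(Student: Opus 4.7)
The plan is to apply Proposition~\ref{prop:coherent-noise} directly to the circuit $\overline{C}$ from Theorem~\ref{thm:surface-code-threshold}. Since $\overline{C}$ is constructed by replacing each gate of $C$ with a gate gadget $\mathsf{G}_{\mathsf{g}}$ having bad fault paths $\badfaults_{\mathsf{g}}$, the decomposition $\circuitMap{\overline{C}} = \circuitMap{\mathsf{G}_{\mathsf{g}_m}} \circ \cdots \circ \circuitMap{\mathsf{G}_{\mathsf{g}_1}}$ with $\badfaults = \boxplus_i \badfaults_{\mathsf{g}_i}$ satisfies the composition hypothesis of that proposition, with $m = O(WD)$.

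The quantitative content is in bounding the ratio $\Gamma$ of gate gadget volume to minimum bad fault path weight. The dominant contribution is the $\mathsf{T}$ gadget, which has width $\widetilde{O}(d^{2+\log_2 15})$ and depth $\tau = O(d^2)$, yielding volume $\widetilde{O}(d^{7.91})$. Inspecting the proof of Lemma~\ref{lemma:surface:logical-t-prep}, the bad fault paths $\badfaults_1 \boxplus \badfaults_2$ of the state preparation have minimum weight $\Omega(d)$: the set $\badfaults_1 = (\mathcal{S}_{(2,15)})^{\bullet \ell} \bullet \badfaults_\inject$ has minimum weight $2^\ell = \beta d$ because minimum weights multiply under the $\bullet$ operation and $(\mathcal{S}_{(2,15)})^{\bullet \ell}$ has minimum weight $2^\ell$, while $\badfaults_2$ has minimum weight $\Omega(d/\Delta)$ from the transversal gadgets; the other Clifford and measurement gadgets have smaller ratios. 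This gives $\Gamma = \widetilde{O}(d^{6.91}) = \widetilde{O}(\log^{6.91}(V))$ since $d = O(\log(V)\,\mathrm{polyloglog}(V))$.

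We then set $\delta_* = c/\Gamma = O(1/\log^{6.91}(V))$ for a sufficiently small absolute constant $c > 0$. This ensures $(1+\delta_*)^{\Gamma - 1} \le e$, so $2(1+\delta_*)^{\Gamma - 1} \delta_* \le 2ec/\Gamma$, which is below the critical constant $\epsilon_*$ of Theorem~\ref{thm:surface-code-threshold} for $V$ large. The per-gadget weight enumerator bounds from Lemmas~\ref{lemma:qldpc:transversal-gates} and~\ref{lemma:surface:logical-t-prep} then yield $\eta = \max_i \weightenum{\badfaults_i}{2(1+\delta_*)^{\Gamma-1}\delta_*} \le \poly(d)\, e^{-\beta d}$. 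Taking the constant in $d$ large enough that $\eta \le \epsilon/(em)$ (which still gives $d = O(\log(V)\,\mathrm{polyloglog}(V))$) and applying Proposition~\ref{prop:coherent-noise} yields the trace distance bound $m(1 + \eta)^m \eta \le em\eta \le \epsilon$.

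The main obstacle is avoiding the naive bound $\Gamma = \widetilde{O}(d^{7.91})$ that would come from using only the volume of the $\mathsf{T}$ gadget. The key observation is that the minimum weight of the $\mathsf{T}$ gadget's bad fault paths is $\Omega(d)$ rather than $O(\log d)$: each of the $\ell = O(\log d)$ levels of the $15$-to-$1$ distillation doubles the required number of input faults, giving $2^\ell = \Omega(d)$ faults before the distillation tree can be corrupted. This is precisely the multiplicative behavior of minimum weight under the $\bullet$ composition of bad sets, and it saves a factor of $d$ in $\Gamma$ compared to the naive bound, producing the exponent $6.91$ in the final expression for $\delta_*$.
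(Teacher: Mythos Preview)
Your proposal is correct and follows essentially the same route as the paper: apply Proposition~\ref{prop:coherent-noise} to the gadget decomposition of $\overline{C}$, bound $\Gamma$ via gadget volume $\widetilde{O}(d^{7.91})$ over minimum bad-fault-path weight $\Omega(d)$, and choose $\delta_* = \Theta(1/\Gamma)$ so that the argument of the weight enumerator lands below the constant threshold $\epsilon_*$. Your justification of the minimum weight $\Omega(d)$ for the distillation gadget via the multiplicativity of minimum weights under $\bullet$ is actually more explicit than the paper, which simply asserts ``the bad sets are all of weight at least $\Omega(d)$''; the rest of the argument is the same.
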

\begin{proof}
    The construction is given in \cref{thm:surface-code-threshold}, and the proof is almost the same except we use~\Cref{prop:coherent-noise} to turn the upper bound on the weight enumerator of the bad fault paths into a bound on the trace distance. 
    Note that the condition in~\Cref{prop:coherent-noise}, which asks that the bad fault paths of the whole circuit is a sum of bad fault paths of individual gadgets, is satisfied. 
    The number of gadgets \(m\) satisfies \(m\le WD\). 
    We want
    \begin{align}\label{eq:sc-coherent-goal}
        m(1+\eta)^m\eta\leq \epsilon,
    \end{align}
    where \(\eta = \max_{i\in m}\weightenum{\mathcal{F}_i}{2(1+x)^{\ratio-1}x}\), and \(\ratio\) is defined as 
    \begin{align}
        \ratio = \max_{i\in [m]}\max_{F\in \badfaults_i} |V_i|/|F|,
    \end{align}
    for \(V_i\) being the set of locations in the \(i\)-th gadget. 

    To satisfy \cref{eq:sc-coherent-goal}, it suffices for us to choose \(\eta < \frac{\epsilon}{2m} = O(\epsilon/WD)\), as that would imply
    \begin{align}
        m(1+\eta)^m\eta &\leq \epsilon e^{\eta m}/2 \leq \epsilon e^{\epsilon/2}/2\le \epsilon.
    \end{align} 
    
    According to \cref{lemma:surface:transversal-gates} and \cref{lemma:surface:logical-t-prep}, the gadgets in the circuit has bad fault paths \(\badfaults_i\) such that for \(x \in [0,\epsilon_*]\), the weight enumerator is bounded as 
    \begin{align}
        \weightenum{\badfaults_i}{x} \le \poly(d) e^{-\beta d}
    \end{align}
    for some absolute constant \(\beta\).\footnote{Technically, the constant \(\beta\) depends on the value of \(x\), but we can always take the minimal value of \(\beta\) for this upper bound.}
    Let us choose \(d = O(WD/\epsilon)\) such that 
    \begin{align}
        \eta < \frac{\epsilon}{2m} \le \poly(d) e^{-\beta d}.
    \end{align}
    It suffices for us to choose \(\delta_*\) such that for all \(x\leq \delta_*\), we have
    \begin{align}
        2(1+x)^{\ratio-1}x \leq \epsilon_*.
    \end{align}
    From \cref{lemma:surface:transversal-gates} and \cref{lemma:surface:logical-t-prep}, we know that the volume of each gadget is at most $O(d^{7.91})$, and the bad sets are all of weight at least $\Omega(d)$.
    This implies that $\ratio = O(d^{6.91})$.
    Choose \(\delta_* = \frac{\epsilon_*\log\ratio}{2\ratio}\), then for all \(x\leq \delta_*\), it holds that
    \begin{align}
        2(1+x)^{\ratio-1}x 
        \leq \epsilon_* e^{\ratio x}/\ratio \leq \epsilon_* e^{\epsilon_* \log\ratio/2}/\ratio \leq \epsilon_*.
    \end{align}
    The threshold delays polylogarithmically in the circuit volume, namely, \(\delta_* = O(\log d/d^{6.91})\) where \(d = O(WD/\epsilon)\).
\end{proof}

\subsection{Fault-tolerant quantum output}
\begin{lemma}[Surface code unencoding gadget]\label{lemma:sc-unencode}
    There exists a constant \(\epsilon_{*,\unencode} \in (0,1)\) such that we have a gadget \(\unencode^{(d)}\) for identity with input code type \(\sctype{d}{1/5}\) and trivial output code type and bad fault paths \(\badfaults_{\unencode}^{d}\) with weight enumerator bounded on \(x \in [0,\epsilon_{*,\unencode}]\) as
    \begin{align}
        \weightenum{\badfaults_{\unencode}^{d}}{x} \le c\cdot x
    \end{align}
    for some absolute constant \(c > 0\).
    The depth is \(O(d^2)\) and the width is \(O(d^2)\).
\end{lemma}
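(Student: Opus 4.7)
The plan is to construct $\unencode^{(d)}$ as essentially the time-reversal of the state injection gadget $\inject^{(d)}$ from \cref{lemma:state-injection}. Let $\shrink_{s-1 \leftarrow s}$ denote the inverse of the growth unitary $U_{s \leftarrow s-1}$ from \cref{prop:surface-code-const-growth}, followed by tracing out (or measuring in the computational basis) the $O(s)$ ancilla qubits; in the fault-free case this takes $\encoder[\surfacecode{s}](\rho)$ to $\encoder[\surfacecode{s-1}](\rho)$ because $U_{s \leftarrow s-1}$ is unitary and deterministically produces those ancillas in $\ket{0}$. I will define the gadget inductively by
\begin{align}
    \unencode^{(d)} = \unencode^{(d-1)} \circ \ecgadget[\surfacecode{d-1}] \circ \shrink_{d-1 \leftarrow d},
    \qquad
    \unencode^{(1)} = \IDENT.
\end{align}
Each block $\ecgadget[\surfacecode{s-1}] \circ \shrink_{s-1 \leftarrow s}$ is, by the constant-depth gadget lemma (\cref{lemma:qldpc:constant-depth-gadget}), a friendly fault-tolerant gadget for identity with input code type $\sctype{s}{1/5}$ and output code type $\sctype{s-1}{1/5}$, with some family of bad fault paths $\badfaults_s$.

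For the weight enumerator, I will follow the bookkeeping used in the proof of \cref{lemma:state-injection}. Choose a large enough absolute constant $\bar d$ such that, for all $s \ge \bar d$, the bound from \cref{lemma:qldpc:constant-depth-gadget} gives $\weightenum{\badfaults_s}{x} \le 2^{-s} x$ on an interval $x \in [0, \epsilon_{*,\unencode}]$ for some absolute $\epsilon_{*,\unencode} \in (0,1)$; this is possible because the degree-$\Omega(s)$ term dominates and the prefactor is polynomial in $s$. For the finitely many layers with $s < \bar d$, the total number of locations is an absolute constant $N_0$, so I take all singletons of those locations as $\badfaults_1$, contributing $\weightenum{\badfaults_1}{x} \le N_0 \cdot x$. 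Then
\begin{align}
    \badfaults_{\unencode}^{d} = \badfaults_1 \wtsum \wtsum_{s = \bar d}^{d} \badfaults_s,
    \qquad
    \weightenum{\badfaults_{\unencode}^{d}}{x} \le N_0 x + \sum_{s \ge \bar d} 2^{-s} x \le c \cdot x
\end{align}
for some absolute $c > 0$. The depth is $\sum_{s=1}^{d} O(s) = O(d^2)$, and the width is dominated by the largest layer, $O(d^2)$.

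The correctness argument reduces to iterating \cref{lemma:gadget-composition}: composing the per-layer friendly gadgets sequentially yields a friendly gadget for identity with input $\sctype{d}{1/5}$ and output trivial code type, and the global bad fault paths are exactly $\badfaults_{\unencode}^{d}$. The main subtlety I expect to handle carefully is the same one as in the injection proof: the exponential suppression $e^{-\Omega(s)}$ of an EC gadget at distance $s$ is only meaningful once $s$ exceeds an absolute constant $\bar d$, so one must explicitly treat the first $\bar d$ shrinking steps by the brute-force ``singletons are bad'' rule and check that their total volume is $O(1)$. Apart from that, everything is a direct dualization of the injection construction.
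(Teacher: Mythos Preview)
Your proposal is correct and follows essentially the same approach as the paper: the paper defines $\unencode^{(d)}$ as the time-reversal of $\inject^{(d)}$ (replacing $\grow$ by $\shrink$, the inverse growth unitary followed by tracing out ancillas) and then simply states that the definition and analysis of the bad fault paths ``follow the same arguments as in \cref{lemma:state-injection}.'' Your write-up supplies exactly the detail the paper omits, including the same $\bar d$-threshold bookkeeping for the weight enumerator; the only minor wrinkle is that your per-layer block $\ecgadget[\surfacecode{s-1}]\circ\shrink_{s-1\leftarrow s}$ has the EC after the constant-depth circuit rather than before, so invoking \cref{lemma:qldpc:constant-depth-gadget} literally requires a trivial regrouping (or the obvious adaptation of its proof), but this is the same looseness already present in the paper's injection proof.
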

\nomenclature[F, 13]{\(\unencode^{(d)}\)}{Surface code unencoding gadget, see \cref{lemma:sc-unencode}.}
\begin{construction}
    Recall the growth unitary \(U_{d'\leftarrow d}\) from \cref{prop:surface-code-const-growth}, let \(\shrink_{d'\rightarrow d}\) be the channel which applies the inverse unitary \(U_{d'\leftarrow d}^\dagger\) and traces out the unencoded ancilla qubits. We define the unencoding gadget by alternatively apply \(\shrink\) and \(\ecgadget\). 
    \begin{align}
        \unencode^{(d)}
        &= \shrink_{d\rightarrow d-1} \circ \ecgadget[\surfacecode{d-1}] \circ \unencode^{(d-1)}\\
        &=  \shrink_{d\rightarrow d-1} \circ \ecgadget[\surfacecode{d-1}] \circ \dots \circ \shrink_{3\rightarrow 2} \circ \ecgadget[\surfacecode{2}] \circ \shrink_{2\rightarrow 1}
    \end{align}
\end{construction}
\begin{proof}
    Note that \(\unencode^{(d)}\) is defined analogously to the injection gadget \(\inject^{(d)}\), with the growth unitary inverted. 
    The definition and analysis for its bad fault paths follow the same arguments as in \cref{lemma:state-injection}.
\end{proof}

\begin{lemma}[Quantum output]\label{lemma:quantum-output}
    There exists a constant \(\epsilon_* \in (0,1)\) such that, for any \(\epsilon \in (0,1)\) and any Clifford+T circuit \(C\) of width \(W\) and depth \(D\) outputting \(n\) qubits, there exists a circuit \(\overline{C}\) with trivial input and output types.
    For any family of bad error supports \(\mathcal{U}\subseteq P([n])\), there is a collection of bad fault paths \(\mathcal{F}_{\mathcal{U}}\) such that for any \(\mathcal{F}_{\mathcal{U}}\)-avoding Pauli fault \(\fault\), for any input state \(\rho\), it holds that
    \begin{align}
        \circuitMap{\overline{C}[\fault]}(\rho)
        &\dev{\mathcal{U}} \circuitMap{C}(\rho).
    \end{align}    

    Let \(V = \frac{WD}{\epsilon}\).
    Then, \(\overline{C}\) has width \(\overline{W}\) and depth \(\overline{D}\) satisfying the bounds
    \begin{align}
        \overline{W} &= O(W \log^{5.91}(V) \mathrm{polyloglog}(V)) \\
        \overline{D} &= O(D \log^2(V) \mathrm{polyloglog}(V))~.
    \end{align}
    On \(x \in [0,\epsilon_*]\), the weight enumerator of \(\mathcal{F}_{\mathcal{U}}\) satisfies the bound
    \begin{align}
        \weightenum{\mathcal{F}_{\mathcal{U}}}{x} \le \epsilon + \weightenum{\mathcal{U}}{c\cdot x}
    \end{align}
    where \(c>0\) is an absolute constant.
\end{lemma}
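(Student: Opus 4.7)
The plan is to apply the surface code threshold theorem (\cref{thm:surface-code-threshold}) to $C$ to obtain a fault-tolerant intermediate gadget $\overline{C}_{\text{inner}}$ whose $n$ output bundles each carry code type $\sctype{d}{1/5}$ (with $d = O(\log(V)\,\mathrm{polyloglog}(V))$) and whose bad fault paths $\mathcal{F}_{\text{inner}}$ satisfy $\weightenum{\mathcal{F}_{\text{inner}}}{x} \le \epsilon$, and then attach an unencoding gadget $\unencode^{(d)}$ (from \cref{lemma:sc-unencode}) to each output block to strip the surface code encoding, yielding $n$ bare output qubits. Set $\overline{C}$ to be the sequential composition of $\overline{C}_{\text{inner}}$ followed by $n$ parallel copies of $\unencode^{(d)}$. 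The parallel and sequential composition clauses of \cref{lemma:gadget-composition} handle the gluing, the depth overhead $O(d^2)$ from the unencoding step is absorbed into the existing $O(D \log^2 V\, \mathrm{polyloglog}(V))$, and the width overhead $O(n d^2)$ is absorbed into $O(W \log^{5.91} V\, \mathrm{polyloglog}(V))$.

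The bad fault paths are built compositionally from $\mathcal{U}$. Let $\badfaults_{\unencode, i}^{d}$ denote the bad fault paths of the $i$-th unencoding gadget (a per-block copy of the $\badfaults_{\unencode}^{d}$ from \cref{lemma:sc-unencode}), and define
\begin{align*}
    \mathcal{F}_{\unencode,\mathcal{U}} := \mathcal{U} \bullet \{\badfaults_{\unencode, i}^{d}\}_{i=1}^{n}, \qquad \mathcal{F}_{\mathcal{U}} := \mathcal{F}_{\text{inner}} \boxplus \mathcal{F}_{\unencode,\mathcal{U}}.
\end{align*}
Applying \cref{lemma:enumerator-ring} and the composition upper bound \cref{lemma:composition-upper-bound} to the linear weight enumerator $\weightenum{\badfaults_{\unencode}^{d}}{x} \le c\, x$ yields $\weightenum{\mathcal{F}_{\mathcal{U}}}{x} \le \epsilon + \weightenum{\mathcal{U}}{c\, x}$ directly, as claimed.

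To verify correctness, fix an $\mathcal{F}_{\mathcal{U}}$-avoiding Pauli fault $\fault$. Its restriction to $\overline{C}_{\text{inner}}$ is $\mathcal{F}_{\text{inner}}$-avoiding, so by the simulation clause of \cref{def:gadget} each output block of $\overline{C}_{\text{inner}}$ is $\baderrors_{1/5}$-deviated from the encoded ideal output of $C$ on input $\rho$. Because $\fault$ also avoids $\mathcal{F}_{\unencode,\mathcal{U}}$, the unfolding of the $\bullet$ operation (\cref{def:composition}) shows that the subset $S \subseteq [n]$ of output indices whose local unencoding fault path lies in $\badfaults_{\unencode, i}^{d}$ is necessarily $\mathcal{U}$-avoiding: if some $U \in \mathcal{U}$ satisfied $U \subseteq S$, it would witness an element of $\mathcal{F}_{\unencode,\mathcal{U}}$ inside $\supp(\fault)$. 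For each $i \notin S$, the simulation and friendliness of $\unencode^{(d)}$ return the correct ideal output qubit on wire $i$; for each $i \in S$, only an arbitrary single-qubit deviation may appear on wire $i$. Combining these across all blocks, the overall output is $\mathcal{U}$-deviated from $\circuitMap{C}(\rho)$, as required.

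The main technical care is in the friendliness argument: even when $\fault$ is benign for $\overline{C}_{\text{inner}}$, the state handed to the unencoding layer is a possibly highly entangled quantum state across the $n$ blocks rather than a canonical codestate, and each $\unencode^{(d)}$ must still behave correctly given only a $\baderrors_{1/5}$-avoiding deviation on its input bundle. This is precisely what the friendly simulation clause of \cref{def:gadget} (supplied by \cref{lemma:sc-unencode}) is designed to guarantee; moreover, because the output code type of $\overline{C}_{\text{inner}}$ is a $\boxplus$-sum of per-block code types, the per-block deviations feeding each $\unencode^{(d)}_i$ are independent, so that faults inside block $i$ cannot leak to output wire $j \ne i$ and the per-block analysis composes cleanly.
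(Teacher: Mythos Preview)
Your proposal is correct and follows essentially the same approach as the paper's proof: apply \cref{thm:surface-code-threshold} to obtain the inner fault-tolerant circuit, append $(\unencode^{(d)})^{\otimes n}$, define $\mathcal{F}_{\mathcal{U}} = \mathcal{F}_{\text{inner}} \boxplus (\mathcal{U} \bullet \badfaults_{\unencode}^{d})$, and argue that the set of failed unencoding blocks is $\mathcal{U}$-avoiding. The only minor remark is that your final paragraph slightly overstates the role of friendliness: for the blocks $i \notin S$ the simulation property of \cref{def:gadget} (with the other blocks treated as part of the environment) already suffices, and friendliness of $\unencode^{(d)}$ is not actually invoked in the paper's argument.
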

\begin{construction}
    Apply \cref{thm:surface-code-threshold} with circuit \(C\) to get a fault-tolerant gadget \(\cft\) for \(C\) with bad fault paths \(\badfaults\). 
    Let \(d\) be the distance value chosen in \cref{thm:surface-code-threshold}.
    Apply the unencoding gadget \(\unencode^{(d)}\) to all the output logical qubits of \(\cft\) to obtain the circuit \(\overline{C}\). 
    \begin{align}
        \overline{C} &= \left(\unencode^{(d)} \right)^{\otimes n}\circ \cft
    \end{align}
\end{construction}
\begin{proof}
    Recall that \(\badfaults_{\unencode}^{d}\) is the bad fault paths of the unencoding gadget \(\unencode^{(d)}\). 
    We define the bad fault paths of \(\overline{C}\) to be 
    \begin{align}
        \badfaults_{\mathcal{U}} &= \badfaults \boxplus (\mathcal{U}\bullet \badfaults_{\unencode}^{d}).
    \end{align}
    The upper bound on weight enumerator follows from \cref{lemma:composition-upper-bound}.
    \begin{align}
        \weightenum{\mathcal{F}_{\mathcal{U}}}{x} &\le \weightenum{\badfaults}{x} + \weightenum{\mathcal{U}}{c\cdot x} \\
        &\le \epsilon + \weightenum{\mathcal{U}}{c\cdot x} 
    \end{align}
    
    For correctness, decompose \(\fault\) into \(\fault_1\), which is supported on the locations of \(\cft\) and \(\fault_2\), which is supported on the locations of the unencoding gadgets. 
    From \cref{thm:surface-code-threshold}, we know that \(\cft\) is a fault-tolerant gadget for \(C\) with output type \(\surfacecode{d}_{1/5}\) on every output logical qubit. 
    Let \(\sigma\) be the output state of \(C\) given input \(\rho\), then on every surface code block, the output state \(\bar{\sigma} := \circuitMap{\cft[\fault_1]}(\rho)\) is \(\surfacecode{d}_{1/5}\)-deviated from \(\encoder[\surfacecode{d}]^{\otimes n}(\sigma)\). 
    Notationally, we write 
    \begin{align}
        \forall i\in [n], \bar{\sigma}_i &= \error_i \circ \encoder[\surfacecode{d}](\sigma_i)
    \end{align}
    for some \(\surfacecode{d}_{1/5}\)-avoiding superoperator error \(\error_i\).

    Now let \(F\subseteq [n]\) denote the set of unencoding gadgets \(\unencode^{(d)}\) on which the fault \(\fault_2\) is not \(\badfaults_{\unencode}^{d}\)-avoiding. 
    From the definition of composition (the \(\bullet\) operation), we see that \(F\) is \(\mathcal{U}\)-avoiding. 
    For all \(i\notin F\), the \(i\)-th unencoding gadget correctly implements identity. For \(i\in F\), let \(\fault_{2,i}\) denote the restriction of \(\fault_2\) to the locations of the \(i\)-th unencoding gadget, and define \(\noise_i := \circuitMap{\unencode^{(d)}[\fault_{2,i}]}\). 
    \begin{align}
        \circuitMap{\overline{C}[\fault]}(\rho)
        &= \circuitMap{\left(\unencode^{(d)} \right)^{\otimes n}[\fault_2]}\circ \circuitMap{\cft[\fault_1]}(\rho) \\
        &= \circuitMap{\left(\unencode^{(d)} \right)^{\otimes n}[\fault_2]} \circ \left( \bigotimes_{i\in [n]} \error_i \circ \encoder[\surfacecode{d}] \right) (\sigma) \\
        &= \left(\bigotimes_{i\notin F} \decoder[\surfacecode{d}]\circ \error_i \circ \encoder[\surfacecode{d}]\right)\otimes \left( \bigotimes_{i\in F} \noise_i\circ \error_i \circ \encoder[\surfacecode{d}]\right) (\sigma) \\
        &= \left( \bigotimes_{i\in F} \noise_i\circ \error_i \circ \encoder[\surfacecode{d}]\right) (\sigma). 
    \end{align}
    Since \(F\) is \(\mathcal{U}\)-avoiding, we see that \(\circuitMap{\overline{C}[\fault]}(\rho)\dev{U} \sigma\). 
    The bounds on the width and depth of \(\overline{C}\) follows a simple calculation from \cref{thm:surface-code-threshold} and \cref{lemma:sc-unencode}.
\end{proof}

\subsection{Constant space overhead fault-tolerance}
Having constructed a fault-tolerance scheme with space and time overhead, we can use it to work our way up to a scheme with constant quantum space overhead~\cite{gottesman2014fault,fawzi2020constant,tamiya2024polylog,nguyen2024quantum}.
Here, we follow roughly the construction of \cite{tamiya2024polylog} with surface codes replacing concatenated codes and non-single-shot error correction gadgets instead of single-short ones.
It should be noted that the theorem proved in \cite{tamiya2024polylog} is stronger in that the model of computation here does not restrict the size of the classical compution: Effectively, we have assumed the existence of a classical minimum-weight decoder which runs instantaneously.
The exponent of the depth overhead in \cref{thm:ldpc-constant-space} can be reduced to \(1+o(1)\) in a model of computation with very limited or even constant depth classical computation~\cite{nguyen2024quantum} at the cost of a substantially more involved construction.

We first need the following fact about the existence of an asympotically good qLDPC code family shown by \cite{panteleev2022asymptotically}, building upton a line of work~\cite{hastings2021fiber,panteleev2021quantum,breuckmann2021balanced}. 
The work of \cite{panteleev2022asymptotically} is later extended by \cite{leverrier2022quantum,dinur2023good}.

\begin{fact}[Good qLDPC codes]\label{fact:good-qldpc}
    There exists an efficiently constructable qLDPC code family indexed by \(i \in \mathbb{N}\) such that for some absolute constant \(c>1\), the \(i\)-th member of the family \(\ldpcCode{i}\) has parameters \(n_i = \Theta(c^{i})\),
    \begin{align}
        \dsl n_i, \Theta(n_i), \Theta(n_i) \dsr
    \end{align}
\end{fact}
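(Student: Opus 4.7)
The plan is to instantiate one of the now-standard explicit constructions of asymptotically good quantum LDPC codes -- most directly the lifted product of \cite{panteleev2022asymptotically} -- and read off the three parameters $n_i = \Theta(c^i)$, $k_i = \Theta(n_i)$, $d_i = \Theta(n_i)$ together with the bounded row/column weight of $(H_X,H_Z)$. Since the code family is explicit, the scaling $n_i = \Theta(c^i)$ can be arranged by simply indexing by the lift size (or by the vertex count of the underlying Cayley graph), so the non-trivial work is to exhibit any one construction achieving $[\![n,\Theta(n),\Theta(n)]\!]$ with constant-weight checks.

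Concretely, I would proceed in four steps. First, fix two classical LDPC codes $\mathcal{C}_A,\mathcal{C}_B$ defined by parity-check matrices $A,B$ over the group algebra $\mathbb{F}_2[G]$ for a suitable non-abelian group $G$ of order $\ell$, chosen so that $A$ and $B$ both have constant row/column weight. Second, form the associated two-term chain complex $\mathbb{F}_2^{n_1} \xrightarrow{\partial_2} \mathbb{F}_2^{n_2} \xrightarrow{\partial_1} \mathbb{F}_2^{n_3}$ of the lifted product, and take $H_Z = \partial_2^{\top}$, $H_X = \partial_1$; the CSS condition $H_X H_Z^{\top}=0$ is the statement $\partial_1\partial_2=0$, which holds by construction. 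Third, compute the dimension via $k = n_2 - \operatorname{rank}(H_X) - \operatorname{rank}(H_Z)$; with the right choice of base codes this is $\Theta(n_2)$. Fourth, lower-bound the distance by showing that every non-trivial class in $\ker(H_X)/\operatorname{im}(H_Z^{\top})$ (and its $Z$-analog) has weight $\Theta(n_2)$.

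The main obstacle is this last step, the linear distance bound. Classical tensor-style constructions such as the hypergraph product of Tillich--Zémor only give $d=\Theta(\sqrt n)$, because a low-weight logical operator can concentrate in a single ``row'' or ``column'' of the product. Breaking the $\sqrt n$ barrier is exactly what the lifted product achieves by replacing the trivial lift with a twist by $G$: the key technical input is that one can choose $\mathcal{C}_A,\mathcal{C}_B$ and the twist so that the tensor code $\mathcal{C}_A \otimes \mathcal{C}_B$ is \emph{robustly testable} in the sense of \cite{panteleev2022asymptotically}, which forces any low-weight element of $\ker(H_X)$ to be correctable by a stabilizer, hence trivial. The robustness estimate ultimately reduces to a small-set expansion property of the Cayley-graph structure on $G$, which is where the non-abelian choice of $G$ becomes essential.

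Rather than reproducing this argument, which spans tens of pages, I would simply cite \cite{panteleev2022asymptotically} (or, equivalently, the quantum Tanner code construction of Leverrier--Zémor, or Dinur--Hsieh--Lin--Vidick) for the linear-distance guarantee, and verify by inspection that the construction is indexed by a parameter (the lift order) such that $n$ grows geometrically, giving the claimed $n_i = \Theta(c^i)$ with any desired base $c>1$ obtained by sub-sampling the family.
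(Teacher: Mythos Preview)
Your proposal is correct and matches the paper's treatment: the paper states this as a \emph{Fact} with no proof, simply citing \cite{panteleev2022asymptotically} (and the related line of work \cite{hastings2021fiber,panteleev2021quantum,breuckmann2021balanced,leverrier2022quantum,dinur2023good}) for the existence of asymptotically good qLDPC families. Your sketch of the lifted-product construction and the robustness-based distance argument is accurate additional exposition, but the paper itself provides none of this and relies entirely on the citation, so your approach of invoking the literature and reading off the geometric indexing is exactly what is expected here.
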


\begin{definition}[Good qLDPC code family]
    We use \(\ldpctype{i}{c}\) to refer to the code type constructed from the \(i\)-th member of the family of \cref{fact:good-qldpc} with the encoding map and bad sets constructed as in \cref{def:surface-code}.
\end{definition}
\nomenclature[F, 40]{\(\ldpcCode{i}\)}{The \(i\)-th member of a family of asymptotically good codes. Its code type is denoted \(\ldpctype{i}{c}\).}

\begin{proposition}[Transversal Clifford gates for CSS qLDPC codes]\label{lemma:qldpc:transversal-clifford-gates}
    There exists a constant \(\epsilon_{*,\mathsf{qLDPCTRANSVERSAL}} \in (0,1)\) and friendly gate gadgets for the set of operations
    \begin{align}
    \mathsf{CSSTRANSVERSAL} = \{\IDENT, \PAULIX, \PAULIZ, \CNOT, \MEASX, \MEASZ\}
    \end{align}
    (and their classically controlled analogs, if unitary) such that 
    \begin{itemize}[topsep = 0pt]
        \item The gadgets for \(\IDENT, \PAULIX, \PAULIZ, \CNOT\) and their classically controlled analogs has input and output code types \(\ldpctype{i}{1/5}\). The gadgets for \(\MEASX, \MEASZ\) has input code type \(\ldpctype{i}{1/5}\) and trivial output code type.
        \item Each gadget has width \(O(k_i)\) and (identical) depth \(O(d_i)\).
        \item For each operation, \(\qGate \in \mathsf{CSSTRANSVERSAL}\), the family of bad fault paths of the corresponding gadget \(\badfaults_{\qGate}\) has weight enumerator bounded on \(x\in [0,\epsilon_{*,\mathsf{qLDPCTRANSVERSAL}}]\) as
        \begin{align}
            \weightenum{\badfaults_{\qGate}}{x} \le e^{-\beta d_i} O(\poly(d_i))
        \end{align}
        for some constant \(\beta > 0\).
    \end{itemize}
\end{proposition}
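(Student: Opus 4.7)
The plan is to follow the structure of Lemma~\ref{lemma:surface:transversal-gates} essentially verbatim, replacing the surface code with the asymptotically good qLDPC family of Fact~\ref{fact:good-qldpc} and the constant \(c_{1/5}\)-style bad error supports with their qLDPC analogues from Definition~\ref{def:qldpc-codetype}. The key enabling facts are that (i) \(\IDENT, \PAULIX, \PAULIZ\) are transversal on every CSS code, (ii) \(\CNOT\) is transversal \emph{between two blocks} of the same CSS code, and (iii) for a CSS code, measuring every physical qubit in the \(\PAULIX\) (resp.\ \(\PAULIZ\)) basis destructively reveals enough information to infer the logical \(\PAULIX\) (\(\PAULIZ\)) outcome after applying a minimum-weight classical decoding. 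Since the \(\ldpcCode{i}\) codes have parameters \(\dsl n_i, \Theta(n_i), \Theta(n_i) \dsr\), we have \(d_i = \Theta(k_i) = \Theta(n_i)\), so the weight enumerator tails from Lemma~\ref{lemma:qldpc:transversal-gates} will directly give the claimed \(e^{-\beta d_i}\poly(d_i)\) bound.

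First, for each unitary gate \(\g \in \{\IDENT, \PAULIX, \PAULIZ, \CNOT\}\) and its classically controlled analogue, I invoke Corollary~\ref{lemma:qldpc:transversal-gates} directly: it supplies a friendly, depth-\(O(d_i)\) gadget \(\gadget_\g\) on one or two blocks of code type \(\ldpctype{i}{1/5}\) with bad fault paths \(\mathcal{F}_\g\) obeying \(\weightenum{\mathcal{F}_\g}{x} \le d_i n_i [O_\Delta(x)]^{d_i/(80\Delta)}\). Substituting \(n_i = \Theta(d_i)\) and choosing \(\epsilon_{*,\mathsf{qLDPCTRANSVERSAL}}\) small enough that \(O_\Delta(x) \le e^{-c}\) for some absolute \(c>0\), this collapses to the desired \(e^{-\beta d_i}\poly(d_i)\) shape.

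Next, for \(\MEASX\) and \(\MEASZ\) I mirror the surface-code construction: precede the gadget with \(\ecgadget[\ldpcCode{i}]\) from Theorem~\ref{lemma:qldpc:ec-gadget} (chosen to output code type \(\ldpctype{i}{1/5}\)), then apply transversal single-qubit measurements to obtain \(b \in \F^{n_i}\), compute the classical syndrome \(H_{X/Z} b\), run a minimum-weight classical decoder to get a correction \(c\), and output \(\ell \cdot (b+c)\) where \(\ell\) is the indicator of each logical operator's support. Correctness in the noiseless case follows from the CSS logical basis structure (as in the surface code proof). In the noisy case, I apply the decoherence lemma (Lemma~\ref{lemma:deterministic-errors}) to the transversal measurement layer: any Pauli fault and any incoming \(\ldpctype{i}{1/5}\)-avoiding data error decoheres into a Pauli \(\PAULIX\) error (for \(\MEASZ\)) supported on a union of the two avoiding sets, which by Corollary~\ref{lemma:separable-bad-errors} lies in \(\baderrors_{2/5}\). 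Since this is below the correction radius of the minimum-weight decoder, the decoded logical outcome is correct. The bad fault paths are \(\badfaults_{\mathsf{EC}} \boxplus \baderrors_{1/5}\) on the measurement layer, again bounded by \(e^{-\beta d_i}\poly(d_i)\).

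To enforce a common depth, I pad every gadget with trailing identity/EC rounds until they all have the same depth \(O(d_i)\); this at worst squares the weight-enumerator polynomial prefactor, which is absorbed into \(\poly(d_i)\). The width bound \(O(k_i)\) follows because each gadget touches \(O(n_i) = O(k_i)\) physical qubits (using Fact~\ref{fact:good-qldpc}); friendliness is inherited from the EC gadget's friendliness (Remark~\ref{remark:ECgadget-friendliness}) and the sequential composition rule in Proposition~\ref{lemma:gadget-composition}. The only genuinely non-routine step is the measurement gadget analysis, specifically verifying that the minimum-weight classical decoding of a transversal measurement readout is fault-tolerant for a general qLDPC code rather than the surface code --- but this is handled exactly as in the surface-code case using \(\baderrors_c\) with the separability lemma, and no geometric/matching structure of the surface code is actually invoked there.
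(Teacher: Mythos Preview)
Your proposal is correct and takes essentially the same approach as the paper: the paper's proof consists of two sentences noting that all CSS codes have transversal implementations of the operations in \(\mathsf{CSSTRANSVERSAL}\) and that the construction and proof are identical to Lemma~\ref{lemma:surface:transversal-gates} with parameters modified in the obvious way. You have actually written out more detail than the paper provides, and all of it is accurate---in particular, you correctly observe that nothing in the measurement-gadget analysis of Lemma~\ref{lemma:surface:transversal-gates} relies on surface-code geometry, and that the linear-distance relation \(n_i = \Theta(d_i) = \Theta(k_i)\) converts the bound of Corollary~\ref{lemma:qldpc:transversal-gates} into the claimed \(\poly(d_i)\,e^{-\beta d_i}\) form and yields the \(O(k_i)\) width.
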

\begin{proof}
    All CSS codes have transversal implementations of the operations in \(\mathsf{CSSTRANSVERSAL}\).
    The construction and proof is identical to \cref{lemma:surface:transversal-gates} with parameters modified in the obvious way.
\end{proof}

\begin{lemma}[qLDPC resource state gadget]\label{lemma:qldpc-resource-state}
    Let \(k_i\) be the number of qubits encoded by \(\ldpcCode{i}\) and \(d_i\) its distance.
    Consider \(m = O(1)\) registers of size \(k_i\) and let \(\psi\) be an \(m k_i\)-qubit state prepared by a Clifford+T circuit \(C_{\psi}\) of width \(O(m k_i)\) and depth \(O(1)\).\footnote{Depth and \(m\) \(O(1)\) here means an absolute constant such as \(10\).}

    There exists an absolute constant \(\epsilon_{*,\mathsf{qLDPC~PREP}} \in (0,1)\) and a circuit \(\mathsf{PREP}_{\psi}^i\) such that \(\mathsf{PREP}_{\psi}^i\) is a \((C_\psi, \mathcal{F})\)-FT gadget with output code types \(\ldpctype{i}{1/5}\) for each of the \(m\) registers.
    \(\mathsf{PREP}_{\psi}^i\) has depth \(D\) and width \(W\) where
    \begin{align}
        W &= O(k_i ~ d_i^{5.91}~\mathrm{polylog}(d_i)) \\
        D &= O(d_i^2 ~ \mathrm{polylog}(d_i))~.
    \end{align}
    On \(x \in [0,\epsilon_{*,\mathsf{qLDPC~PREP}}]\),
    \begin{align}
        \weightenum{\mathcal{F}}{x} \le \poly(d_i) e^{-d_i / 10}~.
    \end{align}
\end{lemma}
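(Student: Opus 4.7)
The plan is to implement the encoding of $\psi$ into the qLDPC code \emph{inside} the surface code fault-tolerance scheme, so that the preparation reduces to a single invocation of \cref{thm:surface-code-threshold} combined with the quantum-output construction \cref{lemma:quantum-output}, capped off by one round of qLDPC error correction. Concretely, define an auxiliary Clifford+T circuit $\hat{C}_\psi$ that (a) initializes $m n_i$ qubits in $\ket{0}$, (b) applies $C_\psi$ to the first $m k_i$ of them, viewed as $m$ groups of $k_i$ qubits each, and (c) applies the CSS encoding circuit of $\ldpcCode{i}$ (a Clifford circuit) to each of the $m$ blocks. In the absence of faults, the output of $\hat{C}_\psi$ is exactly the encoded state of $\psi$ across the $m$ blocks; it has width $O(m n_i) = O(n_i)$ and depth $O(1) + O(\poly(n_i))$.

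Next, apply \cref{lemma:quantum-output} to $\hat{C}_\psi$ with target trace-distance parameter $\epsilon/2$ and bad error supports $\mathcal{U} = \baderrors_{1/2}^{(1)} \boxplus \cdots \boxplus \baderrors_{1/2}^{(m)}$, where each $\baderrors_{1/2}^{(j)}$ is a copy of the qLDPC bad error support from $\ldpctype{i}{1/2}$ restricted to block $j$. The lemma produces a circuit whose raw output is $\mathcal{U}$-deviated from the ideal encoded state; by construction of $\mathcal{U}$, this restricts correctly to a $\baderrors_{1/2}$-avoiding deviation on each individual block. Finally, postcompose in parallel with one qLDPC error correction gadget (\cref{lemma:qldpc:ec-gadget}) per block, which converts input type $\ldpctype{i}{1/2}$ to the desired output type $\ldpctype{i}{1/5}$ while simulating identity.

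The bad fault paths $\mathcal{F}$ decompose via \cref{lemma:gadget-composition} into the quantum-output contribution and the $m = O(1)$ EC-gadget contributions. By \cref{lemma:quantum-output}, the first term is bounded by $\epsilon/2 + \weightenum{\mathcal{U}}{c x}$ for an absolute constant $c$, and by \cref{lemma:weight-enum-baderrors-c} we have $\weightenum{\mathcal{U}}{x} \le m n_i [O_\Delta(x)]^{d_i/4}$. The EC gadgets contribute at most $m \cdot d_i n_i [O_\Delta(x)]^{d_i/(80\Delta)}$ by \cref{lemma:qldpc:ec-gadget}. Both terms are $\poly(d_i) e^{-\Omega(d_i)}$ on a sufficiently small interval $x \in [0,\epsilon_{*,\mathsf{qLDPC~PREP}}]$; choosing $\epsilon = \poly(d_i) e^{-d_i/10}$ in the call to \cref{lemma:quantum-output} yields the stated bound. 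Width and depth follow directly from combining the overheads of \cref{lemma:quantum-output} and \cref{lemma:qldpc:ec-gadget}.

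The main obstacle is verifying that the output of \cref{lemma:quantum-output} has the correct block-wise deviation structure to feed into the per-block EC gadgets. The lemma returns a single global deviation superoperator $\noise$ supported on a $\mathcal{U}$-avoiding set $S$, and we must argue that on each block the reduced state is $\ldpctype{i}{1/2}$-deviated from the correct marginal. By the definition of $\boxplus$, $S$ is a disjoint union $\sqcup_j S_j$ with each $S_j$ being $\baderrors_{1/2}^{(j)}$-avoiding; the restriction of $\noise$ to block $j$ is then supported on $S_j$, so the block-$j$ reduced state is $\baderrors_{1/2}$-deviated from $\encoder[\ldpcCode{i}](\psi_j)$, exactly the condition required by the EC gadget. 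Cross-block entanglement introduced by $\noise$ is harmless because the EC gadgets act logically as identity and preserve any joint state supported on these marginals.
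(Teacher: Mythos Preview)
Your approach is essentially the paper's: compose $C_\psi$ with a Clifford encoder for $\ldpcCode{i}$ on each block and then invoke \cref{lemma:quantum-output}. The paper streamlines by choosing $\mathcal{U} = \boxplus_{j} \ldpctype{i}{1/5}$ directly rather than $\ldpctype{i}{1/2}$, so the output of \cref{lemma:quantum-output} already carries code type $\ldpctype{i}{1/5}$ and your trailing EC gadgets are unnecessary (harmless, but extra).

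There is one real gap in your depth accounting. You take the CSS encoding circuit to have depth $O(\poly(n_i))$; pushing that through \cref{lemma:quantum-output} gives
\[
\overline{D} \;=\; O\bigl(\poly(n_i)\cdot \log^2 V'\cdot \mathrm{polyloglog}\,V'\bigr) \;=\; \poly(d_i),
\]
which does not match the stated $O(d_i^2\,\mathrm{polylog}(d_i))$. The paper instead invokes a \emph{constant-depth} Clifford encoder $\encoder[\ldpcCode{i}]^{\mathsf{CLIFF}}$ (citing~\cite{gottesman2014fault}), so the circuit fed to \cref{lemma:quantum-output} has depth $O(1)$ and width $O(m k_i)$; then $\log V' = \Theta(d_i)$ and the $\log^2 V'$ overhead alone produces the $d_i^2$. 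Without this ingredient your construction establishes the weight-enumerator and width bounds but not the depth bound as stated.
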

\nomenclature[F, 41]{\(\mathsf{PREP}_{\psi}^i\)}{Resource state preparation gadget for qLDPC codes, see \cref{lemma:qldpc-resource-state}.}
\begin{construction}
    Let \(\encoder[\ldpcCode{i}]^{\mathsf{CLIFF}}\) be a Clifford encoder for \(\ldpcCode{i}\) which can selected to be constant depth~\cite{gottesman2014fault} and width \(O(m k_i)\).\footnote{The decoder must be consistent with the decoder unitary, see \cref{rmk:decoder-unitary}. The precise choice of logical basis for \(\ldpcCode{i}\) is arbitrary until this point.}
    Let \(C'\) be a circuit that executes \(C_{\psi}\) and then encodes the output (non-fault tolerantly) into \(m\) \(\ldpcCode{i}\) blocks using \(\encoder[\ldpcCode{i}]^{\mathsf{CLIFF}}\).
    \begin{align}
        C' = \left(\encoder[\ldpcCode{i}]^{\mathsf{CLIFF}}\right)^{\otimes m} \circ C_{\psi}
    \end{align}
    The gadget is the application of \cref{lemma:quantum-output} to \(C'\) with \(\epsilon \equiv \epsilon_{\mathrm{prep}} \in (0,1)\) to be selected later and \(\mathcal{U} = \sum_{i=1}^m \ldpctype{i}{1/5}\)\footnote{We attach the code type \(\ldpctype{i}{1/5}\) to bundles of qubits originating from each application of \(\encoder[\ldpcCode{i}]^{\mathsf{CLIFF}}\) \(\mathcal{U}\) is the sum of \(\ldpctype{i}{1/5}\) imposed on each bundle of output qubits.} to obtain a \((C', \mathcal{F}_{\mathcal{U}})\)-FT gadget \(\mathsf{PREP}_{\psi}^i\).
\end{construction}
\begin{proof}
    Let \(\epsilon_{*,\mathsf{PREP}} \in (0,1)\) be the threshold value from the statement of \cref{lemma:quantum-output}.
    From \cref{lemma:quantum-output}, for some absolute constant \(c > 0\), the weight enumerator of \(\mathcal{F} \equiv \mathcal{F}_{\mathcal{U}}\) can be bounded on \(x \in [0,\epsilon_{*,\mathsf{PREP}}]\) as
    \begin{align}
        \weightenum{\mathcal{F}}{x} \le \epsilon_{\mathrm{prep}} + m \cdot \weightenum{\ldpctype{i}{1/5}}{c\cdot x}~.
    \end{align}
    Let \(d_i\) be the distance of \(\ldpcCode{i}\), we can apply \cref{lemma:weight-enum-baderrors-c} to bound the weight enumerator of the code type as
    \begin{align}
        \weightenum{\ldpctype{i}{1/5}}{c\cdot x} \le \poly(d_i) \left(O_{\Delta}(x)\right)^{d_i/10}~.
    \end{align}
    Let \(\epsilon' \in (0,\epsilon_{*,\mathsf{PREP}})\) be a small enough constant so that the term in the base of the exponent is at most \(1/e\), so that on \(x \in [0,\epsilon']\)
    \begin{align}
        \weightenum{\ldpctype{i}{1/5}}{c\cdot x} \le \poly(d_i) \left(O_{\Delta}(x)\right)^{d_i/10} \le \poly(d_i) e^{-d_i/10} \equiv f(d_i)
    \end{align}
    We now set \(\epsilon_{\mathrm{prep}} = f(d_i)\) and \(\epsilon_{*,\mathsf{qLDPC~PREP}} = \min(\epsilon', \epsilon_{*,\mathsf{PREP}})\), so that overall, the weight enumerator of the bad fault paths on \(x \in [0,\epsilon_{*,\mathsf{qLDPC~PREP}}]\) is exponentially small in the code distance after using the restriction on \(m\)
    \begin{align}
        \weightenum{\mathcal{F}}{x} \le O(1) ~ f(d_i)~.
    \end{align}

    \cref{lemma:quantum-output} was invoked on a constant depth circuit of width \(O(m k_i) = O(\poly(d_i))\) with \(\epsilon_{\mathrm{prep}} = f(d_i)\).
    Let \(V' = \frac{m k_i}{f(d_i)} = e^{d_i \mathrm{polylog}(d_i)}\).
    It follows that \(\mathsf{PREP}_{\psi}^i\) has depth \(D\) and width \(W\) where  
    \begin{align}
        W &= O(m k_i ~ \log^{5.91}(V') ~ \mathrm{polyloglog}(V')) = O(k_i ~ d_i^{5.91}~\mathrm{polylog}(d_i)) \\
        D &= O(\log^{2}(V') ~ \mathrm{polyloglog}(V')) = O(d_i^2 ~ \mathrm{polylog}(d_i))~.\qedhere
    \end{align}
\end{proof}

\begin{corollary}[qLDPC gate gadget]\label{lemma:qldpc-gate-gadget}
    There exists a constant \(\epsilon_{*,\mathsf{GATE}} \in (0,1)\) such that for a one or two-qubit operation \(\mathsf{g}\)  in the set
    \begin{align}
        \mathsf{GATESET} = \{\mathsf{T}, \IDENT(A), \PAULIX, \PAULIY, \PAULIZ, \CNOT, \HAD, \PHAS, \MEASX(A), \MEASZ(A), \INITZ(A)\}
    \end{align}
    with support on qubits from at most two registers of size \(k_i\),
    there exists a circuit \(\mathsf{GATE}_\mathsf{g}\) that is a \((\mathsf{g},\mathcal{F}_{\mathsf{g}})\)-FT gadget with input and output types \(\ldpctype{i}{1/5}\).
    On \(x \in [0,\epsilon_{*,\mathsf{GATE}}]\) the weight enumerator of \(\mathcal{F}_{\mathsf{g}}\) is bounded as
    \begin{align}
        \weightenum{\mathcal{F}_{\mathsf{g}}}{x} \le \poly(d_i) ~ e^{- \beta d_i}~.
    \end{align}
    \(\mathsf{GATE}_\mathsf{g}\) has width \(W\) and depth \(D\) where
    \begin{align}
        W &= O(k_i ~ d_i^{5.91}~\mathrm{polylog}(d_i)) \\
        D &= O(d_i^2 ~ \mathrm{polylog}(d_i))~.
    \end{align}
    \(\mathsf{GATE}_{\IDENT(A)}\) (identity) has width \(W = O(k_i)\).
    The depth is independent of \(\mathsf{g}\).
\end{corollary}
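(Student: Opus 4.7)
The plan is to build each $\mathsf{GATE}_\mathsf{g}$ by sequential composition (via \cref{lemma:gadget-composition}) of an error correction gadget $\ecgadget[\ldpcCode{i}]$ (\cref{lemma:qldpc:ec-gadget}), a constant-depth ``consumption circuit'' $\mathsf{CONSUME}_\mathsf{g}$ built from the transversal Clifford gadgets of \cref{lemma:qldpc:transversal-clifford-gates}, and, when $\mathsf{g}$ is not transversal on the entire register or requires logical addressability, a resource state prepared by the qLDPC resource state gadget $\mathsf{PREP}_{\psi_\mathsf{g}}^i$ of \cref{lemma:qldpc-resource-state}. The threshold $\epsilon_{*,\mathsf{GATE}}$ will be taken to be the minimum of the thresholds supplied by the three invoked lemmas.

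For the subset $\mathsf{CSSTRANSVERSAL} \cup \{\PAULIY\}$ of register-wide operations (with $\PAULIY = i\PAULIX \PAULIZ$), the gadget is simply the transversal Clifford gadget of \cref{lemma:qldpc:transversal-clifford-gates}, possibly preceded by an EC round; the gadget for $\IDENT(A)$ is a single EC gadget which has width $O(k_i)$, matching the stated bound. For the remaining gates $\HAD$, $\PHAS$, $\mathsf{T}$, $\MEASX(A)$, $\MEASZ(A)$, and $\INITZ(A)$, we use gate teleportation or state injection against a suitable resource state. For example, to perform $\mathsf{T}$ on a single logical qubit at position $A$, we prepare a constant number of registers via $\mathsf{PREP}_{\psi}^i$ where one register encodes $\ket{T}$ in logical slot $A$ and $\ket{0}$ elsewhere; a constant-depth $\mathsf{CONSUME}_\mathsf{T}$ of transversal $\CNOT$s, transversal $\MEASX/\MEASZ$, and classically-controlled Pauli corrections then implements $\mathsf{T}$ by teleportation. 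Addressable $\INITZ(A)$ and $\MEASX(A)/\MEASZ(A)$ are handled analogously by teleporting in or out the single logical qubit $A$; the gadgets for $\HAD$ and $\PHAS$ use the corresponding Clifford-teleportation resource states. In all cases, $\mathsf{CONSUME}_\mathsf{g}$ has $O(1)$ layers of Clifford gadgets, each of depth $O(d_i)$, and acts on at most $O(1)$ registers.

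Correctness and the weight enumerator bound follow immediately from \cref{lemma:gadget-composition}: the bad fault paths decompose as the $\boxplus$-sum of the bad fault paths of one EC gadget, one $\mathsf{PREP}$ gadget (where applicable), and $O(1)$ transversal gadgets, each of weight enumerator at most $\poly(d_i)e^{-\beta d_i}$ on $x \in [0,\epsilon_{*,\mathsf{GATE}}]$ by \cref{lemma:qldpc:ec-gadget,lemma:qldpc-resource-state,lemma:qldpc:transversal-clifford-gates}, giving $\weightenum{\mathcal{F}_\mathsf{g}}{x} \le \poly(d_i)e^{-\beta d_i}$ for a possibly smaller $\beta$. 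Friendliness of the subgadgets is preserved under sequential composition, yielding input and output code types $\ldpctype{i}{1/5}$.

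For the spacetime bounds: the width is dominated by $\mathsf{PREP}_{\psi_\mathsf{g}}^i$, giving $W = O(k_i d_i^{5.91}\mathrm{polylog}(d_i))$; the exception is $\IDENT(A)$ whose gadget is just an EC round of width $O(k_i)$. The depth is likewise dominated by $\mathsf{PREP}$ at $O(d_i^2 \mathrm{polylog}(d_i))$, and we pad the gadgets for transversal gates with $\IDENT$ rounds so that every $\mathsf{GATE}_\mathsf{g}$ has the same depth. The main technical point to verify is that the teleportation-based construction faithfully implements the addressable logical operation --- this reduces to checking the standard physical-level teleportation identities lifted to logical qubits of $\ldpcCode{i}$, which is routine since the transversal $\CNOT$ commutes through the decoding unitary on the relevant logical subsystem.
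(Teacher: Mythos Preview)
Your overall plan---prepare a resource state with $\mathsf{PREP}^i_\psi$ and consume it with $\mathsf{CSSTRANSVERSAL}$ gadgets---matches the paper's, but your treatment of $\mathsf{T}$ has two genuine gaps. First, the resource state $\ket{T}_A \otimes \ket{0}^{\otimes (k_i-1)}$ does not work: under transversal $\CNOT$ (data as control) followed by transversal $\MEASZ$ on the resource block, every non-target data qubit is measured in the $Z$ basis rather than preserved. The paper instead prepares the full Choi state $(\mathsf{g}\otimes I)\ket{\phi_+}^{\otimes m k_i}$ so that non-target slots teleport as identity. Second, and more importantly, $\mathsf{T}$-teleportation does not terminate with a Pauli byproduct; the correction is a single-qubit Clifford (e.g.\ $\PHAS\PAULIX$), which is not in $\mathsf{CSSTRANSVERSAL}$ for a generic CSS qLDPC code. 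The paper resolves this by recursing down the Clifford hierarchy~\cite{gottesman1999demonstrating}: the level-$2$ byproduct is itself implemented by a second teleportation against a freshly prepared Choi state, whose byproduct is now Pauli and can be applied directly as a classically-controlled physical Pauli. This recursion terminates after at most two rounds, so the number of subgadgets stays $O(1)$ and your weight-enumerator and space--time bounds go through once this is added.

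A minor point: you have misread the notation. The suffix $(A)$ denotes a block-wide operation on an entire register, not an addressable single-qubit one; hence $\IDENT(A), \MEASX(A), \MEASZ(A), \INITZ(A)$ are handled directly by the transversal gadgets of \cref{lemma:qldpc:transversal-clifford-gates} with no teleportation. Conversely, $\CNOT$ in $\mathsf{GATESET}$ (without the $(A)$) is the addressable two-qubit version between specified logical positions and cannot be done by a transversal $\CNOT$ alone (which implements $\overline{\CNOT}^{\otimes k_i}$); it too goes through a round of teleportation in the paper's construction.
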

\nomenclature[F, 42]{\(\mathsf{GATE}_\mathsf{g}\)}{qLDPC gate gadget for a universal set of gates, see \cref{lemma:qldpc-gate-gadget}.}
\begin{construction}
    For a gate \(\mathsf{g} = \mathsf{GATESET} \setminus \{\IDENT(A),\INITZ(A),\MEASX(A), \MEASZ(A)\}\) supported on \(m\) registers, we use \cref{lemma:qldpc-resource-state} to prepare \(\ket{\mathsf{g}} = (\mathsf{g}\otimes I)\ket{\phi_+}^{m k_i}\) where \(\mathsf{g}\) should be interpreted as acting on one side of the Bell pair.
    Let \(t \le 3\) be the level of the Clifford hierarchy that \(\mathsf{g}\) is in.
    \(\ket{\mathsf{g}}\) can be consumed by a teleportation circuit using operations only in \(\mathsf{CSSTRANSVERSAL}\) and one operation in level \(t' = \max(t-1, 1)\) of the Clifford hierarchy~\cite{gottesman1999demonstrating}.
    Let \(C_{\mathsf{g}}\) be the circuit that inductively performs the previous procedure until \(t'=1\) and then directly applies an operation \(\{\IDENT, \PAULIX, \PAULIY, \PAULIZ\}\).
    The gadget is \(C_{\mathsf{g}}\) with every gate replaced by a gadget from \cref{lemma:qldpc:transversal-clifford-gates}.
    For \(\mathsf{g} \in \{\IDENT(A),\INITZ(A),\MEASX(A), \MEASZ(A)\}\), \(C_{\mathsf{g}}\) is the corresponding circuit from \(\mathsf{CSSTRANSVERSAL}\).
    The gadgets are padded with error correction gadgets \(\ecgadget[\ldpcCode{i}]\) (\cref{lemma:qldpc:ec-gadget}) such that they are of uniform size.
\end{construction}
\begin{proof}
    Set \(\epsilon_{*,\mathsf{GATE}} = \min(\epsilon_{*,\mathsf{qLDPCTRANSVERSAL}}, \epsilon_{*, \mathsf{qLDPC~PREP}})\).
    The width and depth bounds follow from the respective gadgets
    The weight enumerator bounds follow from having composed a constant number of gadgets, each with weight enumerator bounded as \(\poly(d_i) e^{-\beta d_i}\) on \([0,\epsilon_{*,\mathsf{GATE}}]\), and from the depth bound (for padding).
\end{proof}

\begin{theorem}[Constant space overhead threshold theorem]\label{thm:ldpc-constant-space}
    There exists a constant \(\epsilon_* \in (0,1)\) such that, for any Clifford+T circuit \(C\) with classical input and classical output of width \(W\), depth \(D\), and \(\epsilon \in (0,1)\) satisfying\footnote{This condition rules out exponentially deep circuits but is otherwise not very restrictive.} \(W \ge \log^7\left(\frac{WD}{\epsilon}\right)\), there exists a circuit \(\overline{C}\) that is a fault-tolerant gadget for \(C\) with trivial input and output and bad fault paths \(\mathcal{F}\).
    Let \(V = \frac{WD}{\epsilon}\).
    Then, \(\overline{C}\) has width \(\overline{W}\) and depth \(\overline{D}\) satisfying the bounds
    \begin{align}
        \overline{W} &= O(W) \\
        \overline{D} &= O(D \log^{8.91}(V) \mathrm{polyloglog}(V))~.
    \end{align}
    On \(x \in [0,\epsilon_*]\), the weight enumerator of \(\mathcal{F}\) satisfies the bound
    \begin{align}
        \weightenum{\mathcal{F}}{x} \le \epsilon~.
    \end{align}
\end{theorem}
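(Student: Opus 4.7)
The plan is to adapt the strategy of \cite{gottesman2014fault,tamiya2024polylog} using the good qLDPC code family of \cref{fact:good-qldpc} as the computational code, with resource states for the logical gates prepared by the surface-code scheme \cref{thm:surface-code-threshold} baked into \cref{lemma:qldpc-resource-state}, and then invoking \cref{lemma:qldpc-gate-gadget} for a universal gate set.

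Fix an index $i$ (to be chosen at the end) so that $\ldpcCode{i}$ has parameters $n_i, k_i, d_i = \Theta(\log V)$. Partition the $W$ logical qubits of $C$ into $M = \lceil W/k_i \rceil$ blocks of $k_i$ qubits, each to be encoded in $\ldpcCode{i}$, so that the data storage consumes $M\cdot n_i = O(W)$ physical qubits. For every gate in $C$ replace it by the gate gadget $\mathsf{GATE}_{\mathsf{g}}$ of \cref{lemma:qldpc-gate-gadget} with input/output code types $\ldpctype{i}{1/5}$, and for every block idling in a given FT batch use the narrow identity gadget $\mathsf{GATE}_{\IDENT(A)}$ of width $O(k_i)$. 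At the boundary of $\overline{C}$, the classical input is fed to the initialization gadgets $\mathsf{GATE}_{\INITZ(A)}$ and the classical output is produced by the transversal measurement gadgets $\mathsf{GATE}_{\MEASZ(A)}$, so $\overline{C}$ has trivial input and output types as required.

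Each gate gadget has width $O(k_i\, d_i^{5.91}\,\mathrm{polylog}(d_i)) = O(d_i^{6.91}\,\mathrm{polylog}(d_i))$ using $k_i = \Theta(d_i)$. To keep the total width at $O(W)$ we permit at most $B = \Theta\bigl(W/(d_i^{6.91}\,\mathrm{polylog}(d_i))\bigr)$ gate gadgets active in a single FT batch; the hypothesis $W \ge \log^7(V)$ with $d_i = \Theta(\log V)$ guarantees $B \ge 1$. Since a logical timestep of $C$ contains at most $W/2$ gates, we serialize each logical timestep into $O(d_i^{6.91}\,\mathrm{polylog}(d_i))$ consecutive FT batches, each of depth $O(d_i^{2}\,\mathrm{polylog}(d_i))$ (matching \cref{lemma:qldpc-gate-gadget}). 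Summing over the $D$ logical timesteps gives $\overline{D} = O(D\, d_i^{8.91}\,\mathrm{polylog}(d_i)) = O(D\log^{8.91}(V)\,\mathrm{polyloglog}(V))$. The idle blocks, the at-most-$B$ active gate gadgets, and the data registers all contribute $O(W)$ to the total width, so $\overline{W} = O(W)$.

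Correctness follows from inductively applying parallel and sequential composition (\cref{lemma:gadget-composition}) along a topological sort of the bundled circuit: every interface between adjacent gadgets is the common code type $\ldpctype{i}{1/5}$, and the input/output boundary is trivial by the choice of initialization and measurement gadgets. The bad fault paths are $\mathcal{F} = \wtsum_{g \in \overline{V}} \mathcal{F}_{\mathsf{g}}$ where $|\overline{V}| = O(WD\, d_i^{6.91}\,\mathrm{polylog}(d_i))$; using \cref{lemma:enumerator-ring} and the per-gadget bound from \cref{lemma:qldpc-gate-gadget}, on $x \in [0,\epsilon_{*,\mathsf{GATE}}]$,
\begin{align}
  \weightenum{\mathcal{F}}{x} \le O\bigl(WD\, d_i^{6.91}\,\mathrm{poly}(d_i)\,e^{-\beta d_i}\bigr).
\end{align}
Picking $i$ so that $d_i = c_0 \log V$ with a sufficiently large absolute constant $c_0$ drives the right-hand side below $\epsilon$, and this same $d_i = \Theta(\log V)$ is compatible with the scheduling assumption $B \ge 1$ precisely because of the hypothesis $W \ge \log^7 V$. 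Setting $\epsilon_* := \epsilon_{*,\mathsf{GATE}}$ closes the proof.

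\textbf{Main obstacle.} The principal difficulty is balancing three constraints on the single parameter $d_i$: the weight enumerator must be pushed to $\epsilon$ (requiring $d_i \gtrsim \log V$), the per-batch width must fit inside $O(W)$ (requiring $W \gtrsim d_i^{6.91}\,\mathrm{polylog}(d_i)$, which is exactly where the mild hypothesis $W \ge \log^7 V$ is used), and the resulting depth overhead must collect all of the $d_i^{5.91}$, $d_i^{2}$ and serialization factors in the right order without losing the constant-space guarantee. A minor but non-trivial bookkeeping step is aligning the idle identity gadgets between batches so that, at every cut through the fault-tolerant circuit, the code type on every block is $\ldpctype{i}{1/5}$, which is what makes \cref{lemma:gadget-composition} applicable uniformly.
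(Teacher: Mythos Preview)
Your plan is the paper's: encode into $\ldpcCode{i}$ blocks, replace each gate of $C$ by the gadget $\mathsf{GATE}_{\mathsf{g}}$ of \cref{lemma:qldpc-gate-gadget}, serialize so that only $\Theta\bigl(W/(d_i^{6.91}\mathrm{polylog}\,d_i)\bigr)$ non-idle gadgets run per batch, and take $d_i=\Theta(\log V)$. Two pieces of the construction that you skip are required for the argument to go through, and the paper handles both explicitly.

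First, serialization must respect a block-conflict constraint you do not mention: each $\mathsf{GATE}_{\mathsf{g}}$ occupies one or two \emph{entire} blocks, so two logical gates whose supports fall in a common block cannot share a batch, and a single block can be touched by up to $k_i$ gates in one timestep of $C$. The naive ``$(W/2)/B$ batches'' count is therefore not a valid argument. The paper edge-colors the block interaction multigraph (degree at most $k_i$) and then subdivides each color class to size $\le\ell$, obtaining $O(W/\ell+k_i)$ conflict-free batches; since $k_i=\Theta(d_i)$ is dominated by $W/\ell=\Theta(d_i^{6.91}\mathrm{polylog}\,d_i)$, your numerical answer survives, but only after this coloring step. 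Second, the only measurement and initialization gadgets available in \cref{lemma:qldpc-gate-gadget} are the whole-block versions $\MEASZ(A),\MEASX(A),\INITZ(A)$, so a mid-circuit single-qubit measurement in $C$ cannot be implemented by direct replacement without destroying the other $k_i-1$ logical qubits in that block; deferring measurements is not an option either, since it can inflate the width by a factor of $D$ and destroy the constant-space guarantee. The paper therefore precedes every logical timestep by a depth-$O(1)$ logical permutation (built from $\SWAP$s decomposed into $\CNOT$s, then gadgetized along with everything else) that routes all qubits scheduled for measurement or initialization into dedicated blocks; this is why the construction re-assigns qubit coordinates at every step and keeps a few spare registers.
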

\begin{construction}
    Let \(i \in \mathbb{N}\) be a parameter to be determined later.
    Without loss of generality, we may assume \(C\) contains only gates from the following gate set
    \begin{align}
        \mathsf{GATESET} = \{\mathsf{T}, \IDENT, \PAULIX, \PAULIZ, \CNOT, \HAD, \PHAS, \MEASX, \MEASZ, \INITZ\}~.
    \end{align}
    Otherwise, we may rewrite \(C\) in terms of the gates from \(\mathsf{GATESET}\) at constant depth cost.

    For each timestep \(t \in [D]\) of \(C\), let \(W_t\) be the number of qubits.
    We will have \(m_t = \lceil W_t / k_i \rceil + 3 = \Theta(W_t / k_i) \) registers of \(k_i\) qubits.
    Arbitrarily\footnote{This can be computed greedily.} make an injective assignment of the quantum inputs (or output in the case of \(\INITZ\)) of the gates of \(C\) at timestep \(t\) to the qubit coordinates \([k_i] \times [m_t]\) indexing the qubits of the \(m_t\) registers.
    We restrict the assignment such that when one of \(\{\MEASX, \MEASZ, \INITZ\}\) is assigned to a register, no other operations (including \(\IDENT\)) are assigned to that register.
    This assignment also induces an assignment of the outputs.

    For any permutation of \(N\) qubits, there exists an efficiently computable depth-\(2\) circuit of \(\SWAP\) gates that implements it.
    We now construct a new circuit \(C'\) that implements the following steps for every timestep \(t\in [D]\) of \(C\):
    \begin{enumerate}
        \item Implement the permutation that routes the outputs of gates at timestep \(t-1\) of \(C\) to the appropriate location for timestep \(t\). Note that there are ``holes'' in the register coordinates \([m_{t-1}]\) corresponding to blocks containing qubits that are acted on by (\(\MEASX\)/\(\MEASZ\)) and in the coordinates \([m_{t}]\) corresponding to blocks containing qubits that will be initialized by an \(\INITZ\).
        \item Execute all gates of \(C\) at time \(t\) on the qubits with the targets given by the previous assignment to qubit coordinates. 
        When a qubit of the block is acted on by one of \(\{\MEASX, \MEASZ\}\) perform the operation on all qubits of the block (i.e. \(\MEASX(A)\) or \(\MEASZ(A)\)). 
        Likewise perform the \(\INITZ(A)\) to initialize a block when there is a qubit assigned to that block originiating from \(\INITZ\).
        For any qubit that does not correspond to a qubit of \(C\), execute \(\IDENT\).
    \end{enumerate} 

    After replacing the \(\{\INITZ, \MEASX, \MEASZ\}\) operations supported on blocks with \(\{\INITZ(A), \MEASX(A), \MEASZ(A)\}\), every operation in \(C'\) is an operation for which there is a corresponding gadget from \cref{lemma:qldpc-gate-gadget}.

    Let \(D'\) be the depth of \(C'\) and \(W'\) its width.
    Let \(\ell \in [1,W'/k_i]\) be a parameter to be selected later that will control the maximum number of gates executed in a timestep.
    The circuit \(\overline{C}\) will be constructed from \(C'\) be executing at most \(\ell\) gate gadgets in each timestep.

    Let \(t \in [D']\) be a timestep of \(C'\).
    The operations will be executed over \(\tau\) steps such that no step contains more than \(\ell\) operations.
    This can be done by first constructing a maximum-degree \(k_i\) multigraph with a vertex for each register and an edge for each gate supported on qubits contained in both registers.
    An edge coloring \(c\) of the multigraph can be computed using at most \(O(k_i)\) colors.
    We can compute a new edge coloring \(c'\) from \(c\) by subdividing each color into at most \(O((W/k_i)/\ell)\) new colors such that the number of edges in each color is at most \(\ell\).
    \(c'\) has at most \(O(W/\ell)\) colors (see \cite[Serialization Lemma]{nguyen2024quantum}).
    From \(c'\), a partitioning of the gates\footnote{Single qubit gates are executed greedily subject to the constaints.} of \(C'\) in timestep \(t\) can be computed such that: 
    \begin{enumerate}
        \item There are at most \(\ell\) non-trivial gates (not in \(\IDENT\)) per partition.
        \item In each partition, no non-trivial two gates are supported on the same register.
        \item There are \(O(W/\ell)\) partitions.
    \end{enumerate}

    \(\overline{C}\) is constructed by the following procedure: For each timestep \(t\) of \(C'\) and for each gate \(\mathsf{g}\) in the partitioning associated with timestep \(t\), use the gadget \(\mathsf{GATE}_{\mathsf{g}}\) (from \cref{lemma:qldpc-gate-gadget}) to execute \(\mathsf{g}\) and execute \(\mathsf{GATE}_{\IDENT(A)}\) on all other blocks.
\end{construction}
\begin{proof}
    Set \(\epsilon_* = \epsilon_{*,GATE}\) from \cref{lemma:qldpc-gate-gadget}
    Let \(m = \max_{t \in [D]} m_t = O(W/k_i)\) be the maximum number of registers of \(C\).
    \(\overline{C}\) is composed entirely of \(\mathsf{GATE}\) gadgets from \cref{lemma:qldpc-gate-gadget}, so the width and depth bounds will follow.
    Let \(\overline{W}\) and \(\overline{D}\) be the width and depth of \(\overline{C}\), respectively.
    In terms of a function \(f(d_i) = \Theta(d_i^{5.91}~\mathrm{polylog}(d_i))\), (essentially corresponding to the width overhead of \(\mathsf{GATE}\)) the width and depth satisfy the bounds
    \begin{align}
        \overline{W} &= O\left(\ell~k_i~f(d_i) + (m - \ell) k_i\right)\\
        \overline{D} &= O\left(D \frac{W}{\ell}~d_i^2~\mathrm{polylog}(d_i)\right)~.
    \end{align}
    Using \cref{lemma:gadget-composition}, we can show that \(\overline{C}\) is a \((C,\mathcal{F})\)-FT gadget where \(\mathcal{F}\) is the sum of the bad fault paths of all \(\mathsf{GATE}\) gadgets in the circuit.
    There are \(O(m \cdot W/\ell\cdot D)\) of these.
    Thus, for \(x \in [0,\epsilon_*]\), we have the bound
    \begin{align}
        \weightenum{\mathcal{F}}{x} \le O(m \cdot W/\ell\cdot D)~\poly(d_i)~e^{-\beta d_i}~.
    \end{align}

    We now set \(\ell = \lceil m/f(d_i) \rceil\) and use the linear distance \(k_i = \Theta(d_i)\), so that these bounds become
    \begin{align}
        \weightenum{\mathcal{F}}{x} %
        &\le O\left(WD~f(d_i)~\poly(d_i)~e^{-\beta d_i}\right)\\
        &\le O\left(WD~\poly(d_i)~e^{-\beta d_i}\right)~.
    \end{align}
    It follows that \(d_i = O(\log(V)~ \mathrm{polyloglog}(V))\) is sufficient such that the inequality \(\weightenum{\mathcal{F}}{x} \le \epsilon\) holds.
    In other words, we pick \(i = \Theta(\mathrm{loglog}(V))\) (see \cref{fact:good-qldpc}), utilizing the exponential spacing of the code family.
    The bounds on the width and depth follow from the parameter choices and the assumption that the circuit width satisfies \(W \ge \log^7(V)\).
    \begin{align}
        \overline{W} %
        &= O\left(W + k_i f(d_i)\right)\\
        &= O\left(W + \log^{6.91}(V) \mathrm{polyloglog}(V)\right)\\
        &= O\left(W\right)\\
        \overline{D} %
        &= O\left(D k_i~f(d_i)~d_i^2~\mathrm{polylog}(d_i)\right) \\
        &= O\left(D d_i^{8.91}~\mathrm{polylog}(d_i)\right) \\
        &= O\left(D \log^{8.91}(V)~\mathrm{polyloglog}(V)\right)
    \end{align}
    This completes our threshold proof for a qLDPC code- and teleportation-based constant overhead fault-tolerant scheme.
\end{proof}

\newpage
\printnomenclature

\newpage

\appendix

\newpage 
\printbibliography

\end{document}